\newcommand{\be}{\begin{equation}}
\newcommand{\ee}{\end{equation}}
\newcommand{\conf}{\omega}
\newcommand{\emp}{\varnothing}
\newcommand{\ctr}{\gamma}
\newcommand{\supp}{\mathrm{supp}}
\newcommand{\diam}{\mathrm{diam}}
\newcommand{\dis}{\mathrm{dist}}
\newcommand{\re}{\mathrm{Re}}
\newcommand{\I}{\mathrm{I}}
\newcommand{\Sp}{\mathrm{sp}}
\newcommand{\Z}{\mathbb{Z}}
\newcommand{\lab}{\mathrm{lab}}
\newcommand{\inB}{\partial_{\text{in}}}
\numberwithin{equation}{section}
\newcommand{\Config}{\Omega}
\newcommand{\es}{\mathcal{E}}
\newcommand{\is}{\mathscr{I}}
\newcommand{\ctrb}{\Gamma}
\newcommand{\extb}{\Gamma^e}
\newcommand{\ctrbex}{\extb}
\newcommand{\bc}{+}
\newcommand{\Ztil}{\widetilde{Z}}
\newcommand{\polymer}{\Gamma}
\newcommand{\polymerb}{X}
\newtheorem*{theorem*}        {Theorem}
\newtheorem*{conjecture*}   {Conjecture}
\newtheorem{theorem}{Theorem}[section]
\newtheorem{lemma}[theorem]{Lemma}
\newtheorem*{lemma*}          {Lemma}
\newtheorem{corollary}[theorem]{Corollary}
\newtheorem{proposition}[theorem]{Proposition}
\newtheorem{remark}          {Remark}[section]
\theoremstyle{definition}
\newtheorem{definition}[theorem]{Definition}
\def\moverlay{\mathpalette\mov@rlay}
\def\mov@rlay#1#2{\leavevmode\vtop{%
		\baselineskip\z@skip \lineskiplimit-\maxdimen
		\ialign{\hfil$\m@th#1##$\hfil\cr#2\crcr}}}
\newcommand{\charfusion}[3][\mathord]{
	#1{\ifx#1\mathop\vphantom{#2}\fi
		\mathpalette\mov@rlay{#2\cr#3}
	}
	\ifx#1\mathop\expandafter\displaylimits\fi}
\newcommand{\cupdot}{\charfusion[\mathbin]{\cup}{\cdot}}
\newcommand{\bigcupdot}{\charfusion[\mathop]{\bigcup}{\cdot}}
\newcommand{\rctr}{\polymerb_{\ctr}^{(n+1)}}
\begin{document}

\begin{center}
{\LARGE Cluster Expansion and Decay of Correlations for Multidimensional Long-Range Ising Models}
\vskip.5cm
Lucas Affonso$^{1}$, Rodrigo Bissacot$^{1}$, Jo{\~a}o Maia$^{1,2}$, Jo\~{a}o F. Rodrigues$^{1}$, Kelvyn Welsch$^{1}$
\vskip.3cm
\begin{footnotesize}
$^{1}$ Institute of Mathematics and Statistics (IME-USP), University of S\~{a}o Paulo, Brazil\\
$^{2}$ Beijing International Center for Mathematical Research (BICMR), Peking University, China.
\end{footnotesize}
\vskip.1cm
\begin{scriptsize}
emails: lucas.affonso.pereira@gmail.com, rodrigo.bissacot@gmail.com, maia.joaovt@gmail.com, joao.felipe.rodriguesp@gmail.com, kelvyn.emanuel@gmail.com
\end{scriptsize}

\end{center}

\begin{abstract}
    We develop the cluster expansion for the multidimensional multiscaled contours defined by three of us. These contours are suitable for long-range Ising models with interaction $J_{xy}=J(|x-y|)= J/|x-y|^\alpha$, $J>0$, and $\alpha>d$. As an application of the convergence of the cluster expansion at low temperatures, we study the decay of the truncated two-point correlation functions, showing that the decay is algebraic with coefficient $\alpha$.  
\end{abstract}

\section{Introduction}

The cluster expansion is one of the most powerful tools in statistical mechanics and mathematical physics \cite{Malyshev1980}, providing valuable information about the models to which it is applicable. Although it was historically first designed for high-temperature systems, its usefulness in more general contexts was soon recognized, having today abstract representations, see \cite{Bissacot2010, Fernndez2007, Jansen2022, Koteck1986, procacci2023cluster, Temmel2014}, where one studies the convergence of a ``gas'' of abstract objects, usually called polymers. Each polymer has a weight, which depends on a set of parameters dictated by the specific model of interest, for example, the temperature, chemical potential, size, etc. The main goal is to prove the convergence of a series corresponding to the free energy (or pressure), which typically can only be done in a small region of the parameter space. The convergence of such series can be applied in the study of correlations, large deviations, and others. The usage of contours as polymers to develop the expansion in low-temperature allowed the derivation of strong results since the onset, like those in the seminal work of Minlos and Sinai \cite{Minlos} and when Gallavotti, Martin-L\"{o}f and Miracle-Sole \cite{GMM} managed to prove results about coexisting phases for the Ising model.

 The Ising model \cite{Ising1925, Klske2025, Niss2004} is the most important model in statistical mechanics for studying phase transitions and critical phenomena in ferromagnetic systems. In its traditional form, the Ising model considers interactions only between nearest-neighbor spins on a lattice. When the model is extended to include long-range interactions --- where each spin interacts with all others with a coupling strength that decays as a power-law --- new and complex behaviors emerge \cite{PhysRevE.89.062120, Iagolnitzer1977}. The first reader-friendly article proving the convergence of the cluster expansion for the Ising model seems to be \cite{Pfister1991LargeDA}, by Pfister.
 
 The development of low-temperature cluster expansions for short-range interactions highly benefits from the fact that, in this framework, contours do not interact with each other. In \cite{Park.88.I}, Park proved the convergence of the cluster expansion by dealing with the long-range interaction as a perturbation of a short-range one; the strategy forces the interactions to be weak, so the results for Ising models were obtained only for $\alpha > 3d + 1$. In addition, Park's argument is inspired by Pirogov-Sinai theory \cite{Pirogov.Sinai.75, Pirogov1976, Zahradnik.84} where the contours are connected objects. However, since the breakthrough of Fr\"{o}hlich-Spencer in \cite{Frohlich.Spencer.82}, it is well known that connected contours are not suitable for general long-range systems \cite{Affonso2024, affonso2024phasetransitionferromagneticqstate, Johanes, Bissacot_Corsini_cluster, Cassandro.05, Cassandro.Merola.Picco.17, Cassandro.Merola.Picco.Rozikov.14, Cassandro.Picco.09}. 
 The results of Ginibre, Grossmann, and Ruelle \cite{Ginibre1966}, proving phase transition for $\alpha > d+1$ using a Peierls argument with connected contours (specifically, plaquettes as $(d-1)-$dimensional objects), suggests that a cluster expansion should also converge in this regime, but such a proof has never appeared in the literature to the best of our knowledge.

For one-dimensional long-range Ising systems, when $\alpha >2$ we have uniqueness of the Gibbs measure, and for $\alpha \in (1,2)$ Dyson in \cite{Dyson1969} proved the phase transition. The critical case $\alpha=2$ was solved in the seminal paper of Fr\"{o}hlich and Spencer \cite{Frohlich.Spencer.82} where they proved the phase transition and introduced for the first time the powerful idea of disconnected contours in long-range Ising models, inspired by their previous work \cite{FS81}. Imbrie \cite{Imbrie.82} then developed a cluster expansion for the same model in order to estimate the decay of the truncated two-point correlation function. After this, a good amount of literature was produced for one-dimensional long-range Ising models, taking inspiration from the contours defined by Fr\"{o}hlich and Spencer. In \cite{Cassandro.05}, Cassandro, Ferrari, Merola, and Presutti introduced a geometric notion of one-dimensional contour and proved the existence of a phase transition. Their definition was then used by Cassandro, Merola, Picco, and Rozikov \cite{Cassandro.Merola.Picco.Rozikov.14} to prove convergence of the cluster expansion for $\alpha \in \left( 3 - \frac{\ln 3}{\ln 2}, 2 \right]$, and to study the random field long-range model \cite{Cassandro.Picco.09}. These results however, do not hold on the entire interval $(1,2]$ and also require that the nearest-neighbors interaction $J(1)$ is large enough, see also Littin and Picco \cite{Littin2017}, a restriction that is not present in the original work of Fr\"{o}hlich and Spencer. Later, Bissacot, Endo, van Enter, Kimura, and Ruszel  \cite{Bissacot.Endo.18} showed that this restriction can be removed for decays $\alpha$ even closer do $2$. 

Only recently, Affonso, Bissacot, Corsini, and Welsch \cite{corsini} revisited this problem and proved the phase transition by a Peierls argument for $J(1)$ arbitrary and $\alpha \in (1,2]$. Under the same hypotheses, Bissacot and Corsini \cite{Bissacot_Corsini_cluster} proved the convergence of the cluster expansion at low temperatures. When $\alpha=2$, Imbrie and Newman \cite{Imbrie.Newman.88}, using the nearest neighbor interaction $J(1)$ as a perturbative parameter, found that $\langle\sigma_{x_1};\sigma_{x_2}\rangle^+\approx |x_1-x_2|^{-\theta(\beta)}$, where $0\leq \theta(\beta)\leq 2$. 

Inspired by the ideas from Fr\"{o}hlich-Spencer \cite{Frohlich.Spencer.82}, 
a sequence of papers and theses  \cite{Pereira, Affonso2024, affonso2024phasetransitionferromagneticqstate, Johanes, maia2024phase}  from the Brazilian school of mathematical physics introduced multiscale contours in the multidimensional setting. First, Affonso, Bissacot, Endo, and Satoshi \cite{Affonso2024} introduced multiscale contours close in definition to the original, using diameters as in Fr\"{o}hlich and Spencer \cite{FS81}, and proved the phase transition in the whole region $\alpha > d$ even in the presence of a decaying field. After that, Affonso, Bissacot, and Maia \cite{Johanes} refined the definition, now using volumes instead of diameters, see also \cite{maia2024phase}, to prove the phase transition in the long-range random field Ising model in $d\geq 3$. This partially answered a  
 conjecture made in the seminal paper by Bricmont and Kupiainen \cite{Bricmont.Kupiainen.88}. This refined definition of multidimensional Fr\"{o}hlich-Spencer contours using volumes seems to be the definitive one, and it was already used to complete the phase diagram of the bidimensional long-range random field Ising model by Ding, Huang and Maia \cite{johanes_china}, and also to prove the phase transition in ferromagnetic $q-$state systems, such as the Potts and Clock model by Affonso, Bissacot, Faria, and Welsch \cite{affonso2024phasetransitionferromagneticqstate}.

Despite more than four decades after the ideas of Fr\"{o}hlich and Spencer of disconnected contours for one-dimensional long-range systems, and almost six decades after the proof of the phase transition via contours for long-range multidimensional Ising models when $\alpha > d+1$ by Ginibre, Grossmann, and Ruelle, the main result of our paper is the first proof for the convergence of the cluster expansion at low temperatures for $\alpha > d$. The lack of a suitable definition of contours for multidimensional models partially explains the delay for this result; another more heuristic reason is the existence of three different regimes for the \textit{surface energy terms} of the long-range Ising models: denoting by $B_{R}$ the ball of radius $R$ and center $0$, we have

\[ F_{B_{R}} \coloneqq \displaystyle\sum_{\substack{x\in B_{R} \\ y\notin  B_{R}}}J_{xy} \approx
\begin{cases}
R^{2d - \alpha} & \text{if } d < \alpha < d + 1, \\
R^{d - 1} \log(R) & \text{if } \alpha = d + 1, \\
R^{d - 1} & \text{if } \alpha > d + 1.
\end{cases}
\]

More precisely, given two functions $f, g: \mathbb{Z}^d \to \mathbb{R}^{+}$, we say that $f$ is asymptotic to $g$, denoting by $f\approx g$, when there exist $N > 0$ and positive constants $A, A'$ such that $A'f(x) \leq g(x) \leq Af(x)$ for every $x$ such that $|x| > N$. Although we have to consider more complex regions than balls, this behavior justifies, in some sense, why the connected contours made by plaquettes should work only for $\alpha > d+1$. For a proof, see \cite{Pereira, Affonso2024, Biskup_Chayes_Kivelson_07}.

Our main application of the convergence of the cluster expansion is the proof that the truncated two-point correlation function decays with the same exponent as the interaction $J_{xy}$, another problem that has been open for more than four decades, since Iagolnitzer and Souillard proved that it can not decay faster than the interaction in  \cite{Iagolnitzer1977}.

The asymptotic behavior of correlations is well understood for the nearest-neighbor Ising model. In a recent breakthrough, Duminil-Copin, Goswami, and Raoufi \cite{Duminil_Copin_2019} demonstrated that the two-point correlation function \eqref{correlation} decays as $\exp(-c |x_1-x_2|)$, for an appropriate constant $c>0$ depending only on $\beta$ and $d$ for every $\beta$ different from the critical $\beta_c$ (the infimum of all $\beta$ such that $\langle\sigma_0\rangle_\beta^+>0$). They also claim that their results extend to general ferromagnetic short-range Ising models. 
Combined with earlier work by Lebowitz and Penrose \cite{Lebowitz1968} on ferromagnetic Ising models with a constant magnetic field $h$, this result provides a complete picture of the decay of correlations for these models outside the critical temperature $\beta_c$. Further refinements are also available: polynomial corrections to the decay, known as Ornstein-Zernike asymptotics, can be rigorously established \cite{Aoun+Ott+Velenik-2024, Ott+Velenik-2023}.

In contrast to the short-range case, the understanding of long-range systems remains incomplete. In \cite{Iagolnitzer1977}, it is proved that for ferromagnetic long-range Ising spin systems: (for precise definitions of the Hamiltonian and truncated correlation functions see Sections \ref{hamiltonian} and \ref{correlations}, respectively)
\[
\langle \sigma_{x_1}; \sigma_{x_2} \rangle_{\beta,\mathbf{h}}^+\geq C(\beta,\mathbf{h}) J_{x_1x_2},
\] 
where $C(\beta,\mathbf{h})>0$, for any $\beta$ and $\mathbf{h} \geq 0$. This result implies that, for interactions decaying polynomially, correlations cannot decay faster than polynomially. Notice that the result also holds, in particular, for any system with constant magnetic field $h$. This contrasts sharply with the short-range Ising model, where polynomial decay of correlations is observed only at the critical temperature $\beta_c$.  Our main result regarding the decay of correlations is Theorem \ref{main_decay}, proved in Section 3, and it can be stated as follows:
\begin{theorem*}
    For $\beta$ large enough, there exists a constant $c_4(\alpha,d,\beta) \coloneqq c_4>0$ such that for any distinct points $x_1, x_2 \in \Z^d$ it holds
 \[
 \langle\sigma_{x_1};\sigma_{x_2}\rangle^+_{\beta} \leq c_4 J_{x_1 x_2}. 
 \]
\end{theorem*}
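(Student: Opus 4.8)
The plan is to derive the upper bound on the truncated two-point function from the convergence of the cluster expansion for the multiscale contour ensemble, following the classical strategy of Minlos--Sinai and Imbrie adapted to the disconnected long-range contours of \cite{Johanes}. The starting point is the contour representation of the $+$-state: by duplicating the system (or by differentiating the partition function with respect to the couplings), one writes $\langle\sigma_{x_1};\sigma_{x_2}\rangle^+_\beta$ as a difference of ratios of constrained partition functions, where the constraint forces the contour configuration to ``separate'' $x_1$ and $x_2$ from the sea of $+$'s. Concretely, $\langle\sigma_{x_1};\sigma_{x_2}\rangle^+_\beta = \langle\sigma_{x_1}\sigma_{x_2}\rangle^+_\beta - \langle\sigma_{x_1}\rangle^+_\beta\langle\sigma_{x_2}\rangle^+_\beta$, and each term, after passing to contours, becomes a sum over contour families weighted by $e^{-\beta(\text{energy})}$ divided by the full partition function $\Ztil$.

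Next I would feed this into the cluster expansion. Once the cluster expansion converges (which is exactly the main theorem of the paper, available to us here), the logarithm of each constrained partition function is an absolutely convergent sum over clusters of polymers, and the difference $\langle\sigma_{x_1}\sigma_{x_2}\rangle^+ - \langle\sigma_{x_1}\rangle^+\langle\sigma_{x_2}\rangle^+$ telescopes down to a sum over only those clusters whose support simultaneously ``sees'' both $x_1$ and $x_2$ --- the disconnected terms cancel between the product and the joint expectation. Thus the truncated correlation is bounded by
\[
|\langle\sigma_{x_1};\sigma_{x_2}\rangle^+_\beta| \;\leq\; \sum_{\substack{\text{clusters } C \\ C \text{ links } x_1 \text{ and } x_2}} \big| \Phi(C) \big|,
\]
where $\Phi(C)$ is the Ursell weight of the cluster. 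The convergence of the expansion gives exponential-in-total-weight control of $\sum |\Phi(C)|$, so the task reduces to a purely combinatorial/geometric estimate: sum the cluster weights over all clusters that connect the two points, and show the result is $O(J_{x_1 x_2}) = O(|x_1-x_2|^{-\alpha})$.

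The geometric estimate is where the long-range nature and the multiscale structure of the contours enter decisively, and this is the step I expect to be the main obstacle. A cluster linking $x_1$ and $x_2$ either contains a single ``large'' contour whose support has diameter comparable to $|x_1-x_2|$, or it is a chain of smaller contours/polymers bridging the gap. For the single large contour, the Peierls-type bound on its weight must beat the entropy of its placement; the key point is that the energy cost of a contour of spatial extent $\ell$ already carries, through the surface-energy terms $F_{B_R}$ and the long-range tail $\sum_{x\in\Int,\,y\notin\Int}J_{xy}$, a factor that decays at least like $|x_1-x_2|^{-\alpha}$ when the contour must reach across distance $|x_1-x_2|$ --- this is precisely the content of the estimates on $F$ used to prove Peierls' bound in \cite{Johanes}, and it is why the exponent $\alpha$ (and not merely some $\alpha' < \alpha$) comes out. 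For the chains, one uses the standard cluster-expansion convergence to sum geometrically over the intermediate polymers: each ``step'' of length $r$ contributes a summable factor, and the product of a chain of steps summing to total displacement $|x_1-x_2|$ is dominated, after resumming, by the worst single step, giving again a $|x_1-x_2|^{-\alpha}$ bound provided $\beta$ is large enough that the per-unit-length cost is small. Matching the exponent exactly --- rather than losing a logarithm or a polynomial factor, and handling uniformly the three regimes $d<\alpha<d+1$, $\alpha=d+1$, $\alpha>d+1$ of $F_{B_R}$ --- is the delicate part, and I would isolate it in one or two lemmas that bound $\sum_{C:\,x_1\leftrightarrow x_2}|\Phi(C)|$ by a constant times $\sum_{z} J_{x_1 z} J_{z x_2}$-type convolutions, which are themselves $O(J_{x_1 x_2})$ by the standard summability of $J$ when $\alpha>d$.
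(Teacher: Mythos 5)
Your overall architecture matches the paper's: convergence of the cluster expansion (here implemented via a complex field at $x_1,x_2$, Proposition \ref{important_prop_fields}, plus Cauchy estimates, Proposition \ref{prop_derivatives}) reduces the truncated correlation to a sum of cluster weights over clusters whose modified volumes touch both points, and the stretched-exponential Peierls cost indeed kills the contribution of a single contour whose $\widetilde{V}$ reaches across the gap. The genuine gap is in the step you yourself flag as delicate: reducing the chain contribution to ``$\sum_z J_{x_1z}J_{zx_2}$-type convolutions'' is not an available move, because the links in a chain are contour--contour (or polymer--polymer) interactions $F_{\gamma,\gamma'}$, not site--site couplings, and $F_{\gamma,\gamma'}$ grows with $|\widetilde{V}(\gamma)||\widetilde{V}(\gamma')|$. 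What the paper actually needs is a triangle-type inequality at the level of sets (Lemma \ref{erasing_edges}), $F_{A,B}F_{B,C}\lesssim |B|^2\diam(B)^\alpha F_{A,C}\bigl(\dis(A,B)^{-\alpha}+\dis(B,C)^{-\alpha}\bigr)$, whose size and diameter prefactors are only absorbable because each erased vertex carries its Peierls weight $e^{-\beta c_2\|\gamma\|/2}$ (Proposition \ref{important_corol}); moreover the erasure must be performed recursively at the \emph{middle} vertex of the path (Proposition \ref{decay_proposition}), since erasing end-to-end accumulates powers of $|\gamma|$ depending on the chain length $\ell$ and the resummation over $\ell$ would fail. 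Your ``dominated by the worst single step'' heuristic silently assumes this bookkeeping works out; it is precisely the content of those two lemmas and is where the exponent $\alpha$ is preserved without loss.

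A second missing ingredient appears at the cluster level (Theorem \ref{main_decay}): along the tree of polymers produced by the tree-graph bound, consecutive polymers are related by \emph{incompatibility}, not by an $F$-factor, so there is nothing to convolve a priori. The paper must first convert incompatibility into geometry --- incompatible polymers are forced within distance $1+M|V(\Gamma)|^{a/(d+1)}$ of each other, and one can find an auxiliary point $x_3\neq x_2$ in $\widetilde{V}(\Gamma_\ell)$ at that distance --- which produces a genuine $J_{x_2,x_3}$ factor that can then be iterated along the path and bounded by $J_{x_1x_2}$ via the triangle and reverse H\"older inequalities. Without this conversion, and without the middle-vertex erasure scheme, the plan as written does not close; the rest of your outline (telescoping of disconnected clusters, Peierls versus entropy for a single large contour, geometric resummation over chain lengths for $\beta$ large) is consistent with what the paper does.
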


The theorem above has an important corollary, concerning the decay of the correlation between local functions.

\begin{corollary}
    Let $f,g: \Omega \rightarrow \mathbb{R}$ two local functions with $\supp(f) \cap \supp(g) =\emptyset$ and let $\beta$ be large enough. Then there exists $C_{f,g}(\alpha,d,\beta) \coloneqq C_{f,g}$ such that
    \[
    \langle f; g \rangle_\beta^+ \leq \frac{J C_{f,g}}{\dis(\supp(f),\supp(g))^\alpha}, 
    \]
    where $\displaystyle \dis(\supp(f),\supp(g)) = \min_{\substack{x \in \supp(f) \\ y \in \supp(g)}}|x-y|$.
\end{corollary}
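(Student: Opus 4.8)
The plan is to reduce the statement to the single–spin estimate of Theorem~\ref{main_decay} via the spin (Fourier--Walsh) expansion of $f$ and $g$. Put $V_1\coloneqq\supp(f)$ and $V_2\coloneqq\supp(g)$; these are finite and, by hypothesis, disjoint. Since each $\sigma_x$ takes values in $\{-1,+1\}$, every local function is a multilinear polynomial in the spins it depends on, so
\[
f=\sum_{A\subseteq V_1}\hat f(A)\,\sigma_A,\qquad g=\sum_{B\subseteq V_2}\hat g(B)\,\sigma_B,\qquad \sigma_A\coloneqq\prod_{x\in A}\sigma_x\ \ (\sigma_\emp\equiv 1),
\]
with finitely many real coefficients $\hat f(A),\hat g(B)$. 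By bilinearity of the truncated correlation,
\[
\langle f;g\rangle_\beta^+=\sum_{A\subseteq V_1}\ \sum_{B\subseteq V_2}\hat f(A)\,\hat g(B)\,\langle\sigma_A;\sigma_B\rangle_\beta^+ ,
\]
so it is enough to bound $\bigl|\langle\sigma_A;\sigma_B\rangle_\beta^+\bigr|$ for disjoint $A\subseteq V_1$, $B\subseteq V_2$ and to sum the $2^{|V_1|+|V_2|}$ resulting terms, absorbing the coefficients into the final constant.

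The second step, which is the core of the proof, is to dominate the truncated correlation of a product of spins by the very cluster sums controlled in Theorem~\ref{main_decay}. Expressing $\langle\sigma_A;\sigma_B\rangle_\beta^+$ through the convergent cluster expansion built in this paper, the subtraction of $\langle\sigma_A\rangle_\beta^+\langle\sigma_B\rangle_\beta^+$ cancels every contribution except those of clusters whose support ``reaches'' both $A$ and $B$; a union bound over the site $x\in A$ and the site $y\in B$ to which such a cluster must be anchored then bounds $\bigl|\langle\sigma_A;\sigma_B\rangle_\beta^+\bigr|$ by a sum, over the $|A|\cdot|B|$ pairs $(x,y)\in A\times B$, of exactly the weighted cluster sums estimated in the proof of Theorem~\ref{main_decay}. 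Since each of those is $\le c_4 J_{xy}$ up to the constants appearing there, one gets
\[
\bigl|\langle\sigma_A;\sigma_B\rangle_\beta^+\bigr|\ \le\ C(\alpha,d,\beta)\sum_{x\in A}\ \sum_{y\in B} J_{xy} .
\]
Alternatively, since the $+$ state at large $\beta$ satisfies the Griffiths--Kelly--Sherman inequalities, one may try to short-circuit this step with a Lebowitz/Ginibre–type inequality of the form $0\le\langle\sigma_A;\sigma_B\rangle_\beta^+\le\sum_{x\in A,\,y\in B}\langle\sigma_x;\sigma_y\rangle_\beta^+$ and then invoke Theorem~\ref{main_decay} termwise; the cluster-expansion route, however, is self-contained given what has already been proved.

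It remains to collect the estimates. For $x\in V_1$, $y\in V_2$ the function $r\mapsto J/r^\alpha$ is decreasing and $\min_{x\in V_1,\,y\in V_2}|x-y|=\dis(V_1,V_2)$, hence $J_{xy}\le J/\dis(V_1,V_2)^\alpha$ and $\sum_{x\in A}\sum_{y\in B}J_{xy}\le |A|\,|B|\,J/\dis(V_1,V_2)^\alpha$. Feeding this into the expansion of $\langle f;g\rangle_\beta^+$ yields
\[
\bigl|\langle f;g\rangle_\beta^+\bigr|\ \le\ \frac{J\,C_{f,g}}{\dis(\supp(f),\supp(g))^\alpha},\qquad
C_{f,g}\coloneqq C(\alpha,d,\beta)\!\!\sum_{A\subseteq V_1}\sum_{B\subseteq V_2}\!\!|\hat f(A)|\,|\hat g(B)|\,|A|\,|B| ,
\]
a finite constant depending only on $f,g,\alpha,d,\beta$, as required. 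The one genuinely non-automatic point is the middle step: if one declines to use a correlation inequality, passing from the truncated correlation of products of spins to single–spin cluster sums requires re-running the estimates behind Theorem~\ref{main_decay} while bookkeeping clusters anchored to two prescribed finite sets rather than to two single sites — an elaboration of the existing argument rather than a new ingredient, but the part deserving the most care.
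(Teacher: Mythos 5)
Your overall skeleton (expand $f,g$ over products supported in $\supp(f)$, $\supp(g)$, reduce to pairwise two-point bounds, then sum the coefficients and use $J_{xy}\le J/\dis(\supp(f),\supp(g))^\alpha$) is the same as the paper's, but your primary route for the crucial middle step has a genuine gap. The quantity $\langle\sigma_A;\sigma_B\rangle_\beta^+$ is a second cumulant of two \emph{product} observables, and it is not one of the quantities the cluster machinery of Section 3 controls: Proposition \ref{prop_derivatives} produces the $n$-point cumulants $\langle\sigma_{x_1}=-1;\dots;\sigma_{x_n}=-1\rangle$ as derivatives of $\log\Ztil^+_{\Lambda,\beta,\mathbf h}$ in \emph{single-site} fields, and $\langle\sigma_A;\sigma_B\rangle$ is not such a derivative. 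Hence your claim that ``the subtraction of $\langle\sigma_A\rangle\langle\sigma_B\rangle$ cancels every contribution except those of clusters whose support reaches both $A$ and $B$'' is not backed by anything proved in the paper. To make it rigorous you would either have to add source terms conjugate to $\sigma_A$ and $\sigma_B$ (multi-body perturbations of the Hamiltonian, forcing you to redo the activity bounds of Propositions \ref{important_prop_fields}--\ref{prop_derivatives}), or go through moment--cumulant combinatorics relating $\langle\sigma_A;\sigma_B\rangle$ to the mixed cumulants at all sites of $A\cup B$ and control those with the correct decay in $\dis(A,B)$ --- substantially more than ``a union bound over $(x,y)\in A\times B$'' and more than Theorem \ref{main_decay} (which treats exactly two anchor points) delivers. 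Your fallback inequality $\langle\sigma_A;\sigma_B\rangle\le\sum_{x\in A,y\in B}\langle\sigma_x;\sigma_y\rangle$ is also only asserted, not proved or referenced.

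The paper's actual proof is precisely that fallback, but implemented so that the correlation inequality is literally available in the literature: one writes $f=\sum_{A\subset\supp(f)}f_A n_A$ and $g=\sum_{B\subset\supp(g)}g_B n_B$ with occupation variables $n_x=\frac{1+\sigma_x}{2}$, $n_A=\prod_{x\in A}n_x$, invokes Lebowitz's inequality
\begin{equation*}
\langle n_A;n_B\rangle^+_\beta\;\le\;\sum_{\substack{x\in A\\ y\in B}}\langle n_x;n_y\rangle^+_\beta,
\end{equation*}
and then each $\langle n_x;n_y\rangle^+=\tfrac14\langle\sigma_x;\sigma_y\rangle^+$ is bounded by Theorem \ref{main_decay}; counting subsets gives the constant $C_{f,g}$. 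So the correct repair of your argument is to drop the cluster-anchoring claim, switch from the Fourier--Walsh basis $\sigma_A$ to the occupation-variable basis $n_A$ (or otherwise supply a proof of the product-spin inequality), and cite the Lebowitz inequality explicitly; with that replacement your remaining steps (bilinearity, monotonicity of $r\mapsto J/r^\alpha$, and the subset counting) coincide with the paper's.
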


The proof of Corollary 1.1. will be provided in Section 3, following the proof of Theorem \ref{main_decay}. Previous results for different regions of the $(\beta,h)$-plane include those of  Newman and Spohn \cite{newman_spohn_shiba_relation} (see also Aizenman and Fernández in \cite{Aizenman1988}), who showed that the decay of the two-point correlation function is asymptotically $J_{x_1x_2}$ for any $\beta<\beta_c$ and $h=0$ (see also Youn \cite{Aoun2021}, which includes results for the Potts model), as well as results from Iagolnitzer and Souillard \cite{Iagolnitzer1977}, who conjectured that this asymptotic behavior should also hold whenever $h\neq 0$, although they could only prove this result when $h$ is sufficiently large. For $h > 0$, they were only able to bound the correlation from above by $C(\beta,h,\varepsilon)|x_1-x_2|^{2d-\alpha+\varepsilon}$, for any $\varepsilon>0$, where $C(\beta,h,\varepsilon)>0$. We were not able to find any improvement on these results.

We also mention similar results to the ones of Iagoniltzer and Soulliard, proved by Klein and Masooman in the case of random $J_{xy}$ and $h_x$ \cite{Klein.Masooman.97}. Our work complements this picture for the deterministic case by covering the region where $\beta$ is much larger than $\beta_c$, being the first result on this problem for long-range Ising spin systems (see Figure \ref{fig:decay}). 

\begin{figure}[H]
    \centering
    \input{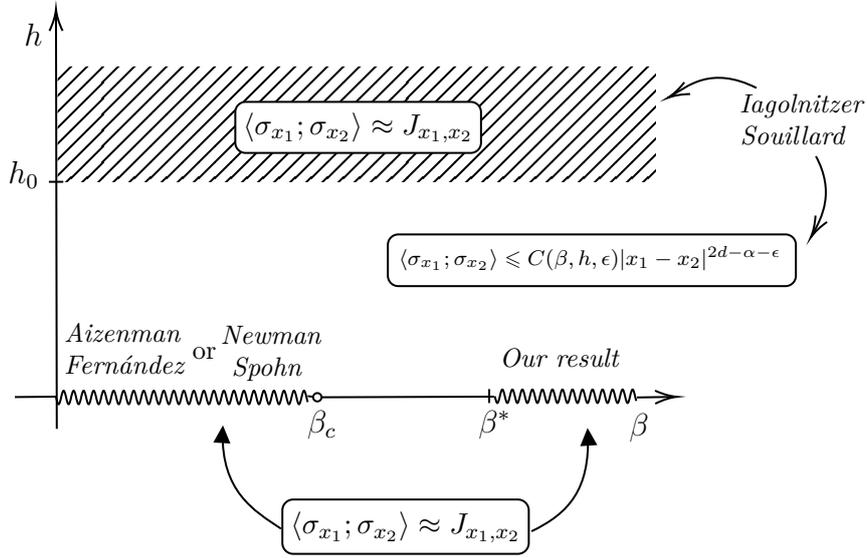}
    \caption{In the figure above, $h_0$ and $\beta^*$ are constants much larger than $0$ and $\beta_c$, respectively. This is due to the fact that the results depend on the convergente of the cluster expansion.}
    \label{fig:decay}
\end{figure}

In this paper, we are going to show that it is possible to develop a convergent cluster expansion in all the region $\alpha > d \geq 2$ at low enough temperatures, making use of a refined version of the contours in \cite{Affonso2024} as defined in \cite{Johanes}. As an application, we study the decay of the two-point correlation function, where we show the algebraic decay of the correlations at low temperatures. The paper is organized as follows. In Section 2, we review the definitions of the contours and develop the cluster expansion, proving the relevant estimates and the convergence of the cluster expansion at low enough temperatures. In Section 3, we show how to control the truncated correlation functions using the results in the previous section.


\section{Cluster Expansion}\label{cluster}

\subsection{The Model and Contours}

The configuration space is given by $\Omega\coloneqq \{-1,+1\}^{\Z^d}$. We write $\Lambda\Subset \Z^d$ to denote a finite subset of $\Z^d$. Fixed such $\Lambda$, the set of \textit{local configurations} is $\Omega_\Lambda\coloneqq \{-1,1\}^\Lambda$. Moreover, given ${\eta\in\Omega}$, the set of \textit{local configurations with $\eta$ boundary condition} is ${\Omega_\Lambda^\eta\coloneqq \{\omega\in\Omega : \omega_x=\eta_x, \text{ }\forall x\in\Lambda^c\}}$.  The \textit{local Hamiltonian of the ferromagnetic long-range Ising model} in $\Lambda\Subset\Z^d$ with $\eta$-boundary condition is a function $H_{\Lambda,\mathbf{h}}^{\eta}:\Omega_\Lambda^\eta \to \mathbb{R}$, given by

\begin{equation}\label{hamiltonian_spin}
    H_{\Lambda,\mathbf{h}}^{\eta}(\sigma)\coloneqq -\sum_{x,y\in\Lambda} J_{xy}\sigma_x\sigma_y - \sum_{x\in \Lambda, y\in\Lambda^c} J_{xy}\sigma_x\eta_y-\sum_{x\in \Lambda}h_x \sigma_x,
\end{equation}
where $\mathbf{h}=\{h_x\}_{x\in \Lambda}$ is a collection of real numbers (external field), and the interaction $\{J_{xy}\}_{x,y\in\Z^d}$ is defined as

\begin{equation}\label{Long-Range Interaction}
J_{xy} \coloneqq \begin{dcases}
\frac{J}{|x-y|^\alpha} &\text{ if }x\neq y,\\[0.2cm]
0 &\text{otherwise},
\end{dcases}
\end{equation}
with $J >0$, $\alpha>d$. Let $\mathscr{F}$ be the $\sigma$-algebra on $\Omega$ generated by the cylinder sets. The \textit{finite volume Gibbs measures} are probability measures on $(\Omega, \mathscr{F})$ under which the integration of bounded measurable functions $f:\Omega\rightarrow \mathbb{R}$ is given by 
    \begin{equation}
        \langle f \rangle_{\Lambda, \beta, \mathbf{h}}^\eta \coloneqq \frac{1}{Z_{\Lambda, \beta,\mathbf{h}}^{\eta}}\sum_{\sigma \in \Omega_\Lambda^\eta}f(\sigma)e^{-\beta H_{\Lambda,\mathbf{h}}^{\eta}(\sigma)}.
    \end{equation}
Here, $\beta>0$ is the inverse temperature and $Z_{\Lambda,\beta,\mathbf{h}}^{\eta}$ is the \textit{partition function}, defined as 

\begin{equation}\label{partitionfunction_spin}
    Z_{\Lambda, \beta,\mathbf{h}}^{\eta}\coloneqq \sum_{\sigma\in\Omega_\Lambda^\eta} e^{-\beta H_{\Lambda,\mathbf{h}}^{\eta}(\sigma)}.
\end{equation}
When $\mathbf{h} \equiv 0$, we will omit the subscript $\mathbf{h}$ and write $H_{\Lambda}^{\eta}$, $Z_{\Lambda, \beta}^{\eta}$ and $\langle\cdot\rangle_{\Lambda,\beta}^\eta$. Two particularly important boundary conditions are the ground state configurations $\sigma_{+} \equiv +1$ and $\sigma_{-} \equiv -1$, and are called, respectively, $+$ and $-$ boundary conditions. It is well known that the measures $\langle\cdot\rangle_{\Lambda,\beta,\mathbf{h}}^+$ and $\langle\cdot\rangle_{\Lambda,\beta,\mathbf{h}}^-$ converge in the weak* topology when $\Lambda$ invades $\Z^d$ to $\langle\cdot\rangle_{\beta,\mathbf{h}}^+$ and $\langle\cdot\rangle_{\beta,\mathbf{h}}^-$, respectively. In this paper, we will only consider the $+$-boundary condition.

One of the most prolific ideas for studying the low-temperature regime of lattice spin systems with discrete state space, since the development of Peierls' argument, is to map each spin configuration to a configuration of geometric objects called contours. To do this, we use the definition of incorrect points. Unless stated otherwise, the distances will be given by the $\ell_1$-norm. 

\begin{definition}\label{def:incorrect_points}
     Given a configuration $\sigma$, a point $x \in \Z^d$ is $+$\textit{-correct} \textit{(resp.} $-$\textit{-correct)} for $\sigma$ if $\sigma_y = +1$ \textit{(resp.} \textit{$\sigma_y = -1$)} for every $y \in B_1(x)$, where $B_1(x)$ is the unit ball centered at $x \in \Z^d$. A point is called \textit{incorrect} for $\sigma$ if it is neither $+$-correct nor $-$-correct. The \textit{boundary} of a configuration $\sigma$ is the set $\partial \sigma$ of all incorrect points for $\sigma$.
\end{definition}


Since we are working with $+$-boundary conditions, the boundary of every configuration of interest is finite. The contours correspond to some partition of the boundary. In Pirogov-Sinai theory, for example, the contours are the connected components of the boundary paired with a label. However, this construction is not suitable for general long-range systems, as it restricts the rate at which interactions can decay, see \cite{Park.88.I, Park.88.II}. Building on the ideas present in Fr\"{o}hlich-Spencer \cite{Frohlich.Spencer.82}, Affonso, Bissacot, Endo, and Satoshi \cite{affonso2024phasetransitionferromagneticqstate} introduced multiscale contours, that are more appropriate for long-range Ising models, allowing decay in the whole region $\alpha > d$ of exponents. A notable feature of these contours is that they are potentially disconnected sets. The following definition, due to \cite{Johanes}, is a refined version of the referred multiscale contours, whose main advantage is that it is simpler, despite still allowing the contours to be disconnected. 


\begin{definition}\label{Ma}
    Let $M>1$ and $a > d$. For each $A\Subset\Z^d$, a set $\Gamma(A) \coloneqq \{\overline{\gamma} : \overline{\gamma} \subset A\}$ is called an $(M,a)$-\emph{partition} when the following two conditions are satisfied.
	\begin{enumerate}[label=\textbf{(\Alph*)}, series=l_after] 
		\item They form a partition of $A$, i.e.,  $\bigcupdot_{\overline{\gamma} \in \Gamma(A)}\overline{\gamma}=A$.
		
		\item For all $\overline{\gamma}, \overline{\gamma}^\prime \in \Gamma(A)$, 
			\be\label{B_distance_2}
			\dis(\overline{\gamma},\overline{\gamma}') > M\min\left \{|V(\overline{\gamma})|,|V(\overline{\gamma}')|\right\}^\frac{a}{d+1},
 			\ee
	\end{enumerate}
 where $V(\Lambda)$ denotes the \textit{volume} of $\Lambda \Subset \mathbb{Z}^d$, and is given by $V(\Lambda) \coloneqq \Z^d \setminus \Lambda^{(0)}$ with $\Lambda^{(0)}$ being the unique unbounded connected component of $\Lambda^c$.
\end{definition}

In this paper, we will use $a \coloneqq a(\alpha,d) = \frac{3(d+1)}{(\alpha-d)\wedge 1}$. The constant $M$ will be appropriately chosen later. Even after fixing the parameters $M$ and $a$, there may still be multiple partitions of a set that are $(M, a)$-partitions. However, there is always a \textit{finest} $(M,a)$-partition and we pick this one in the definition of the map $A \mapsto \Gamma(A)$ (see \cite{Johanes} for details). 

The contours will be defined by means of the $(M, a)$-partition and the following two propositions guarantee desired properties for them. Property \textbf{(A1)} below will be crucial once we define external contours, (see Proposition \ref{external}). It implies in particular that $\overline{\gamma}^\prime$ is contained in the unbounded component of $\overline{\gamma}^c$ if and only if $V(\overline{\gamma})\cap V(\overline{\gamma}^\prime) = \emptyset$. Furthermore, it will be important for the development of the cluster expansion later on that the compatibility between contours can be checked pairwise. This is the subject of Proposition \ref{Comp}.

\begin{proposition}
    The finest $(M,a)$-partition of any $A\Subset \Z^d$ satisfies the following property:

\begin{itemize}
    \item[\textbf{\emph{(A1)}}] For any $\overline{\gamma},\overline{\gamma}^\prime\in \Gamma(A)$, $\overline{\gamma}'$ is contained in only one connected component of $(\overline{\gamma})^c$.
\end{itemize}
\end{proposition}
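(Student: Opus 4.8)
The plan is a refinement\,-and\,-contradiction argument exploiting the minimality of the finest $(M,a)$-partition. Fix distinct $\overline{\gamma},\overline{\gamma}'\in\Gamma(A)$; since $\overline{\gamma}\cap\overline{\gamma}'=\emptyset$, the set $\overline{\gamma}'$ lies entirely in $(\overline{\gamma})^c$, and what must be shown is that it lies in a single connected component there. Suppose not, so that $\overline{\gamma}'$ meets at least two connected components of $(\overline{\gamma})^c$. Since $\overline{\gamma}$ is finite and $d\ge 2$, $(\overline{\gamma})^c$ has a unique unbounded component, so at least one of the components met by $\overline{\gamma}'$ --- call it $C$ --- is bounded, and therefore $C\subseteq V(\overline{\gamma})$. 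I then split $\overline{\gamma}'$ into $\overline{\gamma}'_1\coloneqq\overline{\gamma}'\cap C$ and $\overline{\gamma}'_2\coloneqq\overline{\gamma}'\setminus C$: both are nonempty (the first because $\overline{\gamma}'$ meets $C$, the second because it meets a second component), disjoint, with union $\overline{\gamma}'$. The goal is to show that replacing $\overline{\gamma}'$ by $\{\overline{\gamma}'_1,\overline{\gamma}'_2\}$ yields another $(M,a)$-partition of $A$, which is strictly finer than $\Gamma(A)$ and thus contradicts the choice of $\Gamma(A)$ as the finest one.

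Two elementary facts make this split admissible. The first is geometric: any monotone nearest-neighbor lattice path ($\ell_1$-geodesic) joining a point $p_1\in\overline{\gamma}'_1\subseteq C$ to a point $p_2\in\overline{\gamma}'_2$ must pass through $\overline{\gamma}$, since the first step by which the path leaves the component $C$ of $(\overline{\gamma})^c$ necessarily lands on $\overline{\gamma}$; writing $q$ for that point, $|p_1-p_2|\ge|p_1-q|\ge\dis(\overline{\gamma}',\overline{\gamma})$, and taking the infimum over $p_1,p_2$ gives $\dis(\overline{\gamma}'_1,\overline{\gamma}'_2)\ge\dis(\overline{\gamma},\overline{\gamma}')$. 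The second concerns the volume operator: $V$ is monotone under inclusion (if $\Lambda'\subseteq\Lambda$ then $\Lambda^{(0)}$ is a connected unbounded subset of $(\Lambda')^c$, hence contained in $(\Lambda')^{(0)}$, so $V(\Lambda')\subseteq V(\Lambda)$) and idempotent ($V(V(\Lambda))=V(\Lambda)$, since $\Lambda^{(0)}$ is already connected and unbounded). Applied to $\overline{\gamma}'_1\subseteq C\subseteq V(\overline{\gamma})$ and to $\overline{\gamma}'_1,\overline{\gamma}'_2\subseteq\overline{\gamma}'$, these give $|V(\overline{\gamma}'_1)|\le|V(\overline{\gamma})|$ and $|V(\overline{\gamma}'_i)|\le|V(\overline{\gamma}')|$ for $i=1,2$, whence $\min\{|V(\overline{\gamma}'_1)|,|V(\overline{\gamma}'_2)|\}\le\min\{|V(\overline{\gamma})|,|V(\overline{\gamma}')|\}$.

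Combining these with \eqref{B_distance_2} for the pair $(\overline{\gamma},\overline{\gamma}')$,
\[
\dis(\overline{\gamma}'_1,\overline{\gamma}'_2)\ge\dis(\overline{\gamma},\overline{\gamma}')>M\min\{|V(\overline{\gamma})|,|V(\overline{\gamma}')|\}^{\frac{a}{d+1}}\ge M\min\{|V(\overline{\gamma}'_1)|,|V(\overline{\gamma}'_2)|\}^{\frac{a}{d+1}},
\]
so \eqref{B_distance_2} holds for the new pair; and for every remaining block $\overline{\eta}\in\Gamma(A)$ the same monotonicities propagate \eqref{B_distance_2} from $(\overline{\gamma}',\overline{\eta})$ to each $(\overline{\gamma}'_i,\overline{\eta})$, using $\dis(\overline{\gamma}'_i,\overline{\eta})\ge\dis(\overline{\gamma}',\overline{\eta})$ and $|V(\overline{\gamma}'_i)|\le|V(\overline{\gamma}')|$. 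Thus $(\Gamma(A)\setminus\{\overline{\gamma}'\})\cup\{\overline{\gamma}'_1,\overline{\gamma}'_2\}$ is an $(M,a)$-partition of $A$ strictly refining $\Gamma(A)$, the desired contradiction. I expect the only real points of care to be the two structural properties of $V$ --- monotonicity and idempotency of "filling in holes" --- and the verification that the split preserves \eqref{B_distance_2} against every block; it is also worth stressing that the hypothesis "finest" cannot be dropped, since a thin annulus $\overline{\gamma}$ together with a two-blob set $\overline{\gamma}'$ having one blob deep inside the hole and one far outside satisfies \eqref{B_distance_2} while violating (A1).
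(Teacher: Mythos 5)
Your proposal is correct and follows essentially the same argument as the paper: split the offending block along a bounded component of $(\overline{\gamma})^c$, check that condition \eqref{B_distance_2} survives for the new pair (via the fact that any path between the two pieces crosses $\overline{\gamma}$, together with monotonicity of the volume) and against all other blocks, and conclude by contradiction with minimality of the finest $(M,a)$-partition. The only differences are cosmetic (notation and making the monotonicity/idempotency of $V$ explicit).
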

\begin{proof}
     We will show that, if $\Gamma$ satisfies \eqref{B_distance_2} and there is a pair $(\overline{\gamma}', \overline{\gamma}'')$ such that $\overline{\gamma}''$ is contained in more than one connected component of $(\overline{\gamma}')^c$, then we can break $\overline{\gamma}''$ into two pieces, yielding a partition that still satisfies \eqref{B_distance_2}, but is strictly finer than the original. The conclusion follows from the fact that $\Gamma$ is the finest $(M,a)$-partition satisfying \eqref{B_distance_2}.

    Let $A$ be one of the bounded connected components of $(\overline{\gamma}')^c$ such that $A \cap \overline{\gamma}'' := \overline{\gamma}_1 \neq \emptyset$ and take $\overline{\gamma}_2 := \overline{\gamma}''\backslash \overline{\gamma}_1$. Then $\Gamma' := (\Gamma\backslash \{\overline{\gamma}''\} ) \cup \{\overline{\gamma}_1, \overline{\gamma}_2\}$ is clearly a finer partition than $\Gamma$ of the same set. We will show that $\Gamma'$ satisfies \eqref{B_distance_2}. 

    Let $\overline{\gamma} \in \Gamma\backslash \{\overline{\gamma}''\}$ be any. Then, for $i \in \{1, 2\}$, we have
    \begin{equation*}
        d(\overline{\gamma}, \overline{\gamma}_i) \geq d(\overline{\gamma}, \overline{\gamma}'') > M\min\left \{|V(\overline{\gamma})|,|V(\overline{\gamma}'')|\right\}^{\frac{a}{d+1}} > M\min\left \{|V(\overline{\gamma})|,|V(\overline{\gamma}_i)|\right\}^{\frac{a}{d+1}}.
    \end{equation*}

    Thus, we only need to check the condition for the pair $(\overline{\gamma}_1, \overline{\gamma}_2)$.  Notice that $d(\overline{\gamma}_1, \overline{\gamma}_2) > d(\overline{\gamma_1}, \overline{\gamma}')$. 
    Since $d(\overline{\gamma}_1, \overline{\gamma}') \geq d(\overline{\gamma}', \overline{\gamma}'')$, we have
    \begin{equation*}
        d(\overline{\gamma}_1, \overline{\gamma}_2) > d(\overline{\gamma}', \overline{\gamma}'') > M\min\left \{|V(\overline{\gamma}')|,|V(\overline{\gamma}'')|\right\}^{\frac{a}{d+1}} >  M\min\left \{|V(\overline{\gamma}_1)|,|V(\overline{\gamma}_2)|\right\}^{\frac{a}{d+1}}.
    \end{equation*}

    The last inequality comes by the fact that clearly $|V(\overline{\gamma}_i)| < |V(\overline{\gamma}'')|$ and $|V(\overline{\gamma}_1)| < |V(\overline{\gamma}')|$ because $V(\overline{\gamma}_1)$ is in the interior of $\overline{\gamma}'$ by hypothesis.
\end{proof}

\begin{proposition}\label{Comp}
    Let $\overline{\gamma}_1,\dots,\overline{\gamma}_n$ be a family of subsets such that $\{\overline{\gamma}_i, \overline{\gamma}_j\}$ is the finest $(M, a)$-partition of $\overline{\gamma_i} \cup \overline{\gamma_j}$, for every $i \neq j$. Then the family $\Gamma=\{\overline{\gamma}_1,\dots,\overline{\gamma}_n\}$ is the finest $(M, a)$-partition of $\bigcup_{i = 1}^n \overline{\gamma_i}$.
\end{proposition}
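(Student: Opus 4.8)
The plan is to exploit the local, pairwise nature of condition \eqref{B_distance_2} together with Property \textbf{(A1)}, which is already established for each pair. First I would check that $\Gamma=\{\overline{\gamma}_1,\dots,\overline{\gamma}_n\}$ is indeed an $(M,a)$-partition of $B:=\bigcup_{i=1}^n \overline{\gamma}_i$: condition \textbf{(A)} is automatic since the $\overline{\gamma}_i$ are pairwise disjoint (being the finest $(M,a)$-partition of $\overline{\gamma}_i\cup\overline{\gamma}_j$ forces disjointness) and their union is $B$ by definition; condition \textbf{(B)} is exactly \eqref{B_distance_2} for each pair $\{\overline{\gamma}_i,\overline{\gamma}_j\}$, which holds by hypothesis. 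Note that the volume $|V(\overline{\gamma}_i)|$ appearing in \eqref{B_distance_2} is computed intrinsically from $\overline{\gamma}_i$ alone, so the distance bound for the pair inside $\overline{\gamma}_i\cup\overline{\gamma}_j$ and inside $B$ is literally the same inequality; there is no subtlety in transferring it.

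The substance is to show this $(M,a)$-partition is the \emph{finest} one. Suppose for contradiction that $\Gamma$ is not finest; then by the definition of the map $A\mapsto\Gamma(A)$ (the finest partition exists and refines every other $(M,a)$-partition), the finest $(M,a)$-partition $\Gamma'$ of $B$ is strictly finer than $\Gamma$. This means some $\overline{\gamma}_k$ is split by $\Gamma'$ into at least two nonempty pieces; write $\overline{\gamma}_k=\bigcupdot_{\ell} \delta_\ell$ with the $\delta_\ell\in\Gamma'$ and at least two pieces present. Now restrict $\Gamma'$ to $\overline{\gamma}_k\cup\overline{\gamma}_j$ for a fixed $j\neq k$: I claim the collection $\{\delta_\ell\}_\ell\cup\{\overline{\gamma}_j\}$ is an $(M,a)$-partition of $\overline{\gamma}_k\cup\overline{\gamma}_j$. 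Condition \textbf{(A)} is clear. For condition \textbf{(B)}, the pairs $\{\delta_\ell,\delta_{\ell'}\}$ satisfy \eqref{B_distance_2} because they are pairs in $\Gamma'$; and for a pair $\{\delta_\ell,\overline{\gamma}_j\}$ we use $\dis(\delta_\ell,\overline{\gamma}_j)\geq \dis(\overline{\gamma}_k,\overline{\gamma}_j) > M\min\{|V(\overline{\gamma}_k)|,|V(\overline{\gamma}_j)|\}^{a/(d+1)} \geq M\min\{|V(\delta_\ell)|,|V(\overline{\gamma}_j)|\}^{a/(d+1)}$, the last step because $\delta_\ell\subset\overline{\gamma}_k$ is contained in a single connected component of $\overline{\gamma}_k^c$-type consideration forces $|V(\delta_\ell)|\leq|V(\overline{\gamma}_k)|$ (here one invokes Property \textbf{(A1)} / the monotonicity of volume under the kind of inclusions occurring inside a partition, exactly as in the proof of the previous proposition). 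Hence $\{\delta_\ell\}_\ell\cup\{\overline{\gamma}_j\}$ is an $(M,a)$-partition of $\overline{\gamma}_k\cup\overline{\gamma}_j$ that is strictly finer than $\{\overline{\gamma}_k,\overline{\gamma}_j\}$, contradicting the hypothesis that $\{\overline{\gamma}_k,\overline{\gamma}_j\}$ is the \emph{finest} $(M,a)$-partition of $\overline{\gamma}_k\cup\overline{\gamma}_j$. This contradiction shows $\Gamma$ is finest.

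The main obstacle I anticipate is the volume-comparison step $|V(\delta_\ell)|\leq|V(\overline{\gamma}_k)|$: one must be careful that splitting $\overline{\gamma}_k$ into pieces $\delta_\ell$ inside the larger set $B$ does not inflate the volume of a piece beyond that of $\overline{\gamma}_k$ itself. This is where Property \textbf{(A1)} and the definition $V(\Lambda)=\Z^d\setminus\Lambda^{(0)}$ do the work — essentially the same argument as in the last line of the proof of the preceding proposition, where $V(\overline{\gamma}_1)$ lies in the interior of $\overline{\gamma}'$. A clean way to package this is to first prove the auxiliary fact that if $\{\delta,\delta'\}$ partitions a set $C$ and satisfies \eqref{B_distance_2}, then $|V(\delta)|,|V(\delta')| < |V(C)|$, and reuse it here. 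Once that monotonicity is in hand, the rest is the routine "restrict-and-contradict-minimality" bookkeeping sketched above.
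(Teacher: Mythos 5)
Your proof is correct, but it takes a different route from the paper's. The paper first proves an auxiliary equivalence --- $\Gamma$ is the finest $(M,a)$-partition if and only if every proper subfamily $\Gamma'\subsetneq\Gamma$ is the finest partition of its own union --- and then obtains the result by induction starting from the pairwise hypothesis; in the key step the finest partition $\Gamma''$ of the big union splits some $\overline{\gamma}_\ast\in\Gamma$, and this immediately contradicts the finest-ness of the \emph{singleton} $\{\overline{\gamma}_\ast\}$, so no new distance estimates are needed: all pairs considered lie inside $\Gamma''$ and condition \textbf{(B)} is inherited for free. You instead go directly for a contradiction with the \emph{pairwise} hypothesis: the finest partition of $\bigcup_i\overline{\gamma}_i$ splits some $\overline{\gamma}_k$ into pieces $\delta_\ell$, and you check that $\{\delta_\ell\}_\ell\cup\{\overline{\gamma}_j\}$ is a strictly finer $(M,a)$-partition of $\overline{\gamma}_k\cup\overline{\gamma}_j$. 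This costs you the extra verification of \eqref{B_distance_2} for the mixed pairs $\{\delta_\ell,\overline{\gamma}_j\}$, which you handle correctly via $\dis(\delta_\ell,\overline{\gamma}_j)\geq\dis(\overline{\gamma}_k,\overline{\gamma}_j)$ and $|V(\delta_\ell)|\leq|V(\overline{\gamma}_k)|$; note, though, that the volume comparison is more elementary than you suggest: it needs neither Property \textbf{(A1)} nor the argument from the previous proposition, since $\delta_\ell\subset\overline{\gamma}_k$ forces the unbounded component of $\overline{\gamma}_k^{\,c}$ to sit inside the unbounded component of $\delta_\ell^{\,c}$, whence $V(\delta_\ell)\subset V(\overline{\gamma}_k)$ directly from the definition of $V$. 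What each approach buys: yours is a one-shot argument with no induction and no subset lemma, at the price of a (mild) geometric check; the paper's is purely combinatorial once one grants, as you both do, the fact from \cite{Johanes} that the finest $(M,a)$-partition refines every other $(M,a)$-partition of the same set.
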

\begin{proof}
Since Condition \textbf{(B)} depends only on pairs of elements, we just need to check that the set $\Gamma$ is the finest partition. Furthermore, we claim that $\Gamma$ is the finest $(M,a)$-partition if, and only if, any $\Gamma'\subsetneq \Gamma$ is also the finest $(M,a)$-partition of its elements. The necessity condition is straightforward: if it were not the case, one could replace some $\Gamma' \subsetneq \Gamma$ with a strictly finer partition, $\Gamma'_2$ and, clearly, $(\Gamma \backslash \Gamma') \cup \Gamma'_2$ would be strictly finer than $\Gamma$.


Now suppose that every proper subset $\Gamma' \subsetneq \Gamma$ is the finest one but $\Gamma$ is not. Let then $\Gamma''$ be the finest one. This means that, for each $\overline{\gamma} \in \Gamma$ and $\overline{\gamma}'' \in \Gamma''$, there are only two possibilities: either $\overline{\gamma} \cap \overline{\gamma}'' = \varnothing$ or $\overline{\gamma}'' \subseteq \overline{\gamma}$. Since $\Gamma \neq \Gamma''$, there must exist $\overline{\gamma}''_1 \in \Gamma''$ such that $\overline{\gamma}''_1 \notin \Gamma$. In this case, the remaining possibilities are $\overline{\gamma}''_1 \cap \overline{\gamma}= \varnothing$ and $\overline{\gamma}''_1 \subsetneq \overline{\gamma}$ for each $\overline{\gamma} \in \Gamma$. Now, take any point $x \in \overline{\gamma}''_1$ and let $\overline{\gamma}_\ast$ be the element of $\Gamma$ such that $x \in \overline{\gamma}_\ast$. From the above possibilities, we must have $\overline{\gamma}''_1 \subsetneq \overline{\gamma}_\ast$. Since $\Gamma''$ is the finest $(M, a)$-partition, $\overline{\gamma}''_2 := \overline{\gamma}_\ast \backslash \overline{\gamma}''_1$ must be decomposable in elements of $\Gamma''$ as well, so $\{\overline{\gamma}_\ast\} \subset \Gamma$ is not the finest one, contradicting our hypothesis.

Finally, if each pair $\{\overline{\gamma}_i, \overline{\gamma}_j\}$ is the finest $(M, a)$-partition, we can proceed by induction using the fact just proven to show that $\Gamma$ is the finest $(M, a)$-partition.
\end{proof}

\begin{definition}[\textbf{Contours}]\label{def:d_contours}
A pair $\gamma \coloneqq (\overline{\gamma},\omega)$, with $\overline{\gamma} \Subset \mathbb{Z}^d$ and $\omega\in\Omega_{\overline{\gamma}}$, is called a \textit{contour} if there is some configuration $\sigma \in \Omega$ such that $\overline{\gamma} \in \Gamma(\partial \sigma)$ and $\omega = \sigma\vert_{\overline{\gamma}}$, that is, it is the restriction of $\sigma$ to $\overline{\gamma}$. The \emph{support of the contour} $\gamma$ is defined as $\Sp(\gamma)\coloneqq \overline{\gamma}$, and its \emph{size} is given by $|\gamma| \coloneqq |\Sp(\gamma)|$.
\end{definition}

With this definition, every configuration $\sigma \in \Omega_\Lambda^{\pm}$ is naturally associated to the family of contours $\Gamma(\sigma) \coloneqq \{\gamma_1, \dots, \gamma_n\}$.

\begin{remark}
    The present definition of contours is slightly different from the previous works \cite{Affonso2024, Johanes}. There, the contours only have information about the support and the labels, while here we choose to put the whole configuration inside the contour to simplify some calculations. The consequence is that the map mentioned above is \emph{injective}.
\end{remark}

It is noteworthy the fact that the correspondence $\Omega_\Lambda^{\pm} \ni \sigma \mapsto \Gamma(\sigma)$ (with codomain being the set of all family of contours) is not surjective, since there can exist some family of contours that are not generated by any configuration. 

\begin{definition}
    We say that a family of contours $\Gamma$ is \textit{compatible} if there exists some configuration $\sigma$ such that $\Gamma = \Gamma(\sigma)$. We say that $\gamma$ and $\gamma'$ are compatible if $\Gamma = \{\gamma, \gamma'\}$ is compatible.
\end{definition}

Given a subset $\Lambda \Subset \Z^d$ we define its \textit{interior} as $\I(\Lambda) \coloneqq V(\Lambda) \setminus \Lambda$. For the special case of a contour $\gamma$, we write $\I(\gamma)$ and $V(\gamma)$ instead of $\I(\Sp(\gamma))$ and $V(\Sp(\gamma))$. Also, denoting by $\I(\gamma)^{(k)}$ the connected components of $\I(\gamma)$, we can define the \emph{label} map ${\lab_{\overline{\gamma}}: \{\Sp(\gamma)^{(0)}, \I(\gamma)^{(1)},\dots, \I(\gamma)^{(n)}\} \rightarrow \{-1,+1\}}$ by taking the label of $\Sp(\gamma)^{(0)}$ as the sign of $\sigma$ in $\inB V(\gamma)$ and the label of $\I(\gamma)^{(k)}$ as the sign of $\sigma$ in $\partial_{\text{ex}} V(\I(\gamma)^{(k)})$. Notice that there can be connected components of a contour sitting inside its own interior. However, the labels are well-defined, since the sign of $\sigma$ is constant in the boundaries of $\Sp(\gamma)$. The following sets will be useful 
	\[
	\I_\pm(\gamma) \coloneqq \hspace{-1cm}\bigcup_{\substack{k \geq 1, \\ \lab_{\Sp(\gamma)}(\I(\gamma)^{(k)})=\pm 1}}\hspace{-1cm}\I(\Sp(\gamma))^{(k)} , \;\;\;
	\I(\gamma) = \I_+(\gamma) \cup \I_-(\gamma). \;\;\;
	\]

 For a family of contours $\Gamma$, we define $V(\Gamma) \coloneqq \bigcup_{\gamma \in \Gamma} V(\gamma)$. The sets $\I(\Gamma)$, \ $\I_-(\Gamma)$ and $\I_+(\Gamma)$ are defined analogously, by means of the union. 

One of the major steps in order to get a convergent cluster expansion is to define a suitable notion of \emph{external contour}. In the previous works \cite{Affonso2024, Johanes} the definition was a direct extension of the usual notion from Pirogov-Sinai theory --- a contour was said to be external if it's not in the interior of any other contour $\gamma'$. In our case, we will have to replace the interior by the minus interior $\I_-(\gamma')$. Thus, we introduce the \emph{modified volume}, $\widetilde{V}(\gamma) \coloneqq \Sp(\gamma) \cup \I_-(\gamma)$. As before, we define $\widetilde{V}(\Gamma)\coloneqq \cup_{\gamma\in\Gamma}\widetilde{V}(\gamma)$. We will use $\gamma \cup \Gamma$ instead of $\{\gamma\}\cup\Gamma$ in order to lighten the notation.

\begin{definition}[\textit{External and Internal Contours}]
    A contour $\ctr$ is \textit{external} with respect to a family $\ctrb$ if $\Sp(\ctr) \cap \widetilde{V}(\ctr') = \emptyset$ for every $\ctr' \in \ctrb\backslash \{\ctr\}$. We will denote by $\Gamma^e$ the family of all external contours from a given family of contours $\Gamma$. We define $\mathcal{E}^+_\Lambda$ as the collection of all compatible families $\Gamma$ of external contours in $\Lambda$ such that $V(\Gamma) \subset \Lambda$. When $\Lambda =\Z^d$, we write $\mathcal{E}_{\Z^d}^+ \coloneqq \mathcal{E}^+$.
    Moreover, we say that a family of contours $\Gamma$ is \textit{internal} to $\gamma$ if $\gamma \cup \Gamma$ is a compatible family of contours with $\gamma$ being the only external contour. We define $\mathscr{I}(\gamma)$ as the collection of all families of contours internal to $\gamma$. 
\end{definition}

\begin{remark}
    One of the properties of this definition is that if $\ctr$ has its support inside the plus interior $\I_+(\gamma')$ of an external contour $\gamma'$, then $\gamma$ is itself external. 
\end{remark}

 Notice that $\mathscr{I}(\gamma)$ depends only on $\gamma$, not on the other contours that can possibly be near $\gamma$. Also notice that the definition above is different from the ones encountered in \cite{Affonso2024, Johanes}, but they are equivalent modulo the fact we used the modified volume.

\begin{proposition}\label{external}
    Let $\ctrb$ be a family of compatible contours. For any $\ctr \in \ctrb \backslash \ctrb^e$ there exists a unique $\ctr' \in \ctrb^e$ such that $V(\ctr) \subset \I_-(\ctr')$.
\end{proposition}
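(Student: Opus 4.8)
The plan is to locate, for a non-external $\gamma$, the ``outermost'' contour of $\Gamma$ surrounding it by iterating a one-step ascent, and then to rule out two distinct outermost contours by a nesting argument. Throughout I use only the following facts, all available above: distinct contours of a compatible family have disjoint supports (since $\Gamma(\partial\sigma)$ is an $(M,a)$-partition, Definition~\ref{Ma}) and satisfy \eqref{B_distance_2} pairwise (Proposition~\ref{Comp}); property \textbf{(A1)}, i.e.\ for contours $\gamma,\gamma'$ of the family $\Sp(\gamma')$ lies in exactly one connected component of $(\Sp(\gamma))^c$; and the elementary facts that for finite $S\subset\Z^d$ the set $S^c$ has a unique unbounded component (so $V(S)$ is finite), that the outer boundary $\partial_{\mathrm{ex}}D$ of any connected component $D$ of $S^c$ satisfies $\partial_{\mathrm{ex}}D\subseteq S$, and that $\diam(S)<d|S|$. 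I also assume $M\ge d$, which is harmless since $M$ is fixed large later. I will repeatedly use the observation that $\Sp(\delta)\subseteq V(\gamma)$ implies $V(\delta)\subseteq V(\gamma)$: indeed $\Sp(\delta)\cap\Sp(\gamma)^{(0)}=\emptyset$, so the connected unbounded set $\Sp(\gamma)^{(0)}$ sits inside $\Sp(\delta)^{(0)}$, whence $V(\delta)=\Z^d\setminus\Sp(\delta)^{(0)}\subseteq\Z^d\setminus\Sp(\gamma)^{(0)}=V(\gamma)$.

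The technical heart is a monotonicity lemma: \emph{if $\gamma,\delta$ are contours of the family and $\Sp(\delta)$ lies in a single connected component $K$ of $\I(\gamma)$ (equivalently, a single bounded component of $(\Sp(\gamma))^c$), then $V(\delta)\subseteq K$.} To prove it, it suffices to show $D\subseteq K$ for every bounded component $D$ of $(\Sp(\delta))^c$ (if there is none, $V(\delta)=\Sp(\delta)\subseteq K$ already), since then $\I(\delta)\subseteq K$ and $V(\delta)=\Sp(\delta)\cup\I(\delta)\subseteq K$. First one checks $\Sp(\gamma)\cap D=\emptyset$: by \textbf{(A1)}, $\Sp(\gamma)$ lies in one component of $(\Sp(\delta))^c$, and if it were $D$ then $\Sp(\gamma)\subseteq D\subseteq V(\delta)$ and $\Sp(\delta)\subseteq K\subseteq V(\gamma)$, so by the observation above $V(\gamma)=V(\delta)$; then $\Sp(\gamma),\Sp(\delta)$ would be disjoint subsets of the finite set $V(\gamma)$, giving $\dis(\Sp(\gamma),\Sp(\delta))<d|V(\gamma)|\le M|V(\gamma)|^{a/(d+1)}$ (using $a/(d+1)=3/((\alpha-d)\wedge 1)\ge 3$ and $M\ge d$), contradicting \eqref{B_distance_2}. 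Hence $D\subseteq(\Sp(\gamma))^c$. Next, $D$ cannot straddle $K$: a path in the connected set $D$ from a vertex of $D\cap K$ to a vertex of $D\setminus K$ would have consecutive vertices $p\in K$, $q\notin K$ with $p,q\in D\subseteq(\Sp(\gamma))^c$, but then $q$ is adjacent to the component $K$ of $(\Sp(\gamma))^c$ and lies in $(\Sp(\gamma))^c$, so $q\in K$, absurd. Thus $D\subseteq K$ or $D\cap K=\emptyset$. In the latter case $\partial_{\mathrm{ex}}D\subseteq\Sp(\delta)\subseteq K$ shows $D\cup K$ is connected, contained in $(\Sp(\gamma))^c$, and meets $K$, whence $D\cup K\subseteq K$, i.e.\ $D\subseteq K$, contradicting $D\cap K=\emptyset$. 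So $D\subseteq K$, proving the lemma.

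For existence: if $\gamma\notin\Gamma^e$ there is $\gamma_1\in\Gamma\setminus\{\gamma\}$ with $\Sp(\gamma)\cap\widetilde{V}(\gamma_1)\neq\emptyset$; disjointness of supports forces $\Sp(\gamma)\cap\I_-(\gamma_1)\neq\emptyset$, so by \textbf{(A1)} the (unique) component of $(\Sp(\gamma_1))^c$ containing $\Sp(\gamma)$ is one of the $-$-labeled components of $\I(\gamma_1)$; the monotonicity lemma then gives $V(\gamma)\subseteq\I_-(\gamma_1)$. If $\gamma_1\in\Gamma^e$ we stop; otherwise we repeat with $\gamma_1$, obtaining $\gamma_2$ with $V(\gamma_1)\subseteq\I_-(\gamma_2)$, hence $V(\gamma)\subseteq\I_-(\gamma_1)\subseteq V(\gamma_1)\subseteq\I_-(\gamma_2)$, and so on. Since $\Sp(\gamma_{i+1})$ is nonempty and disjoint from $\I_-(\gamma_{i+1})\supseteq V(\gamma_i)$, we get $V(\gamma_i)\subsetneq V(\gamma_{i+1})$, so the volumes strictly increase, the $\gamma_i$ are distinct, and finiteness of $\Gamma$ forces the chain to terminate at an external contour $\gamma'$ with $V(\gamma)\subseteq\I_-(\gamma')$.

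For uniqueness, suppose $\gamma_1',\gamma_2'\in\Gamma^e$ both satisfy $V(\gamma)\subseteq\I_-(\gamma_i')$. Since $\emptyset\neq\Sp(\gamma)\subseteq V(\gamma)$, we have $\Sp(\gamma)\subseteq\I_-(\gamma_1')\cap\I_-(\gamma_2')$, in particular $V(\gamma_1')\cap V(\gamma_2')\neq\emptyset$. By \textbf{(A1)}, $\Sp(\gamma_2')$ lies in one component of $(\Sp(\gamma_1'))^c$: if it is $-$-labeled then $\Sp(\gamma_2')\subseteq\I_-(\gamma_1')\subseteq\widetilde{V}(\gamma_1')$, contradicting that $\gamma_2'$ is external; if it is $+$-labeled then the monotonicity lemma gives $V(\gamma_2')\subseteq\I_+(\gamma_1')$, so $V(\gamma)\subseteq\I_-(\gamma_2')\subseteq V(\gamma_2')\subseteq\I_+(\gamma_1')$, disjoint from $\I_-(\gamma_1')\supseteq V(\gamma)$, forcing $V(\gamma)=\emptyset$, impossible. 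Hence $\Sp(\gamma_2')$ lies in the unbounded component of $(\Sp(\gamma_1'))^c$; running the same trichotomy with the roles exchanged leaves only the case that $\Sp(\gamma_1')$ also lies in the unbounded component of $(\Sp(\gamma_2'))^c$, i.e.\ $\Sp(\gamma_1')\cap V(\gamma_2')=\emptyset$ and $\Sp(\gamma_2')\cap V(\gamma_1')=\emptyset$. Pick $w\in V(\gamma_1')\cap V(\gamma_2')$; then $w\notin\Sp(\gamma_1')\cup\Sp(\gamma_2')$, so $w$ lies in a bounded component $D_1$ of $(\Sp(\gamma_1'))^c$ and a bounded component $D_2$ of $(\Sp(\gamma_2'))^c$; since $\Sp(\gamma_2')\cap D_1=\emptyset$ and $\Sp(\gamma_1')\cap D_2=\emptyset$, $D_1$ is a connected subset of $(\Sp(\gamma_2'))^c$ meeting $D_2$ and conversely, so $D_1=D_2=:D$; then $\partial_{\mathrm{ex}}D\subseteq\Sp(\gamma_1')\cap\Sp(\gamma_2')=\emptyset$, which forces $D=\Z^d$, contradicting boundedness. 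Hence $\gamma_1'=\gamma_2'$, completing the proof. I expect the monotonicity lemma to be the main obstacle: the topology is elementary but must be run without assuming supports, interiors, or the individual pieces $\I(\gamma)^{(k)}$ are connected or ``hole-free'', and the step excluding $\Sp(\gamma)\subseteq D$ is precisely where \eqref{B_distance_2} and the choice of large $M$ are used.
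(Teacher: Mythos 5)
Your argument follows essentially the same route as the paper's proof: first place $\Sp(\gamma)$ inside a minus component of some other contour, then upgrade ``support inside a component of $(\Sp(\gamma'))^c$'' to ``volume inside that component'', iterate using finiteness of $\Gamma$ and the strict growth of volumes, and obtain uniqueness from \textbf{(A1)} together with the labels. Packaging the second step as a stand-alone monotonicity lemma, and treating explicitly the last case of the uniqueness argument via the empty-exterior-boundary trick, is if anything cleaner than the paper's path-and-containment-chain argument.

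The one step that does not survive scrutiny as written is the exclusion of the case $\Sp(\gamma)\subseteq D$ inside the monotonicity lemma. There you deduce $V(\gamma)=V(\delta)=:W$ and then assert $\dis(\Sp(\gamma),\Sp(\delta))<d|W|$ because both supports are disjoint subsets of the finite set $W$, invoking the ``fact'' $\diam(S)<d|S|$. That inequality holds only for connected $S$, and $V(\gamma)$ of a multiscale contour is in general disconnected: a single contour of the finest $(M,a)$-partition may consist of two pieces whose filled volumes have size $n$ and which sit at mutual distance of order $M n^{a/(d+1)}$ (any distance up to the threshold in \eqref{B_distance_2} is compatible with being one contour), so $\diam(V(\gamma))$ can vastly exceed $d|V(\gamma)|$ and the asserted bound is unjustified. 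The conclusion you want is nevertheless true and the repair uses only tools you already deploy: since $\Sp(\delta)\subseteq K$ and $K$ is a finite connected component of $(\Sp(\gamma))^c$ whose exterior boundary lies in $\Sp(\gamma)$, a path inside $K$ from a point of $\Sp(\delta)$ to a point of $K$ adjacent to $\Sp(\gamma)$ gives $\dis(\Sp(\gamma),\Sp(\delta))\le |K|\le |V(\gamma)|=\min\{|V(\gamma)|,|V(\delta)|\}\le M\min\{|V(\gamma)|,|V(\delta)|\}^{a/(d+1)}$ (here $M>1$, the exponent is at least $1$ and the minimum is at least $1$), contradicting \eqref{B_distance_2}; this also makes the extra hypothesis $M\ge d$ unnecessary. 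Even simpler, the case can be killed without \eqref{B_distance_2} at all: in that configuration $\partial_{\mathrm{ex}}D\subseteq\Sp(\delta)\subseteq K$ and $\partial_{\mathrm{ex}}K\subseteq\Sp(\gamma)\subseteq D$, so $D\cup K$ is a nonempty bounded set with empty exterior boundary, which is exactly the contradiction you already run in the uniqueness part.
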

\begin{proof}
By the definition of external contour, if $\ctr \in \ctrb \backslash \ctrb^e$ then there exists $\gamma'$ such that $\Sp(\gamma)\cap \widetilde{V}(\gamma')\neq \emptyset$. By Condition \textbf{(A1)}, $\Sp(\gamma)$ is a subset of one, and only one, connected component of $(\Sp(\gamma'))^c$. Since it has a nonempty intersection with the modified volume, $\Sp(\gamma)$ cannot be contained in $\widetilde{V}(\gamma')^c$, thus it must be in $\I_-(\gamma')$.

Now, it is clear that $\Sp(\ctr) \subset \I_-(\ctr')$ implies that $V(\ctr) \subset \I_-(\ctr')$. Indeed, by Condition \textbf{(A1)}, we know that $\Sp(\ctr)$ is in some connected component $\I_-(\ctr')^{(k)}$ of the minus interior. Given any connected component $\overline{\gamma}^{(i)}$ of $\Sp(\ctr)$, if we had $V(\overline{\gamma}^{(i)}) \not\subset \I_-(\ctr')^{(k)}$, then we would be able to find some $x \in \I(\overline{\gamma}^{(i)})\backslash \I_-(\ctr')^{(k)}$. Taking also any $y \in \overline{\ctr}^{(i)}$, since $V(\overline{\ctr}^{(i)})$ is a connected set having both $x$ and $y$, we could take some path $\lambda$ in $V(\overline{\ctr}^{(i)})$ connecting these points and, hence, connecting the interior of $\gamma'$ with $V(\ctr')^c$. This path would, then, intercept $\Sp(\ctr')$ and, since $\lambda$ is in $V(\overline{\ctr}^{(i)})$, we would necessarily have some point of $\Sp(\ctr')$ in $\I(\overline{\gamma}^{(i)})$. Using again condition \textbf{(A1)}, we would get to the conclusion that $\Sp(\ctr') \subset \I(\overline{\gamma}^{(i)})$. If that were the case, we would have $V(\ctr') \subset V(\I(\overline{\gamma}^{(i)}))$, but since $\partial_{\text{ext}}V(\I(\overline{\gamma}^{(i)})) \subset \overline{\gamma}^{(i)}$, we would have

\[
\partial_{\text{ext}}V(\I(\overline{\gamma}^{(i)})) \subset \Sp(\gamma) \subset \I(\gamma')\subset V(\gamma')\subset V(\I(\overline{\gamma}^{(i)})), 
\]

which is an absurd.

We concluded that, if $\ctr$ is not external, then $V(\ctr) \subset V(\ctr')$ for some other contour $\ctr'$. If $\ctr'$ is external, we are done. If it's not, we can iterate the procedure to find another $\ctr''$ such that $V(\ctr') \subset V(\ctr'')$ --- notice that $\ctr''$ cannot be $\ctr$. Since the number of contours is finite, we will eventually reach some external one by this procedure.

As for the uniqueness, suppose that $V(\ctr) \subset \I_-(\ctr')$ for an external $\ctr'$. Now, given any other external contour $\ctr''$, by condition \textbf{(A1)}, we know that $\Sp(\ctr'')$ is in some connected component of $\I_+(\ctr')$ or $(V(\ctr'))^c$, so it is impossible to have $V(\ctr) \subset \I_-(\ctr'')$.
\end{proof}

\begin{remark}
    Proposition \ref{external} implies that for any compatible family of contours $\Gamma$ and $\Gamma^{\text{e}}=\{\gamma_1, ..., \gamma_n\}$, there exists a unique partition of $\Gamma \backslash \Gamma^{\text{e}}$ into families $\Gamma_1, ..., \Gamma_n$ such that $\Gamma_i \in \mathscr{I}(\gamma_i)$ for each $i$.
\end{remark}

\begin{proposition}\label{leminha}
    Let $\ctrb$ be a compatible family of contours and $\ctrb^e$ the associated family of external contours. Then $\sigma_x = 1$ for all $x \in \widetilde{V}(\ctrb^e)^c$.
\end{proposition}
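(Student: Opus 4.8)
The plan is to run a Peierls-type ``outermost contour'' argument. Let $\sigma$ be the unique configuration with $\Gamma(\sigma)=\Gamma$ (uniqueness is the injectivity of $\sigma\mapsto\Gamma(\sigma)$ recorded above); recall that $\sigma$ agrees with the $+$ boundary condition outside a finite region, so $\partial\sigma$ is finite. First I would check that $\partial\sigma\subseteq\widetilde{V}(\Gamma^e)$: if $x\in\partial\sigma$ then $x\in\Sp(\gamma)$ for some $\gamma\in\Gamma$; if $\gamma$ is external then $x\in\Sp(\gamma)\subseteq\widetilde{V}(\gamma)$, and if $\gamma$ is not external then Proposition \ref{external} produces an external $\gamma'$ with $x\in\Sp(\gamma)\subseteq V(\gamma)\subseteq\I_-(\gamma')\subseteq\widetilde{V}(\gamma')$. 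Consequently every point of $\widetilde{V}(\Gamma^e)^c$ is a correct point, and since two adjacent correct points always carry the same sign it is enough to rule out $-$-correct points in $\widetilde{V}(\Gamma^e)^c$.

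So suppose $x\in\widetilde{V}(\Gamma^e)^c$ is $-$-correct and let $C$ be the connected component of $x$ in the (finite) set of $-$-correct points. Each point of $\partial_{\mathrm{ext}}C$ is incorrect, because a correct neighbour of a $-$-correct point is again $-$-correct and would then lie in $C$. Passing from $C$ to $V(C)$ (filling its bounded complementary holes), the outer boundary $\partial_{\mathrm{ext}}V(C)$ is still contained in $\partial_{\mathrm{ext}}C$, hence consists of incorrect points, and it is a connected set. Since distinct contour supports lie at distance strictly larger than $M>1$ by Condition \textbf{(B)}, any connected set of incorrect points is contained in a single support, so $\partial_{\mathrm{ext}}V(C)\subseteq\Sp(\gamma)$ for some $\gamma\in\Gamma$. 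As $C$ is disjoint from $\Sp(\gamma)$ and enclosed by $\partial_{\mathrm{ext}}V(C)$, it lies in one connected component of $\Sp(\gamma)^c$, which must be a bounded one, say $\I(\gamma)^{(k)}$. The points of $\Sp(\gamma)$ adjacent to $C$ all have spin $-1$ (each lies in the unit ball of a $-$-correct point of $C$), and since $\sigma$ is constant along the part of $\Sp(\gamma)$ facing $\I(\gamma)^{(k)}$ this common value is exactly the label $\lab_{\Sp(\gamma)}(\I(\gamma)^{(k)})$; hence the label is $-1$ and $x\in C\subseteq\I_-(\gamma)$. Finally, $\I_-(\gamma)\subseteq\widetilde{V}(\gamma)\subseteq\widetilde{V}(\Gamma^e)$ if $\gamma$ is external, and $\I_-(\gamma)\subseteq V(\gamma)\subseteq\I_-(\gamma')\subseteq\widetilde{V}(\Gamma^e)$ for the external $\gamma'$ given by Proposition \ref{external} otherwise; in both cases $x\in\widetilde{V}(\Gamma^e)$, contradicting the choice of $x$.

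The main obstacle is not the combinatorial skeleton above, which is just bookkeeping with Proposition \ref{external} and Condition \textbf{(B)}, but the two $\zd$-topology facts it uses: that the outer boundary of a filled finite set $V(C)$ is connected, and that $\sigma$ is constant along the portion of a contour support $\Sp(\gamma)$ facing a fixed interior component $\I(\gamma)^{(k)}$, so that it coincides with the label. Both are elementary for the $\ell_1$ adjacency but need care; I would prove them as short lemmas or import them from the construction of the contours in \cite{Johanes}.
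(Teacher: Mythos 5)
Your argument is essentially the paper's own proof written out in full: the paper likewise splits $\widetilde{V}(\Gamma^e)^c$ according to whether a point is incorrect, $+$-correct or $-$-correct, places incorrect points inside supports and $-$-correct points inside a minus interior, and then passes to an external contour via Proposition \ref{external}; your version simply makes explicit the enclosure argument that the paper compresses into ``it must be surrounded by incorrect points'', and the spin-constancy fact you defer is exactly what the paper already invokes when defining the labels. One small correction to your auxiliary lemma: the exterior boundary of the filled set $V(C)$ is in general only $*$-connected (connected for the $\ell_\infty$ adjacency), not connected for the $\ell_1$ adjacency --- already for $C$ a single site it is disconnected --- but this does not damage the argument, since distinct supports are at $\ell_1$ distance greater than $M$ and $M$ is taken large, so even a $*$-connected set of incorrect points lies in a single support.
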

\begin{proof}
Each configuration defines a partition of the lattice $\Z^d$ with respect to the points being incorrect, or $\pm$-correct. Then, let $\Gamma$ be a compatible family of contours and $\sigma$ be the configuration such that $\Gamma(\sigma)=\Gamma$. Let $\Theta_x:\Omega\rightarrow \mathbb{R}$ be the function such that 
\[
\Theta_x(\sigma) = \begin{cases}
    +1, & \text{if } x \text{ is $+$-correct} \\
    -1, & \text{if } x \text{ is $-$-correct} \\
    \hspace{0.3cm} 0, & \text{if } x \text{ is incorrect}.
\end{cases}
\]
Then,
\[
\widetilde{V}(\ctrb^e)^c =\bigcup_{a=-1,0,+1}\{x\in \widetilde{V}(\ctrb^e)^c: \Theta_x(\sigma)=a\}.
\]
If the point $x$ is incorrect, then it must be in the support of some contour. If $x$ is $-$-correct, since we are in the $+$ boundary condition it must be surrounded by incorrect points. In both cases, $x \in \widetilde{V}(\ctr)$ of some contour. If $\ctr$ is external we are done. Otherwise, Proposition \ref{external} implies that $x\in \I_-(\gamma)$ for some $\gamma\in \Gamma^e$. 
\end{proof}

We finish this section by stating an important proposition that gives us an upper bound to the entropy of the contours introduced in this section. Let the set of all contours of size $n$ containing a given point $x$ be
\[
\mathcal{C}_x(n) \coloneqq \{\gamma \in \mathcal{E}_\Lambda^+: x \in V(\gamma), |\gamma|=n\}.
\]
\begin{proposition}\label{Bound_on_C_0_n}
	Let $d\ge 2$, $x\in \Z^d$ and $\Lambda\Subset \mathbb{Z}^d$. There exists $c_1\coloneqq c_1(d,M,\alpha)>0$ such that
	\begin{equation}\label{Eq: exp.bound.contours}
	|\mathcal{C}_x(n)| \leq e^{c_1 n}, \quad \forall\, n\geq 1.
	\end{equation}
\end{proposition}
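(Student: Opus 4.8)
The plan is to split each contour $\gamma=(\overline\gamma,\omega)$ into its support $\overline\gamma$ and its configuration $\omega\in\Omega_{\overline\gamma}$. At most $2^{n}$ configurations sit on a fixed support of size $n$, so it suffices to bound the number $S_x(n)$ of \emph{admissible supports} $\overline\gamma$ with $|\overline\gamma|=n$ and $x\in V(\overline\gamma)$ by $e^{Cn}$; the claim then follows with $c_1=C+\ln 2$. Throughout I use that $\overline\gamma$ is a single element of the finest $(M,a)$-partition of some $\partial\sigma\supseteq\overline\gamma$, which — exactly as in the proofs of the two propositions following Definition \ref{Ma} — means that $\overline\gamma$ cannot be refined: for \emph{every} splitting of the connected components $C_1,\dots,C_k$ of $\overline\gamma$ into two nonempty families $G,G'$ one has $\dis(\bigcup G,\bigcup G')\le M\min\{|V(\bigcup G)|,|V(\bigcup G')|\}^{a/(d+1)}$. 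Writing $n(G):=\sum_{C\in G}|C|$ and using the discrete isoperimetric inequality together with $\din V(S)\subseteq S$ (for any finite $S\Subset\Z^d$), one has $|V(\bigcup G)|\le c'\,n(G)^{d/(d-1)}$ for some $c'=c'(d)$; hence, with $q:=\tfrac{da}{(d-1)(d+1)}$ and $M':=Mc'^{a/(d+1)}$,
\begin{equation}\label{eq:bip}
\dis\Bigl(\bigcup G,\bigcup G'\Bigr)\le M'\min\{n(G),n(G')\}^{q}\qquad\text{for every bipartition }\{G,G'\}.
\end{equation}

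First I would anchor the enumeration. From \eqref{eq:bip} applied to one‑element families $G$, plus the fact that every bounded component of $\overline\gamma^{c}$ has at most $|V(\overline\gamma)|\le c'n^{d/(d-1)}$ points, one gets $\diam V(\overline\gamma)\le p(n)$ for an explicit polynomial $p$; since $x\in V(\overline\gamma)$ this forces $\overline\gamma\subseteq B_{p(n)}(x)$, so the $\ell^{1}$‑least point $y^{\ast}$ of $\overline\gamma$ lies in one of at most $P_n:=(2p(n)+1)^{d}$ positions. Next, let $T$ be a minimum spanning tree of $\{C_1,\dots,C_k\}$ for the distances $\dis(C_i,C_j)$, produced by Kruskal's algorithm; its merges form a binary tree whose leaves are the $C_i$ (weighted by $|C_i|$) and whose $k-1$ internal nodes $e_1,\dots,e_{k-1}$ each join two Kruskal clusters $U_j,U_j'$ with $w(e_j):=\dis(\bigcup U_j,\bigcup U_j')$. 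Because, at the instant $e_j$ is inserted, every edge leaving $U_j$ has weight $\ge w(e_j)$, one has $\dis(\bigcup U_j,\overline\gamma\setminus\bigcup U_j)=w(e_j)$, and then \eqref{eq:bip} applied to $G=U_j$ and to $G=U_j'$ gives
\begin{equation}\label{eq:wsmall}
w(e_j)\le M'\,s_j^{\,q},\qquad s_j:=\min\{n(U_j),n(U_j')\}.
\end{equation}

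The combinatorial core is the elementary tree inequality $\sum_{j=1}^{k-1}\log_2 s_j\le C_0\,n$ for a universal $C_0$: distribute the "cost" $\log_2 s_j$ over the $s_j$ units of weight in the smaller of $U_j,U_j'$ and trace a leaf up to the root; each time a leaf lies in the smaller merged cluster the weight of that cluster at least doubles, so a leaf of weight $w$ is charged a total of $O(\log_2 w+1)$, and $\sum_i(\log_2|C_i|+1)=O(n)$. Finally I would encode an admissible support by: the position of $y^{\ast}$ ($\le P_n$ choices); and a depth‑first exploration of a spanning tree of the graph on the vertex set $\overline\gamma$ whose edges are nearest‑neighbour lattice pairs inside a common component together with, for each edge of $T$, a canonically chosen realizing closest pair of lattice points. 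At each visited vertex there are at most $2d+2$ admissible moves (a lattice step in one of the $2d$ directions, a backtrack, or one of the long "jumps" along an edge of $T$), so at most $(2d+2)^{2n}$ exploration patterns; and the $j$‑th jump contributes a displacement with at most $(2w(e_j)+1)^{d}\le(4M'\,s_j^{q})^{d}$ choices. By \eqref{eq:wsmall} and the tree inequality, $\prod_{j}(2w(e_j)+1)^{d}\le(4M')^{dn}2^{\,qdC_0 n}$ uniformly in the pattern, whence $S_x(n)\le P_n\,(2d+2)^{2n}(4M')^{dn}2^{qdC_0 n}=P_n e^{Cn}$; absorbing the polynomial $P_n$ (and enlarging the constant to cover small $n$) gives $|\mathcal C_x(n)|\le 2^{n}S_x(n)\le e^{c_1 n}$. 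The genuinely delicate points are the passage from condition \textbf{(B)} to \eqref{eq:wsmall} with the \emph{smaller} cluster size $s_j$ — this is where all bipartitions of the block, rather than mere pairwise distances, and the Kruskal cluster structure are essential — and the tree inequality $\sum_j\log_2 s_j=O(n)$, which is precisely what prevents long, filamentary blocks from producing a superexponential entropy; everything else is bookkeeping.
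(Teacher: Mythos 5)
First, note that the paper itself does not prove this proposition: it only quotes Corollary 3.28 of \cite{Johanes}, so your argument is necessarily an independent route. Most of it is sound: the factor $2^{n}$ for the spin part, the unsplittability property of an element of the finest $(M,a)$-partition (your bipartition inequality, proved exactly as in the paper's Proposition 2.3), the isoperimetric reduction to point counts, the Kruskal bound $w(e_j)\le M's_j^{\,q}$ with $s_j$ the smaller merging cluster, and the merge-tree inequality $\sum_j\log_2 s_j=O(n)$ are all correct. The gap is in the final step, which you dismiss as bookkeeping. Your code for a support is (anchor, move-type pattern, sequence of jump displacements), and you conclude $S_x(n)\le P_n(2d+2)^{2n}(4M')^{dn}2^{qdC_0n}$ by allotting $(2w(e_j)+1)^{d}\le(4M's_j^{q})^{d}$ choices to the $j$-th jump. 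But $s_j$ is the Kruskal merge size, which is a function of the whole vector of realized jump lengths; it is not determined by the data already encoded (anchor and pattern), so this is not a legitimate enumeration. What you actually know is that every admissible displacement vector satisfies the single inequality $\prod_j(2|z_j|_1+1)^{d}\le(4M')^{dn}2^{qdC_0n}$, and the cardinality of a set is not bounded by the maximum over its elements of such a product of element-dependent "slacks". Nor can you simply replace $s_j$ by a pattern-measurable quantity: the MST cycle property does give $w(e)\le M'\min\{n(A_e),n(B_e)\}^{q}$ for the two sides $A_e,B_e$ of the jump tree split at $e$, but $\sum_e\log_2\min\{n(A_e),n(B_e)\}$ is of order $n\log n$ for a path-shaped jump tree with unit components, so that constraint alone only yields $e^{O(n\log n)}$.

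The architecture is repairable, but the repair is a genuine missing lemma rather than bookkeeping. For fixed anchor and pattern, every admissible vector of jump lengths $(w_1,\dots,w_{k-1})$, $k-1\le n$, lies in the hyperbolic region $\bigl\{\prod_j(2w_j+1)\le(4M')^{n}2^{qC_0n}=:B\bigr\}$ by your product bound; the number of integer points in such a region with at most $n$ coordinates is at most $B\,e^{O(n)}$ (a divisor-counting estimate, e.g.\ by induction on the number of coordinates using $\sum_{a\le B}a^{-1}\le 1+\ln B$ and Stirling), and for each admissible length vector the number of direction choices is at most $\prod_j 2d(2w_j+1)^{d-1}\le(2d)^{n}B^{d-1}$, so the number of displacement sequences is at most $(2d)^{n}B^{d}e^{O(n)}$, which is exponential in $n$ and restores your conclusion with a larger, still admissible, $c_1(d,M,\alpha)$. (Your anchoring step is also stated loosely, but the polynomial diameter bound does follow from $\sum_j w(e_j)\le M'\sum_j s_j^{q}\le M'n^{1+q}$, and the polynomial factor is harmless.) As written, however, the inequality $S_x(n)\le P_n(2d+2)^{2n}(4M')^{dn}2^{qdC_0n}$ is not justified, and this uniformity-in-the-encoding issue is precisely the delicate point of the entropy estimate.
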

For a proof of this proposition, see \cite[Corollary 3.28]{Johanes}.

\subsection{Hamiltonian via Contours}\label{hamiltonian}

In order to develop the cluster expansion, we first define the normalized Hamiltonian by the ground state energy as $$H^+(\Gamma) \coloneqq H^+_\Lambda(\sigma) - H^+_\Lambda(\sigma_+),$$ where $\Gamma = \Gamma(\sigma)$. We will rewrite it in terms of the contours introduced earlier. Since the interaction is long-range, the energy $H^+(\Gamma)$ does not decouple into the sum of energy $H^+(\gamma)$ for $\gamma\in\Gamma$, we must also consider the interactions between pairs of contours, giving rise to two terms. The first term will depend only on the contribution of one external contour $\gamma$ and their internal contours $\Gamma_\gamma \coloneqq \{\ctr' \in \Gamma; \ctr' \text{ is internal to } \ctr\}$,
\begin{equation}\label{contour_phi1}
    \Phi_1(\ctr \cup \Gamma_\gamma) \coloneqq 2\sum_{\{x,y\} \subset \widetilde{V}(\ctr)} J_{xy}\mathds{1}_{\{\sigma_x \neq \sigma_y\}} + 2 \sum_{\substack{x \in \widetilde{V}(\ctr) \\ y \in \widetilde{V}(\ctr)^c}}J_{xy}\mathds{1}_{\{\sigma_x \neq 1\}},
\end{equation}
while the other one encompasses the interaction between pairs of them
\begin{equation}\label{contour_phi2}
    \Phi_2(\ctr\cup \Gamma_\gamma,\ctr'\cup \Gamma_{\gamma'}) \coloneqq -4\sum_{\substack{x \in \widetilde{V}(\ctr) \\ y \in \widetilde{V}(\ctr')}}J_{xy}\mathds{1}_{\{\sigma_x = \sigma_y = -1\}}.
\end{equation}
Since $\Phi_2$ is negative, the interaction between external contours is attractive, which makes the proof of the convergence of the cluster expansion trickier than in the short-range case. We put the contour representation in the next proposition.
\begin{proposition}\label{prop_hamil_contour}
For any compatible family of contours $\Gamma$, let $\Gamma^e$ denote the family of all external contours with respect to $\Gamma$. Then,
\begin{equation}\label{hamil_contour}
    H^+(\Gamma) = \sum_{\gamma \in \Gamma^e} \Phi_1(\gamma \cup \Gamma_\gamma) + \sum_{\gamma\neq\gamma'\in \Gamma^e} \Phi_2(\ctr \cup \Gamma_\gamma,\ctr'\cup \Gamma_{\gamma'}),
\end{equation}
where $\Phi_1$ and $\Phi_2$ are defined respectively by equations \eqref{contour_phi1} and \eqref{contour_phi2}.
\end{proposition}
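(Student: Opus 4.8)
The plan is to rewrite the normalized Hamiltonian as a single sum over unordered pairs of sites and then regroup those pairs according to which external contour's modified volume they meet. Since we work with $+$-boundary condition, $\sigma_x=1$ for $x\in\Lambda^c$, and $H^+_\Lambda(\sigma_+)$ is the value of \eqref{hamiltonian_spin} with all spins equal to $+1$. First I would use the elementary identities $1-\sigma_x\sigma_y=2\,\mathds{1}_{\{\sigma_x\neq\sigma_y\}}$ and, for $y\in\Lambda^c$, $1-\sigma_x=1-\sigma_x\sigma_y=2\,\mathds{1}_{\{\sigma_x\neq\sigma_y\}}$; subtracting $H^+_\Lambda(\sigma_+)$ from $H^+_\Lambda(\sigma)$, the boundary term merges with the bulk term and the pairs with both endpoints in $\Lambda^c$ drop out, so that
\[
H^+(\Gamma)=2\sum_{\{x,y\}\subset\Z^d}J_{xy}\,\mathds{1}_{\{\sigma_x\neq\sigma_y\}},
\]
where $\sigma$ is understood to be extended by the $+$-boundary condition outside $\Lambda$.

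Next I would localize this sum. By Proposition~\ref{leminha}, $\sigma_x=1$ for every $x\notin\widetilde{V}(\Gamma^e)$, and by Proposition~\ref{external} together with the remark following it, $\widetilde{V}(\Gamma^e)=\bigcup_{\gamma\in\Gamma^e}\widetilde{V}(\gamma)$ is a disjoint union, with $\sigma$ restricted to $\widetilde{V}(\gamma)$ determined by $\gamma\cup\Gamma_\gamma$ alone. Hence a pair $\{x,y\}$ can contribute only if at least one endpoint lies in some $\widetilde{V}(\gamma)$, and I would split the contributing pairs into three families: (i) both endpoints in the same $\widetilde{V}(\gamma)$; (ii) exactly one endpoint in some $\widetilde{V}(\gamma)$ and the other outside $\widetilde{V}(\Gamma^e)$; (iii) $x\in\widetilde{V}(\gamma)$ and $y\in\widetilde{V}(\gamma')$ with $\gamma\neq\gamma'$ in $\Gamma^e$. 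Summed over $\gamma\in\Gamma^e$, the pairs of type (i) reproduce the first term of $\Phi_1(\gamma\cup\Gamma_\gamma)$ in \eqref{contour_phi1}. For type (ii), $\sigma_y=1$, so $\mathds{1}_{\{\sigma_x\neq\sigma_y\}}=\mathds{1}_{\{\sigma_x\neq1\}}$, and these pairs give the portion of the second term of $\Phi_1(\gamma\cup\Gamma_\gamma)$ with $y\notin\widetilde{V}(\Gamma^e)$. For type (iii) I would invoke the identity, valid for $\sigma_x,\sigma_y\in\{\pm1\}$,
\[
2\,\mathds{1}_{\{\sigma_x\neq\sigma_y\}}=2\,\mathds{1}_{\{\sigma_x\neq1\}}+2\,\mathds{1}_{\{\sigma_y\neq1\}}-4\,\mathds{1}_{\{\sigma_x=\sigma_y=-1\}}.
\]
The first term on the right, summed over $x\in\widetilde{V}(\gamma)$, $y\in\widetilde{V}(\gamma')$, completes the second term of $\Phi_1(\gamma\cup\Gamma_\gamma)$ (with $y$ now ranging over the remaining modified volumes); the second term symmetrically completes $\Phi_1(\gamma'\cup\Gamma_{\gamma'})$; and the last term, summed over $x\in\widetilde{V}(\gamma)$, $y\in\widetilde{V}(\gamma')$ and over the pairs $\gamma\neq\gamma'$ in $\Gamma^e$, is exactly $\sum_{\gamma\neq\gamma'\in\Gamma^e}\Phi_2(\gamma\cup\Gamma_\gamma,\gamma'\cup\Gamma_{\gamma'})$ from \eqref{contour_phi2}. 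Collecting the three contributions yields \eqref{hamil_contour}.

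I expect the main obstacle to be purely combinatorial bookkeeping: one must check that every pair of sites is counted with exactly the right multiplicity, the delicate point being the cross-pairs of type (iii), each of which is ``seen'' by the second term of two distinct $\Phi_1$'s and must be corrected by the single $\Phi_2$ term — this is precisely what the displayed indicator identity encodes, and it is also the reason an interaction between external contours appears at all (and why it is attractive, since the correction term carries a negative sign). Two geometric inputs underpinning the localization step also deserve a careful statement: that the modified volumes of distinct external contours are pairwise disjoint, and that $\sigma$ restricted to $\widetilde{V}(\gamma)$ depends only on $\gamma\cup\Gamma_\gamma$; both follow from Condition~\textbf{(A1)} together with Propositions~\ref{external} and~\ref{leminha}.
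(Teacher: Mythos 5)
Your proposal is correct and follows essentially the computation the paper has in mind (the paper omits the proof as ``straightforward''): decompose the normalized energy $2\sum J_{xy}\mathbbm{1}_{\{\sigma_x\neq\sigma_y\}}$ according to the pairwise disjoint modified volumes $\widetilde{V}(\gamma)$, $\gamma\in\Gamma^e$, and handle the cross-pairs with the identity $2\mathbbm{1}_{\{\sigma_x\neq\sigma_y\}}=2\mathbbm{1}_{\{\sigma_x\neq1\}}+2\mathbbm{1}_{\{\sigma_y\neq1\}}-4\mathbbm{1}_{\{\sigma_x=\sigma_y=-1\}}$, which is exactly the add-and-subtract step producing $\Phi_2$. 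You also correctly identify the two geometric inputs (disjointness of the $\widetilde{V}(\gamma)$ for distinct external contours, Proposition \ref{leminha} for the exterior spins), so nothing essential is missing.
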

Although cumbersome, the proof of formula \eqref{hamil_contour} follows by straightforward computations and we choose to omit it. We call attention to the fact that, had we defined the exterior contours in the usual way with $V(\ctr)$ instead of $\widetilde{V}(\ctr)$, the interaction energy between two external contours would have been smaller and hence more attractive. In our case, this weaker interaction energy plays an important role in the convergence of the cluster expansion. 

Let $\sigma$ be a configuration in $\Lambda \Subset \mathbb{Z}^d$ such that $\gamma \in \Gamma(\sigma)^e$, we define $\tau_{\gamma}(\sigma)$ as the configuration given by

\begin{equation}\label{tau}
\tau_{\gamma}(\sigma)_x =
\begin{cases}
\sigma_x & \text{if } x \in \I_+(\gamma) \cup V(\gamma)^c, \\
-\sigma_x & \text{if } x \in \I_-(\gamma), \\
+1 & \text{if } x \in \Sp(\gamma).
\end{cases}
\end{equation}


  This map erases the contour $\gamma$ from the configuration. Taking $\Gamma = \Gamma(\sigma) \backslash \{ \gamma \}$, we will write $H^+(\tau_{\gamma}(\gamma \cup \Gamma)) = H_\Lambda^+(\tau_\gamma(\sigma)) - H_\Lambda^+(\sigma_+),$
 for the energy one gets when erasing the contour $\gamma$ through the action of the map $\tau_\gamma$. In order to properly state the following proposition, we introduce for each $\Lambda 
 \Subset \Z^d$ the \emph{surface energy term} as
\[
F_\Lambda  \coloneqq \sum_{\substack{x \in \Lambda \\ y \in \Lambda^c}}J_{xy}.
\]
The following proposition, proved in \cite{Johanes}, shows that the difference of energy when one erases a contour is positive and depends on quantities related to the erased contour.
\begin{proposition}\label{Prop: Cost_erasing_contour}
For $M$ large enough, there exists a constant $c_2(\alpha,d,J) \coloneqq c_2 > 0$, such that for  any $\Lambda \Subset \Z^d$, and $\gamma \cup \Gamma$ a family of contours such that $\gamma \in (\gamma \cup \Gamma)^e$, it holds
	\be
	H^+(\gamma \cup \Gamma)- H^+(\tau_{\gamma}(\gamma \cup \Gamma))\geq c_2||\gamma||,
	\ee
    where $||\gamma|| \coloneqq |\gamma|+F_{\I_-(\gamma)}+F_{\Sp(\gamma)}.$
\end{proposition}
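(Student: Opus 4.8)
The plan is to compare the energies of $\sigma$ and $\tau_\gamma(\sigma)$ bond by bond, where $\sigma$ is the configuration with $\Gamma(\sigma)=\gamma\cup\Gamma$. Since both configurations agree with $\sigma_+$ off a finite set (Proposition~\ref{leminha}), the ground-state--normalized energies can be written as $H^+(\sigma')=2\sum_{\{x,y\}}J_{xy}\mathds{1}_{\{\sigma'_x\neq\sigma'_y\}}$ for $\sigma'\in\{\sigma,\tau_\gamma(\sigma)\}$, hence
\[
H^+(\gamma\cup\Gamma)-H^+(\tau_\gamma(\gamma\cup\Gamma))=2\sum_{\{x,y\}}J_{xy}\left(\mathds{1}_{\{\sigma_x\neq\sigma_y\}}-\mathds{1}_{\{\tau_\gamma(\sigma)_x\neq\tau_\gamma(\sigma)_y\}}\right).
\]
As $\tau_\gamma$ modifies spins only on $\widetilde{V}(\gamma)$ (it flips $\I_-(\gamma)$ and fills $\Sp(\gamma)$ with $+1$), only bonds meeting $\widetilde{V}(\gamma)$ survive; I would split these into bonds internal to $\widetilde{V}(\gamma)$ and bonds crossing its boundary, and run the case analysis over the positions $\{x,y\}\subset\Sp(\gamma)$, $\{x,y\}\subset\I_-(\gamma)$, $x\in\Sp(\gamma)$ with $y\in\I_-(\gamma)$, and $x\in\widetilde{V}(\gamma)$ with $y\in\widetilde{V}(\gamma)^c$, using that $\tau_\gamma\equiv+1$ on $\Sp(\gamma)$ and $\tau_\gamma=-\mathrm{id}$ on $\I_-(\gamma)$. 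Together with Proposition~\ref{leminha} (the only $(-1)$-sites outside $\widetilde{V}(\gamma)$ lie in $\widetilde{V}(\gamma')$ for the remaining external contours $\gamma'$) this yields a decomposition of the form
\[
H^+(\gamma\cup\Gamma)-H^+(\tau_\gamma(\gamma\cup\Gamma))=\Phi_1(\gamma\cup\Gamma_\gamma)+2\!\!\sum_{\gamma'\in\Gamma^e\setminus\{\gamma\}}\!\!\Phi_2(\gamma\cup\Gamma_\gamma,\gamma'\cup\Gamma_{\gamma'})-\rcal_\gamma,
\]
the single-external-contour analogue of Proposition~\ref{prop_hamil_contour}, where $\rcal_\gamma\ge0$ is the self-energy of the internal family $\Gamma_\gamma$ that is ``re-exposed'' when $\gamma$ is erased (it already occurs, with the opposite sign, inside $\Phi_1(\gamma\cup\Gamma_\gamma)$).

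The proposition then reduces to two quantitative bounds. The main one is an $M$-independent lower bound $\Phi_1(\gamma\cup\Gamma_\gamma)-\rcal_\gamma\ge c_0\|\gamma\|$ with $c_0=c_0(\alpha,d,J)>0$. Inside $\Phi_1$, the first sum $2\sum_{\{x,y\}\subset\widetilde{V}(\gamma)}J_{xy}\mathds{1}_{\{\sigma_x\neq\sigma_y\}}$ dominates $c\,|\gamma|$ by a Peierls count: every site of $\Sp(\gamma)$ is incorrect, hence carries a disagreeing nearest-neighbour bond, and since distinct contours of an $(M,a)$-partition lie at distance $>M>1$, the other endpoint of that bond lies again in $\Sp(\gamma)$, so it is genuinely counted. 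The second sum $2\sum_{x\in\widetilde{V}(\gamma),\,y\in\widetilde{V}(\gamma)^c}J_{xy}\mathds{1}_{\{\sigma_x\neq1\}}$, restricted to $x\in\I_-(\gamma)$ (where $\sigma\equiv-1$) and to the $(-1)$-part of $\Sp(\gamma)$, produces a contribution comparable to $F_{\I_-(\gamma)}+F_{\Sp(\gamma)}$, using that $\partial_{\mathrm{ext}}\I_-(\gamma)^{(k)}\subset\Sp(\gamma)$ (so the only bonds of $F_{\I_-(\gamma)}$ absent here are $\I_-(\gamma)$–$\Sp(\gamma)$ bonds, a fixed fraction of which reappear in the first sum of $\Phi_1$ through the $(+1)$-sites of $\Sp(\gamma)$) and that $\Sp(\gamma)$ is a thin interface whose $(+1)$- and $(-1)$-parts are interlocked and of comparable size. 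Subtracting $\rcal_\gamma$ cancels exactly the self-energies of the $(+1)$-droplets of $\Gamma_\gamma$ deep inside $\I_-(\gamma)$ that $\Phi_1$ double-counts, leaving a bare contour energy $\ge c_0\|\gamma\|$. The second bound controls everything else: $\rcal_\gamma$ together with $2\big|\sum_{\gamma'\in\Gamma^e\setminus\{\gamma\}}\Phi_2(\gamma\cup\Gamma_\gamma,\gamma'\cup\Gamma_{\gamma'})\big|$ is, up to constants, a sum of interactions between $\widetilde{V}(\gamma)$ and sets separated from a connected component of $\gamma$ by distance $>M\min\{|V(\gamma)|,|V(\cdot)|\}^{a/(d+1)}$ as in \eqref{B_distance_2}; summing over scales and using $\alpha>d$ together with $a=\tfrac{3(d+1)}{(\alpha-d)\wedge1}$, this is at most $C\,M^{-((\alpha-d)\wedge1)}\|\gamma\|$. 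Choosing $M$ large enough that this is $<c_0/2$ and setting $c_2\coloneqq c_0/2$ finishes the proof.

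The hard part is the first bound, i.e.\ passing from the configuration-dependent energetic quantity $\Phi_1$ to the purely geometric $\|\gamma\|=|\gamma|+F_{\I_-(\gamma)}+F_{\Sp(\gamma)}$. The delicate issue is that not every bond counted by $F_{\I_-(\gamma)}$ or $F_{\Sp(\gamma)}$ is a genuine disagreement for $\sigma$ --- for instance, the long bonds from the $(+1)$-layer of $\Sp(\gamma)$ into $\I_+(\gamma)$ (which lies in $\widetilde V(\gamma)^c$ but carries spin $+1$) are ``cheap'' and contribute nothing to $\Phi_1$ --- so one must show these cheap bonds are dominated by nearby ``expensive'' ones of comparable weight. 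This rests on the thin, interlocked structure of the interface $\Sp(\gamma)$, forced by the $(M,a)$-partition (Definition~\ref{Ma}) and the label map, which lets an $O(1)$ translation convert a cheap bond into an expensive one. A detailed argument of this type is the one carried out in \cite{Johanes}.
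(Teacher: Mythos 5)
You should first note that the paper does not actually prove this proposition: it is imported verbatim from \cite{Johanes} (see the sentence preceding the statement and Remark 2.14, which points to the end of the proof of Proposition 2.10 there for the explicit constant $c_2=\min\{1,J\}(2d+1)^{-1}2^{-\alpha-2}$). So there is no internal argument to compare against, and your sketch, which follows the same strategy and explicitly defers ``the hard part'' to \cite{Johanes}, ultimately rests on the same external source as the paper itself. At the level of structure your account is consistent with that source: the difference of energies localizes to bonds meeting $\widetilde V(\gamma)$, the nearest-neighbour Peierls count on $\Sp(\gamma)$ produces the $|\gamma|$ term (your observation that the disagreeing bond stays inside $\Sp(\gamma)$ because distinct elements of the $(M,a)$-partition are far apart is correct), and the attractive cross-terms are controlled by an estimate of the type of Lemma \ref{lemma_K}, giving an error $\leq C M^{-((\alpha-d)\wedge 1)}\,\|\gamma\|$ that is beaten by taking $M$ large.

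As a standalone proof, however, the decisive step is missing rather than proved. The content of the proposition is precisely the $M$-independent lower bound you call $\Phi_1(\gamma\cup\Gamma_\gamma)-\rcal_\gamma\geq c_0\|\gamma\|$, i.e.\ the extraction of the full surface terms $F_{\I_-(\gamma)}$ and $F_{\Sp(\gamma)}$ after the energies of the re-exposed internal contours are cancelled; your justification for it consists of qualitative claims (``a fixed fraction of which reappear'', ``interlocked and of comparable size'', ``cancels exactly the self-energies \dots that $\Phi_1$ double-counts'') that are not substantiated, and it is exactly here that the nontrivial work of \cite{Johanes} lies --- in particular the treatment of the ``cheap'' bonds from the $+$-part of $\Sp(\gamma)$ into $\I_+(\gamma)$ and of the bonds from $\I_-(\gamma)$ into regions occupied by internal contours, where $\sigma$ agrees and no disagreement energy is directly available. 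In addition, your intermediate identity $H^+(\gamma\cup\Gamma)-H^+(\tau_\gamma(\gamma\cup\Gamma))=\Phi_1+2\sum\Phi_2-\rcal_\gamma$ with $\rcal_\gamma\geq 0$ is not established: erasing $\gamma$ also changes the $\Phi_2$-interactions among the remaining external contours and the newly external (flipped) members of $\Gamma_\gamma$, since their modified volumes replace $\widetilde V(\gamma)$, so the bookkeeping is more involved than stated (this is fixable, but it is asserted, not checked). In short: the route is the right one and matches the cited reference, but the proposal does not itself prove the inequality; it reproduces the reduction and then appeals to \cite{Johanes} for the step that constitutes the proposition.
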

\begin{remark}
    The constant $c_2$ can be taken as $\min\{1, J\}(2d+1)^{-1}2^{-\alpha-2}$, see the end of the proof of \cite[Proposition 2.10]{Johanes}.
\end{remark}

\subsection{Partition Function of a Polymer Gas}

\newcommand{\classe}{\mathscr{E}}

This section aims to rewrite $Z_{\Lambda, \beta}^+$ as the partition function of a polymer gas. To do so, we establish some notations related to graphs that we will extensively use. For a graph $G$, we denote its set of vertices as $V(G)$ and its edge set as $E(G)$. Given two graphs $G_1$ and $G_2$, $G_1$ is called a subgraph of $G_2$ if $V(G_1) \subset V(G_2)$ and $E(G_1) \subset E(G_2)$. Given a connected graph $G$ and $x, y \in V(G)$, we define $d(x, y)$ as the infimum of the number of edges of all paths joining $x$ and $y$. For a given set $X$, we define $\mathcal{G}_{X}$ as the set of all connected graphs $G$ with $V(G)=X$, $\mathcal{T}_{X}$ is the set of all trees $T$ with $V(T)=X$, and $\mathcal{T}^{x_0}_{X}$ is the set of all rooted trees $T$ with root $x_0$ and $V(T)=X$. We also write $\mathcal{T}^{0}_{n+1}$ when $X=\{0, 1,\dots,n\}$ and root equal $0$. For a tree, every vertex $v$ with $\deg(v)=1$ is called a \emph{leaf}, where $\deg(v)$ is the number of edges connected to $v$. For a rooted tree $T$ we say that a vertex $v$ is of \emph{generation} $k$ if the path with minimal distance in the graph connecting the root to $v$ has length $k$.


\begin{definition}\label{def_compatibility}
  A \emph{polymer} is a compatible family $\polymer$ of mutually external contours. Two polymers $\polymer$ and $\polymer'$ are \emph{compatible} when the following two conditions are satisfied:
    \begin{enumerate}[label=\textnormal{(\Roman*)}]
        \item For every $\gamma \in \polymer$ and $\gamma' \in \polymer'$, we have $\Sp(\gamma) \neq \Sp(\gamma')$ and $\{\Sp(\gamma), \Sp(\gamma')\}$ is a $(M, a)-$partition;
      \item exactly one of the following three conditions happens:
                \begin{enumerate}[label= (\roman*)]
                    \item $\widetilde{V}(\polymer) \cap \widetilde{V}(\polymer') = \emptyset$.
                    \item There is $\gamma \in \polymer$ such that $V(\polymer') \subset \I_-(\gamma)$.
                    \item There is $\gamma' \in \polymer'$ such that $V(\polymer) \subset \I_-(\gamma')$.
                \end{enumerate}
    \end{enumerate} 
\end{definition}

When two polymers are compatible, we write $\polymer \sim \polymer'$. When $\Gamma = \{\gamma\}$ and $\Gamma' = \{\gamma'\}$, we simply write $\gamma \sim \gamma'$. Notice that this boils down to checking the condition (I). Also, every polymer $\Gamma$ is incompatible with itself, in particular, $\gamma \not \sim \gamma$. Moreover, the set of all polymers in $\Lambda$ is precisely $\mathcal{E}^+_{\Lambda}$. For the next proposition, let us introduce the partition function \eqref{partitionfunction_spin} with the energy normalized by the ground state energy $H^+_\Lambda(\sigma_+)$,
 \[
 \Ztil^{+}_{\Lambda, \beta} \coloneqq e^{\beta H_\Lambda^+(\sigma_+)}Z^+_{\Lambda, \beta}.
\]

The next proposition is the multidimensional version of the Theorem 4.11 of \cite{Cassandro.Merola.Picco.Rozikov.14}, we give a detailed proof for the sake of completeness.
\begin{proposition}\label{gas_polymer_partition}
Given any $\Lambda \Subset \Z^d$, 
\begin{equation}
    \Ztil^{+}_{\Lambda, \beta} = 1 + \sum_{ \emptyset\neq \polymerb \subset \mathcal{E}^+_{\Lambda}}\prod_{\polymer \in \polymerb}z_\beta^+(\polymer) \prod_{\left\{\polymer, \polymer'\right\}\subset X} \mathbbm{1}_{\polymer \sim \polymer'}.
\end{equation}
The quantity $\Ztil^{+}_{\Lambda, \beta}$ can be seen as the partition function of a gas of polymers with activity
\begin{equation}
    z_\beta^+(\polymer) = K(\polymer)\prod_{\ctr \in \polymer} W(\ctr),
\end{equation}
where the functions $W(\gamma)$ and $K(\Gamma)$ are defined by Equations \eqref{weight} and \eqref{functionA} respectively.
\end{proposition}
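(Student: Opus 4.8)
The plan is to rewrite the partition function $\Ztil^{+}_{\Lambda,\beta}$ by summing over compatible families of contours and organizing the sum according to the external contours. First I would recall that there is a bijection $\sigma \mapsto \Gamma(\sigma)$ from $\Omega_\Lambda^+$ onto the set of compatible families of contours contained in $\Lambda$ (this is the injectivity remark plus the compatibility definition), so that
\[
\Ztil^{+}_{\Lambda,\beta} = \sum_{\sigma \in \Omega_\Lambda^+} e^{-\beta H^+(\Gamma(\sigma))} = \sum_{\Gamma \text{ compatible in } \Lambda} e^{-\beta H^+(\Gamma)},
\]
where the term $\Gamma = \emptyset$ (corresponding to $\sigma_+$) contributes $1$. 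Then I would invoke Proposition \ref{prop_hamil_contour} to split $H^+(\Gamma)$ into the self-energy $\sum_{\gamma\in\Gamma^e}\Phi_1(\gamma\cup\Gamma_\gamma)$ and the pair term $\sum_{\gamma\neq\gamma'\in\Gamma^e}\Phi_2(\gamma\cup\Gamma_\gamma,\gamma'\cup\Gamma_{\gamma'})$, and use the Remark after Proposition \ref{external} which says that every compatible $\Gamma$ decomposes uniquely as $\Gamma = \Gamma^e \cupdot \bigcupdot_{i}\Gamma_i$ with $\Gamma_i \in \mathscr{I}(\gamma_i)$. Reindexing the sum over $\Gamma$ as a sum over the family of external contours $X = \Gamma^e = \{\gamma_1,\dots,\gamma_n\}$ (which must be pairwise compatible with $V(X)\subset\Lambda$, i.e. $X \in \mathcal{E}^+_\Lambda$ up to identifying polymers with their singleton families) together with an independent choice of internal family $\Gamma_i \in \mathscr{I}(\gamma_i)$ for each $i$, I get
\[
\Ztil^{+}_{\Lambda,\beta} = 1 + \sum_{\emptyset \neq X \subset \mathcal{E}^+_\Lambda}\ \prod_{\{\gamma,\gamma'\}\subset X}\mathbbm{1}_{\gamma\sim\gamma'}\ \sum_{\Gamma_1\in\mathscr{I}(\gamma_1)}\cdots\sum_{\Gamma_n\in\mathscr{I}(\gamma_n)}\exp\Bigl(-\beta\sum_{i}\Phi_1(\gamma_i\cup\Gamma_i) - \beta\sum_{i\neq j}\Phi_2(\gamma_i\cup\Gamma_i,\gamma_j\cup\Gamma_j)\Bigr).
\]

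The next step is to factorize. The point is that $\Phi_2$ as written in \eqref{contour_phi2} depends only on the spins in $\widetilde V(\gamma_i)$ and $\widetilde V(\gamma_j)$, and — using Proposition \ref{external} and the structure of $\I_\pm$ — the $\pm$-pattern of $\sigma$ on $\widetilde V(\gamma)$ that is relevant for the factor $\mathds{1}_{\{\sigma_x=\sigma_y=-1\}}$ does not actually depend on the internal family $\Gamma_\gamma$ once the label structure of $\gamma$ is fixed (the interior components of a contour carry fixed labels, and a contour internal to $\gamma$ sits inside some $\I(\gamma)^{(k)}$ with its own sign already determined by $\lab_\gamma$). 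Granting this, $\Phi_2(\gamma_i\cup\Gamma_i,\gamma_j\cup\Gamma_j)$ is in fact a function $\Phi_2(\gamma_i,\gamma_j)$ of the contours alone, which pulls out of the internal sums. Then the exponential splits as a product over $i$ of $e^{-\beta\Phi_1(\gamma_i\cup\Gamma_i)}$ times the contour-only factor $\prod_{i\neq j}e^{-\beta\Phi_2(\gamma_i,\gamma_j)}$. Defining
\[
W(\gamma) \coloneqq \sum_{\Gamma'\in\mathscr{I}(\gamma)} e^{-\beta\Phi_1(\gamma\cup\Gamma')}\ \eqqcolon\ \eqref{weight}, \qquad K(X) \coloneqq \prod_{\{\gamma,\gamma'\}\subset X} e^{-\beta\Phi_2(\gamma,\gamma')}\ \eqqcolon\ \eqref{functionA},
\]
and recalling that for polymers $\Gamma,\Gamma'$ built from mutually external contours the compatibility $\Gamma\sim\Gamma'$ reduces — by Proposition \ref{Comp} and Definition \ref{def_compatibility}(I) — to pairwise compatibility of the constituent contours, the sum over $X$ becomes exactly $\sum_{\emptyset\neq X\subset\mathcal{E}^+_\Lambda}\prod_{\Gamma\in X}z^+_\beta(\Gamma)\prod_{\{\Gamma,\Gamma'\}\subset X}\mathbbm{1}_{\Gamma\sim\Gamma'}$ with $z^+_\beta(\Gamma) = K(\Gamma)\prod_{\gamma\in\Gamma}W(\gamma)$, which is the claimed formula.

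The main obstacle I anticipate is the factorization claim: verifying carefully that $\Phi_2$ genuinely does not depend on the internal configurations, i.e. that for $\gamma$ external the restriction of any $\sigma$ with $\gamma\in\Gamma(\sigma)^e$ to $\widetilde V(\gamma)$ has its $-1$-set determined by $\gamma$ and its labels alone (so that summing over $\Gamma_\gamma\in\mathscr{I}(\gamma)$ affects only the $\Phi_1$ factor). This requires unwinding the definitions of $\I_\pm(\gamma)$, $\widetilde V(\gamma) = \Sp(\gamma)\cup\I_-(\gamma)$, and the label map, and using Proposition \ref{external} to see that a contour internal to $\gamma$ lies in a single component of $\I(\gamma)$ whose boundary sign is already pinned; a subtlety is that $\I_-$ of the internal contours may flip spins relative to $\gamma$'s own $\I_-$, but those flips happen strictly inside $\widetilde V$ of the internal contours and the bookkeeping has to confirm they are already accounted for in $\Phi_1(\gamma\cup\Gamma_\gamma)$ rather than in $\Phi_2$. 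A secondary, more mechanical point is checking the index reorganization — that every pair $(X, (\Gamma_i)_i)$ with $X\in\mathcal{E}^+_\Lambda$ and $\Gamma_i\in\mathscr{I}(\gamma_i)$ arises from exactly one compatible $\Gamma$ in $\Lambda$, which is precisely the content of the Remark following Proposition \ref{external} combined with the definition of $\mathcal{E}^+_\Lambda$.
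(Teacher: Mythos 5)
There is a genuine gap, and it sits exactly where you flagged your ``main obstacle'': the factorization claim is false. The pair term \eqref{contour_phi2} involves $\mathds{1}_{\{\sigma_x=\sigma_y=-1\}}$ evaluated on the actual configuration, and the spins inside $\widetilde V(\gamma)$ are \emph{not} determined by $\gamma$ and its labels alone: a point $x\in\I_-(\gamma)$ lying in the support or in the plus interior of an internal contour $\gamma''\in\Gamma_\gamma$ carries spin $+1$, so $\Phi_2(\gamma\cup\Gamma_\gamma,\gamma'\cup\Gamma_{\gamma'})$ genuinely depends on the internal families and cannot be pulled out of the internal sums. This is precisely why the paper does \emph{not} take your route: it defines $K(Y)$ in \eqref{functionA} as a normalized expectation, over internal families weighted by $e^{-\beta\Phi_1}$, of the Ursell factors $\varphi_{\gamma,\gamma'}=e^{-\beta\Phi_2(\gamma\cup\Gamma_\gamma,\gamma'\cup\Gamma_{\gamma'})}-1$, rather than a bare product of $e^{-\beta\Phi_2(\gamma,\gamma')}$. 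Relatedly, your $W(\gamma)\coloneqq\sum_{\Gamma'\in\mathscr{I}(\gamma)}e^{-\beta\Phi_1(\gamma\cup\Gamma')}$ is not Equation \eqref{weight}: the paper's weight is the ratio $\Ztil^+(\gamma)/\Ztil^+(\check\gamma)$, and the proposition asserts the formula with \emph{those} activities, so even a correct derivation of your formula would prove a different statement.

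Beyond the false factorization, two structural ingredients of the proof are missing. First, even if $\Phi_2$ were independent of the internal families, your final identity would not follow: with $K(X)=\prod_{\{\gamma,\gamma'\}\subset X}e^{-\beta\Phi_2(\gamma,\gamma')}$ the interaction couples \emph{all} external contours, and it does not decompose as $\prod_{\Gamma\in X}K(\Gamma)$ times hard-core indicators, because $\Phi_2$ between contours in different polymers is nonzero (it is the long-range attraction). The passage from the globally coupled sum over external contours to a gas of polymers with pairwise compatibility requires the Mayer trick ($e^{-\beta\Phi_2}=1+\varphi$) and the ensuing sum over partitions of $\Gamma^e$ into groups $Y$; this is where the polymers and the indicators $\mathbbm{1}_{\Gamma\sim\Gamma'}$ come from, and it is absent from your argument. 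Second, after this step the formula still contains the factors $\Ztil^+(\check\gamma)$, which are partition functions on the minus interiors $\I_-(\gamma)$; the paper must re-expand these recursively (the induction on contour levels, with the bijection between $(X,(X^{(n+1)}_\gamma))$ and $X'$ and the compatibility verifications) to reach the stated closed form. Your proposal has no counterpart of either step, so the claimed formula is not reached even granting the factorization.
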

\begin{proof}
Notice that, thanks to Proposition \ref{external}, we can write all family of contours $\Gamma$ as an union $\cup_{\gamma \in \Gamma^e}\gamma \cup \Gamma_\gamma$, where $\Gamma_\gamma \in \mathscr{I}(\gamma)$. After further splitting the sum over contours as a sum over external and internal contours, one gets

\begin{equation*}
    \Ztil^+_{\Lambda, \beta} = 1 + \sum_{\substack{\emptyset \neq \extb \in \es^+_{\Lambda}}} \Ztil^+(\extb),
\end{equation*}

where $\widetilde{Z}^+(\Gamma^e)$, called the \emph{crystallic partition function} is given by

\begin{equation*}
    \Ztil^+(\extb) \coloneqq \sum_{\substack{\Gamma_\gamma \in \is(\gamma) \\ \gamma \in \Gamma^e}} e^{-\beta H^+(\Gamma)}.
\end{equation*}

 In particular, we write $\Ztil(\gamma)$ instead of $\Ztil(\{\gamma\})$ when $\Gamma=\{\gamma\}$. Then, using the spin-flip map introduced in \eqref{tau}, we define

\begin{equation*}
    \Ztil^+(\check{\ctr}) \coloneqq 1 + \sum_{\emptyset \neq \ctrb \in \is(\ctr)} e^{-\beta H^+(\tau_{\gamma}(\gamma \cup \Gamma))},
\end{equation*}
where the mark $\check{\gamma}$ indicates that the contour $\gamma$ is not included in the Hamiltonian. We get
\begin{align*}\label{preztil}
    \Ztil^+_{\Lambda} &= \sum_{\substack{\extb \in \es^+_{\Lambda}}} \frac{\Ztil^+(\extb)}{\prod_{\gamma\in \Gamma^e} \Ztil^+(\ctr)} \prod_{\gamma\in \Gamma^e} W(\ctr)  \Ztil^+(\check{\ctr}),
\end{align*}
where
\begin{equation}\label{weight}
    W(\ctr) \coloneqq \frac{\Ztil^+(\ctr)}{\Ztil^+(\check{\ctr})}.
\end{equation}
The next step is to expand $\Ztil^+(\extb)$ in terms of the two body interactions between contours. We can write, by means of \eqref{hamil_contour} and using the Mayer trick,
\begin{equation*}
\begin{split}
\Ztil^+(\extb) = \sum_{P \in \mathcal{P}(\Gamma^e)} \prod_{Y \in P} \sum_{G \in \mathcal{G}_{Y}}\sum_{\substack{\ctrb_\gamma \in \is(\gamma) \\ \gamma \in Y}} \left(\prod_{\gamma \in Y} e^{-\beta \Phi_1(\gamma \cup \ctrb_\gamma)}\right) \varphi_G,
\end{split}
\end{equation*}
where $\mathcal{P}(\Gamma^e)$ is the set of all partitions of $\Gamma^e$ and $\varphi_G$ are the Ursell functions
\begin{equation*}
    \varphi_G =  \prod_{\{\gamma,\gamma'\} \in E(G)} \varphi_{\gamma,\gamma'} \quad \text{ with } \quad \varphi_{\gamma,\gamma'} = e^{-\beta \Phi_2(\gamma \cup \ctrb_\gamma, \gamma' \cup \ctrb_{\gamma'}) } - 1.
\end{equation*}
 In the case where $|Y|=1$, we set $\varphi_G=1$. Notice that the terms $\left(\prod_{\gamma \in Y} e^{-\beta \Phi_1(\gamma \cup \ctrb_\gamma)}\right) \varphi_G$ Depend only on those $\Gamma_\gamma$ such that $\gamma \in Y$. Analogously, for each partition $P \in \mathcal{P}(\Gamma^e)$

\begin{equation*}
    \prod_{\gamma \in \Gamma^e} \Ztil^+(\ctr)= \prod_{Y \in P} \sum_{\substack{\ctrb_\gamma \in \is(\gamma) \\ \gamma \in Y}} \prod_{\gamma \in Y}  e^{-\beta \Phi_1(\ctr \cup \ctrb_\gamma)}.
\end{equation*}
Hence, the ratio between the partition function with interacting contours and non-interacting contours, together with an exchange in summation, becomes
\begin{equation*}
     \frac{\Ztil^+(\extb)}{\prod_{\gamma \in \Gamma^e} \Ztil^+(\ctr)} = \sum_{P \in \mathcal{P}(\Gamma^e)} \prod_{Y\in P} K(Y),
\end{equation*}
with
\begin{equation}\label{functionA}
   K(Y) \coloneqq \sum_{G \in \mathcal{G}_{Y}} \frac{\sum_{\substack{\ctrb_\gamma \in \is(\gamma) \\ \gamma \in Y}}  \left(\prod_{\gamma \in Y} e^{-\beta \Phi_1(\gamma\cup \ctrb_\gamma)}\right)  \varphi_{G}}{\sum_{\substack{\ctrb_\gamma \in \is(\gamma) \\ \gamma \in Y}} \prod_{\gamma \in Y}  e^{-\beta \Phi_1(\ctr \cup \ctrb_\gamma)}}.
\end{equation}

In particular, 
\begin{equation}\label{zgammae}
    \Ztil^+(\extb) = \sum_{P \in \mathcal{P}(\Gamma^e)} \prod_{Y\in P}  \left[ K(Y)\prod_{\gamma \in Y}  W(\ctr)\Ztil^+(\check{\ctr})\right]
\end{equation}

so we can write the total partition function as
\begin{align*}
    \Ztil^+_{\Lambda} = 1 + \sum_{\substack{\emptyset \neq \extb \in \es^+_{\Lambda}}} \sum_{P \in \mathcal{P}(\Gamma^e)} \prod_{Y\in P}  \left[ K(Y)\prod_{\gamma \in Y}  W(\ctr)\Ztil^+(\check{\ctr})\right].
\end{align*}
Instead of summing over partitions of a family of external contours, we can sum over collections of families of external contours satisfying certain conditions. Denoting each such family by $\Gamma$ and the collections $X$, we get:
\begin{equation}\label{ztil}
\begin{split}
      \Ztil^+_{\Lambda} &= 1 + \sum_{\emptyset \neq \polymerb \subset \mathcal{E}_\Lambda^+} \mathbbm{1}_{\left\{\Gamma_X \in \mathcal{E}_\Lambda^+;\; \bigcap_{\Gamma \in X} \Gamma = \emptyset\right\}}\left[\prod_{\polymer \in \polymerb}K(\Gamma)\prod_{\ctr \in \polymer}  W(\ctr)\right]\prod_{\ctr \in \polymer_X} \Ztil^+(\check{\ctr}),
      \end{split}
\end{equation}
where $\Gamma_X = \bigcup_{\Gamma \in X}\Gamma$. Summarizing, the sum is over all collections of disjoint families of external contours such that the union of the families of this collection is also external and compatible. 


The previous expression gives rise to a recursion: we have the partition function in terms of $\Ztil^{\bc}(\check{\ctr})$, which is a partition function of the interior of the contour $\ctr$. To see that, given a family of contours $\Gamma = \{(\overline{\gamma}_i, \sigma_{\overline{\gamma}_i}), i = 1, \dots n\}$, define $\tau(\Gamma)$ as the family of contours with every spin flipped, $\tau(\Gamma) =  \{(\overline{\gamma}_i, -\sigma_{\overline{\gamma}_i}), i = 1, \dots n\}$. Also, denoting by $\mathscr{C}^\pm(\Lambda)$ the set of all families of compatible contour in $\Lambda$ with $\pm$-boundary condition, we can write

\begin{equation*}
    \Ztil^{\bc}(\check{\ctr}) = 1 +  \sum_{\emptyset \neq \Gamma \in \mathscr{C}^-(\I_-(\ctr))} \mathbbm{1}_{\{\gamma \cup \Gamma \text{ is compatible}\}} e^{-\beta H^+(\tau(\Gamma))}.
\end{equation*}

The contours are only in the minus components of the interior due to our definition of external contour. Besides, we needed to put one more restriction because the contours in $\Ztil^{\bc}(\check{\ctr})$ are not allowed to get too close to $\gamma$ (otherwise they would have to be the same contour). Performing a change of variables, we have

\begin{equation*}
    \Ztil^{\bc}(\check{\ctr}) = 1 + \sum_{\emptyset \neq \Gamma \in \mathscr{C}^+(\I_-(\ctr))} \mathbbm{1}_{\{\gamma \cup \tau(\Gamma) \text{ is compatible}\}} e^{-\beta H^+(\Gamma)}.
\end{equation*}

Thus, $\Ztil^{\bc}(\check{\ctr})$ is equal to $ \Ztil^{\bc}_{\I_-(\ctr)}$ apart from a characteristic function. We will take advantage of this similarity to expand $\Ztil^{\bc}(\check{\ctr})$ analogously and insert into $ \Ztil^{\bc}_{\Lambda}$. 

In order to state the recursion in a clearer manner, we will need some definitions. Given a family of contours $\Gamma$, we define the level $\ell$ of a contour $\gamma \in \Gamma$ inductively as follows. If $\gamma$ is external, we put $\ell(\gamma) = 1$. Otherwise, $\ell(\gamma) = n + 1$ if $n$ is the highest level found in the set of contours $\gamma'$ such that $\gamma \in \is(\gamma')$. Finally, put $\ell(X) = \max \{\ell(\gamma); \gamma \in \Gamma_X\}$. We claim that, after $n$ steps, the partition function will be written as 

\begin{equation}\label{recur}
      \Ztil^{\bc}_{\Lambda} = 1 + \sum_{\substack{\emptyset \neq \polymerb \subset \mathcal{E}^+_{\Lambda}\\ \ell(X) \leq n}} \left(\prod_{\{\Gamma, \Gamma'\}} \mathbbm{1}_{\Gamma \sim \Gamma'} \right) \prod_{\polymer \in \polymerb} \left[ K^{\bc}(\polymer)\prod_{\ctr \in \polymer}  W^{\bc}(\ctr)\mathcal{Z}_{X, n}(\gamma)\right],
\end{equation}

where

\begin{equation*}
\mathcal{Z}_{X, n}(\gamma) =
\begin{cases}
    1, \quad \ell(\gamma) < n \\
     \Ztil^{\bc}(\check{\ctr}), \quad \ell(\gamma) = n.
\end{cases}
\end{equation*}

Notice that this statement proves the desired formula. Indeed, taking $\ell(\Lambda) = \displaystyle \max_{X \subset \mathcal{E}^+_{\Lambda}} \ell(X)$ for each finite $\Lambda$, after $\ell(\Lambda) + 1$ steps two things will happen: the constraint over the level of $X$ in the first summation will be always satisfied, so it may be dropped, and $\mathcal{Z}_{X, n}(\gamma)$ will always be $1$.

Thus, we only need to prove \eqref{recur}. It is trivial for $n = 1$, having \eqref{ztil} in mind. Now, suppose that the expression holds for some $n$. In order to obtain the expression for $n + 1$, we will need to expand $\Ztil^{\bc}(\check{\ctr})$ that will be present in the polymers $\Gamma$ in level $n$. We have:

\begin{equation*}
    \Ztil^{\bc}(\check{\ctr}) = 1 + \sum_{\substack{\emptyset \neq \extb \in \es^{\bc}_{\I_-(\gamma)}\\ \extb = \{\gamma_1, ..., \gamma_n\}}} \sum_{\substack{\ctrb_i \in \is(\gamma_i) \\ 1 \leq i \leq n}} \mathbbm{1}_{\{\gamma \cup \tau(\gamma_1 \cup \ctrb_1 \cup ... \cup \gamma_n \cup \ctrb_n)) \text{ is compatible}\}} e^{-\beta H^+(\gamma_1 \cup \ctrb_1 \cup ... \cup \gamma_n \cup \ctrb_n))}
\end{equation*}

Now, we can use that $\tau\left(\bigcup_i \gamma_i \right) = \bigcup_i \tau(\gamma_i)$ and the following consequence of Proposition \ref{Comp}: given a family of contours $\Gamma \subset \I_-(\gamma)$, $\gamma \cup \Gamma$ is compatible if, and only if $\Gamma$ is compatible and the family $\gamma \cup \Gamma^e$ is compatible. In this case, we can write

\begin{equation*}
    \Ztil^{\bc}(\check{\ctr}) = 1 + \sum_{\emptyset \neq \Gamma^e \in \mathcal{E}_{\I_-(\gamma)}^+}\mathbbm{1}_{\{\gamma \cup \tau(\Gamma^e) \text{ compatible}\}}\widetilde{Z}^+(\Gamma^e)
\end{equation*}

and expand $\widetilde{Z}^+(\Gamma^e)$ as \eqref{zgammae}, yielding

\begin{equation*}
    \Ztil^{\bc}(\check{\ctr}) = \sum_{\Gamma^e \in \mathcal{E}_{\I_-(\gamma)}^+}\mathbbm{1}_{\{\gamma \cup \tau(\Gamma^e) \text{ compatible}\}} \sum_{\mathscr{P} \text{ partition of } \Gamma^e} \prod_{P \in \mathscr{P}} \left[ K^{\bc}(P)\prod_{\gamma \in P}  W^{\bc}(\gamma)\Ztil^{\bc}(\check{\ctr})\right]
\end{equation*}

Again, we will replace the summation over partitions by a summation over collections of polymers:

\begin{equation*}
    \sum_{\Gamma^e \in \mathcal{E}_{\I_-(\gamma)}^+}\mathbbm{1}_{\{\gamma \cup \tau(\Gamma^e) \text{ compatible}\}} \sum_{\mathscr{P} \text{ partition of } \Gamma^e} = \sum_{X \in \classe(\gamma)},
\end{equation*}

where $\classe(\gamma)$ is the set consisting of all families $X \subset \mathcal{E}^+_{\I_-(\gamma)}$ such that $\bigcap_{\Gamma \in X} \Gamma = \emptyset$, and $\tau(\Gamma_X)$ is compatible with $\gamma$. Notice that, if $X$ is a family of pairwise compatible polymers, by the definition of compatibility, the level is constant across each polymer, that is, for any $\Gamma \in X$, $\ell(\gamma) = \ell(\gamma'), \forall \gamma, \gamma' \in \Gamma$. Let $\polymerb_n$ be the subset of $X$ containing polymers $\Gamma$ with contours in the $n-$th level. Putting a superscipt $(n+1)$ to remind ourselves of which polymers appears in the next iteration and the subscript $\gamma$ to remind that $X_{\gamma}$ is inside the interior of $\gamma$, we get

\begin{align*}
    &\prod_{\polymer \in \polymerb_n} \left[ K^{\bc}(\polymer)\prod_{\ctr \in \polymer}  W^{\bc}(\ctr)\mathcal{Z}_{X, n}(\gamma)\right]\\
      &=  \prod_{\polymer \in \polymerb_n} \left\{ K^{\bc}(\polymer)\prod_{\ctr \in \polymer}  W^{\bc}(\ctr) \sum_{\rctr \in \classe(\ctr)}  \prod_{\polymer_{(n+1)} \in \rctr} \left[ K^{\bc}(\polymer_{(n+1)})\prod_{\ctr_{(n+1)} \in \polymer_{(n+1)}}  W^{\bc}(\ctr_{\scriptscriptstyle (n+1)})\Ztil^{\bc}(\check{\ctr}_{\scriptscriptstyle (n+1)})\right]\right\}\\
      &=   \sum_{(\rctr)_{\ctr \in \polymer_{\polymerb_n}}}\prod_{\polymer \in \polymerb_n} K^{\bc}(\polymer) \prod_{\ctr \in \polymer} \Biggl\{  \mathbbm{1}_{\left\{\substack{\rctr \in \classe(\ctr)}\right\}} W^{\bc}(\ctr) \times \\
      & \hspace{6.5cm}\times \left. \prod_{\polymer_{(n+1)} \in \rctr} \left[ K^{\bc}(\polymer_{(n+1)})\prod_{\ctr_{(n+1)} \in \polymer_{(n+1)}}  W^{\bc}(\ctr_{\scriptscriptstyle (n+1)})\Ztil^{\bc}(\check{\ctr}_{\scriptscriptstyle (n+1)})\right]\right\},\\
\end{align*}
where we exchanged the summation and the product twice and used the fact that summing over $((\rctr)_{\ctr \in \polymer})_{\polymer \in \polymerb_n}$ and $(\rctr)_{\ctr \in \polymer_{\polymerb_n}}$ is effectively the same.
The term inside the summation above can be written as

\begin{align*}
    &\prod_{\polymer \in \polymerb_n}  \left\{ K^{\bc}(\polymer)  \left(\prod_{\ctr \in \polymer} W^{\bc}(\ctr)\right) \left( \prod_{\ctr \in \polymer}\mathbbm{1}_{\left\{\substack{\rctr \in \classe(\ctr)}\right\}} \right) \times \right. \\
 & \hspace{5cm} \times \left. \left(\prod_{\polymer_{(n+1)} \in \bigcup_{\ctr}\rctr} \left[ K^{\bc}(\polymer_{(n+1)})\prod_{\ctr_{(n+1)} \in \polymer_{(n+1)}}  W^{\bc}(\ctr_{\scriptscriptstyle(n+1)})\Ztil^{\bc}(\check{\ctr}_{\scriptscriptstyle (n+1)})\right]\right)\right\}\\[0.4cm]
    &=  \left( \prod_{\polymer \in \polymerb_n} \left[ K^{\bc}(\polymer) \prod_{\ctr \in \polymer} W^{\bc}(\ctr)\right]\right) \left( \prod_{\ctr \in\bigcup_{\Gamma \in \polymerb_n }\Gamma}\mathbbm{1}_{\left\{\substack{\rctr \in \classe(\ctr)}\right\}} \right) \times \\[0.4cm]
    & \hspace{5cm} \times \left(\prod_{\polymer_{(n+1)} \in \bigcup_{\ctr \in \Gamma_{X_n}}\rctr} \left[ K^{\bc}(\polymer_{(n+1)})\prod_{\ctr_{(n+1)} \in \polymer_{(n+1)}}  W^{\bc}(\ctr_{\scriptscriptstyle(n+1)})\Ztil^{\bc}(\check{\ctr}_{\scriptscriptstyle (n+1)})\right]\right) \\
    &  =\left(\prod_{\polymer \in \polymerb_n \cup \bigcup_{\gamma \in \polymer_{\polymerb_n}}\rctr} \left[ K^{\bc}(\polymer)\prod_{\ctr \in \polymer}  W^{\bc}(\ctr)\mathcal{Z}_{n+1}(\ctr)\right]\right)\left( \prod_{\ctr \in \bigcup_{\Gamma \in \polymerb_n }\Gamma}\mathbbm{1}_{\left\{\substack{\rctr \in \classe(\ctr)}\right\}} \right),
\end{align*}

Let $\polymerb_{<n}$ be the complement $\polymerb\backslash\polymerb_n$. For any contour $\ctr$ in the polymers of $\polymerb_{<n}$, we have that $\mathcal{Z}_{n+1}(\ctr) = \mathcal{Z}_{n}(\ctr) = 1$, which allows us to write

\begin{align*}
    \left(\prod_{\polymer \in \polymerb_{<n}} \left[ K^{\bc}(\polymer)\prod_{\ctr \in \polymer}  W^{\bc}(\ctr)\mathcal{Z}_{n}(\ctr)\right]\right)&\left(\prod_{\polymer \in \polymerb_n \cup \bigcup_{\gamma \in \polymer_{\polymerb_n}}\rctr}\left[ K^{\bc}(\polymer)\prod_{\ctr \in \polymer}  W^{\bc}(\ctr)\mathcal{Z}_{n+1}(\ctr)\right]\right) \\
    & \hspace{3cm} =\left(\prod_{\polymer \in \polymerb \cup \bigcup_{\gamma \in \polymer_{\polymerb_n}}\rctr} \left[ K^{\bc}(\polymer)\prod_{\ctr \in \polymer}  W^{\bc}(\ctr)\mathcal{Z}_{n+1}(\ctr)\right]\right)
\end{align*}

Thus, back to the partition function, it can be written as

\begin{align*}
     \Ztil^{\bc}_{\Lambda} = 1 + \sum_{\substack{\emptyset \neq \polymerb \in \mathcal{E}^+_{\Lambda}\\ \ell(X) \leq n}}  \sum_{(\rctr)_{\ctr \in \polymer_{\polymerb_n}}} \left(\prod_{\{\Gamma, \Gamma'\} \subset X} \mathbbm{1}_{\Gamma \sim \Gamma'} \right) &\left( \prod_{\ctr \in \bigcup_{\Gamma \in \polymerb_n }\Gamma}\mathbbm{1}_{\left\{\substack{\rctr \in \classe(\ctr)}\right\}} \right) \times \\
     &\hspace{2cm}\times \prod_{\polymer \in \polymerb \cup \bigcup_{\ctr \in \polymer_{\polymerb_n}}\rctr} \left[ K^{\bc}(\polymer)\prod_{\ctr \in \polymer}  W^{\bc}(\ctr)\mathcal{Z}_{n+1}(\ctr)\right]
\end{align*}

This is almost want we want, we just need to rearrange the first summations. For such, we will perform the following change of variables.

\begin{equation*}
    \left(\polymerb, (\rctr)_{\ctr \in \polymer_{\polymerb_n}}\right) \mapsto \polymerb' = \polymerb \cup \bigcup_{\ctr \in \polymer_{\polymerb_n}} \rctr.
\end{equation*}
\begin{equation*}
    \sum_{\substack{\polymerb \in \mathcal{E}^+_{\Lambda}\\ \ell(X) \leq n}} \sum_{(\rctr)_{\ctr \in \polymer_{\polymerb_n}}} \left(\prod_{\{\Gamma, \Gamma'\} \subset X} \mathbbm{1}_{\Gamma \sim \Gamma'} \right) \left( \prod_{\ctr \in \bigcup_{\Gamma \in \polymerb_n }\Gamma}\mathbbm{1}_{\left\{\substack{\rctr \in \classe(\ctr)}\right\}} \right) \rightarrow \sum_{\substack{\polymerb' \in \mathcal{E}^+_{\Lambda}\\ \ell(X') \leq n+1}} \left(\prod_{\{\Gamma, \Gamma'\} \subset X'} \mathbbm{1}_{\Gamma \sim \Gamma'} \right)
\end{equation*}

We will prove that there is a bijection between the sets over which the two summations are being performed.

$(\to)$ Take some pair $\left(\polymerb, (\rctr)_{\ctr \in \polymer_{\polymerb_n}}\right)$ satisfying the conditions of the left-hand side. Put $X'_{n+1} := \bigcup_{\ctr \in \polymer_{\polymerb_n}} \rctr$. Clearly, $\polymerb' = \polymerb \cup X'_{n+1} \in \mathcal{E}^+_{\Lambda}$. If there was some contour of level $n+2$, then it would necessarily be inside a contour of level $n+1$. Notice that the levels of the contours in $X$ are preserved in $X'$, so contours of level $n+1$ and $n+2$ must come from the polymers in $X'_{n+1}$. Now, since $\rctr \in \classe(\gamma)$ for every $\gamma \in \Gamma_{X_n}$, the union $\bigcup_{\Gamma \in X'_{n+1}} \Gamma$ is a family of mutually external contours. The conclusion is that $\ell(X') \leq n+1$.

Now we are going to check that the polymers in $X'$ are pairwise compatible. Since we already know that this is true for pairs in $X$, we may restrict to the case where one of them is in $X'_{n+1}$. 

(I) The first step is to prove the compatibility between the \emph{contours} constituting the polymers. To do so, we are going to use heavily the following fact, which is a kind of transitivity for $\sim$ under special conditions.

\begin{center}
\emph{Fact: If $\gamma_1 \sim \gamma_2$ and $\gamma_2 \sim \gamma_3$, then $\gamma_1 \sim \gamma_3$ provided that $\gamma_1$ and $\gamma_3$ are \\ in two different connected components of $\Sp(\gamma_2)^c$.}
\end{center}

By hypothesis, $\gamma_n \sim \gamma_{n+1}$ for any $\gamma_n \in \Gamma_{X_n}$ and $\gamma_{n+1} \in X^{(n+1)}_{\gamma_n}$. Since we already know that $\gamma_n$ is compatible with all the contours of level $n$ or smaller, the fact above tells us that  $\gamma_{n+1} \sim \gamma$ for any $\gamma_{n+1} \in \bigcup_{\Gamma \in X'_{n+1}} \Gamma$ and $\gamma$ such that $\ell(\gamma) \leq n$. Now we need to check the compatibility between contours both in the $(n+1)$-level. When both of them are in $\rctr$ for some $\gamma$, the compatibility follows by hypothesis. For the remaining case, let $\gamma_{n+1} \in X^{(n+1)}_{\ctr_n}$ and $\gamma'_{n+1} \in X^{(n+1)}_{\ctr'_n}$. By what was discussed, $\gamma_{n+1} \sim \ctr'_n$ and $\ctr'_n \sim \gamma'_{n+1}$. Since $\ctr_n$ and $\ctr'_n$ are in the same level, they are mutually external, so we can use the fact above to see that $\gamma_{n+1} \sim \gamma'_{n+1}$. 

(II) Now we need to check the second condition for the compatibility between polymers. Let $\Gamma_{n+1}$ be some polymer in $X'_{n+1}$. Clearly, there is exactly one $\gamma_n \in \Gamma_{X_n}$ such that $\Gamma_{n+1} \in X^{(n+1)}_{\gamma_n}$. Let $\Gamma_n$ be the polymer containing $\gamma_n$. Clearly, $\Gamma_n \sim \Gamma_{n+1}$, because $\Gamma_{n+1}$ is contained in only one contour of $\Gamma_n$. Now, let $\Gamma$ be any other polymer in $X$. We know that $\Gamma \sim \Gamma_n$ by hypothesis. Since $\Gamma \in X$, we know that $\Gamma$ cannot be inside some contour of $\Gamma_n$. Thus, either $\Gamma$ and $\Gamma_n$ are disjoint, or there is only one contour $\gamma$ of $\Gamma$ such that $\Gamma_n$ is inside $\Gamma$. In the first case, $\Gamma_{n+1}$ and $\Gamma$ are also disjoint. In the second case, $\Gamma_{n+1}$ is also contained in only one contour of $\Gamma$. Regardless, the compatibility is verified. Now, take another $\Gamma'_{n+1} \in X'_{n+1}$. We need to see that $\Gamma_{n+1} \sim \Gamma'_{n+1}$. It both are in $X^{(n+1)}_{\gamma_n}$ for some $\gamma_n$, then $\Gamma_{n+1}$ and $\Gamma'_{n+1}$ are disjoint by the hypothesis that $X^{(n+1)}_{\gamma_n} \in \classe(\gamma_n)$. If they are in different contours, they must also be disjoint because the two contours are. \\[0.3cm]

$(\leftarrow)$ Conversely, take some $X'$ satisfying the conditions in the right-hand side. Put $X$ as the collection of polymers in $X'$ whose contours have level $n$ or less. If $X_n = \emptyset$, simply take $(X', \emptyset)$ as the desired pair. Otherwhise, for any $\gamma_n \in X_n$ let $X^{(n + 1)}_{\gamma_n}$ be the set of polymers in $X'$ whose contours are internal to $\gamma_n$. Notice that this is well-defined due to the compatibility condition: one cannot have a polymer composed of contour $\gamma_{n+1}, \gamma'_{n+1}$ where $\gamma_{n+1}$ is inside $\gamma_n$ and $\gamma'_{n+1}$ is inside $\gamma'_n$.

Clearly, the polymers in $X$ are pairwise compatible, thus, we only need to show that, for any $\gamma_{n} \in \bigcup_{\Gamma \in \Gamma_{X_n}} \Gamma$, we have $X^{(n + 1)}_{\gamma_n} \in \classe(\gamma_n)$. By construction, $X^{(n + 1)}_{\gamma_n} \subset \mathcal{E}^+_{\I_-(\gamma_n)}$. The collection $X^{(n + 1)}_{\gamma_n}$ is also disjoint. If we had $\gamma_{n+1} \in \Gamma_{n+1} \cap \Gamma'_{n+1}$, this two polymers would not be compatible. Finally, notice that each $\gamma_n \cup \gamma_{n+1}$ is compatible for any $\gamma_{n+1} \in X^{(n + 1)}_{\gamma_n}$. By Proposition \ref{Comp}, we conclude that the whole family $\gamma_n \cup X^{(n + 1)}_{\gamma_n}$ is compatible.

We finally end the induction by arriving at the following equality.

\begin{equation}
      \Ztil^{\bc}_{\Lambda} = \sum_{\substack{\polymerb \in \mathcal{E}^+_{\Lambda}\\ \ell(X) \leq n+1}} \left(\prod_{\{\Gamma, \Gamma'\}} \mathbbm{1}_{\Gamma \sim \Gamma'} \right) \prod_{\polymer \in \polymerb} \left[ K^{\bc}(\polymer)\prod_{\ctr \in \polymer}  W^{\bc}(\ctr)\mathcal{Z}_{n+1}(\gamma)\right].
\end{equation}

\end{proof}

\subsection{Activity Bounds}
In this section, we prove the necessary bounds for the activities $z^+_\beta(\polymer)$. First notice that, given two external contours $\gamma,\gamma'$ and families of contours $\Gamma, \Gamma'$ internal, respectively, to $\gamma$ and $\gamma'$ we have
\begin{equation}\label{bound_phi_2_2}
    -\Phi_2(\ctr \cup \ctrb, \ctr' \cup \ctrb') \leq 4\sum_{\substack{x \in \widetilde{V}(\ctr) \\ y \in \widetilde{V}(\ctr')}} J_{xy} \coloneqq 4F_{\ctr,\ctr'}.
\end{equation}

\begin{lemma}\label{lemma_K}
    There exists a constant $c_3\coloneqq c_3(\alpha,d,J,M)>0$ such that $\displaystyle \lim_{M \to \infty} c_3 = 0$, and given a contour $\gamma$, for any $\Gamma \in \mathcal{E}^+$ satisfying $\gamma \sim \Gamma$ and $\gamma \not\in \Gamma$ it holds
    \[
    4\sum_{\substack{\ctr' \in \Gamma}} F_{\ctr,\ctr'} \leq c_3F_{\widetilde{V}(\ctr)}.
    \]
\end{lemma}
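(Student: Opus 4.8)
The plan is to estimate each $F_{\gamma,\gamma'}$ separately through the geometric separation that $\gamma\sim\Gamma$ forces, and then to sum over $\gamma'\in\Gamma$ exploiting that the sets $\widetilde V(\gamma')$, $\gamma'\in\Gamma$, are pairwise disjoint (a polymer is a family of \emph{mutually external} contours, so $\widetilde V(\gamma')\cap\widetilde V(\gamma'')=\emptyset$ for $\gamma'\ne\gamma''$ in $\Gamma$). First I would record the consequences of $\gamma\sim\Gamma$, $\gamma\notin\Gamma$. By Definition \ref{def_compatibility}(I), for every $\gamma'\in\Gamma$ the pair $\{\Sp(\gamma),\Sp(\gamma')\}$ is an $(M,a)$-partition, hence
\[
\dis(\Sp(\gamma),\Sp(\gamma')) > M\min\{|V(\gamma)|,|V(\gamma')|\}^{\frac{a}{d+1}} =: D_{\gamma'}\ge M .
\]
Using property \textbf{(A1)} together with the elementary fact that in the $\ell^1$-lattice every path between two vertices lying in different connected components of $S^c$ must meet $S$ — so a geodesic realizing $|x-y|$ already has length at least $\dis(x,S)$ and at least $\dis(y,S)$ — I would upgrade this to a statement about modified volumes: classifying $\gamma'\in\Gamma$ by which alternative of Definition \ref{def_compatibility}(II) holds, every $\gamma'$ is of one of the types (a) $\widetilde V(\gamma')\cap\widetilde V(\gamma)=\emptyset$, in which case $\dis(\widetilde V(\gamma),\widetilde V(\gamma'))\ge D_{\gamma'}$; (b) $\widetilde V(\gamma')\subset\I_-(\gamma)$, in which case $|V(\gamma')|<|V(\gamma)|$ and $\dis(\Sp(\gamma),\widetilde V(\gamma'))\ge D_{\gamma'}$; (c) $\widetilde V(\gamma)\subset\I_-(\gamma')$, in which case $|V(\gamma)|<|V(\gamma')|$, $\dis(\widetilde V(\gamma),\Sp(\gamma'))\ge D_{\gamma'}$, and there is at most one such $\gamma'$.

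Next comes the single-pair estimate. For $\gamma'$ of type (a), every pair $x\in\widetilde V(\gamma)$, $y\in\widetilde V(\gamma')$ has $|x-y|\ge D_{\gamma'}$, so from $\alpha>d$ and summing the decaying interaction over annuli,
\[
F_{\gamma,\gamma'} \le J\sum_{\substack{x\in\widetilde V(\gamma)\\ y\in\widetilde V(\gamma')}}|x-y|^{-\alpha}\le C\,J\,\min\{|\widetilde V(\gamma)|,|\widetilde V(\gamma')|\}\,D_{\gamma'}^{\,d-\alpha},
\]
and, when the contour is small compared with the separation, the sharper bound $F_{\gamma,\gamma'}\le J\,|\widetilde V(\gamma)|\,|\widetilde V(\gamma')|\,D_{\gamma'}^{-\alpha}$ (every pair is then at distance $\sim D_{\gamma'}$). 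The role of the exponent $a=\tfrac{3(d+1)}{(\alpha-d)\wedge1}$ is precisely that $\tfrac{a}{d+1}(\alpha-d)\ge 3$, so that $D_{\gamma'}^{-(\alpha-d)}\le M^{-(\alpha-d)}\min\{|V(\gamma)|,|V(\gamma')|\}^{-3}$; the extra negative power of the volume is what absorbs the entropy of contours of a given size (Proposition \ref{Bound_on_C_0_n}) and the packing count below. For $\gamma'$ of type (b) I would split the sum defining $F_{\gamma,\gamma'}$ according to $x\in\Sp(\gamma)$ — again controlled by $D_{\gamma'}$ — and $x\in\I_-(\gamma)$, bounding the latter by $\sum_{x\notin\widetilde V(\gamma'),\,y\in\widetilde V(\gamma')}J_{xy}=F_{\widetilde V(\gamma')}$ and then reinserting the separation; type (c), a single term with $\widetilde V(\gamma)\subset\I_-(\gamma')$, is handled symmetrically and directly produces a multiple of $F_{\widetilde V(\gamma)}$ that is small once $M$ is large.

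The last step is the summation over $\gamma'\in\Gamma$, which I expect to be the main obstacle. The key device is that two contours of a polymer lying in a common dyadic size class $|\widetilde V(\gamma')|\in[2^k,2^{k+1})$ are separated by at least $M\,2^{ka/(d+1)}$, so disjoint clearance balls show that any ball of radius $R$ contains at most $C\,\big(R/(M2^{ka/(d+1)})\big)^{d}$ of them. Combining this packing count with the single-pair estimates of the previous paragraph, grouping $\gamma'$ by size and by distance to $\widetilde V(\gamma)$, and using $\alpha>d$ and $\tfrac{a}{d+1}\ge 3$, the resulting double series converges. To identify the bound with a small multiple of $F_{\widetilde V(\gamma)}=\sum_{x\in\widetilde V(\gamma)}\sum_{y\notin\widetilde V(\gamma)}J_{xy}$, I would recombine the type-(a) terms as $\sum_{x\in\widetilde V(\gamma)}\sum_{y\in Y}J_{xy}$ with $Y:=\bigsqcup_{\gamma'\text{ type (a)}}\widetilde V(\gamma')\subset\widetilde V(\gamma)^c$, and charge this against the pairs already counted in $F_{\widetilde V(\gamma)}$, the sparseness of $Y$ (quantified by the scale-by-scale packing) producing a factor $M^{-\delta}$ with $\delta=\delta(\alpha,d)>0$; the type-(b) and type-(c) contributions are handled by the bounds above, again with a gain $M^{-\delta}$. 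Hence $c_3=c_3(\alpha,d,J,M)\to0$ as $M\to\infty$. The delicate point throughout is that $F_{\widetilde V(\gamma)}$ can be far smaller than $|\widetilde V(\gamma)|$ (for $d<\alpha<d+1$ it scales like a surface term, $R^{2d-\alpha}$ for a region of radius $R$, not like $R^d$), so one may not bound $\sum_{\gamma'}F_{\gamma,\gamma'}$ by a small multiple of $|\widetilde V(\gamma)|$ — the estimate has to be weighed directly against the pairs defining $F_{\widetilde V(\gamma)}$, and it is there that the disjointness of the $\widetilde V(\gamma')$ and the packing in size/distance scales are essential.
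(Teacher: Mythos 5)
Your plan for the case that actually matters---$\widetilde{V}(\gamma)$ disjoint from $\widetilde{V}(\Gamma)$---follows the paper's route in outline (split $\Gamma$ into contours larger and smaller than $\gamma$, use the separation of condition \textbf{(B)}, kill the far/large part by the tail of $J$, and compare the small part to $F_{\widetilde{V}(\gamma)}$ scale by scale), but the decisive estimate is missing. The only quantitative tool you actually justify is the packing count (at most $C\bigl(R/(M2^{ka/(d+1)})\bigr)^{d}$ class-$k$ contours in a ball of radius $R$), and that count, summed over sizes, distance shells and $x\in\widetilde{V}(\gamma)$, yields a bound of order $J M^{-\delta}|\widetilde{V}(\gamma)|$---which, as you yourself note in your last sentence, cannot be traded for $c_3F_{\widetilde{V}(\gamma)}$. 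The assertion that ``charging against the pairs of $F_{\widetilde{V}(\gamma)}$, the sparseness of $Y$ produces a factor $M^{-\delta}$'' is precisely the content of the lemma and is never argued. What closes the gap in the paper is a spreading (not a counting) argument, Inequality \eqref{aux_2_3}: for a contour $\gamma'$ with $|V(\gamma')|=m<|V(\gamma)|$ and any fixed $x\in\widetilde{V}(\gamma)$, the clearance set $B_{\gamma'}=\{y:\dis(y,\gamma')\le Mm^{\frac{a}{d+1}}/3\}$ lies in $\widetilde{V}(\gamma)^c$, contains at least $Mm^{\frac{a}{d+1}}/3$ points $y$ with $J_{xy}\ge J\,\dis(x,\gamma')^{-\alpha}$ (take them on a geodesic from $\gamma'$ towards $x$), and the sets of a fixed class are pairwise disjoint; hence
\[
Jm\sum_{\gamma'\,:\,|V(\gamma')|=m}\dis(x,\gamma')^{-\alpha}\;\le\;\frac{3m}{Mm^{\frac{a}{d+1}}}\sum_{y\notin\widetilde{V}(\gamma)}J_{xy},
\]
and $\sum_m m^{1-\frac{a}{d+1}}\le\zeta(2)$ by the choice of $a$. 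This produces the $x$-dependent quantity $\sum_{y\notin\widetilde{V}(\gamma)}J_{xy}$ on the right, which is what a packing count alone can never do (it is blind to the position of $x$ relative to $\partial\widetilde{V}(\gamma)$). Also, the entropy bound of Proposition \ref{Bound_on_C_0_n} is irrelevant here: one sums over the contours present in the given $\Gamma$, not over all contours.

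Your types (b) and (c) are treated incorrectly, and in fact cannot be salvaged, because in those nested configurations the inequality itself fails: if $\widetilde{V}(\gamma)\subset\I_-(\gamma')$, then $F_{\gamma,\gamma'}$ contains every pair $x\neq y$ inside $\widetilde{V}(\gamma)$ and is at least of order $J|\widetilde{V}(\gamma)|$, while $F_{\widetilde{V}(\gamma)}$ grows strictly slower than the volume (of order $R^{d-1}$ or $R^{2d-\alpha}$ for a region of radius $R$), so no constant---let alone one vanishing as $M\to\infty$---can work; similarly, in your type (b) a polymer of about $|\I_-(\gamma)|/(3M)^{d}$ unit-size contours spread through $\I_-(\gamma)$ already gives $\sum_{\gamma'}F_{\gamma,\gamma'}\gtrsim J|\widetilde{V}(\gamma)|/M^{d}\gg c_3F_{\widetilde{V}(\gamma)}$ for large $\gamma$. (Note also that your intermediate bound of the $x\in\I_-(\gamma)$ part by $F_{\widetilde{V}(\gamma')}$ silently drops the pairs with $x\in\widetilde{V}(\gamma')$, which are present since $\widetilde{V}(\gamma')\subset\I_-(\gamma)$.) The paper's proof, and every application of the lemma (e.g.\ Proposition \ref{important_prop}, where $\gamma$ and $\Gamma\setminus\{\gamma\}$ are mutually external contours of a single polymer), concern only the situation $\widetilde{V}(\gamma)\cap\widetilde{V}(\Gamma)=\emptyset$; there, property \textbf{(A1)} and a path-crossing argument transfer the support separation of condition \textbf{(B)} to the modified volumes, which is what legitimizes both the far-tail bound for the larger contours and the inclusion $B_{\gamma'}\subset\widetilde{V}(\gamma)^c$ above. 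The productive fix is to restrict to that case, not to attempt the nested ones.
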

\begin{proof}
      Given a contour $\gamma$ and a polymer $\Gamma\sim\gamma$, define the sets $\Upsilon_1 = \{\gamma'\in \Gamma: |V(\gamma')| \geq |V(\gamma)|\}$ and $\Upsilon_2=\{\gamma'\in \Gamma: |V(\gamma')| < |V(\gamma)|\}$. Hence, 
\begin{equation}\label{c_3_eq1}
    \sum_{\substack{\ctr' \in \Gamma}} F_{\ctr,\ctr'} = \sum_{\substack{x\in \widetilde{V}(\gamma) \\ y\in \widetilde{V}(\Upsilon_1)}} J_{xy} +  \sum_{\substack{x\in \widetilde{V}(\gamma) \\ y\in \widetilde{V}(\Upsilon_2)}} J_{xy}.
\end{equation}
For any  $\gamma' \in \Upsilon_1$, it holds that $\dis(\gamma,\gamma')> M |\widetilde{V}(\gamma)|^{\frac{a}{d+1}}$ by condition \textbf{(B)}, so we get 
\begin{equation}\label{Eq: Lemma_aux_2_1}
    \sum_{\substack{x\in \widetilde{V}(\gamma) \\ y\in \widetilde{V}(\Upsilon_1)}} J_{xy}\leq \sum_{\substack{x\in \widetilde{V}(\gamma) \\ |y-x| > R}} J_{xy} = |\widetilde{V}(\gamma)|\sum_{ |y|\geq R}J_{0y},
\end{equation}
with $R\coloneqq \lceil M |\widetilde{V}(\gamma)|^{\frac{a}{d+1}}\rceil$. Defining $s_d(n) \coloneqq |\{x\in \Z^d : |x|=n \}|$, it is known that $s_d(n)\leq 2^{2d - 1}e^{d-1}n^{d-1}$. Using an upper bound by an integral together with \eqref{Eq: Lemma_aux_2_1}, we can show that
\begin{align*}\label{Eq: Lemma_aux_2.Part1}
    \sum_{\substack{x\in \widetilde{V}(\gamma)\\ y \in \widetilde{V}(\Upsilon_1)}}J_{xy} \leq \frac{J2^{d-1 + \alpha }e^{d-1}}{(\alpha - d)M^{\alpha - d}}|\widetilde{V}(\gamma)|^{1+\frac{a}{d+1}(d-\alpha)}.
\end{align*}
To bound the remaining term in \eqref{c_3_eq1}, split $\Upsilon_2$ into layers $\Upsilon_{2,m} \coloneqq \{ \gamma' \in \Upsilon_2 : |V(\gamma')|=m\}$, for $1\leq m\leq |V(\gamma)|-1$. Given some $x\in \widetilde{V}(\gamma)$, we can bound 
\begin{equation}\label{aux_2_2}
\begin{split}
   \sum_{\substack{x\in \widetilde{V}(\gamma)\\ y \in \widetilde{V}(\Upsilon_{2,m})}}J_{xy}  \leq J m\sum_{\substack{x \in \widetilde{V}(\gamma) \\ \gamma' \in \Upsilon_{2,m}}} \frac{1}{\dis(x,\gamma')^\alpha}.
\end{split}
\end{equation}
  Define, for each $\gamma'\in \Upsilon_{2,m}$, the set $B_{\gamma'} = \{y\in\mathbb{Z}^d:\dis(y,\gamma')\leq M m^{\frac{a}{d+1}}/3\}$. Any pair 
 of distinct elements $\gamma,\gamma'\in \Upsilon_{2,m}$ satisfies $\dis(\gamma,\gamma')> M m^{\frac{a}{d+1}}$, implying that the sets $B_\gamma$ and $B_{\gamma'}$ are disjoint. Moreover, for each site $x \in \widetilde{V}(\gamma)$ it holds
\begin{equation}\label{aux_2_3}
\begin{split}
J\sum_{\gamma'\in \Upsilon_{2,m}}\frac{1}{\dis(x,\gamma')^{\alpha}}\leq \frac{3}{Mm^{\frac{a}{d+1}}}\sum_{\substack{y\in B_{\gamma'} \\\gamma'\in \Upsilon_{2,m}}}J_{xy}\leq \frac{3}{Mm^{\frac{a}{d+1}}}\sum_{y\in \widetilde{V}(\gamma)^c}J_{xy}. 
\end{split}
\end{equation}
Joining Inequalities  \eqref{aux_2_2} and \eqref{aux_2_3}, and summing over $m$ we get
\begin{equation*}
    \sum_{\substack{x\in \widetilde{V}(\gamma) \\ y \in \widetilde{V}(\Upsilon_2)}} J_{xy} \leq \frac{3\zeta(2)}{M}F_{\widetilde{V}(\gamma)}.
\end{equation*} 
By our choice of $a$, our statement follows by choosing $c_3$ to be
\[
c_3 = \frac{b^*}{M^{(\alpha-d)\wedge1}}\quad \text{where} \quad b^* = \max\Big\{\frac{2^{d+2 + \alpha }e^{d-1}}{(\alpha - d)}, 24\zeta(2)\Big\}.
\]
\end{proof}

In order to better analyze the activity $z^+_\beta(\Gamma)$, it will be convenient to bound it by a much simpler expression, given by:
\begin{equation}\label{tilde_activity}
    \widetilde{z}^+_{\beta}(\polymer) \coloneqq 
        \sum_{T \in \mathcal{T}_{\Gamma}}\prod_{\ctr\in\polymer} e^{-\beta \frac{c_2}{2}\|\ctr\|}\prod_{\{\ctr,\ctr'\} \in E(T)}F_{\ctr,\ctr'}.
\end{equation}
Recall that $\|\ctr\| \coloneqq |\ctr| + F_{\I_-(\ctr)} + F_{\Sp(\ctr)}$ and $c_2$ is the constant in Proposition \ref{Prop: Cost_erasing_contour}. Notice that when $\Gamma$ has only one element, then necessarily we have $E(T) = \emptyset$ above. In this case, we adopt the convention that the product is equal to $1$.

\begin{proposition}\label{important_prop}
    
For $M, \beta$ large enough, it holds, for every $\Gamma$, that

\begin{equation}\label{def_ztilde}
    z^+_\beta(\polymer) \leq \widetilde{z}^+_{\beta}(\polymer).
\end{equation}
\end{proposition}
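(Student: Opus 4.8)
The plan is to bound the activity $z^+_\beta(\polymer) = K(\polymer)\prod_{\ctr \in \polymer} W(\ctr)$ by controlling the two factors separately and then recombining them into the tree-sum form of $\widetilde{z}^+_{\beta}(\polymer)$. First I would handle the weight $W(\ctr) = \Ztil^+(\ctr)/\Ztil^+(\check{\ctr})$. By Proposition~\ref{Prop: Cost_erasing_contour}, erasing a contour costs at least $c_2\|\ctr\|$ in energy, i.e. $H^+(\gamma \cup \Gamma) - H^+(\tau_\gamma(\gamma\cup\Gamma)) \geq c_2\|\gamma\|$ for every internal family $\Gamma \in \is(\gamma)$. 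Summing the Boltzmann weights over $\Gamma \in \is(\gamma)$ in numerator and denominator, this pointwise bound gives $W(\ctr) \leq e^{-\beta c_2 \|\ctr\|}$. (One must check the map $\Gamma \mapsto \tau_\gamma(\gamma\cup\Gamma)$ is a bijection onto the configurations appearing in $\Ztil^+(\check\gamma)$, which is exactly the content of the change of variables already used in the proof of Proposition~\ref{gas_polymer_partition}.) I would keep only half of this exponential gain, $e^{-\beta \frac{c_2}{2}\|\ctr\|}$, saving the other half $e^{-\beta\frac{c_2}{2}\|\ctr\|}$ to absorb the interaction terms $F_{\ctr,\ctr'}$ arising from $K$.

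**Controlling the combinatorial factor $K(\polymer)$.**

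Next I would estimate $K(Y)$ from \eqref{functionA}. Using $|\varphi_{\gamma,\gamma'}| = |e^{-\beta\Phi_2(\cdot,\cdot)}-1| \leq e^{-\beta\Phi_2(\cdot,\cdot)} - 1$ (since $\Phi_2 \leq 0$) together with the bound $-\Phi_2 \leq 4F_{\ctr,\ctr'}$ from \eqref{bound_phi_2_2}, and the elementary inequality $e^{t}-1 \leq t e^{t}$, one gets $|\varphi_{\gamma,\gamma'}| \leq 4\beta F_{\ctr,\ctr'}\, e^{4\beta F_{\ctr,\ctr'}}$. The $e^{-\beta\Phi_1}$ factors in numerator and denominator of \eqref{functionA} must be handled carefully: the sum over $G \in \mathcal{G}_Y$ is dominated, via the standard connected-graph $\leq$ tree-graph argument (Rota / tree-graph inequality), by a sum over spanning trees $T \in \mathcal{T}_Y$ of $\prod_{\{\ctr,\ctr'\}\in E(T)} |\varphi_{\gamma,\gamma'}|$; this step also cancels the $\Phi_1$-ratio because for each fixed tree the remaining graph edges only contribute nonnegative factors and the $\Phi_1$-weights factorize identically above and below. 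The residual exponentials $e^{4\beta F_{\ctr,\ctr'}}$ along the tree edges are then absorbed into the saved half $e^{-\beta\frac{c_2}{2}\|\ctr\|}$ using Lemma~\ref{lemma_K}: summing $4F_{\ctr,\ctr'}$ over all neighbors $\gamma'$ of $\gamma$ in the tree is bounded by $c_3 F_{\widetilde V(\gamma)} \leq c_3\|\gamma\|$, and for $M$ large enough $c_3$ is as small as we like, so $8\beta c_3 \|\gamma\| \leq \beta\frac{c_2}{2}\|\gamma\|$ leaving the clean $F_{\ctr,\ctr'}$ weights along tree edges.

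**Assembling and identifying the main obstacle.**

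Putting the pieces together: $z^+_\beta(\polymer) \leq \sum_{T\in\mathcal T_\Gamma} \prod_{\ctr\in\polymer} e^{-\beta\frac{c_2}{2}\|\ctr\|}\prod_{\{\ctr,\ctr'\}\in E(T)} F_{\ctr,\ctr'} = \widetilde z^+_\beta(\polymer)$, which is exactly \eqref{tilde_activity}. When $|\polymer|=1$ there are no tree edges and the bound reduces to $W(\gamma) \leq e^{-\beta\frac{c_2}{2}\|\gamma\|}$, consistent with the stated convention. The main obstacle I anticipate is the bookkeeping in the tree-graph step: one needs to verify that the ratio in \eqref{functionA} — where the numerator carries both $e^{-\beta\Phi_1}$ factors over internal families \emph{and} the $\varphi_G$, while the denominator carries only the $e^{-\beta\Phi_1}$ factors — genuinely collapses so that the internal-family sums cancel and only the tree of $|\varphi|$-factors survives, without the $\Phi_1$-terms reintroducing dependence on the internal contours. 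This requires being careful that $\Phi_1(\gamma\cup\Gamma_\gamma)$ depends only on $\Gamma_\gamma$ (true by construction) so the sums over $(\Gamma_\gamma)_{\gamma\in Y}$ in numerator and denominator factor over $\gamma \in Y$ identically; the tree-graph inequality is then applied edge-by-edge to the $\varphi$'s with the $\Phi_1$-weights playing the role of a fixed positive measure, and the normalization cancels exactly. The other delicate point is matching the exponential-absorption constants, i.e. ensuring the threshold on $M$ from Lemma~\ref{lemma_K} (making $c_3$ small) and the threshold on $\beta$ are compatible with the ``$M,\beta$ large enough'' hypothesis, which is routine once the inequality $8\beta c_3 \leq \beta c_2/2$ is arranged.
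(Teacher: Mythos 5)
Most of your architecture matches the paper's proof: the bound $W(\gamma)\leq e^{-\beta c_2\|\gamma\|}$ from Proposition \ref{Prop: Cost_erasing_contour}, the observation that the $\Phi_1$-weighted ratio in \eqref{functionA} is a weighted average so the uniform bound $0\leq \varphi_{\gamma,\gamma'}\leq e^{4\beta F_{\gamma,\gamma'}}-1$ from \eqref{bound_phi_2_2} passes through, the inequality $e^x-1\leq xe^x$, the absorption via Lemma \ref{lemma_K}, and the final tuning of $M$ and $\beta$. The gap is in the reduction from connected graphs to trees. You invoke ``the standard connected-graph $\leq$ tree-graph argument (Rota / tree-graph inequality)'' to claim $\sum_{G\in\mathcal{G}_Y}\prod_{E(G)}|\varphi_{\gamma,\gamma'}|\leq \sum_{T\in\mathcal{T}_Y}\prod_{E(T)}|\varphi_{\gamma,\gamma'}|$. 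That inequality is a statement about the \emph{signed} Ursell function $\phi^T$, where the alternating signs produce the cancellation; here, precisely because the interaction is attractive ($\Phi_2\leq 0$, a point you yourself note), all $\varphi_{\gamma,\gamma'}\geq 0$ and there is nothing to cancel. The claim is simply false as stated: for three contours with $\varphi_{\gamma,\gamma'}\equiv c>0$ the left side is $3c^2+c^3$ while the right side is $3c^2$. Your remark that ``the remaining graph edges only contribute nonnegative factors'' points in the wrong direction for an upper bound: nonnegative extra factors are exactly what prevents you from discarding the graphs with cycles.

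The fix is what the paper does: extract a spanning tree from each connected $G$, keep $e^{4\beta F_{\gamma,\gamma'}}-1$ only on tree edges, and bound the contribution of all possible additional edges by $\prod_{\{\gamma,\gamma'\}\in E(K_\Gamma)\setminus E(T)}e^{4\beta F_{\gamma,\gamma'}}$, where $K_\Gamma$ is the complete graph on $\Gamma$ (using $1+\varphi_{\gamma,\gamma'}=e^{-\beta\Phi_2}\leq e^{4\beta F_{\gamma,\gamma'}}$). Consequently, Lemma \ref{lemma_K} must be used to absorb exponentials over \emph{all} pairs in the polymer, not only over the tree neighbours of each $\gamma$ as in your write-up. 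This is harmless for your budget, since the lemma already bounds $4\sum_{\gamma'\in\Gamma}F_{\gamma,\gamma'}\leq c_3F_{\widetilde{V}(\gamma)}$ with the sum running over every $\gamma'$ compatible with $\gamma$, so the same condition $c_3\lesssim c_2$ (for $M$ large) and the choice of $\beta$ absorbing the $4\beta$ per tree edge close the argument exactly as in the paper. But as written, your combinatorial step omits the cycle-carrying graphs, and this omission is the central difficulty that the modified volume $\widetilde{V}$ and Lemma \ref{lemma_K} were designed to handle.
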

\begin{proof}
Proposition \ref{Prop: Cost_erasing_contour} implies that $W(\ctr) \leq e^{-\beta c_2\|\gamma\|}$, and Inequality \eqref{bound_phi_2_2} yields 
\begin{equation*}
   K(\polymer) 
    \leq  \sum_{G \in \mathcal{G}_{\Gamma}}\prod_{\{\gamma,\gamma'\} \in E(G)} \left( e^{4\beta F_{\ctr,\ctr'}} - 1 \right). 
\end{equation*}
Since each $G$ is a connected graph, it has at least one spanning tree $T$. Then one can write
   \begin{equation*}
   \begin{split}
   \sum_{G \in \mathcal{G}_{\Gamma}}\prod_{\{\gamma,\gamma'\}\subset E(G)}\left( e^{4\beta F_{\ctr,\ctr'}} - 1 \right) &\leq \sum_{T \in \mathcal{T}_{\Gamma}}\prod_{\{\gamma,\gamma'\}\in E(T)}\left( e^{4\beta F_{\ctr,\ctr'}} - 1 \right)\prod_{\{\gamma,\gamma'\}\in E(K_\Gamma)\setminus E(T)} e^{4\beta F_{\ctr,\ctr'}},
   \end{split}
   \end{equation*}
   where $K_\Gamma$ is the complete graph with vertex set $\Gamma$. Using Lemma \ref{lemma_K} together with the fact $e^x - 1\leq x e^x$ yields
\begin{equation*}
    |z^+_\beta(\polymer)| \leq\prod_{\gamma \in \polymer} e^{-\beta(c_2|\ctr| + (c_2 - 2c_3)(F_{\I_-(\ctr)}+F_{\Sp(\ctr)}))}\sum_{T \in \mathcal{T}_{\Gamma}}\prod_{\{\gamma,\gamma'\}\in E(T)} 4\beta F_{\ctr,\ctr'}.
\end{equation*}
Taking $M$ large such that $4c_3\leq c_2$ and $\beta > 32/c_2^2$ gives us the desired result. 
\end{proof}

%
For what follows we will introduce the following sets
\[
\mathcal{E}_{n,\gamma_0}^+ \coloneqq \{\Gamma\in \mathcal{E}^+: \Gamma=\{\gamma_0,\gamma_1,\dots,\gamma_n\}\}.
\]
 In particular, when $n=1$, we have $\Gamma = \{\gamma,\gamma_0\}$ for some $\gamma\neq \gamma_0$ external and compatible with $\gamma_0$.

\begin{lemma}\label{F_Vol}

There exists a constant $c_\beta(\alpha,d,J,M) \coloneqq c_\beta>0$ such that, for all sufficiently large $\beta$, one has
    \begin{equation*}\label{eq_1_F_Vol}
    \sum_{\ctr \in \mathcal{E}_{1,\gamma_0}^+} e^{-\beta \frac{c_2}{2} \|\ctr\|}F_{\ctr,\ctr_0} \leq c_\beta F_{\Sp(\gamma_0)},         
    \end{equation*}
for every fixed contour $\ctr_0$. Moreover, $\underset{\beta \rightarrow \infty}{\lim}c_\beta = 0$.
\end{lemma}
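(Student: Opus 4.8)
The plan is to exploit that the elements of $\mathcal{E}^+_{1,\gamma_0}$ are exactly the pairs $\{\gamma,\gamma_0\}$ of mutually external and compatible contours, so every admissible $\gamma$ satisfies $\widetilde{V}(\gamma)\cap\widetilde{V}(\gamma_0)=\emptyset$ together with the $(M,a)$-separation $\dis(\Sp(\gamma),\Sp(\gamma_0))>M\min\{|V(\gamma)|,|V(\gamma_0)|\}^{a/(d+1)}=:R(\gamma)_\ast$ (a lower bound for $R(\gamma):=\dis(\Sp(\gamma),\Sp(\gamma_0))$). I would first record a few elementary facts to be used repeatedly: since $\Sp(\gamma)$ encloses $\I_-(\gamma)$ one has $\dis(\widetilde{V}(\gamma),\Sp(\gamma_0))=R(\gamma)$; for $x\in\widetilde{V}(\gamma)$ and $y\in\I_-(\gamma_0)$ every geodesic from $x$ to $y$ crosses $\Sp(\gamma_0)$, so $|x-y|\ge R(\gamma)+\dis(y,\Sp(\gamma_0))$ and in particular $|x-y|\ge R(\gamma)$; the Peierls/isoperimetric bound $|\gamma|\ge c_d\,|\widetilde{V}(\gamma)|^{(d-1)/d}$, so that the factor $e^{-\beta\frac{c_2}{2}|\gamma|}$ hidden in $e^{-\beta\frac{c_2}{2}\|\gamma\|}$ already dominates any polynomial in $|\widetilde{V}(\gamma)|$; and the surface lower bound $F_{\Sp(\gamma_0)}\ge J\,|\partial_{\mathrm{in}}\widetilde{V}(\gamma_0)|$ obtained by keeping only nearest-neighbour pairs.

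I would then split $\widetilde{V}(\gamma_0)=\Sp(\gamma_0)\sqcup\I_-(\gamma_0)$, writing $F_{\gamma,\gamma_0}=G(\gamma)+H(\gamma)$ with $G(\gamma)=\sum_{x\in\widetilde{V}(\gamma)}\sum_{w\in\Sp(\gamma_0)}J_{xw}$ and $H(\gamma)=\sum_{x\in\widetilde{V}(\gamma)}\sum_{y\in\I_-(\gamma_0)}J_{xy}$. The $G$-term is handled by exchanging the order of summation:
\[
\sum_{\{\gamma,\gamma_0\}\in\mathcal{E}^+}e^{-\beta\frac{c_2}{2}\|\gamma\|}G(\gamma)=\sum_{w\in\Sp(\gamma_0)}\ \sum_{x\notin\Sp(\gamma_0)}J_{xw}\sum_{\gamma:\,x\in\widetilde{V}(\gamma)}e^{-\beta\frac{c_2}{2}\|\gamma\|},
\]
and bounding the inner contour sum by $\sum_{n\ge1}|\mathcal{C}_x(n)|e^{-\beta\frac{c_2}{2}n}\le\sum_{n\ge1}e^{(c_1-\beta c_2/2)n}=:\epsilon_\beta$ via Proposition \ref{Bound_on_C_0_n}; since $\sum_{w\in\Sp(\gamma_0)}\sum_{x\notin\Sp(\gamma_0)}J_{xw}=F_{\Sp(\gamma_0)}$, this contribution is at most $\epsilon_\beta F_{\Sp(\gamma_0)}$, and $\epsilon_\beta\to0$ as $\beta\to\infty$ once $\beta>2c_1/c_2$.

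The $H$-term is where the work is. Using $|x-y|\ge R(\gamma)$ for all $y\in\I_-(\gamma_0)$ gives $\sum_{y\in\I_-(\gamma_0)}J_{xy}\le\sum_{|z|\ge R(\gamma)}J_{0z}\le C_d\,J\,R(\gamma)^{d-\alpha}$, hence $H(\gamma)\le C_d J\,|\widetilde{V}(\gamma)|\,R(\gamma)^{d-\alpha}$; alternatively, layering $\I_-(\gamma_0)$ by the depth $\dis(y,\Sp(\gamma_0))$ and using $|x-y|\ge R(\gamma)+\dis(y,\Sp(\gamma_0))$ together with $F_{\Sp(\gamma_0)}\ge J|\partial_{\mathrm{in}}\widetilde V(\gamma_0)|$ produces a refined bound carrying a factor $F_{\Sp(\gamma_0)}$. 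One then has to show $\sum_{\{\gamma,\gamma_0\}\in\mathcal{E}^+}e^{-\beta\frac{c_2}{2}\|\gamma\|}|\widetilde{V}(\gamma)|\,R(\gamma)^{d-\alpha}$ is controlled by a constant (going to $0$ with $\beta$) times $F_{\Sp(\gamma_0)}$; this is exactly where the choice $a=\tfrac{3(d+1)}{(\alpha-d)\wedge1}$ is used: in the regime $|V(\gamma)|\le|V(\gamma_0)|$, $R(\gamma)>M|V(\gamma)|^{a/(d+1)}$ forces $|\widetilde{V}(\gamma)|\,R(\gamma)^{(d-\alpha)/2}\le M^{(d-\alpha)/2}|V(\gamma)|^{1+\frac{a(d-\alpha)}{2(d+1)}}\le M^{(d-\alpha)/2}$ since the exponent is $\le-\tfrac12$, leaving a spare factor $R(\gamma)^{(d-\alpha)/2}$ that makes the sum over contour positions (anchored at the point of $\Sp(\gamma)$ realising $R(\gamma)$ and counted via Proposition \ref{Bound_on_C_0_n} and the surface of $\Sp(\gamma_0)$) convergent; the complementary regime $|V(\gamma)|>|V(\gamma_0)|$ is cruder, those contours being simultaneously large and far. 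Taking $M$ large, then $\beta$ large, and adding the two contributions gives the statement with $c_\beta\to0$. The only genuinely delicate step is this last sum for the $\I_-(\gamma_0)$--interaction: one must make sure the estimate reproduces a surface energy $F_{\Sp(\gamma_0)}$, not a volume or a factor $F_{\I_-(\gamma_0)}$, and keep $c_\beta$ independent of $\gamma_0$, which is precisely what the large separation exponent $a/(d+1)\ge3$ is tailored to allow.
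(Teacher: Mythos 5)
Your treatment of the $G$-term (interaction of $\widetilde{V}(\gamma)$ with $\Sp(\gamma_0)$) is correct: exchanging the sums and paying $\epsilon_\beta=\sum_{n\ge1}e^{(c_1-\beta c_2/2)n}$ per site of $\widetilde{V}(\gamma)$ via Proposition \ref{Bound_on_C_0_n} does give $\epsilon_\beta F_{\Sp(\gamma_0)}$. The genuine gap is in the $H$-term, precisely the step you flag as delicate, and the mechanism you propose does not close it. After you use the separation condition to write $|\widetilde{V}(\gamma)|\,R(\gamma)^{d-\alpha}\le M^{(d-\alpha)/2}R(\gamma)^{(d-\alpha)/2}$, each contour retains only the weight $R(\gamma)^{(d-\alpha)/2}$, and this single factor must simultaneously pay for the entropy of the anchor position and reproduce the surface energy. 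It cannot do either: enumerating the admissible $\gamma$ by an anchor $y\in\Sp(\gamma)$ at distance $m=R(\gamma)$ from some $x\in\Sp(\gamma_0)$ costs of order $m^{d-1}$ choices of $y$ per site $x$ (plus $e^{c_1 n}$ for the contours through $y$), and $\sum_{m\ge1}m^{d-1}\,m^{(d-\alpha)/2}$ diverges whenever $\alpha\le 3d$, i.e.\ on most of the range $\alpha>d$ the lemma must cover. Even where it converges, the outcome is proportional to the cardinality $|\Sp(\gamma_0)|$, not to $F_{\Sp(\gamma_0)}$, and the inequality $F_{\Sp(\gamma_0)}\ge J|\partial_{\mathrm{in}}\widetilde{V}(\gamma_0)|$ you invoke cannot convert one into the other: the support of a contour may contain thick blocks of incorrect points whose bulk sites contribute arbitrarily little to $F_{\Sp(\gamma_0)}$, so there is no bound of the form $|\Sp(\gamma_0)|\le C\,F_{\Sp(\gamma_0)}/J$. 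Moreover, in the complementary regime $|V(\gamma)|>|V(\gamma_0)|$ the separation condition only involves $|V(\gamma_0)|$ and gives no control of $|\widetilde{V}(\gamma)|$ at all, so ``cruder'' is not yet an argument there.

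The paper's proof uses a different accounting in which neither $a$ nor $M$ plays any role. It never spends the interaction decay on the volume of $\gamma$: it bounds $F_{\gamma,\gamma_0}\le\sum_{x\in\Sp(\gamma_0)}J|\widetilde{V}(\gamma)|\dis(x,\gamma)^{-\alpha}$, kills the volume with the isoperimetric bound $|\widetilde{V}(\gamma)|\le n^{d/(d-1)}$ absorbed into the Peierls factor $e^{-\beta\frac{c_2}{2}n}$ together with Proposition \ref{Bound_on_C_0_n}, and keeps the full weight $J/m^{\alpha}$ attached to each anchored pair $(x,y)$ with $x\in\Sp(\gamma_0)$, $y\in\widetilde{V}(\gamma_0)^c$, $|x-y|=m$; the surface energy then arises from the identification $\sum_{x\in\Sp(\gamma_0)}\sum_{y\in\widetilde{V}(\gamma_0)^c}J_{xy}\le F_{\Sp(\gamma_0)}$. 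If you wish to keep your $G/H$ splitting, the $H$-term must be handled in the same spirit (retaining a genuine $J_{xy}$ per anchored pair, e.g.\ by routing the interaction with $\I_-(\gamma_0)$ through $\Sp(\gamma_0)$ via your layering remark), rather than trading the decay against $|\widetilde{V}(\gamma)|$ through the exponent $a$; that exponent is what Lemma \ref{lemma_K} needs, not Lemma \ref{F_Vol}.
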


\begin{proof}
Given some $\gamma_0 \in \mathcal{E}_\Lambda^+$, it holds that $$F_{\gamma,\gamma_0}\leq \sum_{x\in \Sp(\gamma_0)}J\frac{|\widetilde{V}(\gamma)|}{\dis(x,\gamma)^\alpha}, $$
therefore, 
\[
\sum_{\substack{\ctr\in \mathcal{E}_{1,\gamma_0}^+}} e^{-\beta \frac{c_2}{2} \|\ctr\|}F_{\ctr,\ctr_0}\leq \sum_{x\in \Sp(\gamma_0)}\sum_{\gamma \in \mathcal{E}^+_{1,\gamma_0}}e^{-\beta \frac{c_2}{2}\|\gamma\|}\frac{J|\widetilde{V}(\gamma)|}{\dis(x,\gamma)^\alpha}.
\]
Using that $\|\gamma\|\geq |\gamma|$, it holds
\begin{equation}\label{F_Vol_eq_1}
\begin{split}
      \sum_{x\in \Sp(\gamma_0)}\sum_{\gamma \in \mathcal{E}^+_{1,\gamma_0}}e^{-\beta \frac{c_2}{2}\|\gamma\|}\frac{J|\widetilde{V}(\gamma)|}{\dis(x,\gamma)^\alpha} &\leq \sum_{n \geq 1}\sum_{x \in \Sp(\ctr_0)}\sum_{\substack{\ctr \in \mathcal{E}_{1,\gamma_0}^+  \\ |\ctr| = n}} e^{-\beta \frac{c_2}{2} \|\ctr\|} \frac{J|\widetilde{V}(\ctr)|}{\dis(x,\ctr)^\alpha}\\
     &= \sum_{n \geq 1}\sum_{x \in \Sp(\ctr_0)} \sum_{m \geq 1} e^{-\beta \frac{c_2}{2}n} \frac{J}{m^\alpha} \sum_{\substack{\ctr \in \mathcal{E}_{1,\gamma_0}^+ \\ |\ctr| = n  \\ d(x, \ctr) = m}}  |\widetilde{V}(\ctr)| .
\end{split}
\end{equation}
For every site $x \in \Sp(\gamma_0)$ and $m \geq 1$, we have
\begin{equation*}
   \{\ctr:|\gamma|=n, \dis(x, \ctr) = m \text{ and } \ctr \in \mathcal{E}_{1,\gamma_0}^+\}\subset \bigcup_{\substack{y \in  \Sp(\gamma_0)^c \\ |x-y|=m }}\mathcal{C}_y(n),
\end{equation*}
where $\mathcal{C}_y(n)$ is the same as in Proposition \ref{Bound_on_C_0_n}. Applying such proposition together with the fact that $|\widetilde{V}(\ctr)| \leq n^{\frac{d}{d-1}}$, which follows from the isoperimetric inequality, we have that 
\begin{equation}\label{F_Vol_eq_3}
\begin{split}
     \sum_{x \in \Sp(\ctr_0)} \sum_{m \geq 1} e^{-\beta \frac{c_2}{2}n} \frac{J}{m^\alpha} \sum_{\substack{\ctr\in \mathcal{E}_{1,\gamma_0}^+ \\ |\ctr| = n \\ d(x, \ctr) = m}}  |\widetilde{V}(\ctr)| &\leq e^{-\beta \frac{c_2}{2}n} n^{\frac{d}{d-1}} \sum_{x\in \Sp(\gamma_0)}\sum_{m\geq 1}\frac{J}{m^\alpha} \sum_{\substack{y \in \Sp(\ctr_0)^c \\ |x - y| = m}}|\mathcal{C}_y(n)| \\
     &=  e^{-\beta \frac{c_2}{2}n} n^{\frac{d}{d-1}} |\mathcal{C}_0(n))| \sum_{x\in \Sp(\gamma_0)}\sum_{m\geq 1}\frac{J}{m^\alpha} |\{y\in \widetilde{V}(\gamma_0)^c: |x-y|=m\}| \\
     &\leq e^{(c_1 -\beta \frac{c_2}{2} )n}n^{\frac{d}{d-1}}F_{\Sp(\gamma_0)}.
\end{split}
\end{equation}
Plugging Inequality \eqref{F_Vol_eq_3} into \eqref{F_Vol_eq_1}, we get
\[
\sum_{\ctr \in \mathcal{E}_{1,\gamma_0}^+} e^{-\beta \frac{c_2}{2} \|\ctr\|}F_{\ctr,\ctr_0} \leq \left(\sum_{n\geq 1}n^{\frac{d}{d-1}}e^{(c_1-\beta \frac{c_2}{2})n}\right)F_{\Sp(\gamma_0)}.
\]
We get the desired result by taking $\beta c_2\geq 4c_1+8$ and choosing $c_\beta = e^{-\beta \frac{c_2}{4}}(1-e^{-\beta \frac{c_2}{4}})^{-1}$.
\end{proof}

\begin{proposition}\label{prop_tree}
For $\beta$ as in the previous lemma, it holds that
\begin{equation}\label{eq1_tree}
    \sum_{\Gamma \in \mathcal{E}_{n,\gamma_0}^+}\widetilde{z}_\beta^+(\Gamma)\leq (6c_{\beta/2})^n e^{-\beta \frac{c_2}{4} \|\gamma_0\|},
\end{equation}
    for any contour $\gamma_0 \in \mathcal{E}^+$ and $n \geq 0$. 
\end{proposition}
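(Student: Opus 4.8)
The plan is to prove \eqref{eq1_tree} by a tree‑graph argument: one expands each spanning tree from a distinguished root and carries out the contour summations one leaf at a time with Lemma~\ref{F_Vol}. Set $S_n(\gamma_0):=\sum_{\Gamma\in\mathcal E_{n,\gamma_0}^+}\widetilde z_\beta^+(\Gamma)$; for $n=0$ there is nothing to do, since $\widetilde z_\beta^+(\{\gamma_0\})=e^{-\beta\frac{c_2}{2}\|\gamma_0\|}\le e^{-\beta\frac{c_2}{4}\|\gamma_0\|}$. For $n\ge 1$, unfold \eqref{tilde_activity}; a polymer $\Gamma=\{\gamma_0,\dots,\gamma_n\}$ together with a spanning tree $T\in\mathcal T_\Gamma$ arises exactly $n!$ times from an abstract rooted tree $\tau\in\mathcal T^0_{n+1}$ (root $0\leftrightarrow\gamma_0$) plus an ordering $\gamma_1,\dots,\gamma_n$ of the remaining contours, and keeping from all the polymer constraints only ``$\{\gamma_i,\gamma_{p(i)}\}\in\mathcal E^+$ for every edge $\{i,p(i)\}$ of $\tau$'' (with $p(i)$ the $\tau$‑parent of $i$) can only enlarge the sum, so
\[
S_n(\gamma_0)\ \le\ \frac{1}{n!}\sum_{\tau\in\mathcal T^0_{n+1}}\ \sum_{(\gamma_1,\dots,\gamma_n):\ \{\gamma_i,\gamma_{p(i)}\}\in\mathcal E^+\ \forall i}\ \Big(\prod_{i=0}^n e^{-\beta\frac{c_2}{2}\|\gamma_i\|}\Big)\prod_{i=1}^n F_{\gamma_i,\gamma_{p(i)}}.
\]

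Next I would fix $\tau$, root it at $0$, and evaluate the inner sum by induction over the subtrees, peeling leaves first. Writing $k_v$ for the number of $\tau$‑children of $v$ and $C(k):=(4k/(\beta c_2 e))^k$ (with $C(0):=1$), the inductive claim is that for any contour $\gamma$ and any non‑root vertex $i$ with subtree $\mathrm{Sub}(i)$,
\[
\sum_{(\gamma_v)_{v\in\mathrm{Sub}(i)}}\Big(\prod_{v\in\mathrm{Sub}(i)}e^{-\beta\frac{c_2}{2}\|\gamma_v\|}\Big)\Big(\prod_{\substack{\{v,w\}\subset\mathrm{Sub}(i)\\ \{v,w\}\in E(\tau)}}F_{\gamma_v,\gamma_w}\Big)F_{\gamma_i,\gamma}\ \le\ c_{\beta/2}^{\,|\mathrm{Sub}(i)|}\Big(\prod_{v\in\mathrm{Sub}(i)}C(k_v)\Big)\|\gamma\|,
\]
the sums being over contours subject to the edge‑constraints of $\tau$. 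For a leaf this is just $\sum_{\gamma_i}e^{-\beta\frac{c_2}{2}\|\gamma_i\|}F_{\gamma_i,\gamma}\le c_\beta F_{\Sp(\gamma)}\le c_\beta\|\gamma\|\le c_{\beta/2}\|\gamma\|$ (Lemma~\ref{F_Vol}, together with $F_{\Sp(\gamma)}\le\|\gamma\|$ and $c_\beta\le c_{\beta/2}$, the map $t\mapsto(e^{t}-1)^{-1}$ being decreasing); in the inductive step one first sums out the subtrees of the children of $i$, which by the hypothesis yields a constant (independent of $\gamma_i$) times $\|\gamma_i\|^{k_i}$, then bounds $\|\gamma_i\|^{k_i}e^{-\beta\frac{c_2}{2}\|\gamma_i\|}\le C(k_i)\,e^{-\beta\frac{c_2}{4}\|\gamma_i\|}$ via $\sup_{x\ge 0}x^k e^{-ax}=(k/ae)^k$ with $a=\beta c_2/4$, and finally applies Lemma~\ref{F_Vol} with $\beta$ replaced by $\beta/2$ to $\sum_{\gamma_i}e^{-\beta\frac{c_2}{4}\|\gamma_i\|}F_{\gamma_i,\gamma}\le c_{\beta/2}\|\gamma\|$. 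Using this at each child of the root, treating the residual root weight $\|\gamma_0\|^{k_0}e^{-\beta\frac{c_2}{2}\|\gamma_0\|}\le C(k_0)e^{-\beta\frac{c_2}{4}\|\gamma_0\|}$ the same way, and noting $\sum_v k_v=n$, the inner sum for fixed $\tau$ is at most $c_{\beta/2}^{\,n}\big(\prod_{v=0}^n C(k_v)\big)e^{-\beta\frac{c_2}{4}\|\gamma_0\|}$, hence
\[
S_n(\gamma_0)\ \le\ \frac{c_{\beta/2}^{\,n}}{n!}\,e^{-\beta\frac{c_2}{4}\|\gamma_0\|}\sum_{\tau\in\mathcal T^0_{n+1}}\prod_{v=0}^n C(k_v).
\]

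The final step bounds the tree sum. From $k!\ge(k/e)^k$ one gets $C(k)\le(4/(\beta c_2))^k k!$, hence, using $\sum_v k_v=n$, $\prod_v C(k_v)\le(4/(\beta c_2))^n\prod_v k_v!$. The weighted tree sum $\sum_{\tau\in\mathcal T^0_{n+1}}\prod_v k_v!$ is classical: by Lagrange inversion with per‑vertex weight $w_k=k!$ (whose EGF is $(1-y)^{-1}$) it equals $n!\,[x^{n+1}]R(x)$, where $R(x)=x/(1-R(x))$, i.e. $R(x)=\tfrac12(1-\sqrt{1-4x})$, so $[x^{n+1}]R(x)=C_n$ (the $n$‑th Catalan number) and $C_n\le 4^n$. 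Thus $\sum_\tau\prod_v C(k_v)\le n!\,(16/(\beta c_2))^n$, and
\[
S_n(\gamma_0)\ \le\ \Big(\tfrac{16}{\beta c_2}\,c_{\beta/2}\Big)^{n}e^{-\beta\frac{c_2}{4}\|\gamma_0\|}\ \le\ (6c_{\beta/2})^{n}e^{-\beta\frac{c_2}{4}\|\gamma_0\|},
\]
the last bound holding once $\beta c_2\ge 16/6$, which is harmless since we already take $\beta$ large enough for Lemma~\ref{F_Vol} to hold with $\beta/2$ in place of $\beta$; this is \eqref{eq1_tree}.

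I expect the fiddly part to be the bookkeeping in the peeling: one must spend exactly half of each non‑root contour's exponential weight on the edge‑summation through Lemma~\ref{F_Vol} (applied at $\beta/2$) and reserve the other half to absorb the polynomials $\|\gamma_i\|^{k_i}$ produced at a $\tau$‑vertex of out‑degree $k_i$, and one must check that relaxing the polymer constraint down to edgewise membership in $\mathcal E^+$ is legitimate so that Lemma~\ref{F_Vol} can be quoted verbatim at each peel. The generating‑function count of the trees is the only mildly non‑routine ingredient, and it is standard; the crude constants there (the $4^n$, and the mismatch between the $1/n!$ and the $(n+1)^{n-1}$ trees) do no harm because of the extra factor $c_{\beta/2}\to 0$.
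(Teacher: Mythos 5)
Your proof is correct and follows essentially the same route as the paper: symmetrize with the $1/n!$, pass to rooted trees at $\gamma_0$ keeping only the edgewise constraints, and peel the tree from the leaves inward with Lemma \ref{F_Vol} (applied, as in the paper, at $\beta/2$), reserving part of each exponential weight to absorb the degree polynomials $\|\gamma\|^{k}$. The only differences are bookkeeping: the paper absorbs $\|\gamma\|^{k}$ via $x^{k}\le k!\,e^{x}$ into the parent's weight and cancels the resulting factorials exactly with Cayley's formula (ending with $\binom{2n-1}{n}\le 6^{n}$), whereas you absorb them with $\sup_{x\ge 0}x^{k}e^{-\beta c_2 x/4}$ and count the weighted rooted trees by Lagrange inversion/Catalan numbers, using the spare $(\beta c_2)^{-n}$ to get back under $6^{n}$ — both yield the same bound $(6c_{\beta/2})^{n}e^{-\beta \frac{c_2}{4}\|\gamma_0\|}$.
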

\begin{proof}
 The case $n = 0$ is straightforward. For $n\geq 1$, fixed  a contour $\gamma_0$, the sum can be rewritten as
 \begin{equation}\label{eq2_tree}
 \sum_{\Gamma \in \mathcal{E}^+_{n,\gamma_0}} =\frac{1}{n!}\sum_{\substack{\gamma_k \\ 1\leq k \leq n}}\prod_{i<j}\mathbbm{1}_{\left\{\substack{\gamma_i\sim\gamma_j \\ \tilde{V}(\gamma_i)\cap \tilde{V}(\gamma_j)=\emptyset}\right\}} \quad \text{where} \quad \sum_{\substack{\gamma_k \\ 1\leq k\leq n}} = \sum_{\gamma_1}\ldots \sum_{\gamma_n}.
 \end{equation}
 Notice that, in the product above, $0 \leq i < j \leq n$. Then we can consider the sum in the right-hand side of \eqref{tilde_activity} as being with respect to all trees rooted at $0$ with vertex set $\{0, 1,\dots,n\}$. For each such a tree, it is true that
\begin{equation}\label{eq3_tree}
\prod_{i<j}\mathbbm{1}_{\left\{\substack{\gamma_i\sim\gamma_j \\ \tilde{V}(\gamma_i)\cap \tilde{V}(\gamma_j)=\emptyset}\right\}}\prod_{\{i,j\}\in E(T)}F_{\ctr_i,\ctr_j} \leq \prod_{\{i,j\}\in E(T)}\mathbbm{1}_{\left\{\substack{\gamma_i\sim\gamma_j \\ \tilde{V}(\gamma_i)\cap \tilde{V}(\gamma_j)=\emptyset}\right\}}F_{\ctr_i,\ctr_j},
\end{equation}
hence
\begin{equation}\label{eq7_tree}
    \sum_{\Gamma \in \mathcal{E}^+_{n,\gamma_0}} \sum_{T \in \mathcal{T}_{\Gamma}} \prod_{\ctr \in \polymer} e^{-\beta \frac{c_2}{2}\|\ctr\|} \prod_{\{\gamma,\gamma'\} \in E(T)} F_{\ctr,\ctr'}\leq \frac{e^{-\beta\frac{c_2}{2}\|\gamma_0\|}}{n!}\sum_{T \in \mathcal{T}^0_{n+1}}\sum_{\substack{\gamma_k\\ 1\leq k \leq n}} \prod_{k=1}^n e^{-\beta \frac{c_2}{2}\|\gamma_k\|}\prod_{\{i,j\}\in E(T)}\widetilde{F}_{\ctr_i,\ctr_j},
\end{equation} 
where $\widetilde{F}_{\ctr_i,\ctr_j} = \mathbbm{1}_{\left\{\substack{\gamma_i\sim\gamma_j}\right\}}\mathbbm{1}_{\{\tilde{V}(\gamma_i)\cap \tilde{V}(\gamma_j)=\emptyset\}}F_{\ctr_i,\ctr_j}$. 

We will proceed to sum, for each tree $T$, the left-hand side of \eqref{eq1_tree} erasing each generation of the tree using Lemma \ref{F_Vol}. For a given rooted tree $T\in \mathcal{T}^0_{n+1}$, let us denote its vertices as pairs $(i,j)$, where $i$ is the generation of the vertex and $j$ enumerates the vertex in the same generation, from left to right (see Figure \ref{partition_tree}). Denote also by $\ell$ the number of generation of the tree and by $m_i$ the number of vertices in the $i$-th generation. With this labelling, we can write 
\begin{equation*}
    \prod_{k=1}^n e^{-\beta \frac{c_2}{2}\|\gamma_k\|}\prod_{\{i,j\}\in E(T)}\widetilde{F}_{\ctr_i,\ctr_j} = \prod_{i=1}^\ell \prod_{j=1}^{m_i}\left( e^{-\beta \frac{c_2}{2}\|\gamma_{i,j}\|}\widetilde{F}_{\gamma_{i-1,j'},\gamma_{i,j}}\right) ,
\end{equation*}
where, for each pair $(i, j)$, $j'$ is the unique value such that $(i-1, j')$ is connected to $(i, j)$, that is, $(i-1, j')$ is the parent of $(i, j)$.
For each fixed tree, we start to sum the contours corresponding to vertices of the last generation, 
\begin{equation}\label{eq4_tree}
    \begin{split}
        &\sum_{\substack{\gamma_k\\ 1\leq k \leq n}}\prod_{k=1}^n e^{-\beta \frac{c_2}{2}\|\gamma_k\|}\prod_{\{i,j\}\in E(T)}\widetilde{F}_{\ctr_i,\ctr_j} = \\
        &=\hspace{-0.5cm}\sum_{\substack{\gamma_{i,j} \\ i=1,\dots,\ell-1 \\ j=1,\dots , m_i}}\prod_{i=1}^{\ell-1} \left( \prod_{j=1}^{m_i} e^{-\beta \frac{c_2}{2}\|\gamma_{i,j}\|} \widetilde{F}_{\gamma_{i-1,j'},\gamma_{i,j}}\right)\left[\sum_{\substack{\gamma_{\ell,j} \\ j=1,\dots,m_\ell}}\prod_{j=1}^{m_\ell} e^{-\beta \frac{c_2}{2}\|\gamma_{\ell,j}\|} \widetilde{F}_{\gamma_{\ell-1,j'},\gamma_{\ell,j}}\right].
    \end{split}
\end{equation}
Notice that the integers from $1$ to $m_{\ell}$ can be partitioned into $m_{\ell - 1}$ groups, the $q$-th group being composed by the integers $j$ such that $(\ell, j)$ is connected to the vertex $(\ell - 1, q)$ from the previous generation, that is, the partition is according to the parents. The ordering of vertices within the same generation can be changed in such a way that the vertices in group $q$ are numbered before the vertices in group $q'$ if $q < q'$. In this way, we can find integers $j_1, \dots, j_{m_{\ell - 1}}$ such that vertices $(\ell, j)$ with $j_q < j \leq j_{q+1}$ are connected to $(\ell - 1, q)$ (put $j_{m_{\ell - 1}} = m_{\ell} + 1$), see Figure \ref{partition_tree}.

  \begin{figure}[H]
\centering

\scalebox{0.6}{

\tikzset{every picture/.style={line width=0.75pt}} 

\begin{tikzpicture}[x=0.75pt,y=0.75pt,yscale=-1,xscale=1]

\draw   (291,345) .. controls (291,331.19) and (302.19,320) .. (316,320) .. controls (329.81,320) and (341,331.19) .. (341,345) .. controls (341,358.81) and (329.81,370) .. (316,370) .. controls (302.19,370) and (291,358.81) .. (291,345) -- cycle ;
\draw   (138,247) .. controls (138,233.19) and (149.19,222) .. (163,222) .. controls (176.81,222) and (188,233.19) .. (188,247) .. controls (188,260.81) and (176.81,272) .. (163,272) .. controls (149.19,272) and (138,260.81) .. (138,247) -- cycle ;
\draw   (292,219) .. controls (292,205.19) and (303.19,194) .. (317,194) .. controls (330.81,194) and (342,205.19) .. (342,219) .. controls (342,232.81) and (330.81,244) .. (317,244) .. controls (303.19,244) and (292,232.81) .. (292,219) -- cycle ;
\draw   (437,241) .. controls (437,227.19) and (448.19,216) .. (462,216) .. controls (475.81,216) and (487,227.19) .. (487,241) .. controls (487,254.81) and (475.81,266) .. (462,266) .. controls (448.19,266) and (437,254.81) .. (437,241) -- cycle ;
\draw   (55,137) .. controls (55,123.19) and (66.19,112) .. (80,112) .. controls (93.81,112) and (105,123.19) .. (105,137) .. controls (105,150.81) and (93.81,162) .. (80,162) .. controls (66.19,162) and (55,150.81) .. (55,137) -- cycle ;
\draw   (141,118) .. controls (141,104.19) and (152.19,93) .. (166,93) .. controls (179.81,93) and (191,104.19) .. (191,118) .. controls (191,131.81) and (179.81,143) .. (166,143) .. controls (152.19,143) and (141,131.81) .. (141,118) -- cycle ;
\draw   (228,114) .. controls (228,100.19) and (239.19,89) .. (253,89) .. controls (266.81,89) and (278,100.19) .. (278,114) .. controls (278,127.81) and (266.81,139) .. (253,139) .. controls (239.19,139) and (228,127.81) .. (228,114) -- cycle ;
\draw   (303,103) .. controls (303,89.19) and (314.19,78) .. (328,78) .. controls (341.81,78) and (353,89.19) .. (353,103) .. controls (353,116.81) and (341.81,128) .. (328,128) .. controls (314.19,128) and (303,116.81) .. (303,103) -- cycle ;
\draw   (380,113) .. controls (380,99.19) and (391.19,88) .. (405,88) .. controls (418.81,88) and (430,99.19) .. (430,113) .. controls (430,126.81) and (418.81,138) .. (405,138) .. controls (391.19,138) and (380,126.81) .. (380,113) -- cycle ;
\draw   (484,126) .. controls (484,112.19) and (495.19,101) .. (509,101) .. controls (522.81,101) and (534,112.19) .. (534,126) .. controls (534,139.81) and (522.81,151) .. (509,151) .. controls (495.19,151) and (484,139.81) .. (484,126) -- cycle ;
\draw    (179,266) -- (291,342) ;
\draw    (316,244) -- (316,320) ;
\draw    (341,345) -- (447,260.8) ;
\draw    (89.2,160.4) -- (145.4,229.6) ;
\draw    (166,143) -- (167,221.8) ;
\draw    (253,139) -- (300,200) ;
\draw    (328,128) -- (319.6,194.4) ;
\draw    (394.4,136) -- (337.8,203.4) ;
\draw    (468.6,216.6) -- (501.8,150.2) ;
\draw   (40.58,93.45) -- (189.23,67.91) -- (205,159.67) -- (56.35,185.21) -- cycle ;
\draw   (222.21,59.72) -- (443.65,69.7) -- (439.46,162.61) -- (218.02,152.63) -- cycle ;
\draw   (474,81.28) -- (558.01,94.23) -- (547,165.72) -- (462.99,152.77) -- cycle ;

\draw (309.33,332.73) node [anchor=north west][inner sep=0.75pt]  [font=\Large]  {$0$};
\draw (141.67,236.73) node [anchor=north west][inner sep=0.75pt]    {$( 1,\ 1)$};
\draw (296,208.4) node [anchor=north west][inner sep=0.75pt]    {$( 1,\ 2)$};
\draw (441.33,231.4) node [anchor=north west][inner sep=0.75pt]    {$( 1,\ 3)$};
\draw (59,126.4) node [anchor=north west][inner sep=0.75pt]    {$( 2,\ 1)$};
\draw (144.67,107.07) node [anchor=north west][inner sep=0.75pt]    {$( 2,\ 2)$};
\draw (232,104.4) node [anchor=north west][inner sep=0.75pt]    {$( 2,\ 3)$};
\draw (306.67,93.73) node [anchor=north west][inner sep=0.75pt]    {$( 2,\ 4)$};
\draw (383.33,102.73) node [anchor=north west][inner sep=0.75pt]    {$( 2,\ 5)$};
\draw (488.67,114.4) node [anchor=north west][inner sep=0.75pt]    {$( 2,\ 6)$};
\draw (75.22,54.21) node [anchor=north west][inner sep=0.75pt]  [font=\large,rotate=-352.63]  {$q\ =\ 1$};
\draw (310.26,35.97) node [anchor=north west][inner sep=0.75pt]  [font=\large,rotate=-1.52]  {$q\ =\ 2$};
\draw (500.77,57.17) node [anchor=north west][inner sep=0.75pt]  [font=\large,rotate=-8.76]  {$q\ =\ 3$};

\end{tikzpicture}

}

\caption{\label{partition_tree} The partition of the vertices of some generation according to their parents.  In this picture, the partition is peformed in the generation $2$. Notice that the number of elements of the partition is the number of elements in the previous generation, and are being indexed by $q$. Here, we have $j_1 = 1$, $j_2 = 3$ and $j_3 = 6$.}
\end{figure}
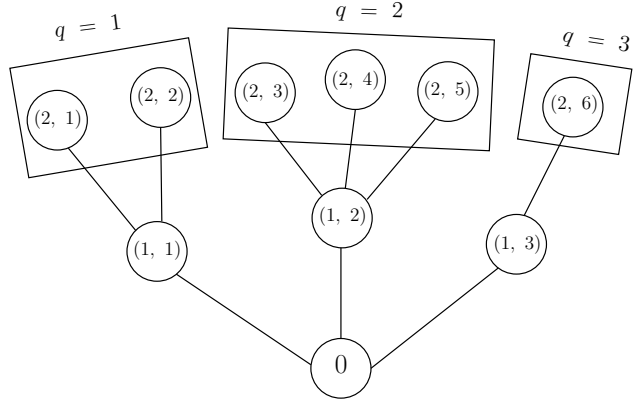
 
 Using Lemma \ref{F_Vol} we have
  \begin{equation}\label{eq5_tree}
      \begin{split}
           \sum_{\substack{\gamma_{\ell,j} \\ j=1,\dots,m_\ell}}\prod_{j=1}^{m_\ell} e^{-\beta \frac{c_2}{2}\|\gamma_{\ell,j}\|} \widetilde{F}_{\gamma_{\ell-1,j'},\gamma_{\ell,j}} &= \prod_{q=1}^{m_{\ell-1}}\prod_{j=j_q}^{j_{q+1}-1}\left(\sum_{\substack{\gamma_{\ell,j}}}e^{-\beta \frac{c_2}{2}\|\gamma_{\ell,j}\|}\widetilde{F}_{\gamma_{\ell-1,q},\gamma_{\ell,j}}\right)\\        
          &\leq c_\beta^{m_\ell}\prod_{q=1}^{m_{\ell - 1}} (j_{q+1} - j_q)! e^{\|\gamma_{\ell-1,q}\|},
      \end{split} 
  \end{equation}

  where we used in the last inequality above that $F_{\Sp(\gamma)} \leq \|\gamma\| $ and $x^n \leq n!e^x$. Using now that $\beta c_2 > 4$ and that $j_q-j_{q-1} = \deg(\ell-1,q)-1$, we can plug Inequality \eqref{eq5_tree} into \eqref{eq4_tree} and obtain
  \begin{equation}
      \begin{split}
          &\sum_{\substack{\gamma_k\\ 1\leq k \leq n}}\prod_{k=1}^n e^{-\beta \frac{c_2}{2}\|\gamma_k\|}\prod_{\{i,j\}\in E(T)}\widetilde{F}_{\ctr_i,\ctr_j} \\
          &\leq\sum_{\substack{\gamma_{i,j} \\ i=1,\dots,\ell-1 \\ j=1,\dots , m_i}}\prod_{i=1}^{\ell-1} \left( \prod_{j=1}^{m_i} e^{-\beta \frac{c_2}{2}\|\gamma_{i,j}\|} \widetilde{F}_{\gamma_{i-1,j'},\gamma_{i,j}}\right)c_\beta^{m_\ell}\prod_{q=1}^{m_{\ell - 1}} (\deg(\ell-1,q)-1)!e^{\beta\frac{c_2}{4}\|\gamma_{\ell-1,q}\|}.
      \end{split}
  \end{equation}

  Notice that, when cutting a generation $n$, the effect is, aside from some constant factors, to add $e^{\beta\frac{c_2}{4}\|\gamma\|}$ in the weight of each vertex of generation $n-1$.
\begin{figure}[H]
\centering

\scalebox{0.75}{

\tikzset{every picture/.style={line width=0.75pt}} 

\begin{tikzpicture}[x=0.75pt,y=0.75pt,yscale=-1,xscale=1]

\draw   (148.29,289) .. controls (137.28,289.18) and (128.2,279.86) .. (128.01,268.2) .. controls (127.83,256.54) and (136.6,246.95) .. (147.62,246.77) .. controls (158.63,246.6) and (167.71,255.91) .. (167.89,267.57) .. controls (168.08,279.23) and (159.3,288.83) .. (148.29,289) -- cycle ;
\draw   (65.68,212.71) .. controls (54.66,212.88) and (45.59,203.57) .. (45.4,191.91) .. controls (45.22,180.25) and (53.99,170.65) .. (65,170.48) .. controls (76.02,170.3) and (85.09,179.61) .. (85.28,191.27) .. controls (85.46,202.94) and (76.69,212.53) .. (65.68,212.71) -- cycle ;
\draw   (148.16,209.94) .. controls (137.15,210.11) and (128.07,200.8) .. (127.89,189.14) .. controls (127.7,177.48) and (136.48,167.89) .. (147.49,167.71) .. controls (158.5,167.54) and (167.58,176.85) .. (167.76,188.51) .. controls (167.95,200.17) and (159.17,209.76) .. (148.16,209.94) -- cycle ;
\draw    (128.01,268.2) -- (65.68,212.71) ;
\draw   (229.32,212.03) .. controls (240.33,212.01) and (249.24,202.66) .. (249.22,191.16) .. controls (249.2,179.65) and (240.26,170.34) .. (229.25,170.37) .. controls (218.23,170.39) and (209.32,179.74) .. (209.34,191.24) .. controls (209.36,202.74) and (218.3,212.05) .. (229.32,212.03) -- cycle ;
\draw    (167.97,267.87) -- (229.32,212.03) ;
\draw    (147.62,246.77) -- (148.16,209.94) ;
\draw   (65.42,133.7) .. controls (54.41,133.88) and (45.33,124.56) .. (45.15,112.9) .. controls (44.96,101.24) and (53.74,91.65) .. (64.75,91.47) .. controls (75.76,91.3) and (84.84,100.61) .. (85.02,112.27) .. controls (85.21,123.93) and (76.43,133.53) .. (65.42,133.7) -- cycle ;
\draw   (147.91,130.93) .. controls (136.89,131.11) and (127.82,121.8) .. (127.63,110.14) .. controls (127.45,98.47) and (136.22,88.88) .. (147.23,88.7) .. controls (158.25,88.53) and (167.32,97.84) .. (167.51,109.5) .. controls (167.69,121.16) and (158.92,130.76) .. (147.91,130.93) -- cycle ;
\draw   (229.06,133.02) .. controls (240.08,133) and (248.99,123.66) .. (248.97,112.15) .. controls (248.95,100.65) and (240,91.34) .. (228.99,91.36) .. controls (217.98,91.38) and (209.07,100.73) .. (209.09,112.23) .. controls (209.11,123.74) and (218.05,133.05) .. (229.06,133.02) -- cycle ;
\draw    (167.71,188.86) -- (229.06,133.02) ;
\draw    (147.36,167.77) -- (147.91,130.93) ;
\draw   (65.17,54.69) .. controls (54.15,54.87) and (45.08,45.56) .. (44.89,33.9) .. controls (44.71,22.24) and (53.48,12.64) .. (64.5,12.46) .. controls (75.51,12.29) and (84.58,21.6) .. (84.77,33.26) .. controls (84.96,44.92) and (76.18,54.52) .. (65.17,54.69) -- cycle ;
\draw   (147.65,51.93) .. controls (136.64,52.1) and (127.56,42.79) .. (127.38,31.13) .. controls (127.19,19.47) and (135.97,9.87) .. (146.98,9.7) .. controls (157.99,9.52) and (167.07,18.83) .. (167.25,30.5) .. controls (167.44,42.16) and (158.66,51.75) .. (147.65,51.93) -- cycle ;
\draw    (127.5,110.19) -- (65.17,54.69) ;
\draw   (228.81,54.02) .. controls (239.82,53.99) and (248.73,44.65) .. (248.71,33.14) .. controls (248.69,21.64) and (239.75,12.33) .. (228.74,12.36) .. controls (217.72,12.38) and (208.81,21.72) .. (208.83,33.23) .. controls (208.85,44.73) and (217.8,54.04) .. (228.81,54.02) -- cycle ;
\draw    (167.46,109.86) -- (228.81,54.02) ;
\draw    (147.11,88.76) -- (147.65,51.93) ;
\draw    (65,170.48) -- (65.55,133.65) ;
\draw   (508.29,285.68) .. controls (497.28,285.86) and (488.2,276.55) .. (488.01,264.89) .. controls (487.83,253.22) and (496.6,243.63) .. (507.62,243.45) .. controls (518.63,243.28) and (527.71,252.59) .. (527.89,264.25) .. controls (528.08,275.91) and (519.3,285.51) .. (508.29,285.68) -- cycle ;
\draw   (425.68,209.39) .. controls (414.66,209.56) and (405.59,200.25) .. (405.4,188.59) .. controls (405.22,176.93) and (413.99,167.33) .. (425,167.16) .. controls (436.02,166.98) and (445.09,176.3) .. (445.28,187.96) .. controls (445.46,199.62) and (436.69,209.21) .. (425.68,209.39) -- cycle ;
\draw   (508.16,206.62) .. controls (497.15,206.8) and (488.07,197.49) .. (487.89,185.82) .. controls (487.7,174.16) and (496.48,164.57) .. (507.49,164.39) .. controls (518.5,164.22) and (527.58,173.53) .. (527.76,185.19) .. controls (527.95,196.85) and (519.17,206.45) .. (508.16,206.62) -- cycle ;
\draw    (488.01,264.89) -- (425.68,209.39) ;
\draw   (589.32,208.71) .. controls (600.33,208.69) and (609.24,199.34) .. (609.22,187.84) .. controls (609.2,176.34) and (600.26,167.03) .. (589.25,167.05) .. controls (578.23,167.07) and (569.32,176.42) .. (569.34,187.92) .. controls (569.36,199.43) and (578.3,208.74) .. (589.32,208.71) -- cycle ;
\draw    (527.97,264.55) -- (589.32,208.71) ;
\draw    (507.62,243.45) -- (508.16,206.62) ;
\draw   (425.42,130.38) .. controls (414.41,130.56) and (405.33,121.25) .. (405.15,109.59) .. controls (404.96,97.92) and (413.74,88.33) .. (424.75,88.15) .. controls (435.76,87.98) and (444.84,97.29) .. (445.02,108.95) .. controls (445.21,120.61) and (436.43,130.21) .. (425.42,130.38) -- cycle ;
\draw   (507.91,127.62) .. controls (496.89,127.79) and (487.82,118.48) .. (487.63,106.82) .. controls (487.45,95.16) and (496.22,85.56) .. (507.23,85.39) .. controls (518.25,85.21) and (527.32,94.52) .. (527.51,106.19) .. controls (527.69,117.85) and (518.92,127.44) .. (507.91,127.62) -- cycle ;
\draw   (589.06,129.71) .. controls (600.08,129.68) and (608.99,120.34) .. (608.97,108.83) .. controls (608.95,97.33) and (600,88.02) .. (588.99,88.04) .. controls (577.98,88.07) and (569.07,97.41) .. (569.09,108.92) .. controls (569.11,120.42) and (578.05,129.73) .. (589.06,129.71) -- cycle ;
\draw    (527.71,185.55) -- (589.06,129.71) ;
\draw    (507.36,164.45) -- (507.91,127.62) ;
\draw    (425,167.16) -- (425.55,130.33) ;

\draw (310,130.4) node [anchor=north west][inner sep=0.75pt]    { \huge $\Rightarrow$};

\draw (250,260.4) node [anchor=north west][inner sep=0.75pt]    {\Large $T$};
\draw (144,258.4) node [anchor=north west][inner sep=0.75pt]    {$0$};
\draw (47,183.4) node [anchor=north west][inner sep=0.75pt]    {$(1,1)$};
\draw (129,180) node [anchor=north west][inner sep=0.75pt]    {$(1,2)$};
\draw (210,183) node [anchor=north west][inner sep=0.75pt]    {$(1,3)$};
\draw (47,103.4) node [anchor=north west][inner sep=0.75pt]    {$(2,1)$};
\draw (130,102.4) node [anchor=north west][inner sep=0.75pt]    {$(2,2)$};
\draw (212,104.4) node [anchor=north west][inner sep=0.75pt]    {$(2,3)$};
\draw (45,25.4) node [anchor=north west][inner sep=0.75pt]    {$(3,1)$};
\draw (129,20) node [anchor=north west][inner sep=0.75pt]    {$(3,2)$};
\draw (209,24.4) node [anchor=north west][inner sep=0.75pt]    {$(3,3)$};
\draw (610,257.08) node [anchor=north west][inner sep=0.75pt]    {\Large $T_1$};
\draw (504,255.08) node [anchor=north west][inner sep=0.75pt]    {$0$};
\draw (407,180.08) node [anchor=north west][inner sep=0.75pt]    {$(1,1)$};
\draw (489,179.08) node [anchor=north west][inner sep=0.75pt]    {$(1,2)$};
\draw (571,179.08) node [anchor=north west][inner sep=0.75pt]    {$(1,3)$};
\draw (407,100.08) node [anchor=north west][inner sep=0.75pt]    {$(2,1)$};
\draw (490,99.08) node [anchor=north west][inner sep=0.75pt]    {$(2,2)$};
\draw (572,101.08) node [anchor=north west][inner sep=0.75pt]    {$(2,3)$};

\end{tikzpicture}

}

\caption{\label{cutting_tree} The erasure procedure transforming a tree $T$ into a $T_1$.}
\end{figure}
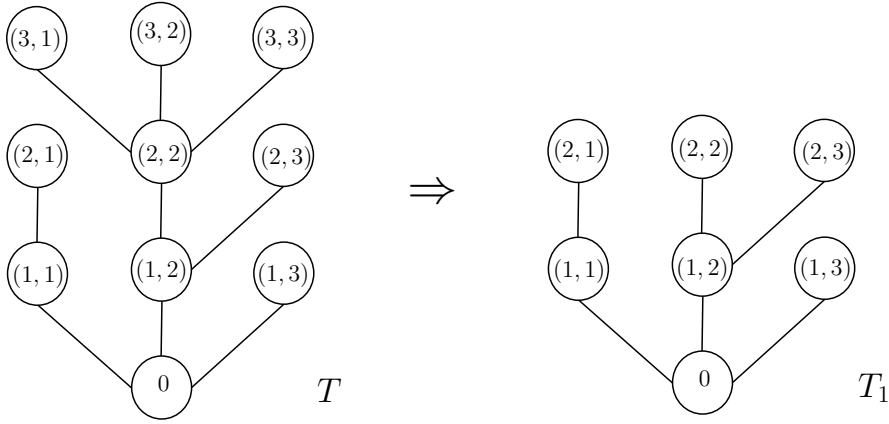	

  Before iterating the previous procedure to the remaining generations of the tree, we need to distinguish three cases. If $\ell =1$, then we have the bound
  \begin{equation}
      \begin{split}
          &\sum_{\substack{\gamma_k\\ 1\leq k \leq n}}\prod_{k=1}^n e^{-\beta \frac{c_2}{2}\|\gamma_k\|}\prod_{\{i,j\}\in E(T)}\widetilde{F}_{\ctr_i,\ctr_j} \leq c_\beta^n(n-1)!e^{\beta\frac{c_2}{4}\|\gamma_0\|}.
      \end{split}
  \end{equation}
  If $\ell>2$, then we can proceed similarly to \eqref{eq5_tree} and write
 \begin{equation*}
    \begin{split}
        &\sum_{\substack{\gamma_{i,j} \\ i=1,\dots,\ell-1 \\ j=1,\dots , m_i}}\prod_{i=1}^{\ell-1} \left( \prod_{j=1}^{m_i} e^{-\beta \frac{c_2}{2}\|\gamma_{i,j}\|}\widetilde{F}_{\gamma_{i-1,j'},\gamma_{i,j}}\right)\prod_{q=1}^{m_{\ell - 1}}e^{\beta \frac{c_2}{4}\|\gamma_{\ell-1,q}\|}  \\
        &\leq\hspace{-0.5cm}\sum_{\substack{\gamma_{i,j} \\ i=1,\dots,\ell-2 \\ j=1,\dots , m_i}}\prod_{i=1}^{\ell-2} \left( \prod_{j=1}^{m_i} e^{-\beta \frac{c_2}{2}\|\gamma_{i,j}\|} \widetilde{F}_{\gamma_{i-1,j'},\gamma_{i,j}}\right)\left[\sum_{\substack{\gamma_{\ell-1,j} \\ j=1,\dots,m_{\ell-1}}}\prod_{j=1}^{m_{\ell-1}} e^{-\beta \frac{c_2}{4}\|\gamma_{\ell-1,j}\|} \widetilde{F}_{\gamma_{\ell-2,j'},\gamma_{\ell-1,j}}\right]\\
        &\leq\hspace{-0.5cm}\sum_{\substack{\gamma_{i,j} \\ i=1,\dots,\ell-2 \\ j=1,\dots , m_i}}\prod_{i=1}^{\ell-2} \left( \prod_{j=1}^{m_i} e^{-\beta \frac{c_2}{2}\|\gamma_{i,j}\|} \widetilde{F}_{\gamma_{i-1,j'},\gamma_{i,j}}\right)\left[c_{\beta/2}^{m_{\ell-1}}\prod_{q=1}^{m_{\ell-2}} (\deg(\ell-2,q)-1)!e^{\beta\frac{c_2}{4}\|\gamma_{\ell-2,q}\|}\right].
    \end{split}
\end{equation*}
  For $\ell = 2$, we only have the term inside the square brackets above. In every case (possibly after an iteration), we have
  \begin{equation}\label{eq6_tree}
      \sum_{\substack{\gamma_k\\ 1\leq k \leq n}}\prod_{k=1}^n e^{-\beta \frac{c_2}{2}\|\gamma_k\|}\prod_{\{i,j\}\in E(T)}\widetilde{F}_{\ctr_i,\ctr_j} \leq c_{\beta/2}^n\prod_{k=0}^n (\deg(k)-1)! e^{\beta \frac{c_2}{4}\|\gamma_0\|}.
  \end{equation}
  Plugging Inequality \eqref{eq6_tree} back into \eqref{eq7_tree} and denoting by $\mathcal{T}_{n+1}(d_0,d_1,\dots,d_n)$ the set of labelled trees with specified degrees for its $n+1$ vertices, we get 
\begin{equation}\label{eq8_tree}
    \begin{split}
        \sum_{T \in \mathcal{T}^0_{n+1}}\sum_{\substack{\polymer \in \mathcal{E}^+_{n,\gamma_0}}} \prod_{\ctr \in \polymer} e^{-\beta \frac{c_2}{4}\|\ctr\|}& \prod_{\{\gamma,\gamma'\}\in E(T)} F_{\ctr,\ctr'}\leq c_{\beta/2}^ne^{-\beta \frac{c_2}{4}\|\ctr_0\|}\sum_{T\in \mathcal{T}^0_{n+1}}\frac{1}{n!}\left(\prod_{k=0}^n (\deg(k)-1)!\right)\\
         &=c_{\beta/2}^ne^{-\beta \frac{c_2}{4}\|\ctr_0\|}\sum_{d_0+\dots+d_n=2n}\frac{1}{n!}\left(\prod_{k=0}^n (d_k-1)!\right)|\mathcal{T}_{n+1}(d_0,d_1,\dots,d_n)|.\\
    \end{split}
\end{equation}

Now, recall the Cayley's Formula (see, for example, \cite{procacci2023cluster}):

\begin{equation*}
    |\mathcal{T}_{n+1}(d_0,d_1,\dots,d_n)| = \frac{(n-1)!}{\prod_{k = 0}^n (d_k-1)!}.
\end{equation*}

Finally, 

\begin{equation*}
\sum_{T \in \mathcal{T}^0_{n+1}}\sum_{\substack{\polymer \in \mathcal{E}^+_{n,\gamma_0}}} \prod_{\ctr \in \polymer} e^{-\beta \frac{c_2}{4}\|\ctr\|} \prod_{\{\gamma,\gamma'\}\in E(T)} F_{\ctr,\ctr'} \leq c_{\beta/2}^ne^{-\beta\frac{c_2}{4}\|\gamma_0\|}\binom{2n - 1}{n} \leq (6c_{\beta/2})^ne^{-\beta \frac{c_2}{4}\|\gamma_0\|},
\end{equation*}

where the last passage comes from the inequality $\binom{n}{k} \leq (ne/k)^k$.
\end{proof}

\begin{corollary}\label{main_corol_1}
    For $\beta$ large enough it holds, for every polymer $\polymer$, that

    \begin{equation}\label{sum_polymer_zero}
        \sum_{V(\polymer) \ni 0}\widetilde{z}^+_\beta(\polymer) \leq 2e^{-\beta \frac{c_2}{8}}.
    \end{equation}
\end{corollary}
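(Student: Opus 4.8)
The plan is to decompose the sum over polymers $\polymer$ with $0 \in V(\polymer)$ according to which contour $\gamma_0 \in \polymer$ is the one whose volume contains $0$, and then to the size $n+1$ of $\polymer$. Writing $\polymer = \{\gamma_0, \gamma_1, \dots, \gamma_n\}$ and summing first over all such $\polymer$ with $\gamma_0$ fixed, we have
\[
\sum_{V(\polymer) \ni 0}\widetilde{z}^+_\beta(\polymer) \leq \sum_{n \geq 0}\sum_{\gamma_0 : 0 \in V(\gamma_0)}\; \sum_{\polymer \in \mathcal{E}^+_{n,\gamma_0}}\widetilde{z}^+_\beta(\polymer).
\]
(Strictly, if $0$ lies in $V(\polymer)$ it lies in $V(\gamma_0)$ for at least one $\gamma_0$; summing over the choice of $\gamma_0$ only overcounts, which is harmless for an upper bound.) By Proposition \ref{prop_tree}, the innermost sum is bounded by $(6c_{\beta/2})^n e^{-\beta \frac{c_2}{4}\|\gamma_0\|}$, and since $\|\gamma_0\| \geq |\gamma_0| \geq 1$ we may split $e^{-\beta\frac{c_2}{4}\|\gamma_0\|} \leq e^{-\beta\frac{c_2}{8}}e^{-\beta\frac{c_2}{8}|\gamma_0|}$.

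Next I would carry out the sum over $\gamma_0$ with $0 \in V(\gamma_0)$ using the entropy bound of Proposition \ref{Bound_on_C_0_n}: the number of contours of size $k$ with $0 \in V(\gamma_0)$ is $|\mathcal{C}_0(k)| \leq e^{c_1 k}$, so
\[
\sum_{\gamma_0 : 0 \in V(\gamma_0)} e^{-\beta\frac{c_2}{8}|\gamma_0|} \leq \sum_{k \geq 1} e^{c_1 k}e^{-\beta\frac{c_2}{8}k} = \frac{e^{c_1 - \beta c_2/8}}{1 - e^{c_1 - \beta c_2/8}},
\]
which tends to $0$ as $\beta \to \infty$; in particular for $\beta$ large it is at most, say, $1$. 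Combining,
\[
\sum_{V(\polymer) \ni 0}\widetilde{z}^+_\beta(\polymer) \leq e^{-\beta\frac{c_2}{8}}\left(\sum_{n \geq 0}(6c_{\beta/2})^n\right)\left(\sum_{k \geq 1}e^{(c_1-\beta c_2/8)k}\right).
\]
Since $c_{\beta/2} \to 0$ as $\beta \to \infty$ (Lemma \ref{F_Vol}), for $\beta$ large enough we have $6c_{\beta/2} \leq \tfrac12$, so $\sum_n (6c_{\beta/2})^n \leq 2$, and by enlarging $\beta$ further the geometric sum over $k$ is at most $1$. This yields the claimed bound $2e^{-\beta\frac{c_2}{8}}$.

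The only delicate point is bookkeeping: making sure that summing over the "distinguished" contour $\gamma_0$ together with $\mathcal{E}^+_{n,\gamma_0}$ genuinely dominates the sum over polymers containing $0$ in their volume — this is fine because each polymer with $0 \in V(\polymer)$ is counted at least once (and an over-count only helps), and because $\widetilde{z}^+_\beta$ is symmetric in the contours of $\polymer$ so the choice of root in the tree sum of \eqref{tilde_activity} is immaterial. One should also absorb the harmless constants (the factor $2$ from $\sum_n(6c_{\beta/2})^n$, the geometric factor from the $\gamma_0$-sum) into the single threshold on $\beta$; none of the steps requires anything beyond Propositions \ref{Bound_on_C_0_n}, \ref{prop_tree} and Lemma \ref{F_Vol}, all already available.
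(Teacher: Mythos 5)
Your proposal is correct and follows essentially the same route as the paper: decompose according to a distinguished contour $\gamma_0$ with $0\in V(\gamma_0)$, apply Proposition \ref{prop_tree} to sum over all polymers in $\mathcal{E}^+_{n,\gamma_0}$, and then use the entropy bound of Proposition \ref{Bound_on_C_0_n} together with $\|\gamma_0\|\geq|\gamma_0|$ to perform the Peierls-type sum over $\gamma_0$. The only difference is cosmetic bookkeeping (you split the exponential $e^{-\beta\frac{c_2}{4}\|\gamma_0\|}$ explicitly, whereas the paper absorbs everything into $c_{\beta/2}$ and its explicit expression), which does not change the argument.
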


\begin{proof}
   Note that we can bound the sum over polymers $\polymer$ such that $V(\polymer)$ contains $0$ by the sum over all contours $\gamma_0$ such that $0 \in V(\gamma_0)$ and then sum over all polymers containing all possible choices of $\gamma_0$,

\begin{equation}\label{corollary3.1}
\begin{split}
 \sum_{V(\polymer) \ni 0}\widetilde{z}^+_\beta(\polymer)&\leq\sum_{V(\ctr_0) \ni 0} \sum_{n\geq 0}\,\sum_{\substack{\polymer \in \mathcal{E}_{n, \gamma_0}}} \widetilde{z}^+_\beta(\polymer). 
\end{split}
\end{equation}
Proposition \ref{prop_tree} gives us, 

\begin{equation}\label{sum_prop3.3}
    \sum_{n\geq 0}\,\sum_{\substack{\polymer \in \mathcal{E}_{n, \gamma_0}}} \widetilde{z}^+_\beta(\polymer) \leq e^{-\beta \frac{c_2}{4}\|\gamma_0\|}\left[\sum_{n\geq 0}(6c_{\beta/2})^{n}\right] = \frac{e^{-\beta \frac{c_2}{4}\|\gamma_0\|}}{1-6c_{\beta/2}}.
\end{equation}

This fact, together with Proposition \ref{Bound_on_C_0_n} allows us to bound the right-hand side of the Inequality \eqref{corollary3.1} above as
\begin{equation*}
    \begin{split}
    \sum_{V(\polymer) \ni 0}\widetilde{z}^+_\beta(\polymer) &\leq \sum_{V(\gamma_0)\ni 0} \frac{e^{-\beta \frac{c_2}{4}\|\gamma_0\|}}{1-6c_{\beta/2}} \leq \frac{c_{\beta/2}}{1- 6c_{\beta/2}}.
    \end{split}
\end{equation*}
for $\beta > 8c_1/c_2$. Using the explicit expression for $c_{\beta/2} = e^{-\beta \frac{c_2}{8}}(1-e^{-\beta \frac{c_2}{8}})^{-1}$ we get the desired result for $\beta > 8\log(14)/c_2$.
  \end{proof}
\begin{proposition}\label{main_prop}
    For all large enough $\beta$ it holds, for every polymer $\polymer$, that

\begin{equation*}\label{sum_polymer_incompat}
        \sum_{\polymer' \not\sim \polymer} \widetilde{z}^+_\beta(\polymer') \leq e^{-\beta \frac{c_2}{16}}\|\Gamma\|,
    \end{equation*}
    where $\|\Gamma\| \coloneqq \sum_{\gamma \in \Gamma}\|\gamma\|$.
\end{proposition}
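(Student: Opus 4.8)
The plan is to extract, from each polymer $\Gamma'$ incompatible with $\Gamma$, a \emph{witness contour} $\gamma'\in\Gamma'$ that is geometrically pinned to some $\gamma\in\Gamma$, and then to resum $\sum_{\Gamma'\ni\gamma'}\widetilde{z}^{+}_\beta(\Gamma')$ over all witnesses with the help of Proposition \ref{prop_tree}. First I would prove a classification lemma: if $\Gamma'\not\sim\Gamma$, then there are $\gamma\in\Gamma$ and $\gamma'\in\Gamma'$ such that at least one of the following holds — \textbf{(A)} $\dis(\Sp(\gamma'),\Sp(\gamma))\le M\min\{|V(\gamma)|,|V(\gamma')|\}^{a/(d+1)}$; \textbf{(B)} $V(\gamma)\subset\I_-(\gamma')$; \textbf{(C)} $V(\gamma')\subset\I_-(\gamma)$ but $\Gamma'$ is not entirely contained in $\I_-(\gamma)$. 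This follows by inspecting Definition \ref{def_compatibility}: failure of condition (I) is exactly a violation of condition \textbf{(B)} of Definition \ref{Ma}, hence case \textbf{(A)}; if (I) holds but (II) fails, then $\widetilde{V}(\Gamma)\cap\widetilde{V}(\Gamma')\neq\emptyset$, so some pair $\gamma,\gamma'$ has $\widetilde{V}(\gamma)\cap\widetilde{V}(\gamma')\neq\emptyset$, and by Condition \textbf{(A1)} together with the containment argument used in Proposition \ref{external} this forces $V(\gamma')\subset\I_-(\gamma)$ or $V(\gamma)\subset\I_-(\gamma')$ — the former lands in \textbf{(C)} because clause (ii) of (II) fails for this $\gamma$, the latter is \textbf{(B)}.

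Given the lemma, I would bound $\sum_{\Gamma'\not\sim\Gamma}\widetilde{z}^{+}_\beta(\Gamma')\le\sum_{\gamma\in\Gamma}(\Sigma^{\mathrm{A}}_\gamma+\Sigma^{\mathrm{B}}_\gamma+\Sigma^{\mathrm{C}}_\gamma)$, and for \textbf{(A)}, \textbf{(B)} use $\sum_{\Gamma'\ni\gamma'}\widetilde{z}^{+}_\beta(\Gamma')\le(1-6c_{\beta/2})^{-1}e^{-\beta c_2\|\gamma'\|/4}$, which is exactly \eqref{sum_prop3.3}. For case \textbf{(A)} I would split according to $|\gamma'|=n$, use the entropy bound $|\mathcal{C}_x(n)|\le e^{c_1n}$ (Proposition \ref{Bound_on_C_0_n}) and the isoperimetric bound $|V(\gamma')|\le n^{d/(d-1)}$; when $|V(\gamma')|\le|V(\gamma)|$ the witness support meets a $\Sp(\gamma)$-centered ball of radius $Mn^{da/((d-1)(d+1))}$, so anchoring on $\Sp(\gamma)$ gives $\Sigma^{\mathrm{A}}_\gamma\le C_\beta|\gamma|$, while when $|V(\gamma')|\ge|V(\gamma)|$ the extra decay $\|\gamma'\|\ge|\gamma'|\ge|V(\gamma)|^{(d-1)/d}$ absorbs the polynomial anchor count and produces a bound $\le C_\beta$ uniform in $\gamma$. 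Case \textbf{(B)} is similar and simpler: any fixed $x\in\Sp(\gamma)$ lies in $V(\gamma')$, so $\Sigma^{\mathrm{B}}_\gamma\le C_\beta$ uniformly. Here $C_\beta$ denotes a generic constant (possibly changing from line to line) with $C_\beta\to0$ as $\beta\to\infty$; all of this essentially reproduces the computations of Lemmas \ref{lemma_K} and \ref{F_Vol}.

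Case \textbf{(C)} is the crux, since a naive count of witnesses $\gamma'$ with $\Sp(\gamma')\subset\I_-(\gamma)$ produces the volume $|\I_-(\gamma)|$, which can far exceed the surface term $F_{\I_-(\gamma)}$ available in $\|\gamma\|$. The idea I would use: write $\Gamma'=A\sqcup B$ with $A=\{\delta\in\Gamma':\widetilde{V}(\delta)\subset\I_-(\gamma)\}$; by \textbf{(A1)} the set $B$ is nonempty and $\widetilde{V}(\delta)\cap\I_-(\gamma)=\emptyset$ for every $\delta\in B$. Every spanning tree in the definition of $\widetilde{z}^{+}_\beta(\Gamma')$ must contain at least one edge $\{\gamma'_a,\gamma'_b\}$ with $\gamma'_a\in A$ and $\gamma'_b\in B$, and its weight $F_{\gamma'_a,\gamma'_b}=\sum_{x\in\widetilde{V}(\gamma'_a),\,y\in\widetilde{V}(\gamma'_b)}J_{xy}$ runs only over $x\in\I_-(\gamma)$, $y\notin\I_-(\gamma)$. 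Cutting such an edge (paying only a harmless overcount by the number of bridging edges) and resumming the two rooted pieces by Proposition \ref{prop_tree} — each piece is a subfamily of a polymer, hence a polymer — reduces $\Sigma^{\mathrm{C}}_\gamma$ to $(1-6c_{\beta/2})^{-2}\sum_{\gamma'_a,\gamma'_b}e^{-\beta c_2\|\gamma'_a\|/4}e^{-\beta c_2\|\gamma'_b\|/4}F_{\gamma'_a,\gamma'_b}$; swapping the $x,y$ sums to the outside and bounding $\sum_{\gamma':\,x\in V(\gamma')}e^{-\beta c_2\|\gamma'\|/4}\le 2e^{c_1-\beta c_2/4}$ via Proposition \ref{Bound_on_C_0_n} then yields $\Sigma^{\mathrm{C}}_\gamma\le C_\beta F_{\I_-(\gamma)}$. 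Summing the three contributions over $\gamma\in\Gamma$ and using $1\le|\gamma|$ gives $\sum_{\Gamma'\not\sim\Gamma}\widetilde{z}^{+}_\beta(\Gamma')\le 3C_\beta\|\Gamma\|$, and taking $\beta$ large enough that $3C_\beta\le e^{-\beta c_2/16}$ concludes. I expect the main obstacle to be precisely this last manoeuvre in case \textbf{(C)} — recognizing that the bridging tree-edge must be pulled out \emph{before} summing, so that the constraint ``$\gamma'$ sits deep inside $\I_-(\gamma)$'' is converted into the surface quantity $F_{\I_-(\gamma)}$ rather than the volume $|\I_-(\gamma)|$ — together with the care needed to set up the geometric classification from Definition \ref{def_compatibility} and Condition \textbf{(A1)}.
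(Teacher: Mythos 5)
Your proposal follows essentially the same route as the paper: the same three-way classification of incompatibility (a pairwise incompatible pair of contours; $V(\gamma)\subset\I_-(\gamma')$; a contour of $\Gamma'$ inside $\I_-(\gamma)$ while $\Gamma'$ straddles it), the same resummation of $\sum_{\Gamma'\ni\gamma'}\widetilde{z}^+_\beta(\Gamma')$ via \eqref{sum_prop3.3}, and, crucially, the same extraction of a bridging tree edge $F_{\gamma_1,\gamma_2}$ between the part of $\Gamma'$ inside $\I_-(\gamma)$ and the part outside, which after swapping sums produces the surface term $F_{\I_-(\gamma)}$ rather than the volume. The only differences are cosmetic (your subcase split in case (A) and the slightly different bookkeeping of the anchoring count), so the argument matches the paper's proof.
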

    \begin{proof}
When a polymer $\Gamma'$ is not compatible with another polymer $\Gamma$, two cases may happen. The first one is when there is $\gamma \in \Gamma$ and $\gamma' \in \Gamma'$ such that $\gamma \not\sim \gamma'$. The second case is when for any pair $\gamma \in \Gamma$ and $\gamma'\in \Gamma'$ we have $\gamma\sim \gamma'$ but condition (II) of compatibility is not satisfied. For this violation to happen, we must have $\widetilde{V}(\Gamma) \cap \widetilde{V}(\Gamma) \neq \emptyset$, $V(\Gamma)\cap \I_-(\gamma')^c\neq \emptyset$, for all $\gamma' \in \Gamma'$ and $V(\Gamma')\cap \I_-(\gamma)^c\neq\emptyset$, for all $\gamma \in \Gamma$. If $V(\Gamma) \cap \I_-(\gamma') = \emptyset$ for any $\gamma'\in \Gamma'$, this implies that $V(\Gamma)\subset \widetilde{V}(\Gamma')^c$, thus $\widetilde{V}(\Gamma) \cap \widetilde{V}(\Gamma') = \emptyset$. Since this is a contradiction, there must exist $\gamma' \in \Gamma'$ with $V(\Gamma) \cap \I_-(\gamma')\neq \emptyset$. 
Again, since they form an $(M,a)$-partition, we must have one of the following options happening:
\begin{enumerate}
    \item There is $\gamma\in \Gamma$ and $\gamma'\in \Gamma'$ such that $V(\gamma) \subset \I_-(\gamma')$ and $\widetilde{V}(\Gamma) \cap \widetilde{V}(\gamma')^c\neq \emptyset$.
    \item There is $\gamma' \in \Gamma'$ and $\gamma \in \Gamma$ such that $V(\gamma') \subset \I_-(\gamma)$ and $\widetilde{V}(\Gamma') \cap \widetilde{V}(\gamma)^c\neq \emptyset$.
\end{enumerate}

Let us call $A_1(\gamma)$ the set of polymers $\Gamma'$ satisfying the item $1$ above for a given $\gamma \in \Gamma$, and $A_2(\Gamma)$ the polymers $\Gamma'$ satisfying item $2$. We stress that both items are not mutually exclusive. Thus we get the upper bound
\begin{equation}\label{eq1_main_prop}
\sum_{\polymer' \not\sim \polymer}\widetilde{z}_\beta^+(\polymer') \leq  \sum_{\ctr \in \polymer}\sum_{\ctr'\not\sim \ctr}\sum_{\Gamma'\ni \gamma'}\widetilde{z}^+_\beta(\polymer')+\sum_{\gamma \in \Gamma}\sum_{\Gamma'\in A_1(\gamma)} \widetilde{z}_\beta^+(\polymer')+\sum_{\Gamma'\in A_2(\Gamma)} \widetilde{z}_\beta^+(\polymer').
\end{equation}
For the first term, we can use Equation \eqref{sum_prop3.3}, yielding
\[
 \sum_{\ctr'\not\sim \ctr}\sum_{\Gamma'\ni \gamma'}\widetilde{z}^+_\beta(\polymer')\leq \sum_{\ctr'\not\sim \ctr}\sum_{n\geq 0}\sum_{\Gamma' \in \mathcal{E}_{n, \gamma'}}\widetilde{z}^+_\beta(\polymer') \leq \frac{1}{1-6c_{\beta/2}} \sum_{\ctr'\not\sim \gamma} e^{-\beta \frac{c_2}{4}\|\gamma'\|}.
\]
Moreover, letting $R_{\gamma'} =  M \min\{|V(\gamma')|, |V(\gamma)|\}^{\frac{a}{d+1}}$, we get
\begin{equation}\label{bound111}
\begin{split}
 \sum_{\ctr'\not\sim \gamma} e^{-\beta \frac{c_2}{4}\|\gamma'\|} \leq \sum_{x \in \Sp(\gamma)}\sum_{\gamma': d(\gamma',x)\leq R_{\gamma'}}e^{-\beta \frac{c_2}{4}\|\gamma'\|} \leq \sum_{x \in \Sp(\gamma)}\sum_{\Sp(\gamma'') \ni x}e^{-\beta \frac{c_2}{4}\|\gamma''\|}|\mathcal{C}_{\gamma''}|,
\end{split}
\end{equation}
where $\mathcal{C}_{\gamma''}\coloneqq \{\gamma' \in \mathcal{E}^+: d(\gamma',x) \leq R_{\gamma''}, \exists y \in \Z^d \text{ s.t. } \Sp(\gamma')+y = \Sp(\gamma'')\}$. But notice that we have the inclusion
\[
\mathcal{C}_{\gamma''}\subset \bigcup_{z \in B_{R_{\gamma''}}(x)}\{\gamma' \in \mathcal{C}_{\gamma''}: z \in \Sp(\gamma')\}.
\]
Hence, we get 
\begin{equation}
\begin{split}
\sum_{x \in \Sp(\gamma)}\sum_{\Sp(\gamma'') \ni x}e^{-\beta \frac{c_2}{4}\|\gamma''\|}|\mathcal{C}_{\gamma''}| &\leq (3M)^d\sum_{x \in \Sp(\gamma)}\sum_{\Sp(\gamma'') \ni x}e^{-\beta \frac{c_2}{4}\|\gamma''\|}|\gamma''||V(\gamma'')|^{\frac{ad}{d+1}}\\
&= (3M)^d|\gamma|\sum_{\Sp(\gamma'') \ni 0}e^{-\beta \frac{c_2}{4}\|\gamma''\|}|\gamma''||V(\gamma'')|^{\frac{ad}{d+1}}.
\end{split}
\end{equation}
The usual Peierls argument yields
\[
    \sum_{\Sp(\gamma'') \ni 0}e^{-\beta \frac{c_2}{4}\|\gamma''\|}|\gamma''||V(\gamma'')|^{\frac{ad}{d+1}} \leq c_{\beta/2},
\]
where the last inequality holds for $\beta > \frac{8}{c_2}\left(c_1+\frac{ad^2}{d^2-1}+1\right)$. The last two sums on the right-hand side of \eqref{eq1_main_prop} can be estimated similarly. First, we get 
\begin{equation}\label{eq_incompatibility1}
\sum_{\Gamma'\in A_1(\gamma)} \widetilde{z}_\beta^+(\polymer') \leq \sum_{V(\gamma) \subset \I_-(\gamma') }\sum_{\Gamma'\ni \gamma'}\widetilde{z}_\beta^+(\polymer') \leq \frac{1}{1-6c_{\beta/2}}\sum_{\Sp(\gamma)\subset \I_-(\gamma')}e^{-\beta \frac{c_2}{4} \|\gamma'\|}\leq \frac{c_{\beta/2}|\gamma|}{1-6c_{\beta/2}}.
\end{equation}
Observe that, for every $\Gamma'\in A_2(\Gamma)$, there will always be a pair $\{\gamma_1,\gamma_2\}\subset \Gamma'$ such that $V(\gamma_1)\subset \I_-(\gamma)$ and $V(\gamma_2) \subset \I_-(\gamma)^c$. Thus the activity can be decomposed as
\begin{equation}
\widetilde{z}_\beta^+(\Gamma')\leq\sum_{\substack{\{\gamma_1,\gamma_2\}\subset \Gamma' \\ V(\gamma_1) \subset \I_-(\gamma) \\ V(\gamma_2)\subset \I_-(\gamma)^c}}F_{\gamma_1,\gamma_2}\sum_{\substack{\Gamma_1\cupdot \Gamma_2 = \Gamma' \\ \gamma_i \in \Gamma_i, i=1,2}}\widetilde{z}_\beta^+(\Gamma_1)\widetilde{z}_\beta^+(\Gamma_2).
\end{equation}
Hence,
\begin{equation}\label{eq_almosthere}
\begin{split}
\sum_{\Gamma'\in A_2(\Gamma)} \widetilde{z}_\beta^+(\polymer') &\leq \sum_{\gamma \in \Gamma}\sum_{\substack{\Gamma'\in A_2(\Gamma) \\ \exists \gamma'\in \Gamma' \text{ s.t. } V(\gamma')\subset \I_-(\gamma)}} \widetilde{z}_\beta^+(\Gamma') \\
&\leq \sum_{\gamma\in \Gamma}\sum_{\substack{\{\gamma_1,\gamma_2\} \\ V(\gamma_1) \subset \I_-(\gamma) \\ V(\gamma_2)\subset \I_-(\gamma)^c}}F_{\gamma_1,\gamma_2}\sum_{ \gamma_i \in \Gamma_i, i=1,2}\widetilde{z}_\beta^+(\Gamma_1)\widetilde{z}_\beta^+(\Gamma_2).
\end{split}
\end{equation}
Using Equation \eqref{sum_prop3.3}, we have
\begin{equation}\label{eq2incompatibility}
\begin{split}
\sum_{\Gamma'\in A_2(\Gamma)} \widetilde{z}_\beta^+(\polymer') \leq \frac{1}{(1-6c_{\beta/2})^2}\sum_{\gamma\in \Gamma}\sum_{\substack{\{\gamma_1,\gamma_2\} \\ V(\gamma_1) \subset \I_-(\gamma) \\ V(\gamma_2)\subset \I_-(\gamma)^c}}F_{\gamma_1,\gamma_2}e^{-\beta \frac{c_2}{4}( \|\gamma_1\|+ \|\gamma_2\|)}.
\end{split}
\end{equation}
Thus to finish we just to consider the last sum. Notice that
\begin{equation}
\begin{split}
\sum_{\substack{\{\gamma_1,\gamma_2\} \\ V(\gamma_1) \subset \I_-(\gamma) \\ V(\gamma_2)\subset \I_-(\gamma)^c}}F_{\gamma_1,\gamma_2}e^{-\beta \frac{c_2}{4}( \|\gamma_1\|+ \|\gamma_2\|)} \leq \sum_{\substack{\{\gamma_1,\gamma_2\} \\ V(\gamma_1) \subset \I_-(\gamma) \\ V(\gamma_2)\subset \I_-(\gamma)^c}}\frac{J}{\dis(\gamma_1,\gamma_2)^\alpha}|V(\gamma_1)||V(\gamma_2)|e^{-\beta \frac{c_2}{4}( \|\gamma_1\|+ \|\gamma_2\|)} \\
\leq \sum_{\substack{x \in \I_-(\gamma) \\ y \in \I_-(\gamma)^c}}J_{xy}\left(\sum_{V(\gamma_1) \ni x}|V(\gamma_1)|e^{-\beta \frac{c_2}{4} \|\gamma_1\|}\right)\left(\sum_{ V(\gamma_2)\ni y}|V(\gamma_2)|e^{-\beta \frac{c_2}{4} \|\gamma_2\|}\right),
\end{split}
\end{equation}
yielding us that
\begin{equation}
\sum_{\Gamma'\in A_2(\Gamma)} \widetilde{z}_\beta^+(\polymer') \leq \left(\frac{c_{\beta/2}}{1-6c_{\beta/2}}\right)^2 \sum_{\gamma \in \Gamma}F_{\I_-(\gamma)}.
\end{equation}
 The stated inequality thus follows, when $\beta > \frac{16}{c_2}\log\left(2[(3M)^d+1]\right)$, by the explicit expression of $c_{\beta/2}$ and a trivial bound $F_{\I_-(\gamma)}+|\gamma|\leq \|\gamma\|$.
\end{proof}
\subsection{Convergence of the Cluster Expansion}

We start by stating a lemma that will allow us to conclude the convergence of the cluster expansion. The original statement and its proof can be found in \cite[Lemma 3.1]{Pfister1991LargeDA}. For the next proposition we will denote $\mathcal{P}_f(Y)$ the collection of all $X\Subset Y$, i.e., all the finite subsets of $Y$.
\begin{lemma}\label{lemma_pfister}
    Let $Y$ be a discrete countable set and a complex function $\psi:\mathcal{P}_f(Y)\rightarrow \mathbb{C}$ such that 
    \begin{equation*}
    \sum_{X\Subset Y}|\psi(X)|<\infty.
    \end{equation*}
    Then it follows that
    \begin{equation*}
        \exp\left(\sum_{X\Subset Y} \psi(X)\right) = 1+\sum_{X\Subset Y} \Psi(X),
    \end{equation*}
    where $\Psi:\mathcal{P}_f(Y)\rightarrow \mathbb{C}$ are defined as
    \[
    \Psi(X) = \sum_{k= 1}^{|Y|}\frac{1}{k!}\sum_{\substack{(P_1, \dots ,P_k) \\ \cupdot P_j =X}}\prod_{j=1}^k \psi(P_j).
    \]
\end{lemma}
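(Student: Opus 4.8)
The plan is to follow Pfister's classical argument: prove the identity first when $Y$ is finite, by a direct rearrangement of the exponential series, and then extend it to arbitrary countable $Y$ by truncation together with dominated convergence, the latter being made possible precisely by the hypothesis $\sum_{X\Subset Y}|\psi(X)|<\infty$.

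First I would assume $Y$ is finite and set $a:=\sum_{X\Subset Y}\psi(X)\in\mathbb C$. Expanding the exponential and writing each power of $a$ as a sum over ordered tuples,
\[
\exp(a)=\sum_{k\ge 0}\frac{a^k}{k!}=1+\sum_{k\ge 1}\frac{1}{k!}\sum_{(P_1,\dots,P_k)}\ \prod_{j=1}^k\psi(P_j),
\]
where for each $k$ the inner sum runs over all ordered $k$-tuples of elements of $\mathcal P_f(Y)$; the double series converges absolutely because $\sum_k|a|^k/k!<\infty$. Each tuple $(P_1,\dots,P_k)$ determines the set $X:=P_1\cup\dots\cup P_k\in\mathcal P_f(Y)$ it covers, and conversely, for every fixed $X$ each tuple covering exactly $X$ is counted with weight $1/k!$. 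Rearranging the absolutely convergent series according to the value of $X$ then yields
\[
\exp(a)=1+\sum_{X\Subset Y}\ \sum_{k\ge 1}\frac{1}{k!}\sum_{\substack{(P_1,\dots,P_k)\\ P_1\cup\dots\cup P_k=X}}\prod_{j=1}^k\psi(P_j)\ =\ 1+\sum_{X\Subset Y}\Psi(X),
\]
which is the asserted formula in the finite case, and shows in passing that each $\Psi(X)$ is a well-defined absolutely convergent series.

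For general countable $Y$, I would fix an exhaustion $Y_1\subset Y_2\subset\cdots$ by finite sets with $\bigcup_n Y_n=Y$, apply the finite case to $\psi$ restricted to $\mathcal P_f(Y_n)$, and then let $n\to\infty$. On the left-hand side, $\sum_{X\Subset Y_n}\psi(X)\to\sum_{X\Subset Y}\psi(X)$ by absolute convergence and $\exp$ is continuous. On the right-hand side, writing $\Psi_n$ for the quantity associated with $Y_n$ and $\Psi^{\mathrm{abs}}$ for the analogous quantity obtained by replacing $\psi$ by $|\psi|$ in the defining formula, one has the uniform domination $|\Psi_n(X)|\le\Psi^{\mathrm{abs}}(X)$ together with
\[
\sum_{X\Subset Y}\Psi^{\mathrm{abs}}(X)\ \le\ \sum_{k\ge 1}\frac{1}{k!}\Bigl(\sum_{X\Subset Y}|\psi(X)|\Bigr)^{k}\ =\ \exp\Bigl(\sum_{X\Subset Y}|\psi(X)|\Bigr)-1\ <\ \infty,
\]
so dominated convergence gives $\sum_{X\Subset Y_n}\Psi_n(X)\to\sum_{X\Subset Y}\Psi(X)$. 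Passing to the limit in the finite identity completes the proof.

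The content of the argument is essentially the combinatorial bookkeeping in the first step — verifying that grouping the ordered tuples by the set they cover is a genuine reindexing of an absolutely convergent series — together with the justification of each interchange of summation via Fubini for series, which is exactly where the hypothesis $\sum_{X}|\psi(X)|<\infty$ is used. I expect this bookkeeping, rather than any analytic subtlety, to be the only real obstacle; the truncation argument in the second step is routine once the bound $\sum_{X}\Psi^{\mathrm{abs}}(X)\le e^{\sum_{X}|\psi(X)|}-1$ is in hand.
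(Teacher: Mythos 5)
For this lemma the paper offers no proof of its own; it simply cites Pfister's Lemma 3.1, and your argument is precisely the standard rearrangement proof behind that citation: expand $\exp\bigl(\sum_X\psi(X)\bigr)$, write each power as a sum over ordered tuples, and regroup the absolutely convergent family by the set covered. The analytic bookkeeping you give (absolute convergence of the double series, the bound $\sum_X\Psi^{\mathrm{abs}}(X)\le e^{\sum_X|\psi(X)|}-1$, truncation plus dominated convergence) is sound; in fact the truncation step is superfluous, since for countable $Y$ the same absolute summability already licenses the regrouping directly, but that is only a matter of economy.

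There is, however, one genuine mismatch that you pass over silently. Your regrouping produces, for each $X$, the sum over ordered $k$-tuples $(P_1,\dots,P_k)$ whose \emph{ordinary} union is $X$, with $k$ unbounded and the $P_j$ allowed to overlap or even coincide. The formula for $\Psi(X)$ in the statement instead requires $\cupdot P_j=X$, i.e.\ pairwise disjoint parts, which is also why its $k$-sum stops at $|Y|$ (effectively $|X|$). These are different objects, and with the disjoint-union reading the claimed identity fails for general $\psi$: take $Y=\{y\}$ and $\psi(\{y\})=a$, so the right-hand side is $1+a$ while the left-hand side is $e^{a}$. So as written your proof establishes the correct identity with $\cup$ in place of $\cupdot$ (which is what Pfister's lemma actually asserts, the $\cupdot$ in the paper's display being at odds with it), not the statement as printed; you need either to flag and correct this discrepancy explicitly, or to explain why the non-disjoint and repeated tuples could be discarded—which, for arbitrary $\psi$, they cannot.
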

Now, we introduce the traditional Ursell functions
    \begin{equation}\label{ursell}
    \phi^T(X)= \sum_{G \in \mathcal{G}_{X}}(-1)^{|E(G)|} \prod_{\{\Gamma,\Gamma'\} \in E(G)} \mathbbm{1}_{\Gamma \not\sim \Gamma'},
    \end{equation}
    where we assume that when $|X|=1$, the product above is deemed as $1$. The following Lemma is a consequence of the Mayer trick.
    \begin{lemma}
        Let $X\subset \mathcal{E}_\Lambda^+$. Then, if we define 
        \begin{equation}\label{psi}
            \psi(X) = \phi^T(X)\prod_{\Gamma \in X}z_\beta^+(\Gamma),
        \end{equation}
        we have 
        \begin{equation*}
            \Psi(X) = \prod_{\Gamma \in X}z_\beta^+(\Gamma) \prod_{\{\Gamma,\Gamma'\}\subset X}\mathbbm{1}_{\Gamma \sim \Gamma'}.
        \end{equation*}
    \end{lemma}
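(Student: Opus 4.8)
The plan is to recognize that this is a standard "Mayer trick" identity relating the partition function written in terms of Ursell functions $\phi^T$ to the sum over compatible configurations. The key observation is that the polymer activities $z_\beta^+(\Gamma)$ satisfy the summability hypothesis of Lemma~\ref{lemma_pfister} — this follows from Proposition~\ref{important_prop} and Corollary~\ref{main_corol_1}, which give $\sum_{V(\Gamma)\ni 0}\widetilde z_\beta^+(\Gamma)\le 2e^{-\beta c_2/8}$, hence $\sum_{X\Subset\mathcal{E}_\Lambda^+}|\psi(X)|<\infty$ for $\beta$ large. With that in hand, the statement is a purely algebraic/combinatorial identity, independent of the specific form of the weights.

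First I would apply Lemma~\ref{lemma_pfister} to $\psi(X)=\phi^T(X)\prod_{\Gamma\in X}z_\beta^+(\Gamma)$, which reduces the claim to verifying that the quantity $\Psi(X)$ produced by that lemma equals $\prod_{\Gamma\in X}z_\beta^+(\Gamma)\prod_{\{\Gamma,\Gamma'\}\subset X}\mathbbm{1}_{\Gamma\sim\Gamma'}$. Since each $\Psi(X)$ is a sum over ordered set partitions $(P_1,\dots,P_k)$ of $X$ of the products $\prod_j\psi(P_j)$, and since $\prod_{\Gamma\in X}z_\beta^+(\Gamma)=\prod_j\prod_{\Gamma\in P_j}z_\beta^+(\Gamma)$ factorizes over any partition, the activity factors pull out and the identity reduces to the purely graph-theoretic statement
\[
\sum_{k=1}^{|X|}\frac{1}{k!}\sum_{\substack{(P_1,\dots,P_k)\\ \cupdot P_j=X}}\prod_{j=1}^k\phi^T(P_j)=\prod_{\{\Gamma,\Gamma'\}\subset X}\mathbbm{1}_{\Gamma\sim\Gamma'}.
\]
This is exactly the classical fact that the exponential of the sum of connected graph weights equals the sum over all graphs: expand $\prod_{\{\Gamma,\Gamma'\}\subset X}\mathbbm{1}_{\Gamma\sim\Gamma'}=\prod_{\{\Gamma,\Gamma'\}\subset X}(1+(\mathbbm{1}_{\Gamma\sim\Gamma'}-1))=\sum_{G\subset K_X}\prod_{\{\Gamma,\Gamma'\}\in E(G)}(\mathbbm{1}_{\Gamma\sim\Gamma'}-1)$, group graphs by their connected components (which partition the vertex set $X$), and note that summing the edge-product over all graphs with a fixed vertex set $P$ gives precisely $\phi^T(P)$ after writing $\mathbbm{1}_{\Gamma\sim\Gamma'}-1=-\mathbbm{1}_{\Gamma\not\sim\Gamma'}$ and comparing with~\eqref{ursell}; the factor $1/k!$ and the sum over ordered tuples account for the unordered partition into connected components.

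I would therefore structure the proof as: (i) cite Lemma~\ref{lemma_pfister}, having checked its hypothesis via the activity bounds from the previous subsection; (ii) observe the factorization of $\prod z_\beta^+$ over partitions to isolate the combinatorial kernel; (iii) prove the graph-theoretic identity above by the standard connected-component decomposition of the complete graph on $X$. The main obstacle is largely bookkeeping: carefully matching the ordered-tuple/$k!$ convention of Lemma~\ref{lemma_pfister} with the unordered decomposition into connected components, and checking the sign conventions in~\eqref{ursell} (the $(-1)^{|E(G)|}$ versus $\mathbbm{1}_{\Gamma\not\sim\Gamma'}=-( \mathbbm{1}_{\Gamma\sim\Gamma'}-1)$) so that no stray signs survive. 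No deep input is needed beyond what is already established; this is the step that turns the abstract Pfister lemma into the concrete polymer-gas representation used in the rest of the paper.
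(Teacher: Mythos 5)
Your proof is correct and follows exactly the route the paper intends: the statement is the standard Mayer-trick identity, obtained by factoring the activities out of the partition sum and using the connected-component decomposition of $\prod_{\{\Gamma,\Gamma'\}\subset X}\bigl(1+(\mathbbm{1}_{\Gamma\sim\Gamma'}-1)\bigr)$ together with the sign convention $\mathbbm{1}_{\Gamma\sim\Gamma'}-1=-\mathbbm{1}_{\Gamma\not\sim\Gamma'}$ matching \eqref{ursell}. One small remark: checking the summability hypothesis of Lemma \ref{lemma_pfister} is unnecessary here, since $\Psi(X)$ is defined for each finite $X$ by an explicit finite formula and the claimed identity is purely combinatorial; the analytic bounds only enter later when the exponential identity is invoked.
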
    
The next result was proved first in 1967 by Penrose in \cite{Penrose1963} (see also \cite{Pfister1991LargeDA}), and is paramount for the proof of convergence of the cluster expansion. In its current formulation, the theorem is sufficient for our purposes, but a general discussion on the so-called partition schemes can be found in the recent monograph by Procacci \cite{procacci2023cluster} for cluster expansions, or in the paper of Scott-Sokal \cite{Scott2005}.

    \begin{theorem}[\textbf{The tree-graph bound}]\label{tree_graph_bound} Let $X\Subset \mathcal{E}^+$. Then it holds that
        \begin{equation*}
            |\phi^T(X)|\leq \sum_{T\in \mathcal{T}_X}\prod_{\{\Gamma,\Gamma'\}\in E(T)}\mathbbm{1}_{\Gamma\not\sim\Gamma'}.
        \end{equation*}
    \end{theorem}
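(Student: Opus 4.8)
The plan is to prove this by the classical Penrose partition-scheme argument. The statement is purely combinatorial: $\phi^T(X)$ is an alternating sum over connected graphs on the vertex set $X$, where two polymers $\Gamma,\Gamma'$ are joined by an edge exactly when $\Gamma\not\sim\Gamma'$. So the relation $\not\sim$ defines a graph $H$ on $X$, and $\phi^T(X)=\sum_{G\in\mathcal{G}_H}(-1)^{|E(G)|}$ where $\mathcal{G}_H$ ranges over connected spanning subgraphs of $H$ (if $H$ is disconnected, $\phi^T(X)=0$ and there is nothing to prove, since there are no spanning trees either). The goal is then the standard inequality $\bigl|\sum_{G\in\mathcal{G}_H}(-1)^{|E(G)|}\bigr|\le\#\{\text{spanning trees of }H\}$.

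The key steps, in order: First I would fix a reference spanning tree $T_0$ of $H$ rooted at some vertex, which induces a partition scheme — that is, a map $G\mapsto m(G)$ from connected spanning subgraphs $G\supset$ (some spanning tree) to spanning trees, with the property that $m$ is a ``contraction-like'' retraction. Concretely, following Penrose, one uses $T_0$ to linearly order the vertices by generation and left-to-right position, and for each connected $G$ one selects the edges that belong to the breadth-first (or a suitable canonical) spanning structure of $G$; call this tree $\tau(G)$. Second, I would verify the two Penrose axioms: (P1) for every connected $G$ the selected tree $\tau(G)$ is a subgraph of $G$, and (P2) any edge $e\in G\setminus\tau(G)$ can be characterized so that $G$ and $G\cup\{e'\}$ for any $e'$ in a prescribed set both map to the same tree; this is what makes the cancellation work. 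Third, with the partition scheme in hand, one groups the sum over $G$ according to the fibre $\tau^{-1}(T)$ for each spanning tree $T$: on each fibre the alternating sum telescopes, since each fibre is a product of intervals $\{0,1\}^{(\text{non-tree edges})}$ with a fixed number of forced absences, giving $\bigl|\sum_{G:\tau(G)=T}(-1)^{|E(G)|}\bigr|\le 1$. Summing over all spanning trees $T$ of $H$ yields the bound $|\phi^T(X)|\le|\mathcal{T}_X(H)|=\sum_{T\in\mathcal{T}_X}\prod_{\{\Gamma,\Gamma'\}\in E(T)}\mathbbm{1}_{\Gamma\not\sim\Gamma'}$, which is exactly the claim.

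Since the statement and its proof are entirely classical, I would in fact prefer to simply cite it: the cleanest route is to invoke Penrose's identity as presented in \cite{Penrose1963}, or the modern treatment of partition schemes in \cite{procacci2023cluster} and \cite{Scott2005}, noting that the abstract polymer setting there applies verbatim once $\not\sim$ is recognized as the incompatibility relation. The main (and essentially only) obstacle is bookkeeping: one must make sure the partition scheme is well-defined on the set of connected graphs that actually arise (those containing at least one spanning tree), and that the sign cancellation on each fibre is correctly accounted for — but no new idea beyond Penrose's original construction is needed, so the honest thing is to present the reduction to $\sum_{G\in\mathcal{G}_H}(-1)^{|E(G)|}$ and then defer to the cited partition-scheme lemma.
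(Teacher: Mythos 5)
Your proposal is correct and matches what the paper actually does: the paper gives no proof of Theorem \ref{tree_graph_bound}, but simply invokes the classical result of Penrose \cite{Penrose1963} (see also \cite{Pfister1991LargeDA}, \cite{procacci2023cluster}, \cite{Scott2005}), exactly the citation route you end up recommending. Your sketch of the partition-scheme cancellation is the standard underlying argument and contains no gap relevant to the paper's use of the bound.
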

    
    \begin{lemma}\label{main_lemma}
        For $\beta$ large enough it holds that
        \begin{equation}\label{eq_0_main_lemma}
            \sum_{\substack{X\subset \mathcal{E}^+ \\ x\in V(X), |X|=n}}|\psi(X)| \leq 2 e^{-\beta \frac{c_2}{16}n},
        \end{equation}
        for any $x \in \Z^d$ and $n\geq 1$.
    \end{lemma}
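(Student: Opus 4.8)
The plan is to control the sum in \eqref{eq_0_main_lemma} by the tree-graph bound (Theorem \ref{tree_graph_bound}) and then reduce everything to the two a priori estimates already established: Corollary \ref{main_corol_1} (sum over polymers covering a fixed vertex is exponentially small) and Proposition \ref{main_prop} (sum over polymers incompatible with a given one is controlled by $\|\Gamma\|$). First, I would invoke Proposition \ref{important_prop} to replace each activity $z^+_\beta(\Gamma)$ by $\widetilde z^+_\beta(\Gamma)$, so that $|\psi(X)|\le |\phi^T(X)|\prod_{\Gamma\in X}\widetilde z^+_\beta(\Gamma)$. Then Theorem \ref{tree_graph_bound} gives
\[
\sum_{\substack{X\subset\mathcal{E}^+\\ x\in V(X),\,|X|=n}}|\psi(X)|
\le \sum_{\substack{X\subset\mathcal{E}^+\\ x\in V(X),\,|X|=n}}\ \sum_{T\in\mathcal{T}_X}\ \prod_{\{\Gamma,\Gamma'\}\in E(T)}\mathbbm{1}_{\Gamma\not\sim\Gamma'}\ \prod_{\Gamma\in X}\widetilde z^+_\beta(\Gamma).
\]

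Next I would organize the sum over pairs $(X,T)$ as a sum over labelled rooted trees on $\{0,1,\dots,n-1\}$ together with an assignment of a polymer to each vertex, rooting the tree at a vertex whose polymer covers $x$; the standard combinatorial factor is absorbed as usual (the number of ways to root/label is at most $n\cdot n!/n! $ type bookkeeping, handled exactly as in the proof of Proposition \ref{prop_tree}). I then peel the tree generation by generation starting from the leaves: summing over the polymer at a leaf $\Gamma'$ attached to its parent $\Gamma$ contributes a factor $\sum_{\Gamma'\not\sim\Gamma}\widetilde z^+_\beta(\Gamma')\le e^{-\beta c_2/16}\|\Gamma\|$ by Proposition \ref{main_prop}. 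Using $x^{d}\le d!\,e^{x}$ (or simply $\|\Gamma\|\le e^{\|\Gamma\|}$) this $\|\Gamma\|$ factor is reabsorbed into a fraction of the exponential weight $e^{-\beta\frac{c_2}{2}\|\Gamma\|}$ sitting on the parent, exactly the mechanism used in Proposition \ref{prop_tree}; after exhausting all non-root generations one is left with a sum over the root polymer $\Gamma_0$ with $x\in V(\Gamma_0)$ of $\widetilde z^+_\beta(\Gamma_0)$ times a constant to the power $n$, and this is bounded by Corollary \ref{main_corol_1}. Collecting the constants, each of the $n$ peeled vertices produces a factor which for $\beta$ large is at most $e^{-\beta c_2/16}$ (after accounting for the Cayley-formula degree sums $\prod(\deg(k)-1)!$ against $|\mathcal{T}_{n}(d_0,\dots)|$ and the resulting $\binom{2n-3}{n}$-type binomial, dominated by $6^n$), and the root contributes the bounded geometric series giving the extra factor $2$.

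Concretely I would first prove the one-vertex-removal estimate as a lemma: for any polymer $\Gamma$,
\[
\sum_{\Gamma'\not\sim\Gamma}\widetilde z^+_\beta(\Gamma')\,e^{\frac{\beta c_2}{8}\|\Gamma\|\,\lambda}\le (\text{small})
\]
is not needed in that form; rather I use Proposition \ref{main_prop} directly in the form $\sum_{\Gamma'\not\sim\Gamma}\widetilde z^+_\beta(\Gamma')\le e^{-\beta c_2/16}\sum_{\gamma\in\Gamma}\|\gamma\|$ and distribute the $\|\gamma\|$ back onto the vertices of $\Gamma$. The bookkeeping of which parent-vertex receives the $\|\gamma\|$ weight, and ensuring each vertex is charged a bounded number of times (at most $\deg+1$), is identical to the generation-peeling argument in Proposition \ref{prop_tree}, so I would explicitly refer the reader there rather than repeat it. Finally, summing $n\ge1$ if desired, or just stating the per-$n$ bound, gives \eqref{eq_0_main_lemma}.

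The main obstacle is the combinatorial accounting in the tree-peeling: one must make sure the degree factors $(\deg(k)-1)!$ generated when erasing a generation are exactly matched by the Cayley-formula count $|\mathcal{T}_n(d_0,\dots,d_{n-1})|=(n-2)!/\prod(d_k-1)!$, leaving a clean binomial coefficient bounded by $6^n$, and simultaneously that each vertex's weight $e^{-\beta\frac{c_2}{2}\|\gamma\|}$ is split so that a fixed fraction ($\ge c_2/16$ worth) survives every time it serves as a parent — this is where the factor $c_2/16$ in the exponent (rather than $c_2/4$) comes from, matching the loss already incurred in Proposition \ref{main_prop}. Since all of these steps have been carried out verbatim in Proposition \ref{prop_tree} and Corollary \ref{main_corol_1}, the proof is essentially an assembly of those pieces plus the tree-graph inequality.
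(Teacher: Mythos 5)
Your proposal follows essentially the same route as the paper's own (sketched) proof: bound $|\psi(X)|$ via Proposition \ref{important_prop} and the tree-graph bound, organize the sum as rooted trees with the root polymer covering $x$, peel generations from the leaves using Proposition \ref{main_prop} (reabsorbing the $\|\Gamma\|$ factors into the parents' exponential weights exactly as in Proposition \ref{prop_tree}, with the Cayley-formula bookkeeping), and finish with a Corollary \ref{main_corol_1}-type sum over the root. The combinatorial details you defer to Proposition \ref{prop_tree} are precisely the ones the paper also defers, so the assembly is sound.
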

    \begin{proof}[Sketch of the Proof:]
         Before we start, let us distinguish two cases. If $|X|=1$, then we get that by using Equation \eqref{ursell} and Corollary \ref{main_corol_1}
        \begin{equation}\label{eq_1_main_lemma}
            \sum_{\substack{X\subset \mathcal{E}^+ \\ x \in V(X), |X|=1}}|\psi(X)|\leq  \sum_{V(\polymer) \ni 0}\widetilde{z}^+_\beta(\polymer) \leq 2e^{-\beta \frac{c_2}{8}}.
        \end{equation}
        Therefore, we will always assume that $|X|\geq 2$. Using Equations \eqref{psi} and \eqref{def_ztilde} together with Theorem \ref{tree_graph_bound} give us the upper bound
        \begin{equation}\label{eq_2_main_lemma}
            |\psi(X)|\leq \sum_{T\in \mathcal{T}_X}\prod_{\Gamma\in T}\widetilde{z}^+_{\beta}(\Gamma)\prod_{\{\Gamma,\Gamma'\}\in E(T)}\mathbbm{1}_{\Gamma \not\sim \Gamma'}.
        \end{equation}
      Hence,
        \begin{equation}\label{eq_3_main_lemma}
           \sum_{\substack{X\subset \mathcal{E}^+ \\ x\in V(X), |X|=n+1}}|\psi(X)| \leq \sum_{V(\Gamma_0)\ni x}\hspace{-0.3cm}\widetilde{z}_\beta^+(\Gamma_0)\left(\frac{1}{n!}\sum_{T\in \mathcal{T}^0_{n+1}}\sum_{\substack{\Gamma_k \\ 1\leq k\leq n}}\prod_{k=1}^n \widetilde{z}_\beta^+(\Gamma_k)\prod_{\{i,j\}\in E(T)}\mathbbm{1}_{\Gamma_i\not\sim\Gamma_j}\right),
        \end{equation}
        where, again, we are labeling the vertices of the trees in $\mathcal{T}^0_{n+1}$ by $\{0, \ldots, n\}$, with the vertex $0$ being the root. By labeling the trees in the same way as we did in Proposition \ref{prop_tree}, we get
\begin{equation}\label{eq4_main_lemma}
    \prod_{k=1}^n \widetilde{z}_\beta^+(\Gamma_k)\prod_{\{i,j\}\in E(T)}\mathbbm{1}_{\Gamma_i\not\sim\Gamma_j} = \prod_{i=1}^\ell \prod_{j=1}^{m_i}\left( \widetilde{z}_\beta^+(\Gamma_{i,j})\mathbbm{1}_{\Gamma_{i-1,j'}\not\sim\Gamma_{i,j}}\right) ,
\end{equation}
where we recall that $j'$ is the unique point in generation $i-1$ connected to $\Gamma_{i, j}$ and the product term above corresponding to $m_0$ is $1$. By Proposition \ref{main_prop}, summing the terms in \eqref{eq4_main_lemma} over all the polymers $\Gamma_{i,j}$ we get
\begin{equation*}
    \begin{split}
       \sum_{\substack{\Gamma_{i,j} \\ i=1,\dots,\ell-1 \\ j=1,\dots , m_i}}\hspace{-0.3cm}\prod_{i=1}^{\ell -1} \prod_{j=1}^{m_i}\left( \widetilde{z}_\beta^+(\Gamma_{i,j})\mathbbm{1}_{\Gamma_{i-1,j'}\not\sim\Gamma_{i,j}}\right)\sum_{\substack{\Gamma_{\ell,j} \\ j=1,\dots,m_\ell}}\hspace{-0.3cm}\prod_{j=1}^{m_\ell}\widetilde{z}_\beta^+(\Gamma_{\ell,j})\mathbbm{1}_{\Gamma_{\ell-1,j'}\not\sim\Gamma_{\ell,j}},
    \end{split}
\end{equation*}
where we have put to the right the terms corresponding to the last generation $\ell$, then one proceeds in a similar way as we did in Proposition \ref{prop_tree} to get
\begin{equation}\label{eq5_main_lemma}
   \begin{split}
        &\frac{1}{n!}\sum_{T\in \mathcal{T}^0_{n+1}}\sum_{\substack{\Gamma_k \\ 1\leq k\leq n}}\prod_{k=1}^n \widetilde{z}_\beta^+(\Gamma_k)\prod_{\{i,j\}\in E(T)}\mathbbm{1}_{\Gamma_i\not\sim\Gamma_j}\leq e^{-\beta\frac{c_2}{16}n}e^{\beta \frac{c_2}{4}\|\Gamma_0\|}.
    \end{split}
\end{equation}
Plugging this again into \eqref{eq_3_main_lemma} yields the the desired result. 
\end{proof}

    \begin{proposition}
    The logarithm of the partition function of Proposition \ref{gas_polymer_partition} can be written as

    \begin{equation*}
        \log \Ztil_{\Lambda, \beta} = 1+ \sum_{\emptyset\neq X \subset \mathcal{E}^+_{\Lambda}} \phi^T(X) \prod_{\Gamma\in X} z_\beta^+(\Gamma).
    \end{equation*}
    
 \end{proposition}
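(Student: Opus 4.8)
The plan is to read off this identity by combining the polymer expansion of Proposition~\ref{gas_polymer_partition}, the Mayer-trick lemma that identifies $\Psi(X)$ with $\prod_{\Gamma\in X} z^+_\beta(\Gamma)\prod_{\{\Gamma,\Gamma'\}\subset X}\mathbbm{1}_{\Gamma\sim\Gamma'}$ for $\psi(X)=\phi^T(X)\prod_{\Gamma\in X}z^+_\beta(\Gamma)$, and Pfister's abstract exponentiation Lemma~\ref{lemma_pfister}. Concretely, Proposition~\ref{gas_polymer_partition} together with the Mayer-trick lemma gives
\[
\Ztil_{\Lambda,\beta} \;=\; 1+\sum_{\emptyset\neq X\subset\mathcal{E}^+_\Lambda}\prod_{\Gamma\in X}z^+_\beta(\Gamma)\prod_{\{\Gamma,\Gamma'\}\subset X}\mathbbm{1}_{\Gamma\sim\Gamma'} \;=\; 1+\sum_{X\Subset\mathcal{E}^+_\Lambda}\Psi(X),
\]
with $\psi,\Psi$ exactly as in the statement of Lemma~\ref{lemma_pfister}. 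Since $\Lambda$ is finite the index set $\mathcal{E}^+_\Lambda$ is finite and all these sums are finite; the only genuine hypothesis to discharge before invoking Lemma~\ref{lemma_pfister} is therefore the absolute summability $\sum_{X\Subset\mathcal{E}^+_\Lambda}|\psi(X)|<\infty$.

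To verify it, I would anchor each nonempty $X$ at a point of its (nonempty, finite) support $V(X)=\bigcup_{\Gamma\in X}V(\Gamma)$ and stratify by cardinality, which yields
\[
\sum_{\emptyset\neq X\subset\mathcal{E}^+_\Lambda}|\psi(X)| \;\leq\; \sum_{x\in\Lambda}\;\sum_{n\geq 1}\;\sum_{\substack{X\subset\mathcal{E}^+\\ x\in V(X),\,|X|=n}}|\psi(X)| \;\leq\; |\Lambda|\sum_{n\geq 1}2e^{-\beta\frac{c_2}{16}n},
\]
where the last inequality is precisely Lemma~\ref{main_lemma}; the right-hand side is finite for all $\beta$ large enough. (For the infinite-volume statement one would instead keep the anchor point fixed at $0$ and use translation invariance, obtaining a bound on $\sum_{X\ni 0}|\psi(X)|$ uniform in $\Lambda$; but for the present finite-volume proposition $|\Lambda|<\infty$ already suffices.)

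With the summability hypothesis in hand, Lemma~\ref{lemma_pfister} applied with $Y=\mathcal{E}^+_\Lambda$ gives
\[
\exp\!\Bigl(\sum_{X\Subset\mathcal{E}^+_\Lambda}\psi(X)\Bigr) \;=\; 1+\sum_{X\Subset\mathcal{E}^+_\Lambda}\Psi(X) \;=\; \Ztil_{\Lambda,\beta}.
\]
Since $\Ztil_{\Lambda,\beta}>0$ and the exponent is a (finite) sum, I may take logarithms and conclude
\[
\log\Ztil_{\Lambda,\beta} \;=\; \sum_{X\Subset\mathcal{E}^+_\Lambda}\psi(X) \;=\; \sum_{\emptyset\neq X\subset\mathcal{E}^+_\Lambda}\phi^T(X)\prod_{\Gamma\in X}z^+_\beta(\Gamma),
\]
which is the asserted formula.

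The main obstacle — and really the only nonformal content — is the bound of Lemma~\ref{main_lemma} used in the second step: that is where the whole apparatus of the paper enters, through the exponential entropy bound for contours (Proposition~\ref{Bound_on_C_0_n}), the cost-of-erasing estimate (Proposition~\ref{Prop: Cost_erasing_contour}) and the ensuing reduction $z^+_\beta(\Gamma)\leq\widetilde{z}^+_\beta(\Gamma)$ (Proposition~\ref{important_prop}), the tree-summation estimates (Lemma~\ref{F_Vol}, Propositions~\ref{prop_tree} and~\ref{main_prop}, Corollary~\ref{main_corol_1}), and the tree-graph bound (Theorem~\ref{tree_graph_bound}). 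Granting those, the present proposition is a routine assembly; the only care needed is to fix $\beta$ large enough that every lemma invoked is simultaneously valid and that the geometric series $\sum_{n\geq 1}e^{-\beta c_2 n/16}$ converges.
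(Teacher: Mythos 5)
Your proposal is correct and follows essentially the same route as the paper: bound $\sum_{\emptyset\neq X\subset\mathcal{E}^+_\Lambda}|\psi(X)|$ by anchoring and using Lemma \ref{main_lemma} (the paper writes this bound as $2c_{\beta/4}|\Lambda|$, matching your $|\Lambda|\sum_{n\geq 1}2e^{-\beta c_2 n/16}$), then invoke the Mayer-trick identification of $\Psi(X)$ together with Proposition \ref{gas_polymer_partition} and Pfister's Lemma \ref{lemma_pfister} to exponentiate. Note only that your final identity $\log\Ztil^+_{\Lambda,\beta}=\sum_{\emptyset\neq X\subset\mathcal{E}^+_\Lambda}\psi(X)$ is the mathematically correct form; the extra ``$1+$'' in the proposition as printed is a misplacement carried over from the expansion of $\Ztil^+_{\Lambda,\beta}$ itself.
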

   \begin{proof}
       Using Lemma \ref{main_lemma}, we have that for $\beta$ large enough it holds
       \begin{equation}
       \sum_{\emptyset \neq X\subset \mathcal{E}_\Lambda^+} |\psi(X)| = \sum_{n\geq 1}\sum_{\substack{X\subset \mathcal{E}_\Lambda^+ \\ |X|=n}}|\psi(X)| \leq 2c_{\beta/4}|\Lambda|.
       \end{equation}
       Therefore, Lemma \ref{lemma_pfister} implies the desired result. 
   \end{proof}

   Due to absolute convergence, a consequence of the proposition above is that the free energy of the normalized system can be written as
   \[
   f_\beta =\lim_{n\rightarrow \infty}\frac{1}{|\Lambda_n|}\log \widetilde{Z}_{\Lambda_n,\beta}^+ =  \sum_{\substack{X\Subset \mathcal{E}^+ \\ 0 \in V(X)}}\psi(X),
   \]
   for $\beta$ large enough, where the convergence is with respect to a van Hove sequence $\{\Lambda_n\}_{n\geq 1}$, and $\psi(X)$ is the function defined in \eqref{psi}.
   
\section{Decay of Correlations}\label{correlations}

In this section, we will estimate the decay of the two-point truncated correlation function. Our strategy consists in providing improved versions of some of the estimates given in Section 2. The next proposition is a straightforward generalization of Proposition \ref{gas_polymer_partition}, but with new weights.

The \emph{truncated two-point correlation function} at the finite volume $\Lambda$ between points $x_1$ and $x_2$ in $\mathbb{Z}^d$ is defined as 
\begin{equation}\label{correlation}
\langle\sigma_{x_1} ;\sigma_{x_2}\rangle_{\Lambda,\beta,\mathbf{h}}^+ \coloneqq  \langle\sigma_{x_1} \sigma_{x_2}\rangle_{\Lambda,\beta, \mathbf{h}}^+ -\langle\sigma_{x_1} \rangle_{\Lambda,\beta, \mathbf{h}}^+\langle\sigma_{x_2}\rangle_{\Lambda,\beta, \mathbf{h}}^+.
\end{equation}
More generally, we can also define the correlation of two bounded measurable functions $f,g:\Omega \rightarrow \mathbb{R}$ as 
\[
\langle f ; g\rangle_{\beta,\mathbf{h}}^+ \coloneqq \langle f g\rangle_{\beta, \mathbf{h}}^+ - \langle f\rangle_{\beta, \mathbf{h}}^+\langle g\rangle_{\beta, \mathbf{h}}^+.
\]

For a given collection of $n$ distinct points $x_1$, \dots, $x_n$, consider the Hamiltonian $H_{\Lambda,\mathbf{h}}^+$ with a complex magnetic field $\mathbf{h}=\{h_{x_k}\}_{1\leq k \leq n}$ as in Equation \eqref{hamiltonian_spin} and consider the partition functions
\[
Z_{\Lambda,\beta,\mathbf{h}}^+ = \sum_{\sigma \in \Omega_\Lambda^+}e^{-\beta H_{\Lambda,\mathbf{h}}^+(\sigma)} \quad \text{ and } \quad \Ztil^{+}_{\Lambda, \beta,\mathbf{h}} \coloneqq e^{\beta H_{\Lambda,\mathbf{h}}^+(\sigma_+)}Z_{\Lambda,\beta,\mathbf{h}}^+.
\]
Deriving the logarithm of the partition function with respect to the magnetic field variables yields us the \emph{n-point truncated correlation},
\begin{equation}\label{n-point}
    \langle \sigma_{x_1};\dots;\sigma_{x_n}\rangle_{\Lambda,\beta}^+ \coloneqq \beta^{-n}\frac{\partial}{\partial h_{x_1}}\ldots \frac{\partial}{\partial h_{x_n}}\log Z_{\Lambda,\beta,\mathbf{h}}^+\Bigr|_{\mathbf{h}=0}.
\end{equation}

\begin{proposition}\label{gas_polymer_partition_field}
    Let $\Lambda \Subset \Z^d$. Then it holds
\begin{equation}
    \Ztil^{+}_{\Lambda,\beta,\mathbf{h}} = 1 + \sum_{ \emptyset\neq \polymerb \subset \mathcal{E}^+_{\Lambda}}\prod_{\polymer \in \polymerb}z_{\beta,\mathbf{h}}^+(\polymer) \prod_{\left\{\polymer, \polymer'\right\}\subset X} \mathbbm{1}_{\polymer \sim \polymer'}.
\end{equation}
The quantity $\Ztil^{+}_{\beta,\mathbf{h},\Lambda}$ can be seen as the partition function of a gas of polymers with activity
\begin{equation}
    z_{\beta,\mathbf{h}}^+(\polymer) = K_\mathbf{h}(\polymer)\prod_{\ctr \in \polymer} W_\mathbf{h}(\ctr),
\end{equation}
where the functions $W_{\mathbf{h}}(\gamma)$ and $K_{\mathbf{h}}(\Gamma)$ are defined by modifying the Equations \eqref{weight} and \eqref{functionA} by the new one-body interaction
\[
\Phi_{1,\mathbf{h}}(\gamma\cup \Gamma_\gamma) = \Phi_1(\gamma\cup \Gamma_\gamma)+\sum_{x_k \in \widetilde{V}(\gamma)} 2h_{x_k}\mathbbm{1}_{\{\sigma_{x_k}=-1\}}.
\]
\end{proposition}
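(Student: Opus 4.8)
The plan is to mimic the proof of Proposition \ref{gas_polymer_partition} almost verbatim, observing that the only place where the Hamiltonian entered was through the contour representation of $H^+$ in Proposition \ref{prop_hamil_contour}, and that the magnetic field contributes a term that is naturally additive over the modified volumes $\widetilde{V}(\gamma)$ of the external contours. First I would establish the analogue of Proposition \ref{prop_hamil_contour} for the field Hamiltonian: starting from the normalized energy $H^+_{\Lambda,\mathbf{h}}(\sigma) - H^+_{\Lambda,\mathbf{h}}(\sigma_+)$, the two-body ferromagnetic part decomposes exactly as before into $\sum_{\gamma\in\Gamma^e}\Phi_1(\gamma\cup\Gamma_\gamma) + \sum_{\gamma\neq\gamma'}\Phi_2(\gamma\cup\Gamma_\gamma,\gamma'\cup\Gamma_{\gamma'})$, while the field term $-\sum_x h_x\sigma_x + \sum_x h_x(\sigma_+)_x = \sum_x h_x(1-\sigma_x) = \sum_x 2h_x\mathds{1}_{\{\sigma_x=-1\}}$ is supported, by Proposition \ref{leminha}, on $\widetilde{V}(\Gamma^e)$, and since $\widetilde{V}(\Gamma^e)=\bigcupdot_{\gamma\in\Gamma^e}\widetilde{V}(\gamma)$ it splits as $\sum_{\gamma\in\Gamma^e}\sum_{x\in\widetilde{V}(\gamma)}2h_x\mathds{1}_{\{\sigma_x=-1\}}$. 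This is exactly a one-body term, so defining $\Phi_{1,\mathbf{h}}(\gamma\cup\Gamma_\gamma) \coloneqq \Phi_1(\gamma\cup\Gamma_\gamma) + \sum_{x_k\in\widetilde{V}(\gamma)}2h_{x_k}\mathds{1}_{\{\sigma_{x_k}=-1\}}$ gives $H^+_\mathbf{h}(\Gamma) = \sum_{\gamma\in\Gamma^e}\Phi_{1,\mathbf{h}}(\gamma\cup\Gamma_\gamma) + \sum_{\gamma\neq\gamma'}\Phi_2(\gamma\cup\Gamma_\gamma,\gamma'\cup\Gamma_{\gamma'})$.

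Next I would simply re-run the expansion of Proposition \ref{gas_polymer_partition} with $\Phi_1$ replaced by $\Phi_{1,\mathbf{h}}$ throughout. The key observation is that $\Phi_2$ is untouched, so the Mayer trick producing the Ursell functions $\varphi_G$ (which depend only on $\Phi_2$) goes through unchanged; only the factors $e^{-\beta\Phi_1(\gamma\cup\Gamma_\gamma)}$ become $e^{-\beta\Phi_{1,\mathbf{h}}(\gamma\cup\Gamma_\gamma)}$, which is what defines $W_\mathbf{h}(\gamma) = \Ztil^+_\mathbf{h}(\gamma)/\Ztil^+_\mathbf{h}(\check\gamma)$ and $K_\mathbf{h}(Y)$ via the modified ratio in \eqref{functionA}. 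I would point out that the crystallic partition function, the spin-flip erasure map $\tau_\gamma$, the notion of level, and the entire recursion \eqref{recur} are purely combinatorial/geometric and do not see the specific form of the one-body interaction; the only thing that must be checked is that the field term in $H^+_\mathbf{h}(\tau_\gamma(\gamma\cup\Gamma))$ still decouples correctly so that $\Ztil^+_\mathbf{h}(\check\gamma)$ is again (up to a characteristic function) a partition function on $\I_-(\gamma)$ with a flipped field. This holds because erasing $\gamma$ via $\tau_\gamma$ only flips spins inside $\I_-(\gamma)$ and sets $\Sp(\gamma)$ to $+1$, so the field contributions from $\widetilde{V}(\gamma')^c$ for internal $\gamma'$ transform in exactly the same way as in the zero-field case, with $h_x \mapsto h_x$ on $\I_+$, and the change of variables $\Gamma\mapsto\tau(\Gamma)$ in the interior turns the $-$-boundary sum into a $+$-boundary sum with the field restricted to $\I_-(\gamma)$.

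I expect the main obstacle — really the only point requiring care — to be bookkeeping the field term through the recursion step, specifically verifying that when one iterates and expands $\Ztil^+_\mathbf{h}(\check\gamma)$, the magnetic field attached to contours at level $n+1$ is the correct restriction/flip of the ambient field, so that the activities $z^+_{\beta,\mathbf{h}}(\Gamma) = K_\mathbf{h}(\Gamma)\prod_{\gamma\in\Gamma}W_\mathbf{h}(\gamma)$ appearing at every level are the \emph{same} function of the original $\mathbf{h}$ (evaluated on the relevant contours). Since $\mathbf{h}$ here is supported on the finitely many points $x_1,\dots,x_n$ and only matters through whether $x_k\in\widetilde{V}(\gamma)$, this is bounded but straightforward; I would present it by noting that the argument of Proposition \ref{gas_polymer_partition} never used positivity, reality, or any bound on the one-body term, so replacing $\Phi_1$ by $\Phi_{1,\mathbf{h}}$ literally does not change a single line of that proof beyond the notation, and conclude that the final displayed identity holds with $z^+_{\beta,\mathbf{h}}$, $W_\mathbf{h}$, $K_\mathbf{h}$ defined as stated.
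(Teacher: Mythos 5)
Your proposal is correct and follows exactly the route the paper intends: the paper gives no separate proof, asserting only that the result is a straightforward generalization of Proposition \ref{gas_polymer_partition}, and your argument — splitting off the normalized field term $\sum_x 2h_x\mathbbm{1}_{\{\sigma_x=-1\}}$, localizing it on $\widetilde{V}(\Gamma^e)=\bigcupdot_{\gamma\in\Gamma^e}\widetilde{V}(\gamma)$ via Proposition \ref{leminha}, absorbing it into the one-body term $\Phi_{1,\mathbf{h}}$ while leaving $\Phi_2$ and hence the Mayer/recursion structure untouched — is precisely that generalization. Nothing further is needed.
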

 Notice that we can write
\[
K_{\mathbf{h}}(\Gamma) = \sum_{G\in \mathcal{G}_{\Gamma}}\langle\varphi_G\rangle_{\Gamma,\mathbf{h}},
\]
where
\[
\langle f \rangle_{\Gamma,\mathbf{h}} \coloneqq \frac{1}{\prod_{\gamma \in \Gamma}\widetilde{Z}_{\mathbf{h}}(\gamma)}\sum_{\substack{\ctrb_\gamma \in \is(\gamma) \\ \gamma \in \Gamma}}  f(\sigma)\prod_{\gamma \in \Gamma} e^{-\beta \Phi_{1,\mathbf{h}}(\gamma\cup \ctrb_\gamma)},
\]
and $\displaystyle \widetilde{Z}_{\mathbf{h}}(\gamma)=\sum_{\ctrb_\gamma \in \is(\gamma)} e^{-\beta \Phi_{1,\mathbf{h}}(\ctr \cup \ctrb_\gamma)}$, $\displaystyle \varphi_G = \prod_{\{\gamma,\gamma'\} \in E(G)} e^{-\beta \Phi_2(\gamma \cup \ctrb_\gamma, \gamma' \cup \ctrb_{\gamma'}) } - 1$. We adopt the convention 
$\langle\cdot \rangle_{\Gamma,0} \coloneqq \langle \cdot \rangle_\Gamma$ when the magnetic field $\mathbf{h}$ is zero everywhere. Similarly, for each $\gamma \in \mathcal{E}^+$, we define the probability
\[
\langle f\rangle_{\check{\gamma},\mathbf{h}} = \frac{1}{\Ztil^+(\check{\ctr})} \sum_{\ctrb \in \is(\ctr)}f(\sigma) e^{-\beta H^+_\mathbf{h}(\tau_{\gamma}(\gamma \cup \Gamma))},
\]  
and we also establish the convention to omit the magnetic field $\mathbf{h}$ in the notation when it is zero. The definitions above allow us to write the weights with fields as
\[
W_\mathbf{h}(\gamma) = \frac{\widetilde{Z}_{\mathbf{h}}(\gamma)}{\widetilde{Z}_{\mathbf{h}}(\check{\gamma})} =\langle e^{-\beta \Delta H_\gamma}\rangle_{\check{\gamma},\mathbf{h}}W(\gamma),
\]
where $\Delta H_\gamma (\Gamma) = H^+(\gamma \cup \Gamma)-H^+(\tau_{\gamma}(\gamma \cup \Gamma))$. The essential step for the convergence of the cluster expansion in Section 3 was Proposition \ref{important_prop}. Thus, if we can prove an analog of that proposition for the new activities $z_{\beta,\mathbf{h}}^+(\Gamma)$ the convergence will follow by the same arguments. 
\begin{proposition}\label{important_prop_fields}
For $M, \beta$ large enough, it holds that
\begin{equation}
    |z^+_{\beta,\mathbf{h}}(\polymer)| \leq \widetilde{z}^+_{\beta}(\polymer),
\end{equation}
for every $\Gamma$ and any $\mathbf{h}$ in some polydisc $D_n(0,r) \coloneqq \{\mathbf{h} \in \mathbb{C}^n: |h_{x_k}|< r, k=1,\dots, n\}$ of radius $r = (12\beta n)^{-1}$.
\end{proposition}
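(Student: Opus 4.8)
The plan is to bootstrap directly off Proposition \ref{important_prop}, using the fact that the only difference between $z^+_{\beta,\mathbf{h}}(\Gamma)$ and $z^+_\beta(\Gamma)$ is the extra one-body term $\sum_{x_k\in\widetilde{V}(\gamma)}2h_{x_k}\mathds{1}_{\{\sigma_{x_k}=-1\}}$ appearing inside $\Phi_{1,\mathbf{h}}$, which enters $W_{\mathbf{h}}(\gamma)$ through the factor $\langle e^{-\beta\Delta H_\gamma}\rangle_{\check\gamma,\mathbf{h}}$ and enters $K_{\mathbf{h}}(\Gamma)$ only through the reweighting of the reference measures $\langle\cdot\rangle_{\Gamma,\mathbf{h}}$. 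First I would observe that since $\Phi_2$ does not depend on $\mathbf{h}$, the Ursell function $\varphi_G$ is unchanged, so $K_{\mathbf{h}}(\Gamma)=\sum_{G\in\mathcal{G}_\Gamma}\langle\varphi_G\rangle_{\Gamma,\mathbf{h}}$, and one still has the pointwise bound $|\varphi_G|\le\prod_{\{\gamma,\gamma'\}\in E(G)}(e^{4\beta F_{\gamma,\gamma'}}-1)$ exactly as in the proof of Proposition \ref{important_prop}; the averaging $\langle\cdot\rangle_{\Gamma,\mathbf{h}}$ is over a probability measure regardless of $\mathbf{h}$, so the same chain of inequalities (spanning tree extraction, Lemma \ref{lemma_K}, $e^x-1\le xe^x$) gives $|K_{\mathbf{h}}(\Gamma)|\le\sum_{T\in\mathcal{T}_\Gamma}\prod_{\{\gamma,\gamma'\}\in E(T)}4\beta F_{\gamma,\gamma'}\cdot\prod_{\gamma\in\Gamma}e^{\beta(2c_3)(F_{\I_-(\gamma)}+F_{\Sp(\gamma)})}$, with no dependence on $\mathbf{h}$ at all in the bound.

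The remaining point is to control $|W_{\mathbf{h}}(\gamma)|=|\langle e^{-\beta\Delta H_\gamma}\rangle_{\check\gamma,\mathbf{h}}|\,W(\gamma)$. Here $W(\gamma)\le e^{-\beta c_2\|\gamma\|}$ by Proposition \ref{Prop: Cost_erasing_contour} is already in hand, so it suffices to show the prefactor $|\langle e^{-\beta\Delta H_\gamma}\rangle_{\check\gamma,\mathbf{h}}|$ is bounded by, say, $2$ (or more sharply by $e^{\beta\frac{c_2}{2}\|\gamma\|}$-type slack that the $\|\gamma\|/2$ versus $\|\gamma\|$ gap in the definition of $\widetilde z^+_\beta$ already provides). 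The key estimate is that $\Delta H_\gamma(\Gamma)=H^+(\gamma\cup\Gamma)-H^+(\tau_\gamma(\gamma\cup\Gamma))$ with the field is the field-free $\Delta H_\gamma$ plus a field correction bounded in modulus by $2\sum_{x_k\in\widetilde{V}(\gamma)}|h_{x_k}|\le 2nr = \frac{1}{6\beta}$ on the polydisc $D_n(0,r)$ with $r=(12\beta n)^{-1}$. Actually the cleaner route: write $W_{\mathbf{h}}(\gamma)=\widetilde Z_{\mathbf{h}}(\gamma)/\widetilde Z_{\mathbf{h}}(\check\gamma)$ directly; the field shifts each Boltzmann factor $e^{-\beta\Phi_{1,\mathbf{h}}}$ by a multiplicative factor of modulus between $e^{-2\beta\sum_{x_k\in\widetilde V(\gamma)}|h_{x_k}|}$ and $e^{+2\beta\sum_{x_k\in\widetilde V(\gamma)}|h_{x_k}|}$, hence $|\widetilde Z_{\mathbf{h}}(\gamma)|\le e^{2\beta nr}\widetilde Z(\gamma)$ and $\widetilde Z_{\mathbf{h}}(\check\gamma)\ge$ well, $\widetilde Z_{\mathbf{h}}(\check\gamma)$ is also complex — so instead I would use the representation $W_{\mathbf{h}}(\gamma)=\langle e^{-\beta\Delta H_\gamma}\rangle_{\check\gamma,\mathbf{h}}W(\gamma)$ and bound $|e^{-\beta\Delta H_\gamma}|\le 1$ pointwise (since $\Delta H_\gamma\ge c_2\|\gamma\|\ge 0$ by Proposition \ref{Prop: Cost_erasing_contour}) — wait, that bound is for the \emph{field-free} $\Delta H_\gamma$, and here $\Delta H_\gamma$ is still field-free by its definition in the excerpt. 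So $|\langle e^{-\beta\Delta H_\gamma}\rangle_{\check\gamma,\mathbf{h}}|\le\langle e^{-\beta\Delta H_\gamma}\rangle_{\check\gamma,|\mathbf{h}|}$? Not quite since the reference measure is complex-weighted; one needs $|\langle f\rangle_{\check\gamma,\mathbf{h}}|\le \|f\|_\infty\cdot(\text{ratio of real partition functions})\le\|f\|_\infty e^{4\beta nr}$. With $\|e^{-\beta\Delta H_\gamma}\|_\infty\le 1$ and $4\beta nr=\frac13$, this gives $|W_{\mathbf{h}}(\gamma)|\le e^{1/3}e^{-\beta c_2\|\gamma\|}\le e^{-\beta\frac{c_2}{2}\|\gamma\|}$ for $\beta$ large, since $e^{1/3}\le e^{\beta\frac{c_2}{2}\|\gamma\|}$ trivially as $\|\gamma\|\ge 1$.

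Putting the two bounds together and assembling $z^+_{\beta,\mathbf{h}}(\Gamma)=K_{\mathbf{h}}(\Gamma)\prod_{\gamma\in\Gamma}W_{\mathbf{h}}(\gamma)$ exactly as in the proof of Proposition \ref{important_prop}, one gets
\[
|z^+_{\beta,\mathbf{h}}(\Gamma)|\le\prod_{\gamma\in\Gamma}e^{-\beta(\frac{c_2}{2}|\gamma|+(\frac{c_2}{2}-2c_3)(F_{\I_-(\gamma)}+F_{\Sp(\gamma)}))}\sum_{T\in\mathcal{T}_\Gamma}\prod_{\{\gamma,\gamma'\}\in E(T)}4\beta F_{\gamma,\gamma'},
\]
and then, as before, choosing $M$ large so that $4c_3\le c_2$ (hence $\frac{c_2}{2}-2c_3\ge 0$, in fact $\ge\frac{c_2}{2}-\frac{c_2}{2}$, so one should take $8c_3\le c_2$ to keep a genuine $\frac{c_2}{2}$ coefficient — I would just redo the constant bookkeeping with an extra factor of $2$) and $\beta$ large enough, the right-hand side is $\le\widetilde z^+_\beta(\Gamma)$. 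The main obstacle — and the only genuinely new point beyond Proposition \ref{important_prop} — is making precise the bound $|\langle f\rangle_{\check\gamma,\mathbf{h}}|\le\|f\|_\infty e^{4\beta nr}$ for the complex-field-weighted average: one writes the numerator and denominator sums with the complex factors $e^{-\beta\sum 2h_{x_k}\mathds{1}}$, bounds the numerator in modulus termwise by $\|f\|_\infty e^{2\beta\sum|h_{x_k}|}$ times the field-free weights, and bounds the denominator $|\widetilde Z_{\mathbf{h}}(\check\gamma)|$ from below by $e^{-2\beta\sum|h_{x_k}|}\widetilde Z(\check\gamma)$ — this lower bound is valid because all field-free weights $e^{-\beta\Phi_{1}}$ are positive, so the complex perturbation cannot cause cancellation beyond the modulus loss $e^{-2\beta\sum|h_{x_k}|}$ per term (more carefully: $|\sum_k a_k e^{i\theta_k}w_k|$ with $w_k>0$ is not bounded below by $\sum a_kw_k e^{-\max|\theta_k|}$ in general, so one should instead keep the field-free real part dominant — since $|h_{x_k}|<r$ with $r$ tiny, $\mathrm{Re}(e^{-2\beta h_{x_k}\mathds{1}})\ge e^{-2\beta r}\cos(2\beta r)\ge\frac12$ say, giving $|\widetilde Z_{\mathbf{h}}(\check\gamma)|\ge\mathrm{Re}\,\widetilde Z_{\mathbf{h}}(\check\gamma)\ge\frac12\widetilde Z(\check\gamma)$ since every term has positive real part). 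With $r=(12\beta n)^{-1}$ all these angles are at most $2\beta n r=\frac16<\frac\pi2$, so every term of $\widetilde Z_{\mathbf{h}}(\check\gamma)$ has strictly positive real part and the lower bound on $|\widetilde Z_{\mathbf{h}}(\check\gamma)|$ by a constant times $\widetilde Z(\check\gamma)$ follows cleanly; I expect this real-part positivity argument, not the tree-graph machinery, to be where the careful work lies.
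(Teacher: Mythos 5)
Your proposal reaches the right estimate by a genuinely different route from the paper. The paper treats $K_{\mathbf{h}}(\Gamma)$ and $W_{\mathbf{h}}(\gamma)$ as derivatives at $z=0$ of maps like $z\mapsto\log \langle e^{z\varphi_G-2\beta\langle\mathbf{h},\mathbbm{1}_{\{\sigma_x=-1\}}\rangle}\rangle_\Gamma$ and controls them with Cauchy's inequality on a disc of radius $(6\|\varphi_G\|_\infty)^{-1}$, together with $|e^z-1|\le|z|e^{|z|}$ and $|\log(1+z)|\le2|z|$; this produces bounds of the form $4e^{1/3}\times(\text{field-free bound})$ for both $K_{\mathbf{h}}$ and $W_{\mathbf{h}}$, after which the field-free absorption (large $M$ so $4c_3\le c_2$, large $\beta$) finishes the proof. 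You instead work termwise: on the polydisc of radius $r=(12\beta n)^{-1}$ the total field contribution to any exponent has modulus at most $2\beta nr=\tfrac16$, so every term of the complex partition functions has argument at most $\tfrac16<\tfrac\pi2$ and modulus within $e^{\pm1/6}$ of its field-free counterpart, whence $|\widetilde{Z}_{\mathbf{h}}(\cdot)|\ge\re\,\widetilde{Z}_{\mathbf{h}}(\cdot)\ge\cos(\tfrac16)e^{-1/6}\widetilde{Z}(\cdot)$ and every field-weighted average is a universal constant times the field-free one. This is more elementary (no complex-analytic lemma), buys the same $\mathcal{O}(1)$ multiplicative loss, and is absorbed by the $c_2$ versus $c_2/2$ slack exactly as you say; the paper's Cauchy device is the one that recurs later when actually differentiating in $\mathbf{h}$ (Proposition \ref{prop_derivatives}), but for this proposition both arguments are adequate.

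One step as written needs repair: your claim that the bound for $K_{\mathbf{h}}(\Gamma)$ follows because ``$\langle\cdot\rangle_{\Gamma,\mathbf{h}}$ is a probability measure regardless of $\mathbf{h}$'' is false for complex $\mathbf{h}$ — the weights $e^{-\beta\Phi_{1,\mathbf{h}}}$ and the normalization $\prod_{\gamma\in\Gamma}\widetilde{Z}_{\mathbf{h}}(\gamma)$ are complex, so $|\langle\varphi_G\rangle_{\Gamma,\mathbf{h}}|\le\|\varphi_G\|_\infty$ does not follow as stated. The fix is precisely the real-part argument of your last paragraph, applied also to $K_{\mathbf{h}}$: the $\varphi_G$ are nonnegative and $\mathbf{h}$-independent, so the numerator is bounded termwise by $e^{1/6}$ times its field-free value, while $|\prod_{\gamma\in\Gamma}\widetilde{Z}_{\mathbf{h}}(\gamma)|$ is bounded below by a constant times $\prod_{\gamma}\widetilde{Z}(\gamma)$; for the latter note that the contours of a polymer have pairwise disjoint modified volumes (a fact the paper uses, e.g.\ in the proof of Proposition \ref{prop_tree}), so each field point $x_k$ enters at most one factor and the total phase across the product is still at most $\tfrac16$. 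This gives $|\langle\varphi_G\rangle_{\Gamma,\mathbf{h}}|\le C\prod_{\{\gamma,\gamma'\}\in E(G)}(e^{4\beta F_{\gamma,\gamma'}}-1)$ with universal $C$, and your bookkeeping (including the harmless strengthening $8c_3\le c_2$) goes through. Relatedly, for $W_{\mathbf{h}}$ you do not need the detour through $\langle e^{-\beta\Delta H_\gamma}\rangle_{\check\gamma,\mathbf{h}}$: your own real-part bound rescues the direct route you abandoned, namely $|W_{\mathbf{h}}(\gamma)|=|\widetilde{Z}_{\mathbf{h}}(\gamma)|/|\widetilde{Z}_{\mathbf{h}}(\check\gamma)|\le e^{1/6}\widetilde{Z}(\gamma)\big/\bigl(\cos(\tfrac16)e^{-1/6}\widetilde{Z}(\check\gamma)\bigr)\le 2\,W(\gamma)\le 2e^{-\beta c_2\|\gamma\|}$, which is both cleaner and independent of the exact form of that factorization.
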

\begin{proof}
    Notice that $|z_{\beta,\mathbf{h}}^+(\Gamma)|\leq |K_{\mathbf{h}}(\Gamma)|\prod_{\gamma \in \Gamma}|W_{\mathbf{h}}(\gamma)|$, hence it is sufficient to give upper bounds to the absolute values of $K_\mathbf{h}(\Gamma)$ and $W_\mathbf{h}(\gamma)$. The Cauchy inequality gives us
    \begin{equation}
    \begin{split}
        |K_{\mathbf{h}}(\Gamma)|\leq  \sum_{G \in \mathcal{G}_\Gamma}\left|\frac{d}{d z}\log \langle e^{ z \varphi_G -2\beta \langle\mathbf{h},\mathbbm{1}_{\{\sigma_x=-1\}}\rangle}\rangle_\Gamma\Bigr|_{z=0}\right| \leq \sum_{G \in \mathcal{G}_\Gamma}\frac{1}{r_G}\sup_{|z| \leq r_G}\left|\log \langle e^{z \varphi_G -2\beta \langle\mathbf{h},\mathbbm{1}_{\{\sigma_x=-1\}}\rangle}\rangle_\Gamma\right|,
    \end{split}
    \end{equation}
    where $r_G = (6\|\varphi_G\|_\infty)^{-1}$ and $\displaystyle \langle\mathbf{h},\mathbbm{1}_{\{\sigma_x=-1\}}\rangle = \sum_{k=1}^n h_{x_k}\mathbbm{1}_{\{\sigma_{x_k}=-1\}}$. By our choice of radius for the polydisc, we can use that $|e^z - 1|\leq |z|e^{|z|}$ to get
    \begin{equation}
    \begin{split}
   |\langle e^{z\varphi_G -2\beta \langle\mathbf{h},\mathbbm{1}_{\{\sigma_x=-1\}}\rangle}-1\rangle_\Gamma|&\leq \left(\frac{1}{6}+2\beta \sum_{k=1}^n|h_{x_k}|\right)e^{\left(\frac{1}{6}+2\beta \sum_{k=1}^n|h_{x_k}|\right)} \leq \frac{e^{1/3}}{3}.
    \end{split}
    \end{equation}
    In this case, using that $|\log(1+z)|\leq 2|z|$ if $|z|\leq 1/2$, we get
    \[
    \left|\log \langle e^{z \varphi_G -2\beta \langle\mathbf{h},\mathbbm{1}_{\{\sigma_x=-1\}}\rangle}\rangle_\Gamma\right| \leq 2|\langle e^{z \varphi_G -2\beta \langle\mathbf{h},\mathbbm{1}_{\{\sigma_x=-1\}}\rangle}-1\rangle_\Gamma|\leq \frac{2e^{1/3}}{3}.
    \]
    Thus, using Inequality \eqref{bound_phi_2_2} and proceeding similarly to Proposition \ref{important_prop} we get
    \begin{equation}\label{bound1}
    |K_\mathbf{h}(\Gamma)|\leq 4e^{1/3}\left(\prod_{\gamma \in \Gamma}e^{\beta\frac{c_3}{2}F_{\widetilde{V}(\gamma)}}\right)\sum_{T \in \mathcal{T}_\Gamma}\prod_{\{\gamma,\gamma'\}\in E(T)}4\beta F_{\gamma,\gamma'}.
    \end{equation}

    For the weight $W_\mathbf{h}(\gamma)$, Proposition \ref{Prop: Cost_erasing_contour} together with the Cauchy inequality yields
    \begin{equation*}
        |W_\mathbf{h}(\gamma)|\leq e^{-\beta c_2 \|\gamma\|}\left|\frac{d}{dz}\log(\langle e^{z e^{-\beta \Delta H_\gamma}-\beta \langle\mathbf{h}, \mathbf{1}_{\sigma_x=-1}\rangle}\rangle_{\check{\gamma}})\Bigr|_{z=0}\right|\leq \frac{e^{-\beta c_2 \|\gamma\|}}{R}\sup_{|z|\leq R}|\log(\langle e^{z e^{-\beta \Delta H_\gamma}-2\beta \langle\mathbf{h}, \mathbf{1}_{\sigma_x=-1}\rangle}\rangle_{\check{\gamma}})|,
    \end{equation*}
where $R = (6 \|e^{-\beta \Delta H_\gamma}\|_\infty)^{-1}$. An analogous argument as we did for $K_\mathbf{h}(\Gamma)$ applies in this case, giving us
\begin{equation}\label{bound2}
 |W_\mathbf{h}(\gamma)|\leq 4e^{1/3} e^{-2\beta c_2 \|\gamma\|}.
\end{equation}
Using the upper bounds \eqref{bound1} and \eqref{bound2} and taking $M$ large such that $4c_3\leq c_2$ and $\beta > 32/c_2^2$ gives us the desired result.
\end{proof}

  \begin{corollary}\label{corol_fields}
    The logarithm of the partition function of Proposition \ref{gas_polymer_partition_field} can be written as

    \begin{equation}\label{cluster_fields}
        \log \Ztil^+_{\Lambda, \beta,\mathbf{h}} = 1+\sum_{\emptyset \neq X \subset \mathcal{E}^+_{\Lambda}} \phi^T(X) \prod_{\Gamma\in X} z_{\beta,\mathbf{h}}^+(\Gamma).
    \end{equation}  
  \end{corollary}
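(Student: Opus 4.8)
The plan is to mirror the proof of the zero-field case, which in the excerpt is the Proposition stating $\log \widetilde Z_{\Lambda,\beta} = 1 + \sum_{\emptyset\neq X\subset\mathcal E^+_\Lambda}\phi^T(X)\prod_{\Gamma\in X}z_\beta^+(\Gamma)$. That proof had two ingredients: first, the absolute summability $\sum_{\emptyset\neq X\subset\mathcal E^+_\Lambda}|\psi(X)|<\infty$, obtained from Lemma \ref{main_lemma}; second, an application of Lemma \ref{lemma_pfister} (Pfister's exponential-to-sum identity) combined with the Mayer-trick Lemma identifying $\Psi(X)$ with $\prod_{\Gamma\in X}z_\beta^+(\Gamma)\prod_{\{\Gamma,\Gamma'\}\subset X}\mathbbm 1_{\Gamma\sim\Gamma'}$, which recovers the polymer partition function. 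The only thing to check is that both ingredients survive verbatim when the weights $z_\beta^+(\Gamma)$ are replaced by the field-dependent weights $z_{\beta,\mathbf h}^+(\Gamma)$.

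First I would observe that the combinatorial/algebraic parts of the argument — Lemma \ref{lemma_pfister}, the Mayer trick identity relating $\psi$ and $\Psi$, the Ursell function $\phi^T$ defined in \eqref{ursell}, and the tree-graph bound Theorem \ref{tree_graph_bound} — make no reference whatsoever to the specific form of the activities; they are purely formal identities and inequalities valid for any weight system on $\mathcal E^+_\Lambda$. So the content of Proposition \ref{gas_polymer_partition_field} (which establishes that $\widetilde Z^+_{\Lambda,\beta,\mathbf h}$ is the polymer gas partition function with the new activities) plus these formal facts reduce everything to an absolute-convergence estimate $\sum_{\emptyset\neq X\subset\mathcal E^+_\Lambda}|\psi_{\mathbf h}(X)|<\infty$, where $\psi_{\mathbf h}(X)=\phi^T(X)\prod_{\Gamma\in X}z_{\beta,\mathbf h}^+(\Gamma)$.

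The key step is that this convergence estimate follows with no change from the zero-field case once we invoke Proposition \ref{important_prop_fields}, which says $|z_{\beta,\mathbf h}^+(\Gamma)|\leq \widetilde z_\beta^+(\Gamma)$ for all $\mathbf h$ in the polydisc $D_n(0,r)$ with $r=(12\beta n)^{-1}$. Indeed, combining this pointwise bound on $|z_{\beta,\mathbf h}^+(\Gamma)|$ with the tree-graph bound gives the analogue of \eqref{eq_2_main_lemma}, namely $|\psi_{\mathbf h}(X)|\leq\sum_{T\in\mathcal T_X}\prod_{\Gamma\in X}\widetilde z_\beta^+(\Gamma)\prod_{\{\Gamma,\Gamma'\}\in E(T)}\mathbbm 1_{\Gamma\not\sim\Gamma'}$ — exactly the right-hand side that appears in the proof of Lemma \ref{main_lemma}. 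Hence the conclusion of Lemma \ref{main_lemma} holds with $\psi$ replaced by $\psi_{\mathbf h}$, giving $\sum_{x\in V(X),|X|=n}|\psi_{\mathbf h}(X)|\leq 2e^{-\beta c_2 n/16}$ uniformly in $\mathbf h\in D_n(0,r)$, and summing over $n$ and over the finitely many sites in $\Lambda$ yields $\sum_{\emptyset\neq X\subset\mathcal E^+_\Lambda}|\psi_{\mathbf h}(X)|\leq 2c_{\beta/4}|\Lambda|<\infty$. Then Lemma \ref{lemma_pfister} applied to $Y=\mathcal E^+_\Lambda$ and $\psi=\psi_{\mathbf h}$, together with the field-version of the Mayer-trick lemma, gives \eqref{cluster_fields}.

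The main obstacle — and really the only nontrivial input — is the uniform activity bound Proposition \ref{important_prop_fields}, but this has already been proved in the excerpt immediately before the corollary, so in the write-up I would simply note that the proof of the zero-field logarithm proposition applies \emph{mutatis mutandis}, with Proposition \ref{important_prop} replaced by Proposition \ref{important_prop_fields}, and spell out the one-line chain above. I would also remark that the convergence is uniform for $\mathbf h\in D_n(0,r)$, which is what makes \eqref{cluster_fields} an identity between analytic functions of $\mathbf h$ on the polydisc and thus legitimizes differentiating term by term in the subsequent derivation of the correlation estimates (this analyticity/uniformity remark is the real reason one states the corollary, so it is worth including).
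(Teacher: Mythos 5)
Your proposal is correct and matches the paper's (implicit) argument exactly: the paper states the corollary precisely because Proposition \ref{important_prop_fields} supplies the bound $|z^+_{\beta,\mathbf{h}}(\Gamma)|\leq \widetilde{z}^+_\beta(\Gamma)$, after which the zero-field convergence argument (Lemma \ref{main_lemma}, the Mayer-trick lemma, and Lemma \ref{lemma_pfister}) applies verbatim. Your added remark on uniformity in $\mathbf{h}$ over the polydisc is consistent with how the paper subsequently uses the identity via Cauchy estimates.
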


The $n$-point truncated correlation function \eqref{n-point} is multilinear, thus
\[
\langle \sigma_{x_1};\dots;\sigma_{x_n}\rangle_{\Lambda,\beta}^+ = (-2)^n\langle \sigma_{x_1}=-1;\dots;\sigma_{x_n}=-1\rangle_{\Lambda,\beta}^+,
\]
where in the right-hand side above is the $n$-point truncated correlation function as in Equation \eqref{n-point}, but with the normalized partition function $ \Ztil^+_{\Lambda,\beta,\mathbf{h}}$ instead.

\begin{proposition}\label{prop_derivatives}
   For $\beta$ large enough we have that
   \[
|\langle \sigma_{x_1}=-1; \dots; \sigma_{x_n}=-1\rangle_{\Lambda,\beta}^+| \leq (12n)^n\hspace{-0.5 cm}\sum_{\substack{X\subset \mathcal{E}_\Lambda^+ \\ \{x_1,\dots,x_n\}\subset V(X)}}\hspace{-0.8 cm}|\phi^T(X)|\prod_{\Gamma \in X}\widetilde{z}^+_{\beta}(\Gamma),
\] 
holds for any $\Lambda$.
\end{proposition}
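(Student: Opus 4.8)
The plan is to differentiate the cluster expansion \eqref{cluster_fields} of $\log\Ztil^+_{\Lambda,\beta,\mathbf{h}}$ furnished by Corollary \ref{corol_fields} and to control each derivative by a Cauchy estimate on the polydisc supplied by Proposition \ref{important_prop_fields}. Since $\Lambda\Subset\Z^d$ is finite, $\mathcal{E}^+_\Lambda$ is a finite set, every sum over $X\subset\mathcal{E}^+_\Lambda$ is finite, and $\mathbf{h}\mapsto\log\Ztil^+_{\Lambda,\beta,\mathbf{h}}$ is analytic near the origin; hence \eqref{cluster_fields} is a genuine identity of analytic functions on $D_n(0,r)$, $r\coloneqq(12\beta n)^{-1}$, and differentiating it term by term is legitimate. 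Using the identity relating $\langle\sigma_{x_1}=-1;\dots;\sigma_{x_n}=-1\rangle^+_{\Lambda,\beta}$ to $\beta^{-n}\partial_{h_{x_1}}\cdots\partial_{h_{x_n}}\log\Ztil^+_{\Lambda,\beta,\mathbf{h}}\big|_{\mathbf{h}=0}$ recorded just above the statement, it then suffices to bound $\beta^{-n}\bigl|\sum_{\emptyset\neq X\subset\mathcal{E}^+_\Lambda}\phi^T(X)\,\partial_{h_{x_1}}\cdots\partial_{h_{x_n}}\prod_{\Gamma\in X}z^+_{\beta,\mathbf{h}}(\Gamma)\big|_{\mathbf{h}=0}\bigr|$.

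The first step I would carry out is a localisation argument selecting the surviving terms. By Proposition \ref{gas_polymer_partition_field}, the activity $z^+_{\beta,\mathbf{h}}(\Gamma)$ depends on the variable $h_{x_j}$ only through the summands $2h_{x_j}\mathbbm{1}_{\{\sigma_{x_j}=-1\}}$ entering $\Phi_{1,\mathbf{h}}(\gamma\cup\Gamma_\gamma)$ for those $\gamma\in\Gamma$ with $x_j\in\widetilde{V}(\gamma)$; since $\widetilde{V}(\gamma)\subset V(\gamma)\subset V(X)$, the product $\prod_{\Gamma\in X}z^+_{\beta,\mathbf{h}}(\Gamma)$ is constant in $h_{x_j}$ whenever $x_j\notin V(X)$. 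Therefore $\partial_{h_{x_1}}\cdots\partial_{h_{x_n}}\prod_{\Gamma\in X}z^+_{\beta,\mathbf{h}}(\Gamma)\equiv 0$ unless $\{x_1,\dots,x_n\}\subset V(X)$, so the sum over $X$ collapses to such collections only. This is exactly the feature that will later make the right-hand side small once $x_1$ and $x_2$ are far apart.

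Next I would estimate, for each fixed $X$ with $\{x_1,\dots,x_n\}\subset V(X)$, the mixed derivative by the Cauchy integral formula over the distinguished boundary $\{|h_{x_k}|=\rho : 1\le k\le n\}$ of $D_n(0,\rho)$, for arbitrary $0<\rho<r$, obtaining
\[
\Bigl|\,\partial_{h_{x_1}}\cdots\partial_{h_{x_n}}\prod_{\Gamma\in X}z^+_{\beta,\mathbf{h}}(\Gamma)\Big|_{\mathbf{h}=0}\Bigr|\ \le\ \frac{1}{\rho^{n}}\sup_{\mathbf{h}\in\overline{D_n(0,\rho)}}\ \prod_{\Gamma\in X}\bigl|z^+_{\beta,\mathbf{h}}(\Gamma)\bigr|\ \le\ \frac{1}{\rho^{n}}\prod_{\Gamma\in X}\widetilde{z}^+_\beta(\Gamma),
\]
where the last inequality is Proposition \ref{important_prop_fields}. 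Letting $\rho\uparrow r=(12\beta n)^{-1}$ gives the factor $(12\beta n)^{n}$; after multiplying by $\beta^{-n}$ and summing over the admissible $X$ weighted by $|\phi^T(X)|$, the prefactor $\beta^{-n}(12\beta n)^{n}=(12n)^{n}$ emerges and the asserted bound follows.

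I do not anticipate a genuine obstacle once Corollary \ref{corol_fields}, Proposition \ref{important_prop_fields} and the identity relating the correlation to $\log\Ztil^+_{\Lambda,\beta,\mathbf{h}}$ are granted — the argument is essentially bookkeeping. The two points that need a little care are the localisation in the first step (arguing cleanly that a collection $X$ contributes only when its polymers jointly cover all of $x_1,\dots,x_n$, which hinges on the field perturbing $\Phi_1$ only on the modified volumes $\widetilde{V}(\gamma)$) and the uniform use of Proposition \ref{important_prop_fields} on the closed subpolydiscs $\overline{D_n(0,\rho)}\Subset D_n(0,r)$, which is what makes the passage $\rho\uparrow r$ licit.
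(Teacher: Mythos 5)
Your proposal is correct and follows essentially the same route as the paper: restrict the cluster expansion to collections $X$ covering $\{x_1,\dots,x_n\}$ (since the activities depend on $h_{x_k}$ only through the modified volumes), then apply the Cauchy estimate on the polydisc of radius $(12\beta n)^{-1}$ together with Proposition \ref{important_prop_fields} to get the factor $(12\beta n)^n$, which combined with $\beta^{-n}$ yields $(12n)^n$. The only differences are cosmetic: you make the localisation step and the limit $\rho\uparrow r$ explicit, which the paper leaves implicit.
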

\begin{proof}
Equation \eqref{n-point} and Corollary \ref{corol_fields} implies that
\begin{equation}\label{eq_1_prop_derivatives}
\langle \sigma_{x_1}=-1; \dots; \sigma_{x_n}=-1\rangle_{\Lambda,\beta}^+ = \beta^{-n}\hspace{-0.5 cm} \sum_{\substack{X\subset \mathcal{E}_\Lambda^+ \\ \{x_1,\dots,x_n\}\subset V(X)}}\hspace{-0.8 cm}\phi^T(X)\frac{\partial^n}{\partial h_{x_1} \dots \partial h_{x_n}}\prod_{\Gamma \in X}z^+_{\beta,\mathbf{h}}(\Gamma)\Bigr|_{\mathbf{h}=0}.
\end{equation}
The Cauchy inequality yields
\begin{equation}\label{eq_2_prop_derivatives}
\left|\frac{\partial^n}{\partial h_{x_1} \dots \partial h_{x_n}}\prod_{\Gamma \in X}z_{\beta, \mathbf{h}}^+(\Gamma)\Bigr|_{\mathbf{h}=0}\right| \leq \frac{1}{r^n} \sup_{\mathbf{h}\in D_n(0,r)}\left|\prod_{\Gamma \in X}z_{\beta, \mathbf{h}}^+(\Gamma)\right|\leq (12\beta n)^n \prod_{\Gamma \in X}\widetilde{z}_\beta^+(\Gamma)
\end{equation}
where the last inequality is due to Proposition \ref{important_prop_fields}. Plugging \eqref{eq_2_prop_derivatives}  yields the desired inequality.
\end{proof}

\subsection{The two-point function}

For the next lemma, consider for any $A,B$ disjoint finite subsets of $\Z^d$ the quantity $\displaystyle F_{A,B} \coloneqq \sum_{\substack{x \in A \\ y \in B}}J_{x,y}$.

\begin{lemma}\label{erasing_edges}
    For every $A,B,C \Subset \Z^d$, pairwise disjoint, it holds that
    \[
    F_{A,B}F_{B,C}\leq 2^{2\alpha-1}J|B|^2\diam(B)^\alpha F_{A,C}\left(\frac{1}{\dis(A,B)^\alpha}+\frac{1}{\dis(B,C)^\alpha}\right)
    \]
\end{lemma}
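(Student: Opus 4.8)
The plan is to interpolate through the set $B$: every path of interactions from $A$ to $C$ that is ``routed through'' $B$ should be comparable to a direct $A$–$C$ interaction, at the cost of a combinatorial factor counting the sites of $B$ and a factor accounting for the diameter of $B$. Concretely, I would start from the definitions and write
\[
F_{A,B}F_{B,C} = \sum_{\substack{x\in A,\; u\in B}}\sum_{\substack{v\in B,\; z\in C}} J_{xu}J_{vz}
= J^2\sum_{x\in A}\sum_{z\in C}\sum_{u,v\in B}\frac{1}{|x-u|^\alpha\,|v-z|^\alpha}.
\]
For each fixed quadruple $(x,u,v,z)$ I would compare $|x-u|$ and $|v-z|$ with $|x-z|$ using the triangle inequality: since $u,v\in B$ we have $|x-z|\le |x-u|+|u-v|+|v-z|\le |x-u|+\diam(B)+|v-z|$, so $|x-z|\le 3\max\{|x-u|,\diam(B),|v-z|\}$, hence $|x-z|^\alpha \le 3^\alpha\max\{|x-u|^\alpha,\diam(B)^\alpha,|v-z|^\alpha\}$. (The constant $3^\alpha$ can be sharpened to $2^{2\alpha-1}$ by splitting into the cases $|x-u|\ge|v-z|$ and $|x-u|<|v-z|$ and using $(a+b)^\alpha\le 2^{\alpha-1}(a^\alpha+b^\alpha)$ twice; I will do the bookkeeping so that the final constant matches the statement.)

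The key step is then to observe that in the term $\dfrac{1}{|x-u|^\alpha|v-z|^\alpha}$ we may insert $1 = \dfrac{|x-z|^\alpha}{|x-z|^\alpha}$ and bound the numerator $|x-z|^\alpha$ by the max above. If the max is realized by $\diam(B)^\alpha$, we get a contribution bounded by $\diam(B)^\alpha\,\dfrac{1}{|x-z|^\alpha}\cdot\dfrac{1}{|x-u|^\alpha|v-z|^\alpha}$, which after summing $u$ over $B$ in the factor $|x-u|^{-\alpha}$ and $v$ over $B$ in $|v-z|^{-\alpha}$ is too large; so instead one keeps only one of the two $B$-sums free and bounds the other by the worst site, which is where the $|B|$ factors and the split $\frac{1}{\dis(A,B)^\alpha}+\frac{1}{\dis(B,C)^\alpha}$ enter. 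More precisely: if $|x-u|\ge |v-z|$, bound $|x-z|^\alpha\le 2^{2\alpha-1}(|x-u|^\alpha+\diam(B)^\alpha)$ and then $\frac{1}{|v-z|^\alpha}\le \frac{1}{\dis(B,C)^\alpha}$, $\frac{\diam(B)^\alpha}{|x-u|^\alpha}\le \diam(B)^\alpha\cdot$ (something controlled); symmetrically if $|x-u|<|v-z|$. Summing the freed-up $B$-variable trivially by $|B|$, and the remaining $B$-variable also by $|B|$ where it no longer helps, collapses $\sum_{u,v\in B}$ into a factor $|B|^2$ and leaves exactly $\sum_{x\in A,z\in C} J^2|x-z|^{-\alpha} = J\,F_{A,C}$ times $\left(\dis(A,B)^{-\alpha}+\dis(B,C)^{-\alpha}\right)\diam(B)^\alpha$ up to the constant $2^{2\alpha-1}$.

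The main obstacle is purely bookkeeping: one must choose, for each quadruple, which of the two ``long'' distances to keep attached to $|x-z|^{-\alpha}$ (so as to reconstruct $F_{A,C}$) and which to bound crudely by a distance-to-$B$ term, and make sure the two cases $|x-u|\gtrless|v-z|$ both produce a term of the claimed shape — one yielding the $\dis(B,C)^{-\alpha}$ summand and the other the $\dis(A,B)^{-\alpha}$ summand — so that their sum is what appears in the statement. No deep idea is needed beyond the triangle inequality and the elementary bound $(a+b)^\alpha\le 2^{\alpha-1}(a^\alpha+b^\alpha)$; care with the constants is the only real content. I would finish by collecting the two cases, factoring out $2^{2\alpha-1}J|B|^2\diam(B)^\alpha F_{A,C}$, and noting that the leftover sum over $B$ of each free variable contributes the remaining factor of $|B|$ so that $|B|^2$ is correct.
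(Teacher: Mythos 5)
Your proposal is correct and follows essentially the same route as the paper: multiply and divide by the $A$--$C$ coupling $J_{x,z}$, use the triangle inequality through $B$ (with $|y-y'|\le\diam(B)$), bound each $B$-variable by the worst-case distances $\dis(A,B)$, $\dis(B,C)$ so the double sum over $B$ collapses to $|B|^2$, and absorb the cross term using that distances and $\diam(B)$ are at least $1$ on $\Z^d$. The only difference is the bookkeeping for the constant: the paper extracts $2^{2\alpha-1}$ via the reverse H\"older inequality $(a^{1/p}+b^{1/p})^p\le 2^{p-1}(a+b)$ applied to square roots of the distances, whereas you propose a case split with $(a+b)^\alpha\le 2^{\alpha-1}(a^\alpha+b^\alpha)$, which also yields a constant no worse than $2^{2\alpha-1}$.
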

\begin{proof}
A simple manipulation gives us
    \[
     F_{A,B}F_{B,C} = \sum_{\substack{x\in A, x'\in C \\ y,y'\in B}}J_{x,y}J_{x',y'} = \sum_{\substack{x\in A \\ x' \in C}}J_{x,x'}\left(\sum_{y,y'\in B}\frac{J_{x,y}J_{x',y'}}{J_{x,x'}}\right).
    \]
    Using the triangular inequality we get 
    \begin{equation}\label{eq_lemma_erasing}
    \begin{split}
    \sum_{y,y'\in B}\frac{J_{x,y}J_{x',y'}}{J_{x,x'}} &\leq \sum_{y,y'\in B} J\left(\frac{1}{|x-y|} + \frac{1}{|x'-y'|} + \frac{|y-y'|}{|x'-y'||x-y|}\right)^\alpha \\
    &\leq J|B|^2\left(\frac{1}{\dis(A,B)}+\frac{1}{\dis(B,C)}+\frac{\diam(B)}{\dis(A,B)\dis(B,C)}\right)^\alpha
    \\ 
    &\leq J|B|^2\diam(B)^\alpha \left(\frac{1}{\dis(A,B)}+\frac{1}{\dis(B,C)}+\frac{1}{\dis(A,B)\dis(B,C)}\right)^\alpha \\
    &\leq J|B|^2\diam(B)^\alpha\left(\frac{1}{\dis(A,B)^{\frac{1}{2}}}+\frac{1}{\dis(B,C)^{\frac{1}{2}}}\right)^{2\alpha},
    \end{split} 
    \end{equation}
    and using the Reverse H\"older Inequality, $(a^{1/p}+b^{1/p})^p \leq 2^{p-1}(a+b)$, for $a,b>0$ and $p>1$, we conclude the proof.
\end{proof}

\begin{proposition}\label{important_corol}
    For each pair of compatible contours $\gamma_0,\gamma_1 \in \mathcal{E}^+$ it holds for sufficiently large $\beta$ that 
    \begin{equation}\label{eq_pr_important_corol}
    \sum_{\gamma \in \mathcal{E}_{1,\gamma_0,\gamma_1}^+}|\gamma|^2e^{-\beta \frac{c_2}{2} \|\gamma\|} F_{\gamma_0,\gamma}F_{\gamma,\gamma_1} \leq C_\beta |\gamma_0||\gamma_1|F_{\gamma_0,\gamma_1},
    \end{equation}
    where $\mathcal{E}_{n,\gamma_0,\gamma_1}= \{\Gamma\in\mathcal{E}^+: \Gamma=\{\gamma_0,\gamma_1,\dots, \gamma_{n+2}\} \}$.
\end{proposition}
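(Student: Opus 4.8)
Proposition \ref{important_corol} is proved by the same entropy-versus-energy bookkeeping already used in Lemma \ref{F_Vol} and Proposition \ref{prop_tree}; the only genuinely new ingredient is Lemma \ref{erasing_edges}, which lets us ``erase'' the middle contour $\gamma$ of the path $\gamma_0-\gamma-\gamma_1$ at the cost of a factor that is polynomial in $|\gamma|$. First I would record two geometric facts about a family $\{\gamma_0,\gamma_1,\gamma\}\in\mathcal{E}^+$ of mutually external contours. Since by Condition \textbf{(A1)} each contour lies in a single connected component of the complement of the support of any other, the argument of Proposition \ref{external} shows that the modified volumes $\widetilde V(\gamma_0),\widetilde V(\gamma_1),\widetilde V(\gamma)$ are pairwise disjoint, and moreover that, for any two of them,
\[
\dis(\widetilde V(\gamma_0),\widetilde V(\gamma))=\dis(\Sp(\gamma_0),\widetilde V(\gamma));
\]
indeed the ``outer surface'' of $\widetilde V(\gamma_0)=\Sp(\gamma_0)\cup\I_-(\gamma_0)$ is contained in $\Sp(\gamma_0)$, because $\partial_{\text{ext}}V(\I(\gamma_0)^{(k)})\subset\overline{\gamma_0}$ (the same property already invoked in Proposition \ref{external}). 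I will also use the isoperimetric bound $|\widetilde V(\gamma)|\le|\gamma|^{d/(d-1)}$ from Lemma \ref{F_Vol} and the standard fact that $\diam(\widetilde V(\gamma))\le P(|\gamma|)$ for a fixed polynomial $P=P_{M,a,d}$: this holds because $\overline\gamma$ is a single block of the finest $(M,a)$-partition, so it cannot be split in two, which caps its diameter polynomially in $|\gamma|$ (see \cite{Johanes, Affonso2024}).

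With these at hand, apply Lemma \ref{erasing_edges} with $A=\widetilde V(\gamma_0)$, $B=\widetilde V(\gamma)$, $C=\widetilde V(\gamma_1)$:
\[
F_{\gamma_0,\gamma}F_{\gamma,\gamma_1}\le 2^{2\alpha-1}J\,|\widetilde V(\gamma)|^2\diam(\widetilde V(\gamma))^\alpha\,F_{\gamma_0,\gamma_1}\left(\frac{1}{\dis(\widetilde V(\gamma_0),\widetilde V(\gamma))^\alpha}+\frac{1}{\dis(\widetilde V(\gamma),\widetilde V(\gamma_1))^\alpha}\right).
\]
Inserting this into the left-hand side of \eqref{eq_pr_important_corol}, factoring out $2^{2\alpha-1}J\,F_{\gamma_0,\gamma_1}$, and splitting the sum over $\gamma$ into a piece $S_0$ carrying $\dis(\widetilde V(\gamma_0),\widetilde V(\gamma))^{-\alpha}$ and a symmetric piece $S_1$ carrying $\dis(\widetilde V(\gamma),\widetilde V(\gamma_1))^{-\alpha}$. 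For $S_0$ I would use the geometric identity above to write $\dis(\widetilde V(\gamma_0),\widetilde V(\gamma))^{-\alpha}=\max_{x\in\Sp(\gamma_0)}\dis(x,\widetilde V(\gamma))^{-\alpha}\le\sum_{x\in\Sp(\gamma_0)}\dis(x,\widetilde V(\gamma))^{-\alpha}$, obtaining
\[
S_0\le\sum_{x\in\Sp(\gamma_0)}\ \sum_{\gamma}\ |\gamma|^2\,|\widetilde V(\gamma)|^2\,\diam(\widetilde V(\gamma))^\alpha\,e^{-\beta\frac{c_2}{2}\|\gamma\|}\,\dis(x,\widetilde V(\gamma))^{-\alpha},
\]
where the inner sum now runs over all contours $\gamma$.

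For each fixed $x$ the inner sum is estimated exactly as in Lemma \ref{F_Vol}: organize it by $n=|\gamma|$ and $m=\dis(x,\widetilde V(\gamma))$; then there is $y$ with $|x-y|=m$ and $y\in\widetilde V(\gamma)\subset V(\gamma)$, so by Proposition \ref{Bound_on_C_0_n} the number of $\gamma$ with $|\gamma|=n$ and $\dis(x,\widetilde V(\gamma))=m$ is at most $\sum_{|y-x|=m}|\mathcal{C}_y(n)|\le s_d(m)e^{c_1 n}$, with $s_d(m)\le 2^{2d-1}e^{d-1}m^{d-1}$. Bounding the remaining $\gamma$-dependent factors by $|\widetilde V(\gamma)|^2\le|\gamma|^{2d/(d-1)}$, $\diam(\widetilde V(\gamma))^\alpha\le P(n)^\alpha$ and $\|\gamma\|\ge|\gamma|$, the inner sum is at most $\big(\sum_{n\ge1}Q(n)\,e^{-(\beta c_2/2-c_1)n}\big)\big(\sum_{m\ge1}s_d(m)\,m^{-\alpha}\big)$ with $Q$ a polynomial; the second factor is finite since $\alpha>d$, and the first is finite for $\beta c_2/2>c_1$ and tends to $0$ as $\beta\to\infty$. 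Calling this product $C_\beta'$, we get $S_0\le C_\beta'|\gamma_0|$ and symmetrically $S_1\le C_\beta'|\gamma_1|$; since $|\gamma_0|+|\gamma_1|\le|\gamma_0||\gamma_1|$, collecting the bounds yields \eqref{eq_pr_important_corol} with $C_\beta=2^{2\alpha-1}J\,C_\beta'$ (which in fact satisfies $C_\beta\to0$ as $\beta\to\infty$).

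The hard part is not the summation — that is routine once the decomposition is set up — but the two geometric inputs of the first paragraph, in particular the polynomial diameter bound $\diam(\widetilde V(\gamma))\le P(|\gamma|)$: it is precisely what renders harmless the $\diam(\widetilde V(\gamma))^\alpha$ factor produced by Lemma \ref{erasing_edges}, which would otherwise be uncontrolled for the genuinely disconnected multiscale contours used here. Everything past that point parallels the estimates already carried out in Section \ref{cluster}.
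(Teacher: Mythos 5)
Your proposal is correct and follows essentially the same route as the paper's proof: apply Lemma \ref{erasing_edges} to the modified volumes, control the resulting factor via the polynomial diameter bound $\diam(\gamma)\leq C|\gamma|^{1+\frac{ad}{d^2-1}}$, split into the two distance terms, and sum each one over $\gamma$ exactly as in Lemma \ref{F_Vol} using Proposition \ref{Bound_on_C_0_n}, with convergence for $\alpha>d$ and $\beta$ large. The only differences are cosmetic: you spell out the geometric inputs the paper leaves implicit (disjointness of modified volumes and $\dis(\widetilde V(\gamma_0),\widetilde V(\gamma))=\dis(\Sp(\gamma_0),\widetilde V(\gamma))$, plus passing from $\diam(\Sp(\gamma))$ to $\diam(\widetilde V(\gamma))$), which is a harmless and in fact slightly more careful presentation.
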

\begin{proof}
Using Lemma \ref{erasing_edges} and that $\diam(\gamma) \leq C|\gamma|^{1+\frac{ad}{d^2-1}}$, where $C = Md2^{a(d+6)}(a+1)^{4a}$ (see Proposition 4.2.10 in \cite{maia2024phase} and the choice of $r$ in page 67) we get that 
\begin{equation}\label{eq_1_important_corol}
\begin{split}
F_{\gamma_0,\gamma} F_{\gamma,\gamma_1} \leq 2^{2\alpha - 1}C^\alpha J |\gamma|^b\left(\frac{1}{\dis(\gamma_0,\gamma)^\alpha}+\frac{1}{\dis(\gamma,\gamma_1)^\alpha}\right) F_{\gamma_0,\gamma_1},
\end{split}
\end{equation}
where $b= \frac{2d}{d-1} + \alpha\left(1+\frac{ad}{d^2 - 1}\right)$. The only terms in \eqref{eq_1_important_corol} depending on $\gamma$ are those containing the distances. Therefore, the left-hand side of \eqref{eq_pr_important_corol} is less than  
\begin{equation*}
\begin{split}
2^{2\alpha - 1}C^\alpha J  F_{\gamma_0,\gamma_1}\left(\sum_{\gamma \in \mathcal{E}^+_{1,\gamma_0}} e^{-\beta\frac{c_2}{2}|\gamma|}\frac{|\gamma|^{b+2}}{\dis(\gamma_0,\gamma)^\alpha}+\sum_{\gamma\in \mathcal{E}^+_{1,\gamma_1}} e^{-\beta\frac{c_2}{2}|\gamma|}\frac{|\gamma|^{b+2}}{\dis(\gamma,\gamma_1)^\alpha}\right).
\end{split}
\end{equation*}
Proceeding similarly as in Lemma \ref{F_Vol}, we get, for $\beta> 4(c_1 + b+1)/c_2$,
\begin{equation*}
\begin{split}
J\sum_{\gamma \in \mathcal{E}^+_{1,\gamma_0}} e^{-\beta\frac{c_2}{2}|\gamma|}\frac{|\gamma|^{b+2}}{\dis(\gamma_0,\gamma)^\alpha} \leq \left(\sum_{n\geq 1}n^{b+2}e^{(c_1-\beta \frac{c_2}{2})n}\right)F_{\Sp(\gamma_0)}\leq c_\beta F_{\Sp(\gamma_0)}\leq F_{\{0\}}c_\beta|\gamma_0|,
\end{split}
\end{equation*} 
where $F_{\{0\}} = \sum_{x \neq 0} J_{0x}$. Doing the same for $\gamma_1$, we can take $C_\beta = 2^{2\alpha - 1}C^\alpha F_{\{0\}}c_{\beta}$ to yield the desired result.
\end{proof}

\begin{proposition}\label{decay_proposition}
    For $\beta$ large enough, there is $C^{(1)}_\beta>0$ such that for any $x_1,x_2 \in \Z^d$, we have
    \[
    \sum_{\{x_1,x_2\}\subset \widetilde{V}(\Gamma)} \widetilde{z}_{\beta}^+(\Gamma) \leq C^{(1)}_\beta J_{x_1,x_2}.
    \]
\end{proposition}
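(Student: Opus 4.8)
The plan is to split the sum according to how $x_1$ and $x_2$ sit inside the polymer. Since mutual externality together with Condition \textbf{(A1)} forces the sets $\{\widetilde{V}(\gamma)\}_{\gamma\in\Gamma}$ to be pairwise disjoint, each of $x_1,x_2$ lies in a unique $\widetilde{V}(\gamma)$, and I separate \textbf{(i)} the polymers having a single contour $\gamma$ with $x_1,x_2\in\widetilde{V}(\gamma)$, from \textbf{(ii)} those having distinct $\gamma_1,\gamma_2\in\Gamma$ with $x_i\in\widetilde{V}(\gamma_i)$. Throughout I will use: the entropy bound $|\mathcal{C}_x(n)|\le e^{c_1 n}$ of Proposition \ref{Bound_on_C_0_n}; the summability $\sum_{\Gamma\ni\gamma}\widetilde{z}^+_\beta(\Gamma)\le (1-6c_{\beta/2})^{-1}e^{-\beta c_2\|\gamma\|/4}$, which is \eqref{sum_prop3.3} summed over $n$; and the geometric estimate $\diam(\widetilde{V}(\gamma))\le C|\gamma|^{1+\frac{ad}{d^2-1}}$, equivalently $|\gamma|\ge c_0\,\diam(\widetilde{V}(\gamma))^{\theta}$ with $\theta\coloneqq(1+\tfrac{ad}{d^2-1})^{-1}>0$.

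For case \textbf{(i)}, if $x_1,x_2\in\widetilde{V}(\gamma)$ then $\diam(\widetilde{V}(\gamma))\ge|x_1-x_2|$, hence $|\gamma|\ge N\coloneqq c_0|x_1-x_2|^{\theta}$; summing the quantity above over all such $\gamma$ and using Proposition \ref{Bound_on_C_0_n} I would get a bound $\le C\sum_{n\ge N}e^{(c_1-\beta c_2/4)n}\le C'e^{-\beta c_2 N/8}$ for $\beta$ large. Since $N\sim|x_1-x_2|^{\theta}$ and $\sup_{r\ge 1}r^{\alpha}e^{-c'r^{\theta}}<\infty$, this is $\le C^{(1)}_\beta J_{x_1,x_2}$.

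Case \textbf{(ii)} is the heart of the matter. Fixing distinct $\gamma_1\ni x_1$ and $\gamma_2\ni x_2$, for each $\Gamma\supseteq\{\gamma_1,\gamma_2\}$ and $T\in\mathcal{T}_\Gamma$ I would delete from $T$ the edges of the unique path $\gamma_1=\delta_0,\delta_1,\dots,\delta_k=\gamma_2$; this breaks $\Gamma$ into sub‑polymers $P_0,\dots,P_k$ with $\delta_i\in P_i$, and since the resulting map is injective (relaxing the disjointness of the $P_i$ and the compatibility constraints only enlarges the sum),
\[
\sum_{\Gamma\supseteq\{\gamma_1,\gamma_2\}}\widetilde{z}^+_\beta(\Gamma)\ \le\ \sum_{k\ge 1}\ \sum_{\delta_1,\dots,\delta_{k-1}}\ \Big(\prod_{i=0}^{k-1}F_{\delta_i,\delta_{i+1}}\Big)\prod_{i=0}^{k}\Big(\sum_{P_i\ni\delta_i}\widetilde{z}^+_\beta(P_i)\Big),
\]
with the intermediate $\delta_i$ ranging over all contours. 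Bounding each inner sum by $2e^{-\beta c_2\|\delta_i\|/4}$ via \eqref{sum_prop3.3}, writing $F_{\delta_i,\delta_{i+1}}=\sum_{u\in\widetilde{V}(\delta_i),\,v\in\widetilde{V}(\delta_{i+1})}J_{uv}$, and then performing the sum over each intermediate contour through the two of its points thereby exposed, using $\sum_{\delta:\,u,v\in\widetilde{V}(\delta)}e^{-\beta c_2\|\delta\|/4}\le\rho(|u-v|)$ with $\rho(r)\coloneqq\sum_{n\ge \max(1,\,c_0 r^{\theta})}e^{(c_1-\beta c_2/4)n}$ (Proposition \ref{Bound_on_C_0_n} and the diameter bound), the path sum turns into a convolution chain
\[
e^{-\beta c_2\|\gamma_1\|/4}e^{-\beta c_2\|\gamma_2\|/4}\sum_{k\ge 1} C^{k}\hspace{-0.2cm}\sum_{u_0\in\widetilde{V}(\gamma_1)}\sum_{v_k\in\widetilde{V}(\gamma_2)}\big(J*\rho*J*\cdots*\rho*J\big)(v_k-u_0),
\]
with $k$ factors $J(\cdot)=J|\cdot|^{-\alpha}$ and $k-1$ factors $\rho$.

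To close the chain I would invoke the elementary convolution estimate $\overline{J}*\overline{J}\le C_*\overline{J}$, valid for $\alpha>d$, where $\overline{J}(x)\coloneqq\min(1,|x|^{-\alpha})$; and the fact that for $\beta$ large $\rho\le\sigma_\beta\overline{J}$ with $\sigma_\beta\to 0$ (immediate from the explicit form of $\rho$, Proposition \ref{Bound_on_C_0_n}, and $\sup_{r\ge 1}r^{\alpha}e^{-c'r^{\theta}}<\infty$). Then $(J*\rho*\cdots*J)(y)\le(C_*^{2}J\sigma_\beta)^{k-1}J\,\overline{J}(y)$, and summing the entries over $\widetilde{V}(\gamma_1)\times\widetilde{V}(\gamma_2)$, which are disjoint, gives $(C_*^{2}J\sigma_\beta)^{k-1}F_{\gamma_1,\gamma_2}$, so the geometric series in $k$ converges for $\beta$ large and $\sum_{\Gamma\supseteq\{\gamma_1,\gamma_2\}}\widetilde{z}^+_\beta(\Gamma)\le C_\beta\,e^{-\beta c_2\|\gamma_1\|/4}e^{-\beta c_2\|\gamma_2\|/4}F_{\gamma_1,\gamma_2}$. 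Finally I would sum over $\gamma_1\ni x_1$, $\gamma_2\ni x_2$ with a dichotomy: if $\diam(\widetilde{V}(\gamma_1))+\diam(\widetilde{V}(\gamma_2))\le|x_1-x_2|/2$ then every $J_{uv}$ appearing in $F_{\gamma_1,\gamma_2}$ has $|u-v|\ge|x_1-x_2|/2$, so $F_{\gamma_1,\gamma_2}\le 2^{\alpha}|\widetilde{V}(\gamma_1)|\,|\widetilde{V}(\gamma_2)|\,J_{x_1,x_2}$ and the residual sums converge by Proposition \ref{Bound_on_C_0_n}; otherwise one diameter exceeds $|x_1-x_2|/4$, forcing $|\gamma_i|\ge c_0(|x_1-x_2|/4)^{\theta}$ and making the corresponding exponential beat $J_{x_1,x_2}$ exactly as in case \textbf{(i)}. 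In all cases the total is $\le C^{(1)}_\beta J_{x_1,x_2}$.

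The hard part will be the chain estimate in case \textbf{(ii)}: one must keep the per‑link constant uniformly small — which is exactly why the small‑total‑mass and rapid‑decay properties of the ``within‑a‑contour'' weight $\rho$ must be established — so that the sum over chain lengths $k$ converges, and one must be careful with the tree‑path decomposition (injectivity of the map, harmlessness of relaxing compatibility). A direct iteration of Proposition \ref{important_corol} would not suffice, since each collapse there leaves behind a power of $|\gamma_1|$ that accumulates without bound in $k$.
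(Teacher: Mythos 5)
Your proposal is correct in substance and reaches the stated bound, but the heart of your case \textbf{(ii)} follows a genuinely different route from the paper. The paper also isolates the path in the tree between the contour containing $x_1$ and the one containing $x_2$, first summing away the hanging branches with the generation-erasing estimate of Proposition \ref{prop_tree} (whence the $\binom{m-1}{\ell-1}(6c_{\beta/2})^{m-\ell}$ bookkeeping over $\mathcal{T}_m^{1,\ell}$), but it then stays at the level of contours along the path and collapses it by iterating Proposition \ref{important_corol}, always erasing the \emph{middle} vertex so that the powers of $|\gamma|$ created by each collapse never exceed $2$; Lemma \ref{erasing_edges} is the engine of that step. So your closing remark is not quite accurate: a naive end-to-end iteration of Proposition \ref{important_corol} would indeed accumulate powers, but the paper's bisection ordering is precisely how it avoids this. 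Your argument instead absorbs each hanging branch directly through \eqref{sum_prop3.3}, and converts the path into a point-to-point convolution chain $J\ast\rho\ast J\ast\cdots\ast J$, with the ``two exposed points in one contour'' propagator $\rho$ controlled by Proposition \ref{Bound_on_C_0_n} together with the diameter bound $|\gamma|\geq c_0\,\mathrm{diam}(\widetilde{V}(\gamma))^{\theta}$ (the same geometric input the paper uses, with $\theta=a'$), closed by the standard estimate $\overline{J}\ast\overline{J}\leq C_\ast\overline{J}$ for $\alpha>d$ and $\rho\leq\sigma_\beta\overline{J}$ with $\sigma_\beta\to0$. What this buys: you bypass Proposition \ref{important_corol}, Lemma \ref{erasing_edges} and the labelled-tree/ordered-partition counting entirely, at the cost of one elementary convolution lemma; what the paper's version buys is a contour-level erasure scheme that is reused almost verbatim at the polymer level in the proof of Theorem \ref{main_decay}. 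Two small points you should make explicit: in the second branch of your final dichotomy you must still dispose of the factor $F_{\gamma_1,\gamma_2}$ before the entropy sum, e.g.\ via $F_{\gamma_1,\gamma_2}\leq F_{\Sp(\gamma_i)}+F_{\I_-(\gamma_i)}\leq\|\gamma_i\|$, which is then absorbed into $e^{-\beta\frac{c_2}{4}\|\gamma_i\|}$ before invoking Proposition \ref{Bound_on_C_0_n}; and in the chain the two exposed points of an intermediate contour may coincide, so the conventions $\rho(0)<\infty$ and $\overline{J}(0)=1$ are actually used and should be stated.
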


\begin{proof}
First, we split the series into two terms
\begin{equation*}
\begin{split}
\sum_{\{x_1,x_2\}\subset \widetilde{V}(\Gamma)}\widetilde{z}_\beta^+(\Gamma) = \sum_{\substack{\{x_1,x_2\}\subset \widetilde{V}(\gamma) \\ \Gamma \ni \gamma}}\widetilde{z}_\beta^+(\Gamma)+\sum_{\substack{ x_1 \in \widetilde{V}(\gamma), x_2 \in \widetilde{V}(\gamma') \\ \{\gamma,\gamma'\}\subset \Gamma}}\widetilde{z}_\beta^+(\Gamma).
\end{split}
\end{equation*}
We will analyze the two series above separately. The first one can be bounded similarly as in Proposition \ref{prop_tree} and Corollary \ref{main_corol_1}
\begin{equation*}
  \sum_{\substack{\{x_1,x_2\}\subset \widetilde{V}(\gamma) \\ \Gamma \ni \gamma}}\widetilde{z}_\beta^+(\Gamma)\leq \frac{1}{1-6c_{\beta/2}}\sum_{\substack{\gamma \in \mathcal{E}^+ \\ \{x_1,x_2\}\subset \widetilde{V}(\gamma)}}\hspace{-0.5cm}e^{-\beta \frac{c_2}{4}\|\gamma\|}\leq \frac{|x_1-x_2|^\alpha e^{-\beta c_5|x_1-x_2|^{a'}}}{J(1-7e^{-\beta\frac{c_2}{8}})} J_{x_1x_2},
\end{equation*}
where the last inequality is due to the fact that $\diam(\gamma) \leq C|\gamma|^{1+\frac{ad}{d^2-1}}$, with $C = C(M, a, d)$ (see the beginning of the proof of Proposition \ref{important_corol}), $c_5 \coloneqq \frac{c_2}{8 C^{a'}}$ and $a'\coloneqq \frac{d^2-1}{d^2+ad-1}$. Thus, considering $\beta > 2\alpha/(a' c_5)$ yields 
\begin{equation*}
    \sum_{\substack{\{x_1,x_2\}\subset \widetilde{V}(\gamma)\\ \Gamma\ni \gamma}}\widetilde{z}_\beta^+(\Gamma) \leq \frac{e^{-\beta \frac{c_5}{2}}}{J(1-7e^{-\beta \frac{c_2}{8}})}J_{x_1,x_2}
\end{equation*}

For the second series, we will label the trees choosing the path between the contour containing $x_1$ and the one containing $x_2$ to be enumerated in an increasing order starting from $1$. The strategy consists in erasing first the trees that have roots in this path and, after that, estimate the sum over paths (see Figure \ref{Fig: Erasing_contours_outside_of_path}). 

\begin{figure}[ht]
    \centering
    \input{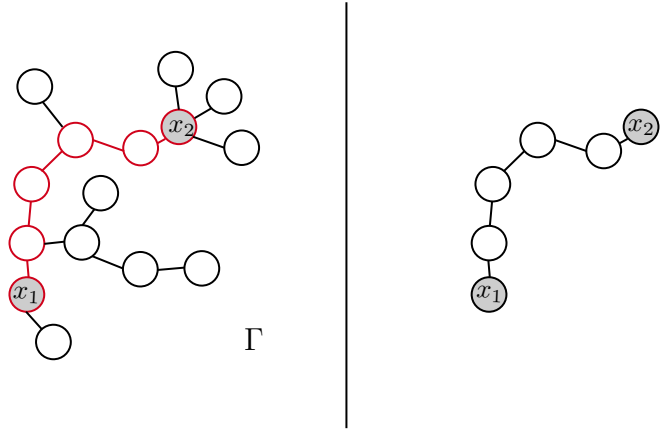}
    \caption{\small{On the left we have a polymer $\Gamma$, where each vertex represents a contour. The vertices in gray are $\gamma_{i}$ and $\gamma_{j}$, containing $x_1$ and $x_2$ respectively. The figure on the right depicts $\Gamma$ after we erase the vertices outside the path connecting $\gamma_{i}$ and $\gamma_{j}$.}}
    \label{Fig: Erasing_contours_outside_of_path}
\end{figure}

Let $\mathcal{T}_{x_1,x_2}$ be the set of all trees $T$ with $v(T)\in\mathcal{E}^+$ containing two distinct vertices  $\gamma, \gamma^\prime \in v(T)$ with $\{x_1,x_2\}\cap \widetilde{V}(\gamma)=\{x_1\}$ and $\{x_1,x_2\}\cap \widetilde{V}(\gamma')=\{x_2\}$. Given $T\in\mathcal{T}_{x_1,x_2}$, let $\gamma_{x_i}$ denote the unique vertex of $T$ with $x_i\in\widetilde{V}(\gamma_{x_i})$, for $i=1,2$. Then we can write
\begin{align*}
    \sum_{\substack{\Gamma; \exists \gamma, \gamma'\in \Gamma\\\gamma \neq \gamma'\\x_1 \in \widetilde{V}(\gamma), x_2 \in \widetilde{V}(\gamma')}} \widetilde{z}^+_\beta (\Gamma) 
     &= \sum_{T\in \mathcal{T}_{x_1,x_2}}\prod_{\ctr\in v(T)} e^{-\beta \frac{c_2}{2}\|\ctr\|}\prod_{\{\ctr,\ctr'\} \in E(T)}F_{\ctr,\ctr'}\\
     &=  \sum_{m\geq 2}\sum_{\ell=2}^m \sum_{\substack{T\in \mathcal{T}_{x_1,x_2} 
 \\  |v(T)|=m \\ d(\gamma_{x_1},\gamma_{x_2}) = \ell - 1}}\prod_{\ctr\in v(T)} e^{-\beta \frac{c_2}{2}\|\ctr\|}\prod_{\{\ctr,\ctr'\} \in E(T)}F_{\ctr,\ctr'},
\end{align*}
where in the last equation we split the sum according to the size of the tree and the distance, in the tree, between $\gamma_{x_1}$ and $\gamma_{x_2}$. For each tree $T$ appearing in the sum, we take $\gamma_1\coloneqq \gamma_{x_1}$, $\gamma_\ell\coloneqq\gamma_{x_2}$ and $\gamma_j$ the $j$-th contour in the path connecting $\gamma_1$ and $\gamma_\ell$. Denoting by $\mathcal{P}(\gamma_1, \ \dots, \ \gamma_\ell)$ the path passing through $\{\gamma_1, \ \dots, \ \gamma_\ell\}$, that is, a graph with edge set $\{\{\gamma_i,\gamma_{i+1} \}: i=1,\ \dots, \ell-1\}$, we have 
\begin{align*}
  &\sum_{m\geq 2}\sum_{\ell=2}^m \sum_{\substack{T\in \mathcal{T}_{x_1,x_2} 
 \\  |v(T)|=m \\ d(\gamma_{x_1},\gamma_{x_2}) = \ell - 1}}\prod_{\ctr\in v(T)} e^{-\beta \frac{c_2}{2}\|\ctr\|}\prod_{\{\ctr,\ctr'\} \in E(T)}F_{\ctr,\ctr'}\\
 & \hspace{2cm}= \sum_{m\geq 2}\sum_{\ell=2}^m \sum_{\substack{\gamma_p \\ 1\leq p\leq \ell}}\mathbbm{1}_{\left\{\substack{x_1 \in \widetilde{V}(\gamma_1), x_2 \in \widetilde{V}(\gamma_\ell)}\right\}}\sum_{\substack{T\in \mathcal{T}_{x_1,x_2} 
 \\  |v(T)|=m \\ \mathcal{P}(\gamma_1, \ \dots, \ \gamma_\ell)\subset T }}\prod_{\ctr\in v(T)} e^{-\beta \frac{c_2}{2}\|\ctr\|}\prod_{\{\ctr,\ctr'\} \in E(T)}F_{\ctr,\ctr'}\\
 & \hspace{2cm}= \sum_{m\geq 2}\sum_{\ell=2}^m \sum_{\substack{\gamma_p \\ 1\leq p\leq \ell}}\mathbbm{1}_{\left\{\substack{x_1 \in \widetilde{V}(\gamma_1), x_2 \in \widetilde{V}(\gamma_\ell)}\right\}} \sum_{\substack{\Gamma: \Gamma\cup\{\gamma_1, \dots, \gamma_\ell\}\in \mathcal{E}^+ \\ |\Gamma|=m-\ell}}\sum_{\substack{T\in \mathcal{T}_{\Gamma\cup\{\gamma_1, \dots, \gamma_\ell\}} 
 \\  \mathcal{P}(\gamma_1, \ \dots, \ \gamma_\ell)\subset T }}\prod_{\ctr\in v(T)} e^{-\beta \frac{c_2}{2}\|\ctr\|}\prod_{\{\ctr,\ctr'\} \in E(T)}F_{\ctr,\ctr'}\\
  & \hspace{2cm}= \sum_{m\geq 2}\sum_{\ell=2}^m \sum_{\substack{\gamma_p \\ 1\leq p\leq \ell}}\mathbbm{1}_{\left\{\substack{x_1 \in \widetilde{V}(\gamma_1), x_2 \in \widetilde{V}(\gamma_\ell)}\right\}} \sum_{\substack{\gamma_p \\ \ell+1\leq p\leq m}} \frac{\mathbbm{1}_{\left\{\{\gamma_1,\dots,\gamma_m\}\in \mathcal{E}^+\right\}}}{(m-\ell)!} \times \\
  & \hspace{10cm}\times\sum_{\substack{T\in \mathcal{T}_{\{\gamma_1, \dots, \gamma_m\}} 
 \\  \mathcal{P}(\gamma_1, \ \dots, \ \gamma_\ell)\subset T }}\prod_{1\leq p\leq m} e^{-\beta \frac{c_2}{2}\|\ctr_p\|}\prod_{\{\ctr,\ctr'\} \in E(T)}F_{\ctr,\ctr'},
\end{align*}
Each tree appearing in the summation can be uniquely associated with a tree in $\mathcal{T}_m^{1,\ell}$, the set of all labelled trees with $m$ vertices such that the vertices $1, 2, \dots, \ell$ are connected through a path, that is, the vertex $p$ is connected to $p+1$ 
 for $1\leq p \leq \ell-1$. This finally yields
 \begin{multline*}
       \sum_{\substack{\Gamma; \exists \gamma, \gamma'\in \Gamma\\\gamma \neq \gamma'\\x_1 \in \widetilde{V}(\gamma), x_2 \in \widetilde{V}(\gamma')}} \widetilde{z}^+_\beta (\Gamma) =
     \sum_{m \geq 2} \sum_{\ell=2}^m\frac{1}{(m-\ell)!} \sum_{\substack{\gamma_p\\1 \leq p \leq m}}\mathbbm{1}_{\left\{\substack{\substack{\{\gamma_1, \dots, \gamma_m\} \in \mathcal{E}^+};\\x_1 \in \widetilde{V}(\gamma_1), x_2 \in \widetilde{V}(\gamma_\ell)}\right\}} \left[ \sum_{T \in \mathcal{T}_m^{1,\ell}}\prod_{p = 1}^m e^{-\beta \frac{c_2}{2}\|\gamma_p\|}\prod_{\{r, s\} \in E(T)}F_{\gamma_r, \gamma_s}\right].
 \end{multline*}

 We can reorganize the sum in the following way: 

\begin{align*}
      \sum_{\substack{\Gamma; \exists \gamma, \gamma'\in \Gamma\\\gamma \neq \gamma'\\x_1 \in \widetilde{V}(\gamma), x_2 \in \widetilde{V}(\gamma')}} \widetilde{z}^+_\beta (\Gamma) 
      &\leq \sum_{m \geq 2} \sum_{\ell=2}^m\frac{1}{(m-\ell)!}\sum_{\substack{\gamma_p\\ 1\leq p \leq \ell}} \mathbbm{1}_{\left\{\substack{\substack{\{\gamma_1, \dots, \gamma_\ell\} \in \mathcal{E}^+};\\x_1 \in \widetilde{V}(\gamma_1), x_2 \in \widetilde{V}(\gamma_\ell)}\right\}} \left(\prod_{p=1}^\ell e^{-\beta \frac{c_2}{2}\|\gamma_p\|}\prod_{p=1}^{\ell-1}F_{\gamma_p, \gamma_{p+1}}\right) \times \\ &\hspace{2.0cm}\times \left[\sum_{T \in \mathcal{T}_m^{1,\ell}}\sum_{\substack{\gamma_p\\ \ell +1 \leq p \leq m}}\mathbbm{1}_{\{\{\gamma_{\ell+1}, \dots, \gamma_m\} \in \mathcal{E}^+\}}  \left(\prod_{p =\ell+1}^{m} e^{-\beta \frac{c_2}{2}\|\gamma_p\|}\prod_{\substack{\{r, s\} \in E(T) \\ \max\{r,s\} > \ell}}F_{\gamma_r, \gamma_s}\right) \right].
\end{align*}

Now we will rewrite the first summation in the last line as follows. Define $\mathcal{P}_\ell(M)$ as the collection of all the ordered partitions of $M$ with $\ell$ elements.

\begin{equation*}
    \sum_{T \in \mathcal{T}_m^{1,\ell}} \mapsto \sum_{\substack{\{m_p\}_{1 \leq p \leq \ell} \\ \sum m_p = m-\ell}} \sum_{\substack{(A_1, \dots, A_\ell) \in \mathcal{P}_{\ell}(\{\ell+1, \dots, m\})\\ |A_p| = m_p }} \sum_{\substack{T_p \in \mathcal{T}_{A_p \cup \{p\}}\\1 \leq p \leq \ell}} .
\end{equation*}

The terms above between square brackets can then be written as

\begin{align*}
   &  \sum_{\substack{\{m_p\}_{1 \leq p \leq \ell} \\ \sum m_p = m-\ell}} \sum_{\substack{(A_1, \dots, A_\ell) \in \mathcal{P}(\{\ell+1, \dots, m\})\\ |A_p| = m_p }} \sum_{\substack{T_p \in \mathcal{T}_{A_p \cup \{p\}}\\1 \leq p \leq \ell}} \sum_{\substack{\gamma_p\\ \ell +1 \leq p \leq m}}\mathbbm{1}_{\{\{\gamma_{\ell+1}, \dots, \gamma_m\} \in \mathcal{E}^+\}}  \left(\prod_{p =\ell+1}^{m} e^{-\beta \frac{c_2}{2}\|\gamma_p\|}\prod_{\substack{\{r, s\} \in E(T) \\ \max\{
   r,s\} > \ell }}F_{\gamma_r, \gamma_s}\right)  \\
&  \hspace{8.0cm} \leq \sum_{\substack{\{m_p\}_{1 \leq p \leq \ell} \\ \sum m_p = m-\ell}} \sum_{\substack{(A_1, \dots, A_\ell) \in \mathcal{P}(\{\ell+1, \dots, m\})\\ |A_p| = m_p }} \prod_{p =1}^\ell S_{A_p},
\end{align*}

where

\begin{equation*}
    S_{A_p} =\sum_{\substack{T_p \in \mathcal{T}_{A_p \cup \{p\}}}} \sum_{\substack{\gamma_r\\ r \in A_p}} \mathbbm{1}_{\left\{\substack{\{\gamma_r\}_{A_p\cup\{p\}} \in \mathcal{E}^+}\right\}}\prod_{ r\in A_p} e^{-\beta \frac{c_2}{2}\|\gamma_r\|}\prod_{\{r, s\} \in E(T_p)}F_{\gamma_r, \gamma_s}. 
\end{equation*}

Proceeding as the proof of Proposition \ref{prop_tree}, we can bound $S_{A_p} \leq (6c_{\beta/2})^{|A_p|}|A_p|! e^{\beta \frac{c_2}{4}\|\gamma_p\|}$, so

\begin{align*}
   \sum_{\substack{(m_p)_{1\leq p\leq \ell}\\ \sum m_p = m- \ell}}\;\sum_{\substack{(A_1, \dots, A_\ell) \in \mathcal{P}(\{\ell+1, \dots, m\})\\ |A_p| = m_p }} \prod_{p =1}^\ell S_{A_p} &\leq \prod_{p=1}^\ell e^{\beta \frac{c_2}{4}\|\gamma_p\|} \sum_{\substack{(m_p)_{1\leq p\leq \ell}\\ \sum m_p = m- \ell}}\;\sum_{\substack{(A_1, \dots, A_\ell) \in \mathcal{P}(\{\ell+1, \dots, m\})\\ |A_p| = m_p }} \left(\prod_{p =1}^\ell (6c_{\beta/2})^{|A_p|}|A_p|! \right)\\
     &= (6c_{\beta/2})^{m-\ell}\prod_{p=1}^\ell e^{\beta \frac{c_2}{4}\|\gamma_p\|} \frac{(m-1)!}{(\ell - 1)!}.
\end{align*}

Returning this result to the whole expression, we have

\begin{align}\label{trambolho2}
     \sum_{\substack{\Gamma; \exists \gamma, \gamma'\in \Gamma\\\gamma \neq \gamma'\\x_1 \in \widetilde{V}(\gamma), x_2 \in \widetilde{V}(\gamma')}} \widetilde{z}^+_\beta (\Gamma) 
      &\leq \sum_{m \geq 2} \sum_{\ell=2}^m\binom{m-1}{\ell-1}(6c_{\beta/2})^{m-\ell} \sum_{\substack{\gamma_p\\ 1\leq p \leq \ell}} \mathbbm{1}_{\left\{\substack{\substack{\{\gamma_1, \dots, \gamma_\ell\} \in \mathcal{E}^+};\\x_1 \in \widetilde{V}(\gamma_1), x_2 \in \widetilde{V}(\gamma_\ell)}\right\}} \left(\prod_{p=1}^\ell e^{-\beta \frac{c_2}{4}\|\gamma_p\|}\prod_{p=1}^{\ell-1}F_{\gamma_p, \gamma_{p+1}}\right).
\end{align}

 We will give an upper bound to each element of the sum over $\ell$ in \eqref{trambolho2} in terms of the case $\ell = 2$. For this, we will recursively use Proposition \ref{important_corol}, always erasing the vertex in the middle of the path. In the first step, this corresponds to the term $\lceil \ell/2\rceil$ (See Figure \ref{erasing_lines}). If we erased the vertices from one end of the path to the other, the powers of $|\gamma_\ell|$ would accumulate and depend on $\ell$, which would not be convenient. Our strategy of erasing only the vertex in the middle of the path ensures that the powers appearing in the computation will be at most $2$.

 \begin{figure}[ht]
    \centering
    \input{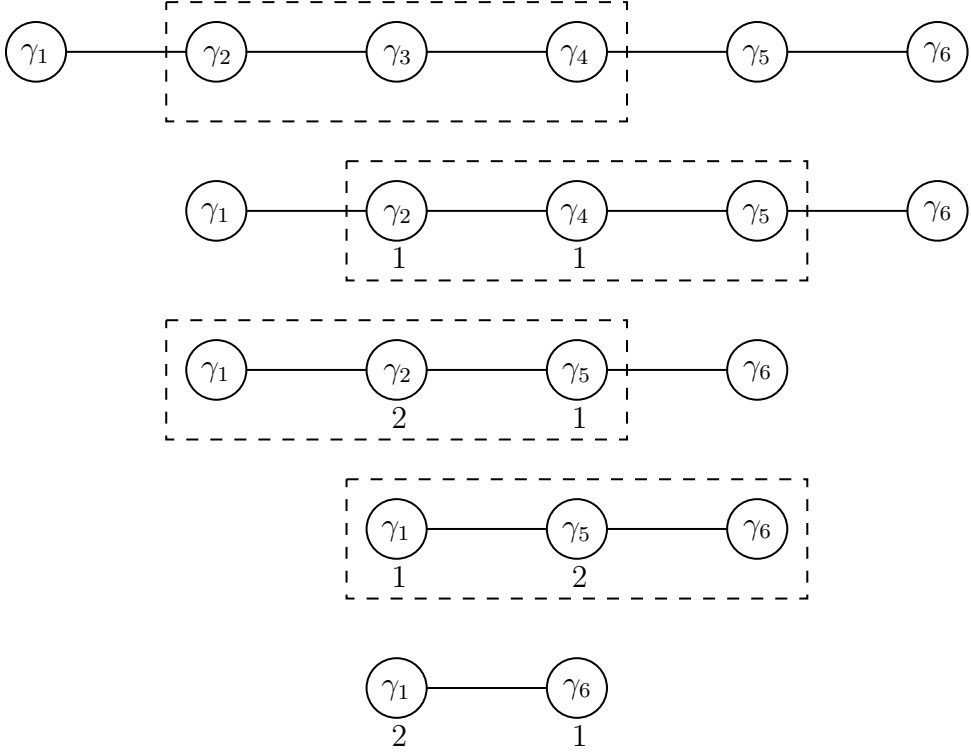}
    \caption{Example of the procedure to erase vertices between $\gamma_1$ and $\gamma_\ell$. Here, $\ell = 6$. The dashed boxes indicate the contours involved in each step. The numbers inside the boxed and below the vertices count the power of each contour in the calculation.}
    \label{erasing_lines}
\end{figure}

\begin{align*}
    &\sum_{\substack{\gamma_p\\ 1\leq p \leq \ell}} \mathbbm{1}_{\left\{\substack{\substack{\{\gamma_1, \dots, \gamma_\ell\} \in \mathcal{E}^+};\\x_1 \in \widetilde{V}(\gamma_1), x_2 \in \widetilde{V}(\gamma_\ell)}\right\}} \left(\prod_{p=1}^\ell e^{-\beta \frac{c_2}{2}\|\gamma_p\|}\prod_{p=1}^{\ell-1}F_{\gamma_p, \gamma_{p+1}}\right) \leq \\
    &\hspace{3.0cm}\sum_{\substack{\gamma_p\\ 1 \leq p \leq \ell\\ p \neq \lceil\ell/2\rceil }}\mathbbm{1}_{\left\{\substack{\substack{\{\gamma_p\}_{1\leq p \leq \ell, p \neq \lceil \ell/2 \rceil} \in \mathcal{E}^+};\\x_1 \in \widetilde{V}(\gamma_1), x_2 \in \widetilde{V}(\gamma_\ell)}\right\}}\left(\prod_{\substack{1 \leq p \leq \ell \\ p \neq \lceil\ell/2\rceil}} e^{-\beta \frac{c_2}{2}\|\gamma_p\|}\prod_{\substack{1 \leq p \leq \ell-1 \\ p \neq \lceil\ell/2\rceil}}F_{\gamma_p, \gamma_{p+1}}\right) \times  \\
    &\hspace{4.0cm}\times\sum_{\gamma_{\lceil\ell/2\rceil}} \mathbbm{1}_{\left\{ \{\gamma_{\lceil\ell/2\rceil - 1}, \gamma_{\lceil\ell/2\rceil}, \gamma_{\lceil\ell/2\rceil+1} \} \in \mathcal{E}^+\right\}} e^{-\beta \frac{c_2}{2}\|\gamma_{\lceil\ell/2\rceil}\|} F_{\gamma_{\lceil\ell/2\rceil - 1}, \gamma_{\lceil\ell/2\rceil}}F_{\gamma_{\lceil\ell/2\rceil}, \gamma_{\lceil\ell/2\rceil+1}}\\
   & \hspace{4.0cm}\leq \sum_{\substack{\gamma_p\\ 1 \leq p \leq \ell\\ p \neq \lceil\ell/2\rceil }}\mathbbm{1}_{\left\{\substack{\substack{\{\gamma_p\}_{1\leq p \leq \ell, p \neq \lceil \ell/2 \rceil} \in \mathcal{E}^+};\\x_1 \in \widetilde{V}(\gamma_1), x_2 \in \widetilde{V}(\gamma_\ell)}\right\}}\left(\prod_{\substack{1 \leq p \leq \ell \\ p \neq \lceil\ell/2\rceil}} e^{-\beta \frac{c_2}{2}\|\gamma_p\|}\prod_{\substack{1 \leq p \leq \ell-1 \\ p \neq \lceil\ell/2\rceil}}F_{\gamma_p, \gamma_{p+1}}\right) \times \\
   & \hspace{5cm} \times C_{\beta} |\gamma_{\lceil\ell/2\rceil -1}||\gamma_{\lceil\ell/2\rceil+1}| F_{\gamma_{\lceil\ell/2\rceil-1}, \gamma_{\lceil\ell/2\rceil+1}}.
   \end{align*}
Notice that, as we use Proposition \ref{important_corol} to erase the vertices with $1 < p < \ell$, we get an extra constant $C_\beta$. After the recursion, we will end up with

\begin{equation*}
    \begin{split}
&\sum_{\substack{\gamma_p\\ 1 \leq p \leq \ell }} \mathbbm{1}_{\left\{\substack{\substack{\{\gamma_1, \dots, \gamma_\ell\} \in \mathcal{E}^+};\\x_1 \in \widetilde{V}(\gamma_1), x_2 \in \widetilde{V}(\gamma_\ell)}\right\}}  \left(\prod_{p =1}^\ell e^{-\beta \frac{c_2}{4}\|\gamma_p\|}\prod_{p = 1}^{\ell - 1}F_{\gamma_p, \gamma_{p+1}}\right) \leq \\
&\hspace{4.0cm}C_\beta^{\ell - 2}\sum_{\gamma_1, \gamma_\ell}  \mathbbm{1}_{\left\{\substack{x_1 \in \widetilde{V}(\gamma_1), x_2 \in \widetilde{V}(\gamma_\ell)}\right\}} \mathbbm{1}_{\left\{ \{\gamma_1, \gamma_\ell\} \in \mathcal{E}^+ \right\}} |\gamma_1|^2|\gamma_\ell|^2e^{-\beta \frac{c_2}{4}(\|\gamma_1\|+\|\gamma_\ell\|)}F_{\gamma_{1}, \gamma_\ell}.
\end{split}
\end{equation*}

Using that, when $A,B \Subset \Z^d$ are disjoint, we have $F_{A,B} \leq 2^\alpha J_{x,y}|A\|B|\diam(A)^\alpha\diam(B)^\alpha$ for any $x\in A$ and $y \in B$, we get

\begin{align*}
    &\sum_{\substack{\gamma_p\\ 1 \leq p \leq \ell }} \mathbbm{1}_{\left\{\substack{\substack{\{\gamma_1, \dots, \gamma_\ell\} \in \mathcal{E}^+};\\x_1 \in \widetilde{V}(\gamma_1), x_2 \in \widetilde{V}(\gamma_\ell)}\right\}}  \left(\prod_{p =1}^\ell e^{-\beta \frac{c_2}{4}\|\gamma_p\|}\prod_{p = 1}^{\ell - 1}F_{\gamma_p, \gamma_{p+1}}\right) \leq \\[0.3cm]
    &2^{\alpha}J_{x_1, x_2} C_\beta^{\ell - 2}\sum_{\gamma_1, \gamma_\ell}  \mathbbm{1}_{\left\{\substack{\substack{\{\gamma_1, \gamma_\ell\} \in \mathcal{E}^+};\\x_1 \in \widetilde{V}(\gamma_1), x_2 \in \widetilde{V}(\gamma_\ell)}\right\}} |\gamma_1|^2 |\gamma_\ell|^2 e^{-\beta \frac{c_2}{4}(\|\gamma_1\|+\|\gamma_\ell\|)}|\widetilde{V}(\gamma_1)\|\widetilde{V}(\gamma_\ell)|\diam(\gamma_1)^{\alpha}\diam(\gamma_\ell)^\alpha.
\end{align*}

Now we are going to use the isoperimetric inequality so that $|\widetilde{V}(\gamma)| \leq |\partial_{\mathrm{in}} \widetilde{V}(\gamma)|^{\frac{d}{d-1}} \leq |\gamma|^{\frac{d}{d-1}}$. Finally, we will use again that $\diam(\gamma) \leq C|\gamma|^{1+\frac{ad}{d^2-1}}$.

\begin{align*}
    &\sum_{\substack{\gamma_p\\ 1 \leq p \leq \ell }} \mathbbm{1}_{\left\{\substack{\substack{\{\gamma_1, \dots, \gamma_\ell\} \in \mathcal{E}^+};\\x_1 \in \widetilde{V}(\gamma_1), x_2 \in \widetilde{V}(\gamma_\ell)}\right\}}  \left(\prod_{p =1}^\ell e^{-\beta \frac{c_2}{4}\|\gamma_p\|}\prod_{p = 1}^{\ell - 1}F_{\gamma_p, \gamma_{p+1}}\right) \leq \\[0.3cm]
    &\leq 2^{\alpha}J_{x_1, x_2} C_\beta^{\ell - 2}C^2\sum_{\substack{\gamma_1\\ x_1 \in \widetilde{V}(\gamma_1)}} \sum_{\substack{\gamma_\ell\\ x_2 \in \widetilde{V}(\gamma_\ell)}}   e^{-\beta \frac{c_2}{4}(\|\gamma_1\|+\|\gamma_\ell\|)}|\gamma_1|^{2 + b'} |\gamma_\ell|^{2+b'}\\
    & = 2^{\alpha}J_{x_1, x_2} C_\beta^{\ell - 2}C^2\left(\sum_{\substack{\gamma;\;  x_1 \in \widetilde{V}(\gamma)}}    e^{-\beta \frac{c_2}{4}\|\gamma\|}|\gamma|^{2 + b'}\right) \left( \sum_{\substack{\gamma'; \; x_2 \in \widetilde{V}(\gamma')}}e^{-\beta \frac{c_2}{4}\|\gamma'\|} |\gamma'|^{2+b'} \right),
\end{align*}

where $b' = \frac{d}{d-1} + \alpha(1 + \frac{ad}{d^2-1})$. Finally, 

\begin{equation*}
    \sum_{\substack{\gamma_p\\ 1 \leq p \leq \ell }} \mathbbm{1}_{\left\{\substack{\substack{\{\gamma_1, \dots, \gamma_\ell\} \in \mathcal{E}^+};\\x_1 \in \widetilde{V}(\gamma_1), x_2 \in \widetilde{V}(\gamma_\ell)}\right\}}  \left(\prod_{p =1}^\ell e^{-\beta \frac{c_2}{4}\|\gamma_p\|}\prod_{p = 1}^{\ell - 1}F_{\gamma_p, \gamma_{p+1}}\right) \leq 2^{\alpha}J_{x_1, x_2} C_\beta^{\ell - 2}C^2c_{\beta/2}^2\leq 2^\alpha J_{x_1,x_2} (\kappa c_{\beta/2})^\ell,
\end{equation*}
where $\kappa \coloneqq \max\{2^{2\alpha - 1}C^\alpha F_{\{0\}}, C\}$ (notice that we can take $M$ large enough so that $C \geq 1$). Putting everything together again yields,

\begin{align*}
   \sum_{\substack{\Gamma; \exists \gamma, \gamma'\in \Gamma\\\gamma \neq \gamma'\\x_1 \in \widetilde{V}(\gamma), x_2 \in \widetilde{V}(\gamma')}} \widetilde{z}^+_\beta (\Gamma) 
      &\leq 2^\alpha J_{x_1,x_2}\sum_{m \geq 2} \sum_{\ell=2}^m\binom{m-1}{\ell-1}(6c_{\beta/2})^{m-\ell} (\kappa c_{\beta/2})^\ell \\
      &= 2^\alpha J_{x_1, x_2}\kappa c_{\beta/2}\sum_{m\geq 2}\left([6+\kappa] c_{\beta/2}\right)^{m-1} = 2^\alpha \frac{([6+\kappa]\kappa) c_{\beta/2}^2}{1-[6+\kappa] c_{\beta/2}}J_{x_1,x_2}.
\end{align*}
Taking $C_\beta^{(1)} \coloneqq 2\max\left\{2^\alpha  \frac{(([6+\kappa]\kappa) c_{\beta/2}^2}{1-[6+\kappa] c_{\beta/2}},\frac{e^{-\beta \frac{c_5}{2}}}{J(1-7e^{-\beta \frac{c_2}{8}})}\right\}$ ends the proof. 

\end{proof}

\begin{theorem}\label{main_decay}
For $\beta$ large enough, there exists a constant $c_4(\alpha,d,\beta) \coloneqq c_4>0$ such that for any $x_1, x_2 \in \Z^d$ it holds
 \[
 \langle\sigma_{x_1}=-1;\sigma_{x_2}=-1\rangle \leq c_4 J_{x_1, x_2}. 
 \]
\end{theorem}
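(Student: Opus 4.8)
The plan is to feed Proposition \ref{prop_derivatives} with $n=2$ into the tree-graph bound and then reduce the resulting cluster sum to the single-polymer estimate of Proposition \ref{decay_proposition}. Fix $\Lambda\Subset\Z^d$. Proposition \ref{prop_derivatives} gives
\[
|\langle\sigma_{x_1}=-1;\sigma_{x_2}=-1\rangle_{\Lambda,\beta}^+| \leq 24^2\hspace{-0.4cm}\sum_{\substack{X\subset\mathcal{E}^+_\Lambda\\ \{x_1,x_2\}\subset V(X)}}\hspace{-0.4cm}|\phi^T(X)|\prod_{\Gamma\in X}\widetilde{z}^+_\beta(\Gamma),
\]
and Theorem \ref{tree_graph_bound} replaces $|\phi^T(X)|$ by $\sum_{T\in\mathcal{T}_X}\prod_{\{\Gamma,\Gamma'\}\in E(T)}\mathbbm{1}_{\Gamma\not\sim\Gamma'}$. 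It therefore suffices to bound this sum over connected decorated trees of polymers, both of whose poles $x_1,x_2$ lie in $V(X)$, by $C_\beta J_{x_1,x_2}$ uniformly in $\Lambda$; passing to the limit $\Lambda\uparrow\Z^d$ — legitimate by Corollary \ref{corol_fields} and the uniformity of the bound — then yields the theorem.

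I would organize this sum exactly as Proposition \ref{decay_proposition} organizes its polymer sum, only one level higher. Every admissible cluster $X$ contains a polymer $\Gamma_1$ with $x_1\in V(\Gamma_1)$ and a polymer $\Gamma_2$ with $x_2\in V(\Gamma_2)$ (possibly $\Gamma_1=\Gamma_2$), and in any spanning tree $T\in\mathcal{T}_X$ there is a unique path $\mathcal{P}$ from $\Gamma_1$ to $\Gamma_2$. First I would sum over all polymers sitting in the branches that hang off $\mathcal{P}$: generation by generation, exactly as in the proof of Lemma \ref{main_lemma}, Proposition \ref{main_prop} shows each generation contributes a convergent geometric factor, the net effect being to trade each backbone activity $\widetilde{z}^+_\beta(\Pi)$ for $\widetilde{z}^+_\beta(\Pi)\,e^{\beta\frac{c_2}{4}\|\Pi\|}$ — still carrying a factor $e^{-\beta\frac{c_2}{4}\|\gamma\|}$ per contour $\gamma\in\Pi$ — together with a bounded constant raised to the backbone length. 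This reduces the estimate to a sum over backbone paths $\Gamma_1=\Pi_1\not\sim\Pi_2\not\sim\cdots\not\sim\Pi_k=\Gamma_2$ of mutually incompatible polymers enclosing $x_1$ and $x_2$ at the two ends.

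It remains to contract the backbone. Here I would invoke the dichotomy spelled out in the proof of Proposition \ref{main_prop}: an incompatibility $\Pi_i\not\sim\Pi_{i+1}$ forces either a pair of contours, one from each polymer, whose supports lie within the $(M,a)$-separation distance, or the nesting of the volume of one polymer inside the minus interior of a contour of the other. Expanding each $\widetilde{z}^+_\beta(\Pi_i)$ into its spanning-tree-over-contours form and splicing these geometric links together produces a single contour-level chain running from a contour enclosing $x_1$ to one enclosing $x_2$, whose consecutive links are either $F_{\gamma,\gamma'}$-weighted (inside a polymer) or closeness/nesting constraints of the type estimated in \eqref{bound111}--\eqref{eq_almosthere}. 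Then the telescoping device of Proposition \ref{decay_proposition} — repeatedly erase the middle link of the chain using Proposition \ref{important_corol}, so that the accumulated powers of $|\gamma|$ and $\diam(\gamma)$ never exceed two and are absorbed by the exponential weights via $\diam(\gamma)\leq C|\gamma|^{1+\frac{ad}{d^2-1}}$ and $|\widetilde{V}(\gamma)|\leq|\gamma|^{\frac{d}{d-1}}$ — collapses the whole expression to a multiple of $J_{x_1,x_2}$, the sum over the length $k$ converging geometrically for $\beta$ large. The base cases $\Gamma_1=\Gamma_2$ and $k=2$ are exactly Proposition \ref{decay_proposition}.

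The main obstacle is this last contraction. Unlike the $F_{\gamma,\gamma'}$-weighted edges internal to a polymer, the incompatibility edges joining consecutive $\Pi_i$ carry no intrinsic long-range decay, so the polynomial gain $J_{x_1,x_2}$ must be extracted from the geometric separation and nesting conditions of Definition \ref{def_compatibility}, as already done piecewise in the proof of Proposition \ref{main_prop}; simultaneously one must keep the combinatorial and diameter factors from proliferating, which is precisely the role of the ``erase only the middle link'' strategy and of choosing $\beta$ large enough that $e^{-\beta\frac{c_2}{4}\|\gamma\|}$ dominates every polynomial in $|\gamma|$ and $\diam(\gamma)$ that appears.
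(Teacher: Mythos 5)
Your reduction steps are the same as the paper's: Proposition \ref{prop_derivatives} with $n=2$ plus the tree-graph bound, the split according to whether a single polymer encloses both $x_1,x_2$ (handled via Lemma \ref{main_lemma} and Proposition \ref{decay_proposition}) or not, and the branch-erasure that collapses each cluster onto a backbone $\Gamma_1\not\sim\Gamma_2\not\sim\cdots\not\sim\Gamma_\ell$ of pairwise-incompatible polymers with $x_1\in\widetilde V(\Gamma_1)$, $x_2\in\widetilde V(\Gamma_\ell)$; this matches \eqref{eq_1_decay_thm}. The gap is in the backbone contraction, which you yourself flag as the main obstacle but do not resolve. Your plan — expand each $\widetilde z^+_\beta(\Gamma_p)$ into its contour tree, splice the pieces into one contour-level chain, and run the middle-erasure telescoping of Proposition \ref{important_corol} — does not go through as stated: Proposition \ref{important_corol} erases a vertex only when \emph{both} adjacent edges carry $F_{\gamma,\gamma'}$ factors, whereas the links joining consecutive backbone polymers are mere incompatibility constraints. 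Incompatible polymers are \emph{close} or \emph{nested} (Definition \ref{def_compatibility}), so these links carry no long-range decay to telescope across — in the nesting case $V(\Gamma')\subset\I_-(\gamma)$ the distance is zero — and the estimates \eqref{bound111}--\eqref{eq_almosthere} you invoke only bound sums of activities over incompatible objects; they do not regenerate a cross-link $F_{\gamma_0,\gamma_1}$ between the two neighbours of an erased vertex, which is exactly what the middle-erasure recursion needs to propagate the decay all the way from $x_1$ to $x_2$.

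The paper fills this hole with a different device, and the chain is contracted from the end rather than from the middle. The key geometric claim is that $\Gamma_{\ell-1}\not\sim\Gamma_\ell$ forces the existence of a \emph{second} marked point $x_3\in\widetilde V(\Gamma_\ell)$, $x_3\neq x_2$, with $\dis(x_3,\widetilde V(\Gamma_{\ell-1}))\leq 1+M|V(\Gamma_{\ell-1})|^{\frac{a}{d+1}}$ (using that every point of $\widetilde V(\Gamma)$ has a neighbour in $\widetilde V(\Gamma)$). One then applies Proposition \ref{decay_proposition} recursively, as a black-box two-point bound for a \emph{single} polymer containing the two marked points $x_2,x_3$, to get $\sum_{\widetilde V(\Gamma_\ell)\supset\{x_2,x_3\}}\widetilde z^+_{\beta/2}(\Gamma_\ell)\leq C^{(1)}_{\beta/2}J_{x_2,x_3}$; the triangle inequality and the summation over the admissible $x_3$ convert this into $\dis(x_2,\widetilde V(\Gamma_{\ell-1}))^{-\alpha}$ times a power $|\Gamma_{\ell-1}|^{b_2}$ that is absorbed by $e^{-\beta\frac{c_2}{4}\|\Gamma_{\ell-1}\|}$. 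Iterating this end-to-end erasure along the backbone yields $(\kappa^{(1)}C^{(1)}_{\beta/4})^{\ell-1}J_{x_1,x_2}$, and the sum over $m,\ell$ converges geometrically for $\beta$ large. Without this (or an equivalent lemma turning each incompatibility link into genuine $J$-decay, including the nested case), your telescoping scheme cannot produce the factor $J_{x_1,x_2}$.
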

\begin{proof}
  By Proposition \ref{prop_derivatives} and the tree-graph bound, in order to estimate the truncated two-point correlation function, it is sufficient to bound the quantity in the RHS below:
\begin{equation}\label{decay_correlations_eq_1}
\sum_{\substack{X \subset \mathcal{E}^+ \\ \{x_1,x_2\}\subset V(X)}} \hspace{-0.5cm}|\phi^T(X)|\prod_{\Gamma \in X}\widetilde{z}_{\beta}^+(\Gamma) \leq \hspace{-0.5cm}\sum_{\substack{X \subset \mathcal{E}^+ \\ \{x_1,x_2\}\subset \widetilde{V}(X)}}\hspace{-0.5cm}\sum_{T\in \mathcal{T}_{X}}\prod_{\Gamma \in X}\widetilde{z}_{\beta}^+(\Gamma)\prod_{\{\Gamma,\Gamma'\}\in E(T)}\mathbbm{1}_{\Gamma\not\sim \Gamma'}.
\end{equation}
We will apply a strategy like in Proposition \ref{decay_proposition}. We first split the sum in the right-hand side of \eqref{decay_correlations_eq_1} into two series depending on the location of the points $x_1$ and $x_2$ with respect to the polymers. After that, we will reduce trees to paths and, finally, try to reduce the remaining problem to estimates similar to the ones we did before, erasing vertices in a path. 
First notice that
\begin{equation*}
 \begin{split}
\sum_{\substack{X \subset \mathcal{E}^+ \\ \{x_1,x_2\}\subset \widetilde{V}(X)}}\hspace{-0.5cm}\sum_{T\in \mathcal{T}_{X}}\prod_{\Gamma \in X}\widetilde{z}_{\beta}^+(\Gamma)\prod_{\{\Gamma,\Gamma'\}\in E(T)}\mathbbm{1}_{\Gamma\not\sim \Gamma'} &= \sum_{\substack{X \subset \mathcal{E}^+ \\ \exists \Gamma \in X \\ \{x_1,x_2\}\subset \widetilde{V}(\Gamma)}}\hspace{-0.5cm}\sum_{T\in \mathcal{T}_{X}}\prod_{\Gamma \in X}\widetilde{z}_{\beta}^+(\Gamma)\prod_{\{\Gamma,\Gamma'\}\in E(T)}\mathbbm{1}_{\Gamma\not\sim \Gamma'}\\[0.2cm]
+&\sum_{\substack{X \subset \mathcal{E}^+ \\ \{x_1,x_2\}\subset \widetilde{V}(X) \\ \nexists \Gamma \in X, \{x_1,x_2\} \subset \widetilde{V}(\Gamma)}}\hspace{-0.5cm}\sum_{T\in \mathcal{T}_{X}}\prod_{\Gamma \in X}\widetilde{z}_{\beta}^+(\Gamma)\prod_{\{\Gamma,\Gamma'\}\subset T}\mathbbm{1}_{\Gamma\not\sim \Gamma'}.
 \end{split}   
\end{equation*}
For the first series on the right-hand side, we can proceed similarly to Lemma \ref{main_lemma} and, after that, apply Proposition \ref{decay_proposition}, yielding
\begin{equation*}
    \sum_{\substack{X \subset \mathcal{E}^+ \\ \exists \Gamma \in X \\ \{x_1,x_2\}\subset \widetilde{V}(\Gamma)}}\hspace{-0.5cm}\sum_{T\in \mathcal{T}_{X}}\prod_{\Gamma \in X}\widetilde{z}_{\beta}^+(\Gamma)\prod_{\{\Gamma,\Gamma'\}\in E(T)}\mathbbm{1}_{\Gamma\not\sim \Gamma'} \leq \sum_{\{x_1,x_2\}\subset \widetilde{V}(\Gamma)}\widetilde{z}_{\beta/2}^+(\Gamma) \left(\sum_{n\geq 1}e^{-\beta \frac{c_2}{16}n}\right) = C_{\beta/2}^{(1)}c_{\beta/4}J_{x_1x_2}.
\end{equation*}

The second series is very similar to the one from Proposition \ref{decay_proposition} if we exchange contours by polymers. The analysis, however, must differ in some points because the compatibility conditions is now replaced by an incompatibility condition. One of the difficulties now is that a point can be in multiple polymers. This is overcome by replacing $\mathcal{T}_{x_1,x_2}$ by $\mathcal{T}^\prime_{x_1,x_2}$, the set of all trees of polymers $T$ with two vertices $\Gamma_{x_1}, \Gamma_{x_2}$ having $\widetilde{V}(\Gamma_{x_1})\cap\{x_1,x_2\} = \{x_1\}$ and $\widetilde{V}(\Gamma_{x_2})\cap\{x_1,x_2\} = \{x_2\}$. We also ask that, for all $\Gamma\in v(T)$ in the path connecting $\Gamma_{x_1}$ and $\Gamma_{x_2}$, $\{x_1,x_2\}\cap\widetilde{V}(\Gamma)=\emptyset$. 

We can then follow the arguments for erasing trees in the exact same way as Proposition \ref{decay_proposition}, replacing equalities by inequalities and using Lemma \ref{main_lemma} to get




\begin{equation}\label{eq_1_decay_thm}
\begin{split}
&\sum_{\substack{X \subset \mathcal{E}^+ \\ \{x_1,x_2\}\subset \widetilde{V}(X) \\ \nexists \Gamma \in X, \{x_1,x_2\} \subset \widetilde{V}(\Gamma)}}\hspace{-0.5cm}\sum_{T\in \mathcal{T}_{X}}\prod_{\Gamma \in X}\widetilde{z}_{\beta}^+(\Gamma)\prod_{\{\Gamma,\Gamma'\}\subset T}\mathbbm{1}_{\Gamma\not\sim \Gamma'} \\
& \hspace{1.7cm}\leq \sum_{m \geq 2} \sum_{\ell=2}^m\binom{m-1}{\ell-1} e^{-\beta\frac{ c_2}{16}(m-\ell)}\sum_{\substack{\Gamma_p\\ 1 \leq p \leq \ell}}\prod_{p = 1}^{\ell - 1}\mathbbm{1}_{\{\Gamma_p \not\sim \Gamma_{p + 1}\}}\mathbbm{1}_{\left\{\substack{x_1 \in \widetilde{V}(\Gamma_1), x_2 \in \widetilde{V}(\Gamma_\ell)  \\ x_1,x_2 \not\in \widetilde{V}(\Gamma_r), 2\leq r\leq \ell-1}\right\}}  \prod_{p=1}^{\ell} \widetilde{z}_{\beta/2}^+(\Gamma_p).
\end{split}
\end{equation}

Now the procedure of erasing the vertices on the remaining path must differ, again by the fact that in Proposition \ref{decay_proposition} all the contours in the path must be compatible, while in this case we only can guarantee that the polymers are incompatible to its neighbors (See Figure \ref{fig:cortando_arvore}). 

\begin{figure}[ht]
    \centering
    \tikzset{every picture/.style={line width=0.75pt}} 

\begin{tikzpicture}[x=0.75pt,y=0.75pt,yscale=-1,xscale=1]

\draw   (20,180) .. controls (20,171.72) and (27.06,165) .. (35.77,165) .. controls (44.48,165) and (51.54,171.72) .. (51.54,180) .. controls (51.54,188.28) and (44.48,195) .. (35.77,195) .. controls (27.06,195) and (20,188.28) .. (20,180) -- cycle ;
\draw    (51.54,180) -- (114.62,180) ;
\draw   (114.62,180) .. controls (114.62,171.72) and (121.68,165) .. (130.38,165) .. controls (139.09,165) and (146.15,171.72) .. (146.15,180) .. controls (146.15,188.28) and (139.09,195) .. (130.38,195) .. controls (121.68,195) and (114.62,188.28) .. (114.62,180) -- cycle ;
\draw    (146.15,180) -- (209.23,180) ;
\draw   (209.23,180) .. controls (209.23,171.72) and (216.29,165) .. (225,165) .. controls (233.71,165) and (240.77,171.72) .. (240.77,180) .. controls (240.77,188.28) and (233.71,195) .. (225,195) .. controls (216.29,195) and (209.23,188.28) .. (209.23,180) -- cycle ;
\draw    (240.77,180) -- (303.85,180) ;
\draw   (303.85,180) .. controls (303.85,171.72) and (310.91,165) .. (319.62,165) .. controls (328.32,165) and (335.38,171.72) .. (335.38,180) .. controls (335.38,188.28) and (328.32,195) .. (319.62,195) .. controls (310.91,195) and (303.85,188.28) .. (303.85,180) -- cycle ;
\draw  [dash pattern={on 4.5pt off 4.5pt}] (284.62,150) -- (354.62,150) -- (354.62,210) -- (284.62,210) -- cycle ;
\draw   (20,260) .. controls (20,251.72) and (27.06,245) .. (35.77,245) .. controls (44.48,245) and (51.54,251.72) .. (51.54,260) .. controls (51.54,268.28) and (44.48,275) .. (35.77,275) .. controls (27.06,275) and (20,268.28) .. (20,260) -- cycle ;
\draw    (51.54,260) -- (114.62,260) ;
\draw   (114.62,260) .. controls (114.62,251.72) and (121.68,245) .. (130.38,245) .. controls (139.09,245) and (146.15,251.72) .. (146.15,260) .. controls (146.15,268.28) and (139.09,275) .. (130.38,275) .. controls (121.68,275) and (114.62,268.28) .. (114.62,260) -- cycle ;
\draw    (146.15,260) -- (209.23,260) ;
\draw   (209.23,260) .. controls (209.23,251.72) and (216.29,245) .. (225,245) .. controls (233.71,245) and (240.77,251.72) .. (240.77,260) .. controls (240.77,268.28) and (233.71,275) .. (225,275) .. controls (216.29,275) and (209.23,268.28) .. (209.23,260) -- cycle ;
\draw  [dash pattern={on 4.5pt off 4.5pt}] (190,230) -- (260,230) -- (260,290) -- (190,290) -- cycle ;
\draw   (20,340) .. controls (20,331.72) and (27.06,325) .. (35.77,325) .. controls (44.48,325) and (51.54,331.72) .. (51.54,340) .. controls (51.54,348.28) and (44.48,355) .. (35.77,355) .. controls (27.06,355) and (20,348.28) .. (20,340) -- cycle ;
\draw    (51.54,340) -- (114.62,340) ;
\draw   (114.62,340) .. controls (114.62,331.72) and (121.68,325) .. (130.38,325) .. controls (139.09,325) and (146.15,331.72) .. (146.15,340) .. controls (146.15,348.28) and (139.09,355) .. (130.38,355) .. controls (121.68,355) and (114.62,348.28) .. (114.62,340) -- cycle ;

\draw (27,172.07) node [anchor=north west][inner sep=0.75pt]    {$\Gamma _{1}$};
\draw (120.88,171.53) node [anchor=north west][inner sep=0.75pt]    {$\Gamma _{2}$};
\draw (215.67,171.4) node [anchor=north west][inner sep=0.75pt]    {$\Gamma _{3}$};
\draw (311,171.4) node [anchor=north west][inner sep=0.75pt]    {$\Gamma _{4}$};
\draw (27.67,251.4) node [anchor=north west][inner sep=0.75pt]    {$\Gamma _{1}$};
\draw (121.55,252.2) node [anchor=north west][inner sep=0.75pt]    {$\Gamma _{2}$};
\draw (217,252.07) node [anchor=north west][inner sep=0.75pt]    {$\Gamma _{3}$};
\draw (27,332.73) node [anchor=north west][inner sep=0.75pt]    {$\Gamma _{1}$};
\draw (120.88,331.53) node [anchor=north west][inner sep=0.75pt]    {$\Gamma _{2}$};

\end{tikzpicture}
    \caption{The procedure of erasing the vertices in the path between $\Gamma_1$ and $\Gamma_\ell$. In this figure we have $\ell = 4$.}
    \label{fig:cortando_arvore}
\end{figure}
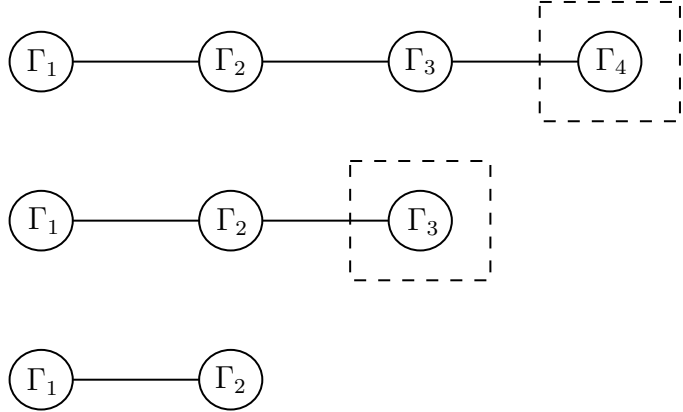

We claim that, if $\Gamma_{\ell-1} \not\sim \Gamma_\ell$, then there must exist some $x_3 \in \widetilde{V}(\Gamma_\ell)$ with $x_3 \neq x_2$ such that $\dis(x_3,\widetilde{V}(\Gamma_{\ell-1})) \leq 1 + M |V(\Gamma_{\ell - 1})|^{\frac{a}{d+1}}$. Indeed, since they are incompatible, there must exist some $x_0 \in \widetilde{V}(\Gamma_\ell)$ with $\dis(x_0,\widetilde{V}(\Gamma_{\ell-1}))=\dis(\widetilde{V}(\Gamma_{\ell}),\widetilde{V}(\Gamma_{\ell-1}))\leq M |V(\Gamma_{\ell-1})|^{\frac{a}{d+1}}$. If $x_0 \neq x_2$, just choose $x_3=x_0$ and we are done. Suppose that $x_0=x_2$. The claim will follow from the fact that for any $x \in \widetilde{V}(\Gamma)$, there must exist some $ y \in \widetilde{V}(\Gamma)$ with $|x-y|=1$. Indeed, if $x \in \Sp(\Gamma)$ there must exist some neighboring point $y$ with a different spin since $x$ is incorrect. But then, $y$ is also incorrect and must be inside $\Sp(\Gamma)$. Now assume that $x \in \I_-(\Gamma)$ and that $B_1(x) \cap \I_-(\Gamma) = \{x\}$. Then, every neighboring point of $x$ is incorrect, therefore they are inside $\Sp(\Gamma)$. Taking $x = x_2$ and $y = x_3$, $\dis(x_3, \widetilde{V}(\Gamma_{\ell - 1})) \leq |x_3 - x_2| + \dis(x_2, \widetilde{V}(\Gamma_{\ell - 1})) \leq 1 + M|V(\Gamma_{\ell - 1})|^{\frac{a}{d+1}}$. Using this fact, we have

\begin{equation}\label{420}
\begin{split}
&\sum_{\substack{\Gamma_p\\ 1 \leq p \leq \ell}}\prod_{p = 1}^{\ell - 1}\mathbbm{1}_{\{\Gamma_p \not\sim \Gamma_{p + 1}\}}\mathbbm{1}_{\left\{\substack{x_1 \in \widetilde{V}(\Gamma_1), x_2 \in \widetilde{V}(\Gamma_\ell)  \\ x_1,x_2 \not\in \widetilde{V}(\Gamma_r), 2\leq r\leq \ell-1}\right\}} \prod_{p=1}^\ell\widetilde{z}_{\beta/2}^+(\Gamma_p)\\
&=  \sum_{\substack{\Gamma_p\\ 1 \leq p \leq \ell-1}}\prod_{p = 1}^{\ell - 2}\mathbbm{1}_{\{\Gamma_p \not\sim \Gamma_{p + 1}\}}\mathbbm{1}_{\left\{\substack{x_1 \in \widetilde{V}(\Gamma_1) \\ x_1,x_2 \not\in \widetilde{V}(\Gamma_r), 2\leq r\leq \ell-1}\right\}} \prod_{p=1}^{\ell-1}\widetilde{z}_{\beta/2}^+(\Gamma_p)\sum_{\substack{\Gamma_{\ell}\not\sim \Gamma_{\ell-1} \\ x_2\in \widetilde{V}(\Gamma_\ell)}}\widetilde{z}^+_{\beta/2}(\Gamma_\ell) \\ 
&\leq \sum_{\substack{\Gamma_p\\ 1 \leq p \leq \ell-1}}\prod_{p = 1}^{\ell - 2}\mathbbm{1}_{\{\Gamma_p \not\sim \Gamma_{p + 1}\}}\mathbbm{1}_{\left\{\substack{x_1 \in \widetilde{V}(\Gamma_1)  \\ x_1,x_2 \not\in \widetilde{V}(\Gamma_r), 2\leq r\leq \ell-1}\right\}} \prod_{p=1}^{\ell-1}\widetilde{z}_{\beta/2}^+(\Gamma_p)\hspace{-1.0cm}\sum_{\substack{x_3;\ x_3 \neq x_2 \\ \dis(x_3,\widetilde{V}(\Gamma_{\ell-1}))\leq 1 + M|V(\Gamma_{\ell-1})|^{\frac{a}{d+1}}}}\sum_{\widetilde{V}(\Gamma_\ell)\supset \{x_2,x_3\}}\widetilde{z}^+_{\beta/2}(\Gamma_\ell) \\
&\leq C_{\beta/2}^{(1)}\sum_{\substack{\Gamma_p\\ 1 \leq p \leq \ell-1}}\prod_{p = 1}^{\ell - 2}\mathbbm{1}_{\{\Gamma_p \not\sim \Gamma_{p + 1}\}}\mathbbm{1}_{\left\{\substack{x_1 \in \widetilde{V}(\Gamma_1)  \\ x_1,x_2 \not\in \widetilde{V}(\Gamma_r), 2\leq r\leq \ell-1}\right\}} \prod_{p=1}^{\ell-1}\widetilde{z}_{\beta/2}^+(\Gamma_p)\hspace{-1.0cm}\sum_{\substack{x_3;\ x_3 \neq x_2 \\ \dis(x_3,\widetilde{V}(\Gamma_{\ell-1}))\leq 1 + M|V(\Gamma_{\ell-1})|^{\frac{a}{d+1}}}}\hspace{-1.5cm}J_{x_2,x_3},
\end{split}
\end{equation}
where the last inequality is due to Proposition \ref{decay_proposition}. Notice that
\[
\frac{\dis(\widetilde{V}(\Gamma_{\ell-1}),x_2)}{|x_2-x_3|}\leq \frac{\dis(\widetilde{V}(\Gamma_{\ell-1}),x_3)}{|x_2-x_3|}+1\leq 2M |V(\Gamma_{\ell-1})|^{\frac{a}{d+1}}.
\]
 By multiplying and dividing by $\dis(\widetilde{V}(\Gamma_{\ell-1}),x_2)^\alpha$ we get

\begin{equation*}
\begin{split}
\sum_{\substack{x_3;\ x_3 \neq x_2 \\ \dis(x_3,\widetilde{V}(\Gamma_{\ell-1}))< 1 + M|V(\Gamma_{\ell-1})|^{\frac{a}{d+1}}}}\hspace{-1.5cm}J_{x_2,x_3}\leq \frac{J(2M)^\alpha(|V(\Gamma_{\ell-1})|)^{\frac{a\alpha}{d+1}}(3M|V(\Gamma_{\ell-1})|)^{\frac{ad}{d+1}}|\widetilde{V}(\Gamma_{\ell-1})|}{\dis(x_2,\widetilde{V}(\Gamma_{\ell-1}))^\alpha}\leq \frac{J (3M)^{\alpha+\frac{ad}{d+1}}|\Gamma_{\ell-1}|^{b_2}}{\dis(x_2,\widetilde{V}(\Gamma_{\ell-1}))^\alpha},
\end{split}
\end{equation*}
where $b_2 \coloneqq \frac{a\alpha d + ad^2 + d(d+1)}{d^2-1}$. We get a lower bound for \eqref{420}, given by

\begin{equation*}
\begin{split}
&J C^{(1)}_{\beta/2}(3M)^{\alpha+\frac{ad}{d+1}}\sum_{\substack{\Gamma_p\\ 1 \leq p \leq \ell-2}}\prod_{p = 1}^{\ell - 3}\mathbbm{1}_{\{\Gamma_p \not\sim \Gamma_{p + 1}\}}\mathbbm{1}_{\left\{\substack{x_1 \in \widetilde{V}(\Gamma_1) \\ x_1,x_2 \not\in \widetilde{V}(\Gamma_r), 2\leq r\leq \ell-2}\right\}} \prod_{p=1}^{\ell-2}\widetilde{z}_{\beta/2}^+(\Gamma_p) \times \\
& \hspace{7.5cm}\times \left[ \sum_{\substack{\Gamma_{\ell - 1};\\ \Gamma_{\ell-1}\not\sim \Gamma_{\ell-2}}}\mathbbm{1}_{\{x_1,x_2 \not\in \widetilde{V}(\Gamma_{\ell-1})\}}\widetilde{z}^+_{\beta/2}(\Gamma_{\ell-1}) \frac{|\Gamma_{\ell-1}|^{b_2}}{\dis(x_2,\widetilde{V}(\Gamma_{\ell-1}))^\alpha}\right].
\end{split}
\end{equation*}

We can bound the expression between square brackets by

\begin{equation*}
\begin{split}
     & \sum_{x_3 \in \Z^d\setminus\{x_2\}}\frac{1}{|x_2-x_3|^\alpha}\sum_{\substack{\Gamma_{\ell-2}\not\sim \Gamma_{\ell-1} \\ x_3 \in \widetilde{V}(\Gamma_{\ell-1})\\ d(x_2, \widetilde{V}(\Gamma_{\ell - 1})) = |x_2 - x_3|}}\mathbbm{1}_{\{x_2 \not\in \widetilde{V}(\Gamma_{\ell-1})\}}\widetilde{z}_{\beta/2}^+(\Gamma_{\ell-1})|\Gamma_{\ell-1}|^{b_2}\\
    &\leq \sum_{x_3 \in \Z^d\setminus\{x_2\}}\frac{1}{|x_2-x_3|^\alpha}\sum_{\substack{\Gamma_{\ell-2}\not\sim \Gamma_{\ell-1} \\ x_3 \in \widetilde{V}(\Gamma_{\ell-1})}}\mathbbm{1}_{\{x_2 \not\in \widetilde{V}(\Gamma_{\ell-1})\}}\widetilde{z}_{\beta/4}^+(\Gamma_{\ell-1}) \\
    &\leq \sum_{x_3 \in \Z^d\setminus\{x_2\}}\frac{1}{|x_2-x_3|^\alpha}\hspace{-1.0cm}\sum_{\substack{x_4;\ x_4 \neq x_3 \\ \dis(x_4,\widetilde{V}(\Gamma_{\ell-2}))< 1 + M|V(\Gamma_{\ell-2})|^{\frac{a}{d+1}}}}\sum_{\widetilde{V}(\Gamma_{\ell-1})\supset \{x_3,x_4\}}\mathbbm{1}_{\{x_2 \not\in \widetilde{V}(\Gamma_{\ell-1})\}}\widetilde{z}_{\beta/4}^+(\Gamma_{\ell-1}) \\
    &\leq C^{(1)}_{\beta/4}\sum_{x_3 \in \Z^d\setminus\{x_2\}}\frac{1}{|x_2-x_3|^\alpha}\sum_{\substack{x_4;\ x_4 \neq x_2,x_3 \\ \dis(x_4,\widetilde{V}(\Gamma_{\ell-2}))\leq 1 + M|V(\Gamma_{\ell-2})|^{\frac{a}{d+1}}}}\hspace{-1.5cm}J_{x_3,x_4}.
\end{split}
\end{equation*}

where in the first inequality we need to take $\beta > 8b_2/c_2$. The triangular inequality yields 
 \[
 \frac{J_{x_3,x_4}}{|x_2-x_3|^\alpha}\leq J_{x_2,x_4}\left(\frac{1}{|x_2-x_3|}+\frac{1}{|x_3-x_4|}\right)^\alpha,
 \]
 and, together with the reverse H\"{o}lder inequality, we get
\begin{equation}\label{eq_4_decay_thm}
\begin{split}
 \sum_{\substack{x_3, x_4: \\  x_3\neq x_2, x_4\notin \{x_2,x_3\} \\ \dis(x_4,\widetilde{V}(\Gamma_{\ell-2}))\leq 1 + M|V(\Gamma_{\ell-2})|^{\frac{a}{d+1}}}}\hspace{-1.5cm}\frac{J_{x_3,x_4}}{|x_2-x_3|^\alpha}\ \hspace{0.5cm}&\leq 2^\alpha J^{-1} F_{\{0\}}\hspace{-1.5cm}\sum_{\substack{x_4: x_4\neq x_2 \\ \dis(x_4,\widetilde{V}(\Gamma_{\ell-2}))\leq1+M|V(\Gamma_{\ell-2})|^{\frac{a}{d+1}}}}\hspace{-1.5cm}J_{x_2,x_4} .
\end{split}
\end{equation}

Notice that, putting $\kappa^{(1)} \coloneqq 2^\alpha F_{\{0\}} (3M)^{\alpha+ \frac{ad}{d+1}}$ we have proven

\begin{equation*}
    \sum_{\substack{\Gamma_{p}\\ \Gamma_{p} \not\sim \Gamma_{p-1}}} \mathbbm{1}_{\{x_1, x_2 \notin \widetilde{V}(\Gamma_{p})\}}\widetilde{z}^+_{\beta/2}(\Gamma_{p})\hspace{-1cm}\sum_{\substack{x_3;\ x_3 \neq x_2 \\ d(x_3,\widetilde{V}(\Gamma_{p}))\leq 1 + M|V(\Gamma_{p})|^{\frac{a}{d+1}}}}\hspace{-1.5cm}J_{x_2,x_3} \hspace{0.2cm} \leq \hspace{0.3cm}\kappa^{(1)}C^{(1)}_{\beta/4}\hspace{-1cm}\sum_{\substack{x_3;\ x_3 \neq x_2 \\ d(x_3,\widetilde{V}(\Gamma_{p-1}))\leq 1 + M|V(\Gamma_{p-1})|^{\frac{a}{d+1}}}}\hspace{-1.5cm}J_{x_2,x_3} 
\end{equation*}

for the particular case where $p = \ell - 1$. Iterating this inequality,

\begin{equation}\label{eq_5_decay_thm}
\begin{split}
&\sum_{\substack{\Gamma_p \\ 1 \leq p \leq \ell}} \prod_{p = 1}^{\ell - 1} \mathbbm{1}_{\{\Gamma_p \not\sim \Gamma_{p-1}\}}\mathbbm{1}_{\left\{\substack{x_1 \in \widetilde{V}(\Gamma_1), x_2 \in \widetilde{V}(\Gamma_\ell)  \\ x_1,x_2 \not\in \widetilde{V}(\Gamma_r), 2\leq r\leq \ell-1}\right\}} \prod_{p=1}^{\ell-2}\widetilde{z}_{\beta/2}^+(\Gamma_p)\\[0.2cm]
&\hspace{4cm}\leq (\kappa^{(1)})^{\ell-2} (C^{(1)}_{\beta/4})^{\ell-1}\hspace{-0.5cm} \sum_{\substack{ \Gamma_1 \\ x_1 \in \widetilde{V}(\Gamma_1), x_2 \not \in \widetilde{V}(\Gamma_1)}}\hspace{-0.5cm}\widetilde{z}_{\beta/2}^+(\Gamma_1)\sum_{\substack{x_3: x_3\neq x_2 \\ \dis(x_3,\widetilde{V}(\Gamma_1))<1+M|V(\Gamma_1)|^{\frac{a}{d+1}}}}\hspace{-1.5cm}J_{x_2,x_3}.
\end{split}
\end{equation}

We can adapt the steps above to erase the last vertex and we also get

\begin{equation*}
    \sum_{\substack{ \Gamma_1 \\ x_1 \in \widetilde{V}(\Gamma_1), x_2 \not \in \widetilde{V}(\Gamma_1)}}\hspace{-0.5cm}\widetilde{z}_{\beta/2}^+(\Gamma_1)\sum_{\substack{x_3: x_3\neq x_2 \\ \dis(x_3,\widetilde{V}(\Gamma_1))<1+M|V(\Gamma_1)|^{\frac{a}{d+1}}}}\hspace{-1.5cm}J_{x_2,x_3} \leq  (\kappa^{(1)}) (C^{(1)}_{\beta/4})J_{x_1,x_2},
\end{equation*}

so finally,

\begin{equation}\label{eq_6_decay_thm}
\begin{split}
\sum_{\substack{\Gamma_p \\ 1 \leq p \leq \ell}} \prod_{p = 1}^{\ell - 1} \mathbbm{1}_{\{\Gamma_p \not\sim \Gamma_{p-1}\}}\mathbbm{1}_{\left\{\substack{x_1 \in \widetilde{V}(\Gamma_1), x_2 \in \widetilde{V}(\Gamma_\ell)  \\ x_1,x_2 \not\in \widetilde{V}(\Gamma_r), 2\leq r\leq \ell-1}\right\}} \prod_{p=1}^{\ell-2}\widetilde{z}_{\beta/2}^+(\Gamma_p)\leq  (\kappa^{(1)})^{\ell-1} (C^{(1)}_{\beta/4})^{\ell} J_{x_1, x_2}.
\end{split}
\end{equation}

Plugging it back to Equation \eqref{eq_1_decay_thm}, we have

\begin{equation*}
\begin{split}
\sum_{\substack{X \subset \mathcal{E}^+ \\ \{x_1,x_2\}\subset \widetilde{V}(X) \\ \nexists \Gamma \in X, \{x_1,x_2\} \subset \widetilde{V}(\Gamma)}}\hspace{-0.5cm}\sum_{T\in \mathcal{T}_{X}}\prod_{\Gamma \in X}\widetilde{z}_{\beta}^+(\Gamma)\prod_{\{\Gamma,\Gamma'\}\subset T}\mathbbm{1}_{\Gamma\not\sim \Gamma'} &\leq C^{(1)}_{\beta/4}J_{x_1 x_2}\sum_{m \geq 2} \sum_{\ell=2}^m\binom{m-1}{\ell-1}  e^{-\beta\frac{ c_2}{16}(m-\ell)}(\kappa^{(1)}C_{\beta/4}^{(1)})^{\ell-1} \\ 
&\leq C^{(1)}_{\beta/4}J_{x_1 x_2}\sum_{m \geq 2} (e^{-\beta\frac{ c_2}{16}}+\kappa^{(1)}C_{\beta/4}^{(1)})^{m-1}
\\&\leq \frac{C_{\beta/4}^{(1)}(e^{-\beta\frac{ c_2}{16}}+\kappa^{(1)}C_{\beta/4}^{(1)})}{1-e^{-\beta\frac{ c_2}{16}}-\kappa^{(1)}C_{\beta/4}^{(1)}}J_{x_1,x_2}.
\end{split}
\end{equation*}
The proof is concluded using Proposition \ref{prop_derivatives} and choosing
\[
c_4 \coloneqq 2(24)^2C_{\beta/4}^{(1)}\max\left\{\frac{e^{-\beta\frac{ c_2}{16}}+\kappa^{(1)}C_{\beta/4}^{(1)}}{1-e^{-\beta\frac{ c_2}{16}}-\kappa^{(1)}C_{\beta/4}^{(1)}}, c_{\beta/4}\right\}.
\]

\end{proof}

For every local function $f: \Omega \rightarrow \mathbb{R}$ there exists $\Lambda \Subset \mathbb{Z}^d$ and real numbers $\{f_A\}_{A\subset \Lambda}$ such that
\begin{equation}\label{local_function}
f = \sum_{A\subset \Lambda} f_A n_A,
\end{equation}
where $n_A\coloneqq \prod_{x\in A}n_x$, and $n_x \coloneqq \frac{1+\sigma_x}{2}$ are the occupation variables (see Lemma 3.19, Chapter 3 of \cite{FV-Book}). There exists also a smallest $\Lambda$, with respect to the inclusion, such that \ref{local_function} holds. This set is called the \textit{support of} $f$ and denoted by $\supp f$.

\begin{proof}[Proof of Corollary 1.1.:]
  In \cite{Lebowitz1972}, Lebowitz proves that, for any $A,B$ finite subsets of $\Z^d$, it holds
  \[
  \langle n_A;n_B\rangle^+_\beta \leq \sum_{\substack{x\in A \\ y \in B}}\langle n_x;n_y\rangle_\beta^+.
  \]
  Thus, the above inequality together with the expansion \eqref{local_function} we get
\[
\langle f;g\rangle^+_\beta = \sum_{\substack{A\subset \supp(f) \\ B \subset \supp(g)}}f_{A}g_{B}\langle n_A;n_B\rangle^+ \leq \|f\|\|g\|\sum_{\substack{A\subset \supp(f) \\ B \subset \supp(g)}}\sum_{\substack{x\in A \\ y \in B}} \langle n_x;n_y\rangle^+,
\]
where $\displaystyle \|f\|\coloneqq \max_{A \subset \supp(f)}|f_A|$ and similarly for $g$. Since the two-point correlation function is bilinear, our main result yields that 
\[
 \langle n_x,n_y\rangle^+ \leq \frac{J c_4}{\dis(\supp(f), \supp(g))^\alpha}.
\]  
Since $\displaystyle \sum_{A \subset \supp(f)}|A| = |\supp(f)|2^{|\supp(f)|-1}$, taking $C_{f,g} \coloneqq c_4|\supp(f)||\supp(g)|\|f\|\|g\|2^{|\supp(f)|+|\supp(g)|-2}$ yields the desired result.
\end{proof}

\section{Concluding Remarks}

In this paper, we prove the convergence of the cluster expansion at low temperatures for the long-range Ising models in $d\geq 2$ and regular interactions ($\alpha >d$) by using a recent notion of multidimensional multi-scaled contours introduced by three of us in \cite{Johanes}. These contours are multidimensional objects; the definition is a refinement of a previous one proposed in \cite{Affonso2024} where the diameter is used as defined by Fr\"ohlich-Spencer for one-dimensional long-range Ising models \cite{Frohlich.Spencer.82}, instead of our definition where we use the volume. The convergence of the cluster expansion has many consequences in statistical mechanics; we gave an application proving the polynomial decay of the truncated two-point correlation functions at low temperatures, showing that the exponent coincides with the exponent of the interaction. Our definition and methods also work for multidimensional Potts and other models, but we decided to present these results in a separate paper \cite{affonso2024phasetransitionferromagneticqstate}, where we proved the phase transition. Also, it is expected that the convergence of the cluster expansion and the control of truncated correlations can be obtained for long-range Potts model as well.

\section*{Acknowledgements}

This study was financed, in part, by the São Paulo Research Foundation (FAPESP), Brazil, processes numbers 2016/25053-8, 2017/18152-2, 2018/26698-8, 2020/14563-0, 2022/00746-1, 2022/00748-4, and 2023/00854-1. RB is supported by CNPq grants 311658/2025-3, 312294/2018-2, and 408851/2018-0. The authors are very grateful to Titti Merola and Pierre Picco for helpful comments about the one-dimensional cluster expansion for long-range systems. RB thanks Aernout van Enter for the many insightful discussions on statistical mechanics over the years, which profoundly influenced his way of thinking. RB also thanks Yacine Aoun and Roberto Fernández for the information about the literature and results concerning the decay of correlations for long-range interactions.

\bibliographystyle{habbrv}
\bibliography{refs}

\end{document}